\newif\iffullversion
\newif\ifdraft
\newif\ifanonymous
\newif\ifshowproofs
\iffullversion\usepackage[backend=bibtex,maxbibnames=100,style=alphabetic]{biblatex}\fi
\iffullversion\usepackage[pdfborderstyle={/S/U/W 0.3}]{hyperref}\fi
\newcommand\singledouble[2]{#1}
\newcommand\locAxioms{42}
\newcommand\locAxiomsClassical{249}
\newcommand\locAxiomsComplement{24}
\newcommand\locAxiomsComplementQuantum{816}
\newcommand\locAxiomsQuantum{744}
\newcommand\locClassicalExtra{114}
\newcommand\locLaws{625}
\newcommand\locLawsComplement{388}
\newcommand\locMisc{187}
\newcommand\locPureStates{494}
\newcommand\locQHoare{56}
\newcommand\locQuantum{115}
\newcommand\locQuantumExtra{270}
\newcommand\locQuantumExtraTwo{34}
\newcommand\locTeleport{225}
\newcommand\locTensorProductMatrices{200}
\let\symbolindexmarkhighlight\symbolindexmarkhighlightGRAYBOX
\tikzset{>=stealth}
\newcommand\fullshort[2]{#1}
\newcommand\fullshort[2]{#2}
\newcommand\fullonly[1]{\fullshort{#1}{}}
\newcommand\shortonly[1]{\fullshort{}{#1}}
\undefined\newtheorem{definition}{Definition}\fi
\undefined\newtheorem{lemma}{Lemma}\fi
\undefined\newtheorem{theorem}{Theorem}\fi
\newcounter{law}
\renewcommand\thelaw{(\roman{law})}
\newcommand\nextlaw{\refstepcounter{law}\textsuperscript\thelaw\,}
  \newcommand\anonymous[2]{#1}
  \newcommand\anonymous[2]{#2}
    \let\hyper@@anchor\@gobble
    \gdef\hyper@link#1#2#3{#3}%
    \let\hyper@anchorstart\@gobble
    \let\hyper@anchorend\@empty
    \let\hyper@linkstart\@gobbletwo
    \let\hyper@linkend\@empty
    \def\hyper@linkurl#1#2{#1}%
    \def\hyper@linkfile#1#2#3{#1}%
    \def\hyper@link@[#1]#2#3{}%
    \let\PDF@SetupDoc\@empty
    \let\PDF@FinishDoc\@empty
    \let\@fifthoffive\@secondoftwo
    \let\@secondoffive\@secondoftwo
\begin{document}

\title{Quantum references}
\fullshort{
  \ifanonymous
    \author{}
  \else
    \author{Dominique Unruh\\\small RWTH Aachen, University of Tartu}
    \date{}
  \fi
  \maketitle
}{
  \ccsdesc{Theory of computation~Quantum computation theory}
  \keywords{Quantum programs, references, semantics}
  \anonymous{
    \author{X}{X}{X}{X}{X}
    \authorrunning{Anonymous authors}
  }{
    \author{Dominique Unruh}
  }
  \maketitle
}

\shortonly{
  \renewcommand\paragraph[1]{\medskip\noindent\textbf{#1}}
}

\begin{abstract}
  We present a theory of ``quantum references'', similar to lenses in classical functional programming, that allow to point to a subsystem of a larger quantum system, and to mutate/measure that part.
  Mutable classical variables, quantum registers, and wires in quantum circuits are examples of this, but also references to parts of larger quantum datastructures.
  Quantum references in our setting can also refer to subparts of other references, or combinations of parts from different references, or quantum references seen in a different basis, etc.
  Our modeling is intended to be well suited for formalization in theorem provers and as a foundation for modeling variables in quantum programs.
  We study quantum references in greater detail and cover the infinite-dimensional case as well,
  but also provide a more general treatment not specific to the quantum case.
  We implemented a large part of our results (including a small quantum Hoare logic
  and an analysis of quantum teleportation) in the Isabelle/HOL theorem prover.
\end{abstract}


\makeatletter
\ifdraft
{\catcode`\%=12\immediate\write\@auxout{%
    DRAFT MODE}}
\fi

\ifdraft
\fullonly{
  \begin{center}
    \bfseries\Huge\fboxsep=10pt
    \framebox{%
      \begin{varwidth}\textwidth
        \centering
        THIS IS A DRAFT
        \\
        \Large
        Do not distribute
        \\
        \smallskip
        \footnotesize\mdseries
        (Git revision: {\ttfamily\endlinechar=-1\input{.gitrevision}})
        \\
        \bigskip
        Information about mistakes and typos are appreciated (\href{mailto:unruh@ut.ee}{unruh@ut.ee}).
        \\
        \bigskip
        \textbf{A published version of this paper is \href{https://arxiv.org/abs/2105.10914}{arXiv:2105.10914} [cs.LO].}
      \end{varwidth}%
    }%
  \end{center}
}
\fi

\begin{fullversion}
  \renewcommand*{\multicolumntoc}{2}
  \setlength{\columnseprule}{.4pt}
  \tableofcontents
\end{fullversion}

\section{Introduction}

In this work, we formalize the concept of a quantum reference.
In a nutshell, a quantum reference (in the sense of this paper) is a mathematical object that points to a subsystem of a larger quantum system.
It allows us to access (e.g., modify, measure, initialize) that subsystem.
For example, a variable name in an imperative quantum program can be abstractly seen as a reference:
It tells us which part of the overall memory we are talking about.
Similarly, quantum registers (or wires) in quantum circuits can be seen as references.
More interesting examples (because they are not fixed throughout the execution of the program) would be ``pointers'' into a quantum heap in a functional quantum programming language; those references could be passed around as first-class values.

To understand better what we aim at, consider a related concept from classical functional programming, the \emph{lens}\index{lens} (also known as a \emph{functional reference}\index{functional reference}\index{reference!functional}).
A lens consists of a pair $(g,s)$ of a getter $g:M\to A$ and setter $s:M\times A\to M$.
The lens can be thought to point to some location of type $A$ inside a program state $M$ (e.g., the program's heap as a whole).
Given a concrete value $m$ of the program state, $g(m)$ returns the content of that location, and $s(m,a)$ tells us what the new state of the overall program is when we update the content of that location to the value $a$.
Any read or write access to the memory location can be described in terms of these two functions.

That is, a lens is a mathematical value that points to a subsystem of a larger classical system.
And this value could either be fixed throughout the program (i.e., basically be a semantic description of a statically declared mutable variable),
or it could be created and passed around at runtime (since in a functional language, the functions $(g,s)$ are first-class values).
Of course, the larger system $M$ does not have to be the whole state of a program with a heap.
Lenses can also be used to refer to parts of data structures in purely functional programs.
(E.g., a \emph{name} field in a \emph{person data} data structure can be described by a lens.)
Lenses can even be nested.
For example, if we have a lens describing where the \emph{person data} is located in the overall memory,
and another lens describing where the \emph{name} is in a \emph{person data} data structure,
then we easily can chain them to get a lens describing where the \emph{name} is in the overall memory.

We aim to achieve the same (and a bit more) in the quantum setting.
We want a functional value (i.e., some classical value that can, in principle, be passed around in a program) and that refers to a part of a quantum system, allowing access to that part (similar, but not identical, to the getter and setter for read and write access).

\paragraph{Why quantum references?}
In a nutshell, a quantum reference points to a subsystem (described by a Hilbert space $\calH_A$, the \emph{content Hilbert space}) inside a bigger system (described by a Hilbert space $\calH_M$, then \emph{memory Hilbert space}).
Why are we interested in defining and studying \emph{quantum} references?
There are a number of benefits from a systematic treatment of references:
\begin{itemize}
\item When describing the semantics of quantum programming languages (even simple imperative while languages), we need to refer to individual quantum variables.
  (E.g., we may have statements such as ``apply Hadamard to variable $X$''.)
  It is certainly possible to do this without the abstract concept of quantum references.
  After all, we can simply define the state of the program as the tensor product of the Hilbert spaces of the individual variables, and then each variable refers to one of the factors.
  But doing so forces us to consider specific ``memory layouts'', losing abstraction.
  And it makes the definition of the semantics more complex because all variable accesses need to work with the explicit memory layout.
  (See, e.g., the ``identity-tensor'' approach we describe in a bullet point below.)
  In contrast, using references, we can modularize the definitions better:
  We can declare the program variables to be arbitrary (disjoint) quantum references, and then define the semantics independently of how the these references are arranged into a program state.
  (We have a small case study in \autoref{sec:hoare} for this.)
\item Similar to lenses in the classical case, quantum references can be used in functional programming languages in order to refer to locations in a heap, or locations within other data structures.
  For example, a functional language with classical higher order values, and access to a quantum heap could pass typed references to parts of the heap around, and have one builtin polymorphic function \texttt{alloc} that allocates a new reference with a given content type disjoint from all previously allocated ones.\footnote{At first glance, one might think that a function \texttt{alloc} that returns a qubit pointer would be sufficient.
    But this means that only datatypes whose dimension is a power-of-two could be stored.
    Of course, we can embed, say, a qutrit in two qubits, but then the programmer will have to deal with additional implementation details such as ensuring that the qubits never contain an invalid value (outside the embedded three-dimensional subspace).
    This would contradict the desire of having type-safety in high-level functional languages. }
  We stress that we have not designed such a language in this work, but our quantum references lay the semantic foundations for such applications.
\item Easier formulation of invariants when reasoning about quantum programs:
  For example, in quantum Hoare logics, we often need to express invariants that talk about the state of individual program variables.
  E.g., we might wish to say that $X$ is in state $\ket0$ or similar.
  Quantum references can be used in the descriptions of invariants to refer to individual subsystems and describe the conditions on them.
  (We develop this idea in more detail in \autoref{sec:hoare}.)
\item Formalizing reasoning about quantum registers:
  Often, when a quantum algorithm or quantum circuit or quantum cryptographic system is analyzed, the proofs contain formulas where, e.g., some unitary $U$ is applied to some register $X$.
  This is usually handled in an informal way, e.g., it is said that in the running text that it is understood that $U$ is applied to $X$, and it is left to the reader to understand how $U$ is tensored with the identity to suitably ``address'' $X$.
  While this is arguably satisfactory at the level of formality of pen-and-paper proofs, if we want to formalize it in a theorem prover, or want very rigorous, foundationally clean proofs, then this approach is not suitable and we need some way to refer to individual subsystems and to operators applied to them.
  (There are, of course, other ways to make this formal than using quantum references, but we describe in the next bullet point why these will be extremely cumbersome.)
  Of course, to be able to use this advantage of quantum references, it is necessary to make sure that we can easily model and use them in theorem provers; we have demonstrated this for the Isabelle/HOL theorem prover in \fullshort{\autoref{sec:isabelle}}{\cite{formalization-anonymous}}.
\item The alternative to using references (or some comparable approach) can be very cumbersome:
  Consider the following example:
\fullshort{  \begin{center}
  We have a four-reference system $X,Y,Z,W$ (all qubits), and we apply a CNOT on $ZX$:
    \begin{tikzpicture}
      \initializeCircuit;
      \newWires{X,Y,Z,W};
      \stepForward{2mm};
      \labelWire[\tiny$X$]{X}
      \labelWire[\tiny$Y$]{Y}
      \labelWire[\tiny$Z$]{Z}
      \labelWire[\tiny$W$]{W}
      \stepForward{10mm};
      \node[cnot=X,control=Z] (cnot) {};
      \stepForward{10mm};
      \drawWires{X,Y,Z,W};
    \end{tikzpicture}
  \end{center}
  What is the unitary we apply to the overall system?
  Using our reference approach, we can just say:
  It's $\spair ZX(\CNOT)$.
  (This will be formally defined, but intuitively it means ``the unitary resulting from lifting CNOT through the paired reference $\spair ZX$''.)
}{
  We have a four-reference system $X,Y,Z,W$ (all qubits), and we apply a CNOT on $ZX$ (see the circuit on the right).
  \newline
    \begin{minipage}[b]{.8\linewidth}
  What is the unitary we apply to the overall system?
  Using our reference approach, we can just say:
  It's $\spair ZX(\CNOT)$.
  (This will be formally defined, but intuitively it means ``the unitary resulting from lifting CNOT through the paired reference $\spair ZX$''.)
    \end{minipage}~~~
  \begin{tikzpicture}
      \initializeCircuit;
      \newWires{X,Y,Z,W};
      \stepForward{2mm};
      \labelWire[\tiny$X$]{X}
      \labelWire[\tiny$Y$]{Y}
      \labelWire[\tiny$Z$]{Z}
      \labelWire[\tiny$W$]{W}
      \stepForward{8mm};
      \node[cnot=X,control=Z] (cnot) {};
      \stepForward{8mm};
      \drawWires{X,Y,Z,W};
    \end{tikzpicture}
}

  But without this approach, we would have to write a formula such as $(1_1\otimes \Uswap\otimes 1_1)(\Uswap\otimes 1_2)(\CNOT\otimes 1_2)(\Uswap\otimes 1_2)(1_1\otimes \Uswap\otimes 1_1)$, hardly readable.\footnote{\label{footnote:cnot.example}%
    \newcommand\Ua{U_{\assoc}}%
    And this is if we assume that the tensor product is associative (in other words, we work in a strict category).
    In a more rigid type system (as we would probably have when formalizing these things in a theorem prover) we would have to explicitly use an isomorphism $\Ua$
    between $(a\otimes b)\otimes c$ and $a\otimes(b\otimes c)$.

    In that case, our harmless CNOT becomes $(1_1\otimes\Ua)\penalty0
    \Ua\penalty0
    \pb\paren{(1_1\otimes\Ua)\otimes 1_1}\penalty0
    (\Ua\otimes 1_1)\penalty0
    \pb\paren{(\Uswap\otimes 1_1)\otimes 1_1}\penalty0
    \pb\paren{(\CNOT\otimes 1_1)\otimes 1_1}\penalty0
    \pb\paren{(\Uswap\otimes _1)\otimes 1_1}\penalty0
    (\adj{\Ua}\otimes 1_1)\penalty0
    \pb\paren{(1_1\otimes\adj{\Ua})\otimes 1_1}\penalty0
    \adj\Ua\penalty0
    (1_1\otimes\adj\Ua)$.
  }
  (Here $\Uswap$ is the swapping unitary and $1_1, 1_2$ is the identity on one or two wires, respectively.)

  We call this the \emph{identity-tensor}\index{identity-tensor approach} approach, and we believe that this is often what is implicitly imagined in pen-and-paper reasoning about quantum circuits. But once one tries to be more formal (say in a theorem prover), this approach is extremely unwieldy.
\end{itemize}
We stress that, since quantum references are a foundational tool, and not a specific application, there may be other advantages that we did not think of.
After all, quantum references are an analogue to lenses, and lenses are ubiquitous and highly important in functional programming.
Quantum references are not merely a tool for pointing at a subsystem; we can see them as a possible formalization of what subsystems \emph{are}.

\paragraph{Working with quantum references.}
Having a model of references that allow us to perform operations on their content is nice, but insufficient for many applications.
As we will see later, a quantum reference semantically is a linear function between certain spaces of operators;
defining quantum references by always explicitly constructing these functions would make them hard to use in any practical application.
Instead, we would like to be able to construct references from simpler references and basic building blocks, without needing to explicitly involve the semantic definition.
In particular, to be useful for the application cases we described above, we would like references to support the following natural operations:
\begin{itemize}
\item \textbf{Chaining:}\index{chaining}
  Consider a reference $F$ with memory Hilbert space $\calH_M$ and content Hilbert space $\calH_A$, and a reference $G$ with memory Hilbert space $\calH_A$ and content Hilbert space $\calH_B$.
  That is, $F$ points to a subsystem $A$ of the memory $M$, and $G$ points to a subsystem $B$ of that subsystem $A$.
  (Short: $F:A\to M$, $G:B\to A$.)
  It is natural to expect that then $F$ and $G$ together describe a subsystem $B$ of the memory $M$.
  We call this operation chaining and write the reference describing $B$ in $M$ as $\chain F G$.

  (This is analogous to chaining lenses: We mentioned above that if we have a lens locating \emph{person data} within the memory, and a lens locating \emph{name} within \emph{person data}, we can combine them to a lens locating \emph{name} within the memory.)

  An example of this feature would be: Say $F$ is a two qubit reference (i.e., the content Hilbert space is $\setC^2\otimes\setC^2$).
  It would be natural to consider the first qubit of $F$ as a reference as well.
  And indeed, if $\Fst$ denotes the reference pointing at the first qubit in the space $\setC^2\otimes\setC^2$ (i.e., its memory Hilbert space is  $\setC^2\otimes\setC^2$, and its content Hilbert space is  $\setC^2$), then $\chain F\Fst$ is a reference pointing at the first qubit of $F$.
  This would allow us, for example, to write $\qapply U{\chain Q\Fst}$ instead of  $\qapply{U\otimes 1}Q$ in a quantum program.

  In this simple example, this may not make such a big difference, but by repeated chaining, we can locate arbitrary parts in deeply nested quantum systems.
  With the registers $\Fst$, its analogue $\Snd$, and chaining, we can already build references to factors in arbitrarily nested tensor products.

  It is hard to imagine using lenses without chaining, the same holds for quantum references.
\item \textbf{Pairing:}\index{pairing}
  Let $F$ and $G$ be references with the same memory Hilbert space (i.e., parts of the same memory) but not necessarily the same content Hilbert space.
  E.g., they refer to the first and second qubit of a five qubit system.
  Then it is natural to consider the first two qubits of the five qubit system together as a reference, too (since it is a subsystem).
  That is, we would like to be able to write $\spair FG$ to refer to $F$ and $G$ together.
  (If we think of $F$ and $G$ as regions in the memory, we can think of $\spair FG$ as the concatenation of those regions.)
  Of course, this pair $\spair FG$ should also exist if $F$ and $G$ refer to any non-overlapping parts of the memory, not just if they are consecutive qubits.
  (A classical analogue would be: Say we have lenses \emph{first name} and \emph{last name}, then we should be able to construct a lens \emph{full name} that consists of both, even if \emph{first name} and \emph{last name} are non-consecutive fields in a record.)

  Why would this be needed? Besides being a natural operation, there is an important use case when writing quantum programs:
  In quantum circuits and quantum programs, we often have instructions such as  $\qapply U{\spair FG}$ that applies a unitary to registers $F$ and $G$ jointly.
  While there are different ways to define the semantics of this, it will be easiest to define this if $\spair FG$ can be seen a single reference.
  Then it is sufficient to define the semantics for $\qapply U{F}$ for references $F$, and the semantics for applying a unitary to several variables simultaneously emerges without any extra effort.
  
  Otherwise we will need to give in the semantics an explicit description of how a single unitary is applied to several references, making the formal semantics complex where the intuition is simple.\footnote{%
    Note while the classical program
    $\assign{\pair xy} e$ can always be
    rewritten as $\assign ze$, $\assign x{\mathit{fst}(z)}$,
    $\assign y{\mathit{snd}(z)}$, in the quantum setting such a rewriting is not possible. Applying unitaries to individual variables can never produce entanglement.
    Thus $\qapply U{\spair FG}$ needs to be supported by the language.}

  Another example why pairing is useful would come from quantum information theory:
  There we define the entropy $H(F)$\pagelabel{page:intro.entropy} of a subsystem $F$ (relative to some given density operator).
  However, then we often refer to the entropy $H(FG)$ of several subsystems.
  It is implicitly assumed that we can consider $FG$ together also as a subsystem and then apply the definition of entropy to it, without spelling out the cumbersome details.
  Quantum references with pairing allow to make this formal:
  If we define what $H(F)$ means for a quantum reference $F$ (describing a subsystem), we can immediately give meaning to $H(FG)$ by letting $FG$ stand for the pair $\spair FG$.
  \fullonly{(We illustrate the details of this definition in \autoref{sec:partial.trace}.)}

  Pairing is not meaningful, though, if the registers overlap (e.g., both $F$ and $G$ contain the same qubit).
    To avoid this, pairing must be restricted to references that point to non-overlapping parts of the memory.
  We will call such references \emph{disjoint}\index{disjoint}.
  We will need a formal definition of disjointness, and simple rules to determine when references are disjoint.
  (E.g., if $F,G,H$ are pairwise disjoint, then $\spair FG$ and $H$ should also be disjoint.)

  Once we have such a notion of pairing and disjointness, we can construct quite complex references to non-continuous parts of a quantum memory. For example, if we add chaining and the basic references $\Fst$ and $\Snd$ to the mix, we can write things like $\qapply U{\pairFFF{\chain Q\Snd}{\chain R\Fst}{\chain Q\Fst}}$. This applies $U$ to the second qubit of $Q$, the first of $R$, and the first of $Q$.
  Expressing this without pairing and chaining would be very cumbersome.
\item \textbf{Basis transformation:}\index{basis transformation}\pagelabel{page:mapping.intro.q}
  In quantum programming, quantum systems (such as the content of a reference) are usually expressed with respect to a specific basis (say $\ket0,\ket1$ in case of a qubit).
  But physically, there is usually nothing special about that particular basis.
  A reference might be looked at in a different basis, and there is no a-priori justification why the reference in one basis is a reference, and in another basis, it would not be considered a reference.
  For example, given a qubit reference $F$, we might look at the basis transformed reference $\basistrafo HF$ (where $H$ is the Hadamard transform that transforms $\ket0,\ket1$ into the diagonal basis $\ket+,\ket-$).
  $\basistrafo HF$ should also be considered a reference.
  So in general, given a unitary $U$, we want to have an operation that transforms a reference $F$ into another basis-transformed reference $\basistrafo UF$.
  \showafter{2024}{02}{14}

  A more advanced example (that shows that the concept of references is not limited to programming semantics) is the following:
  We know from physics that position and momentum are two sides of the same coin.
  In fact, we get the momentum by Fourier transforming the position and vice versa.
  So in a system with a single particle, the position of the particle would be described by a real-valued reference (a ``qureal'') $X$.
  And its momentum $P$ could be expressed as the reference $P:=\basistrafo FX$ where $F$ is the Fourier-transform.
  In our approach, $P$ and $X$ could both be first-class references, none of them distinguished as the ``true'' representation: we also have $X=\basistrafo{F^{-1}}P$.\fullonly{\footnote{In a three dimensional system, we can go even further:
    If $X_1,X_2,X_3$ refer to the coordinates of the particle, and $R$ is a rotation in 3D-space, then $\spairFFF{X_1}{X_2}{X_3}$ would be its 3D-position,  $\basistrafo R{\spairFFF{X_1}{X_2}{X_3}}$ in a different frame of reference,  $\chain{\paren{\basistrafo R\spairFFF{X_1}{X_2}{X_3}}}1$ the first coordinate in that frame of reference, and $\basistrafo F{\chain{\paren{\basistrafo R\spairFFF{X_1}{X_2}{X_3}}}1}$ the momentum in that direction.}}
\end{itemize}

\medskip\noindent
We thus ask the question:
\begin{quote}
  \textit{%
    Can we construct a meaningful definition of quantum references that is conceptually simple, amenable to formalization (say in the theorem prover Isabelle/HOL), can be used easily in the definition of programming language semantics and similar, and that supports the operations of chaining, pairing, and basis transformation?
  }
\end{quote}

\paragraph{Our contribution.}
In this work, we answer this question positively.
The intuition is that, if we want to perform some operation $U$ (e.g., a unitary) on the content of some reference $F$, we need to know how this would affect the overall memory, i.e., what operation we effectively perform on the overall memory.
(A bit similar to lenses, where the setter tells us what happens to the overall classical memory when we set the content of the lens.)
The reference provides this information, namely $F(U)$ is that operation on the overall memory.
(This of $F(U)$ as meaning $U\otimes 1$ up to suitable reordering of the subsystems.)
Formally, a quantum reference $F$ with content Hilbert space $\calH_A$ and memory Hilbert space $\calH_M$ is a function that maps an ``update'' on $\calH_A$ (e.g., a unitary operator, or a measurement projector) to an update on $\calH_M$.
Of course, not any function would be a valid quantum reference; we identify certain conditions a function must satisfy to be a valid quantum reference.

Instead of directly developing the theory of \emph{quantum} references, we take a step back and, in \autoref{sec:generic}, define what it means to be a reference in a more general setting (not specifically quantum).
We identify axioms that a category of references (with the objects corresponding to the content/memory types, and the references being the morphisms) should satisfy and derive most laws about references generically (including rules for chaining, pairing, and deriving disjointness).
While quantum references are the main contribution of this work, this two step approach gets us both additional generality and more modular proofs.
And while we only instantiate those axioms for the quantum and for the classical case, having explicit axioms may lay the foundation for future instantiations in other computational settings (e.g., references to hybrid classical/quantum systems, 
or even something more exotic e.g., computations
in non-local models \cite{Popescu1994} or in other physical models,
see, e.g., \cite{Aaronson2005} for some examples).

In \autoref{sec:example}, we illustrate the use of references by generically formulating a small programming language \fullonly{with call-by-reference procedures }and reasoning about it. The generic results are then instantiated in the quantum setting below.

In \autoref{sec:quantum-overall} we instantiate this approach to quantum references (both for finite- and for infinite-dimensional quantum systems), getting our main contribution.

\begin{fullversion}
  And in \autoref{sec:classical} we also show that the same general theory can be instantiated for classical systems, giving us a concept of classical references that actually turns out to be equivalent to lenses.
  However, there is still a novel contribution here:
  To the best of our knowledge, laws for \emph{safe} pairing of lenses have not been studied before;
  our formalism gives them automatically.

  We then illustrate the use of quantum references in
  \autoref{sec:hoare} by introducing a simple quantum Hoare logic and
  verifying the correctness of quantum teleportation.

  In \autoref{sec:complements}, we introduce the additional concept of references with \emph{complements} in which there is a well-defined notion of ``everything outside of reference $F$'' (and we show that the quantum reference category has complements).

  In \autoref{sec:rel.optics}, we compare our concept of references with the general concept of ``optics'' in functional programming (of which lenses are a special case) and discuss how existing general formulations of optics deal with the quantum case.

  In \autoref{sec:lifting}\shortonly{ in the supplement}, we study quantum references in more detail:
  We show that many different quantum mechanical concepts (such as quantum channels, measurements, mixed states, etc.) interoperate well with our notion of references.
  \fullshort{A large fraction of the results presented in this paper have been formally verified in Isabelle/HOL,
    namely the everything from Sections~\ref{sec:generic}, \ref{sec:quantum-overall}--\ref{sec:complements} and parts of \ref{sec:lifting}. }{
    The results presented in this paper have been formally verified in Isabelle/HOL.
    (Note however that the additional quantum results presented in the supplement, \autoref{sec:lifting}, have not all been verified.)
  }
  We present the formalization in \autoref{sec:isabelle}\shortonly{ and attach the theory files as a ZIP to this submission}.
\end{fullversion}

\begin{shortversion}
  We then illustrate the use of quantum references in
  \autoref{sec:hoare} by introducing a simple quantum Hoare logic.

  \paragraph{Additional results and materials.}
  Due to space reasons, we do not include all our results in this extended abstract.
  Attached to the submission is a full version that additionally contains the following results:

  Detailed proofs.
  
  An instantiation of the general theory of references to classical systems, giving us a concept of classical references that actually turns out to be equivalent to lenses. (\autoref{FV:sec:classical})
  However, there is still a novel contribution here:
  To the best of our knowledge, laws for \emph{safe} pairing of lenses have not been studied before;
  our formalism gives them automatically.

  The additional concept of references with \emph{complements} in which there is a well-defined notion of ``everything outside of reference $F$'' (and we show that the quantum reference category has complements). (\autoref{FV:sec:complements})

  Additional detailed discussions, e.g., about the relationship to existing notions of optics (\autoref{FV:sec:rel.optics}).

  An extension of the example from \autoref{sec:example} for a language with procedure calls. (\autoref{FV:sec:example})
  
  We study quantum references in more detail:
  We show that many different quantum mechanical concepts (such as quantum channels, measurements, mixed states, etc.) interoperate well with our notion of references.
  (\autoref{FV:sec:lifting})
  
  A large fraction of the results presented in this paper have been formally verified in Isabelle/HOL, including all sections in this extended abstract except for the example from \autoref{sec:example}, plus more material from the full version. The full version gives an overview. (\autoref{FV:sec:isabelle}) The Isabelle theories are in \cite{formalization-anonymous}.

  An example reasoning in our example Hoare logic of the teleportation protocol. (\autoref{FV:sec:hoare}. Also formalized in Isabelle.)
\end{shortversion}

\paragraph{Related work.}
A number of papers model quantum variables in imperative programs or circuits implicitly or explicitly in different ad-hoc ways, e.g., \cite{Bordg2020, Paykin2017, qrhl, liu19formal, Hietala2021}. In all cases, this is done by considering a memory that has already an explicit tensor product structure in some form (e.g., a sequence of qubits), and variables are modeled by indicating which of the factors we refer to.

Various works use von Neumann algebras to model quantum (sub)systems, somewhat
similar to what we do in the infinite-dimensional setting. E.g., \cite{kornell17quantum,cho16neumann,pechoux20quantum} which bears some superficial resemblance to our modeling in the infinite-dimensional case.
The relationship to our work is subtle and discussed in \fullshort{\autoref{sec:discussion}}{the full version (\autoref{FV:sec:discussion}), }.

Optics (a generalizaton of lenses) have been studied in \cite{riley18optics} for more general categories; the quantum case was not addressed explicitly but might arise as a special case for suitable categories (without support for pairing though).
We discuss the relationship to our work in\fullshort{~\autoref{sec:rel.optics}}{ the full version (\autoref{FV:sec:rel.optics})}.

In follow-up work, our approach is currently being used as the semantic foundation in an overhaul of the qrhl-tool theorem prover \cite{qrhl} (for reasoning about quantum programs) in the development branch.
Our formalization of references is also successfully used as the basis of Isabelle formalization of the one-way-to-hiding theorem \cite{ambainis19semiclassical,katharina-o2h},
as well as the compressed oracle formalism \anonymous{\cite{zhandry18recording,dominique-anon-co-isabelle}}{\cite{zhandry18recording,dominique-co-isabelle}}.

\section{References, generically}
\label{sec:generic}

\paragraph{A reference category.}
(The following text explains the ideas behind the notion of a reference category.
The actual definition is in \autoref{def:reference-category} and \autoref{fig:axioms} only.)

The basic idea behind our formalization of references is that it
transforms updates on the reference's domain into updates on the program
state. For example, if we have a quantum reference~$F$ with content
space $\calH_F$, and some unitary $U$ on $\calH_F$ (which constitutes a specification of
how we want to update the content of $F$), we need to know how~$U$
operates on the whole program state, i.e., we want to transform~$U$
into a unitary $F(U)$ on the program state space $\calH_M$. Or for classical references,
an update on a variable with domain $T$ could be a function
$T\mapsto T$, and thus gets transformed into a function $M\mapsto M$
($M$ is the set of all program states). In the generic setting, we do not
specify explicitly what updates are (e.g., functions, matrices,
\dots). Instead, we only require that the set of updates form a monoid
because we can compose (multiply) updates, and there is a ``do
nothing'' update.  (For any variable type, we have a different monoid
of updates. The overall program state is treated no differently, so we
also have a monoid of updates of program states.)  Said concisely, we
are in a category in which the objects are monoids. We will use bold
letters $\mathbf A,\mathbf B,\dots$ for those objects.

In this setting, a reference is a function that transforms an update of
one type $\mathbf{A}$ into an update of another type $\mathbf{B}$. So the next step in our
formalization is to specify functions between those monoids
(i.e., transformations of updates). Thus in our category, the morphisms are
(not necessarily all) functions between the monoids. For example, in the quantum case, this
could just be linear functions mapping matrices to matrices.
We use the term \emph{pre-reference}\index{pre-reference}
for these morphisms. (Since actual references will need to satisfy some
additional restrictions.)
Pre-references have to
satisfy the usual axioms of categories as well as a few additional
natural ones listed in the first half of \autoref{fig:axioms}.  We
furthermore require that the resulting category has a tensor product, having
some (but not all\footnote{Specifically, one often requires
  that if $F:\mathbf A\to\mathbf A'$, $G:\mathbf B\to\mathbf B'$ are morphisms, then
  there exists a corresponding morphism $F\otimes G:\mathbf A\otimes\mathbf B\to\mathbf A'\otimes\mathbf B'$.
  Or that for any 2-ary multimorphism (e.g., a bilinear map) $F:\mathbf A,\mathbf B\to\mathbf C$,
  there is a morphism $\Hat F:\mathbf A\otimes\mathbf B\to\mathbf C$ with
  $\Hat F(a\otimes b)=F(a,b)$. We do not assume those properties.
  (While in the classical case, and in the finite-dimensional quantum case these two properties
  are easy to achieve, in the infinite-dimensional setting, for many types of morphisms,
  these additional requirements are not satisfied.)}) of the usual properties of a tensor product.
(Notation: $\otimes$ is right-associative. I.e., $\mathbf A\otimes \mathbf B\otimes\mathbf C$ means $\mathbf A\otimes (\mathbf B\otimes\mathbf C)$.)

\begin{figure*}[t]
  \raggedright
  \textbf{General axioms:}
  \begin{compactitem}
  \item \nextlaw\label{ax:monoids}%
    The objects (a.k.a.~\emph{update monoids}\index{update monoid}) $\mathbf A,\mathbf B,\dots$ of $\calR$ are monoids.
    (Which monoids are objects depends on the specific reference category.)
  \item \nextlaw\label{ax:preregs}%
    The pre-references are functions $\mathbf A\to \mathbf B$.
    (Which functions are pre-references depends on the specific reference category.)
    They satisfy the axioms for categories, i.e., they are closed under composition
    (if $F,G$ are pre-references, $F\circ G$ is a pre-reference) and the identity is a pre-reference.
  \item \nextlaw\label{ax:cdot-a}%
    For any $a\in\mathbf A$, the functions $x\mapsto a\cdot x$ and $x\mapsto x\cdot a$ are pre-references $\mathbf A\to\mathbf A$.
  \item $\calR$ has a tensor product \symbolindexmark\tensor{$\tensor$} such that:
    \begin{compactitem}
    \item \nextlaw\label{ax:tensor}%
      For all $\mathbf A,\mathbf B$, $\mathbf A \otimes \mathbf B$ is an object of $\calR$,
      and for $a\in\mathbf A,b\in\mathbf B$, $a\otimes b\in \mathbf{A}\otimes \mathbf{B}$.
    \item \nextlaw\label{ax:tensorext}%
      For pre-references $F,G:\mathbf A\otimes\mathbf B\to\mathbf C$,
      if $\forall a,b: F(a\otimes b)=G(a\otimes b)$, then $F=G$.
    \item \nextlaw\label{ax:tensor.mult}%
      The tensor product is distributive with respect to the monoid multiplication $\cdot$\,, i.e.,
      $(a\otimes b)\cdot(c \otimes d) = (a\cdot c)\otimes (b\cdot d)$.
    \end{compactitem}
  \end{compactitem}
  \textbf{References:}
  \begin{compactitem}
  \item \nextlaw\label{ax:reg.prereg}%
    References are pre-references. (Which pre-references are also references
    depends on the specific reference category.)
  \item \nextlaw\label{ax:reg.morphisms}%
    References satisfy the axioms for morphisms in categories, i.e.,
    they are closed under composition (if $F,G$ are references,
    $F\circ G$ is a reference) and the identity is a reference.
  \item \nextlaw\label{ax:reg.monhom}%
    References are monoid homomorphisms. ($F(1)=1$ and $F(a\cdot b)=F(a)\cdot F(b)$.)
  \item \nextlaw\label{ax:tensor-1}%
    $x\mapsto x\otimes 1$ and $x\mapsto 1\otimes x$ are references.
  \item \nextlaw\label{ax:pairs}%
    If references $F:\mathbf A\to\mathbf C$ and $G:\mathbf B\to\mathbf C$ have commuting ranges (i.e., $F(a),G(b)$ commute for all $a,b$),
    there exists a reference $\spair FG:\mathbf A\otimes\mathbf B\to\mathbf C$ such that $\forall a\, b.\ \spair FG(a\otimes b) = F(a)\cdot G(b)$.
  \end{compactitem}
  \caption{Axioms for a reference category $\calR$}
  \label{fig:axioms}
\end{figure*}

Now we are ready to define references.  A
reference is a function that maps an update to an update, so the pre-references in our category
are potential candidates for references. However, potentially not all
pre-references are valid references. Instead, when instantiating our
theory we specify which pre-references are references.
We require references to satisfy a few simple properties:
\begin{compactitem}
\item References are closed under composition. Intuitively, this means
  that if we can meaningfully transform an update on one type
  $\mathbf A$ into an update on $\mathbf B$, and one on $\mathbf B$
  into one on $\mathbf C$, we can also transform from $\mathbf A$ to
  $\mathbf C$. Besides other things, this is needed to define the
  chaining of references.
\item References $F$ are monoid homomorphisms. This means that doing
  nothing on a variable is translated into doing nothing on the whole
  program state ($F(1)=1$).
  And doing $a$ and then $b$ on the reference has the effect
  of doing $F(a)$ and then $F(b)$ on the whole program state
  ($F(b\cdot a)=F(b)\cdot F(a)$).
\item Mapping an update $x$ to $x\otimes 1$ or $1\otimes x$ is a
  reference. Intuitively, this means: if we can update some system $A$
  with $x$, then we can update a bipartite system $AB$ (or $BA$), too,
  by applying $x\otimes 1$ (or $1\otimes x$).
\item And we require axiom \ref{ax:pairs} to guarantee the existence of ``pairs'',
  see \autoref{def:pair} below.
\end{compactitem}

\noindent To summarize formally:
\begin{definition}\label{def:reference-category}
  A reference category $\calR$ is a category whose objects are sets (possibly with additional structure), satisfying the axioms in \autoref{fig:axioms}.
\end{definition}

That is, if
we want to instantiate the theory of references in a concrete setting
(e.g., classical or quantum references), we have to specify which monoids
constitute the objects (e.g., sets of square matrices), which functions
constitute the pre-references (e.g., linear maps on matrices), and which subset
of the pre-references are the references.\footnote{%
  The reader may wonder why we introduce pre-references in the first place instead of
  directly introducing references and simply requiring that all axioms from
  \autoref{fig:axioms} apply to them.  Unfortunately, this is not possible (at least
  for our choice of axioms): The axiom~\ref{ax:cdot-a} that $x\mapsto a\cdot x$ is a reference is
  incompatible with the axiom that references are monoid homomorphisms.
  On the other hand, removing axiom~\ref{ax:cdot-a} breaks the proofs of laws
  \ref{law:tensor3} and~\ref{law:compat3} in \autoref{fig:laws} below.
  Thus we need the pre-references as intermediate constructs
  to derive laws about references.}

The objects of this category can then intuitively be thought to correspond to types in a programming language.
The precise connection is not specified in the general setting, but can be made precise for specific instantiations of the reference formalism.
For example, in the quantum setting (see \autoref{sec:quantum-overall}), variable types typically correspond to Hilbert spaces (e.g., type the ``qubit'' corresponds to $\setC^2$), and the objects in the corresponding reference category are then the linear operators on these Hilbert spaces.
Hence every variable type with Hilbert space $\calH_A$ corresponds to the object $\mathbf A$ that is the set of bounded operators on $\calH_A$.

References are always functions $\mathbf A\to\mathbf M$.
We will call $\mathbf A$ the \index{content type}\emph{content type}, and $\mathbf M$ the \index{memory type}\emph{memory type} of the reference.
In the simplest setting (an imperative language where references are used to model global program variables), there would be a fixed memory type $\mathbf{M}$ (corresponding to the type of the overall program state), and each variable is represented by a reference $X:\mathbf A\to\mathbf M$ where the content type $\mathbf A$ is the object in the reference category corresponding to the type of the variable.
(And then more complex references could be built up from those ``hardcoded'' variable-references using the operations on references we describe next.)

\begin{fullversion}
  \emph{Remark:} The reader may wonder what common categorical properties a generic reference category satisfies.
  E.g., is it a monoidal category as would be natural in the quantum setting?
  Or should our definition maybe explicitly require additional natural properties such as monoidality?
  While our instantiations (quantum and classical references) are easily seen to be monoidal categories with the tensor product $\rtensor$ defined below (and satisfy many other natural properties since they are subcategories of $\mathbf{Mon}$ and $\mathbf{Hilb}$, respectively), this is not required in the general case by \autoref{def:reference-category}.
  We \emph{intentionally} do not require any such additional properties because we wanted to identify the \emph{minimal} axioms needed to describe references and to derive the laws described in \autoref{fig:laws}.
  This should not stop us, of course, from studying or instantiating, e.g., monoidal reference categories and similar structures.
  Additionally, we also aimed to formulate the axioms in an elementary way to make the definitions accessible to non-category theorists.
\end{fullversion}

\emph{Remark:} The reader may find it confusing what the formal relationship between ``types'' and the objects (update monoids) is. The answer is that in the general definition of a reference category, there are no ``types'', only the update monoids. How exactly they relate to the types of a programming language depends on the specific application. Therefore we refer to the ``types'' in our discussion, but they do not appear in \autoref{def:reference-category}.

\paragraph{Operations on references.} The simplest operation on references
is \emph{chaining}\index{chaining}. For example, say we have a
variable $X$ of type $\mathbf A$ in some memory of type $\mathbf
M$.
(Recall that for simplicity of exposition, we identify types with their corresponding update monoids.)
And assume that $\mathbf A$ is a structured type with fields
$F,G,\dots$. Then $X$ is a reference $\mathbf A\to\mathbf M$.  And $F$
is a reference $\mathbf B\to\mathbf A$ for some $\mathbf B$ because a
record field can be seen as a variable inside a memory of type
$\mathbf A$. Then it makes sense to consider the reference $\chain XF$
that refers to $F$ within $X$. Since $F$ tells us how to transform an
update in $\mathbf B$ into an update in $\mathbf A$, and $X$ tells us
how to transform that into an update in $\mathbf M$, we can simply
define $\chain XF$ as $X\circ F$.
\begin{definition}[Chaining]\label{def:chain}\index{chaining}
  For references $F:\mathbf B\to\mathbf C$, $G:\mathbf A\to\mathbf B$,
  let $\symbolindexmark\chain{\chain FG} := F\circ G$ (composition of functions).
\end{definition}
Chaining is particularly useful combined with the references
$\symbolindexmark\Fst\Fst(a):=a\otimes 1$ and $\symbolindexmark\Snd\Snd(b):=1\otimes b$. These are then
references that refer to the first or second part of a pair. E.g., a
reference
$X:(\mathbf A\otimes \mathbf B)\otimes \mathbf C\mapsto \mathbf M$
would refer to a three-tuple, and $\chain{\chain X\Fst}\Snd$ refers
to the second component of $X$. Thus we can already generically reason about tuples.


Finally, given two references $F:\mathbf A\to\mathbf M$,
$G:\mathbf B\to\mathbf M$, we want to construct the \emph{pair}\index{pair}
$\spair FG:\mathbf A\otimes\mathbf B\to \mathbf M$. What we want is
that operating on the reference $\spair FG$ is the same as operating on
both~$F$ and~$G$.  E.g., in the classical setting, we want that
$\assign{\spair FG}{(x,y)}$ is the same as $\assign Fx;\assign Gy$. And
in the quantum setting $\qapplyOLD{\spair FG}{U\otimes V}$ should be the
same as $\qapplyOLD FU;\qapplyOLD GV$.
Now $\spair FG$ is not, in general, a meaningful construct.
For example, if $F$ is a quantum reference, then it is not
clear what meaning $\spair FF$ could have.
Nor is $\spair FG$ meaningful if $F,G$ share a subsystem.
Thus we need to come up with
a notion of \emph{compatibility} between references so that we can pair
them. This definition will need to be well-behaved. For example, if
$F,G,H$ are pairwise disjoint, we would want that $\spair FG, H$ are
also disjoint, otherwise showing compatibility
of derived references would become too
difficult. It turns out a good (and simple) definition of compatibility is
to require that $F(a),G(b)$ commute for all $a,b$. This intuitively
means that $F$ and $G$ represent different parts of the memory.
\begin{definition}[Compatibility / pairs]\label{def:pair}
  We call references $F,G$ \emph{disjoint}\index{disjoint} if\/
  $\forall a,b.\ F(a)\cdot G(b)=G(b)\cdot F(a)$.  We define the \emph{pair}\index{pair} \symbolindexmark\spair{$\spair FG$} as the unique reference such that $F(a\otimes b)=F(a)\cdot G(b)$.
\end{definition}
(Notation: When writing $\spair FG$ is syntactically ambiguous, we write \symbolindexmark\pair{$\spair FG$} instead.
Pairs are right-associative. I.e., $FGH$ and $\pairFFF FGH$ mean $\pair{F}{\pair{G}{H}}$ and analogously for $F_1\dots F_n$ and $\pairs{F_1}{F_n}$.)
\begin{lemma}\label{lemma:pairs}
  If the references $F,G$ are disjoint, then
  $\spair FG$ exists, is uniquely defined, and is a reference.
\end{lemma}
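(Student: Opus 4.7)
The statement has three parts: existence, uniqueness, and the fact that $\pair FG$ is a register. I would dispatch existence and the register property immediately by invoking axiom \ref{ax:pairs}: the hypothesis ``$F,G$ compatible'' is exactly the premise of \ref{ax:pairs} (that $F(a)$ and $G(b)$ commute for all $a,b$), and \ref{ax:pairs} then directly hands us a register $\pair FG : \mathbf A \otimes \mathbf B \to \mathbf C$ satisfying the required equation $\pair FG(a\otimes b) = F(a)\cdot G(b)$.

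For uniqueness, the plan is to reduce it to axiom \ref{ax:tensorext}. Suppose $H_1, H_2 : \mathbf A \otimes \mathbf B \to \mathbf C$ are two registers that both satisfy $H_i(a\otimes b) = F(a)\cdot G(b)$ for all $a,b$. Since registers are pre-registers (axiom \ref{ax:reg.prereg}), both $H_1$ and $H_2$ are pre-registers from $\mathbf A \otimes \mathbf B$ to $\mathbf C$. By assumption they agree on every tensor $a\otimes b$, so axiom \ref{ax:tensorext} yields $H_1 = H_2$.

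I do not expect any genuine obstacle here: the axiomatization was designed precisely so that this lemma is an immediate consequence of \ref{ax:pairs} (for existence and being a register) together with \ref{ax:tensorext} (for uniqueness on the generating set $\{a\otimes b\}$). The only thing to be slightly careful about is that \ref{ax:tensorext} is stated for pre-registers rather than for arbitrary functions, which is why one explicitly notes that any register is in particular a pre-register before applying it.
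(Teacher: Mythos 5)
Your proof is correct and is essentially the argument the paper intends (the paper defers the proof to its Isabelle/HOL formalization, but it is exactly this: Axiom~\ref{ax:pairs} gives existence and the register property, and Axiom~\ref{ax:tensorext} applied via Axiom~\ref{ax:reg.prereg} gives uniqueness). No gaps.
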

(The proof is in \autoref{sec:proof:lemma:pairs}.)
Based on pairing, we can also define useful natural references
$\symbolindexmark\swap\swap:\mathbf A\otimes\mathbf B\to\mathbf B\otimes\mathbf A$ and
$\symbolindexmark\assoc\assoc:(\mathbf A\otimes\mathbf B)\otimes\mathbf C \to \mathbf
A\otimes(\mathbf B\otimes\mathbf C)$ and its inverse
$\symbolindexmark\assocp\assocp$.
And for references $F:\mathbf A\to\mathbf{A}',G:\mathbf{B}\to\mathbf{B}'$,
we can define a reference $F\symbolindexmark\rtensor\rtensor G:\mathbf A\otimes\mathbf B\to\mathbf A'\otimes\mathbf B'$,
the tensor product of references.%
\footnote{\label{footnote:def.sigma.alpha.tensor}%
  They are defined as $\swap=\pair\Snd\Fst$,
  $\assoc = \pair{\pair\Fst{\Snd\circ\Fst}}{\Snd\circ\Snd}$,
  $\assocp = \pair{\Fst\circ\Fst}{\pair{\Fst\circ\Snd}{\Snd}}$,
  $F\rtensor G = \pb\pair{\paren{\lambda a. F(a)\otimes 1}}{\paren{\lambda b. 1\otimes G(b)}}$.
  Their characteristic properties which justify these definitions are given in \autoref{fig:laws}.}
(Notation: $\rtensor$ is right-associative. I.e., $F\rtensor G\rtensor H$ means $F\rtensor (G\rtensor H)$.)

Together with the chaining operation, this gives us the
possibility to rewrite expressions involving references. For example, the
quantum program $\qapplyOLD{\spair FG}U;\ \qapplyOLD{\spair GF}V$ would perform
$\spair GF(V)\cdot \spair FG(U)$ on the quantum memory. And say we have
another quantum program $\qapplyOLD{\spair FG}W$, performing $\pair
FG(W)$. And we want to find out whether those programs are
denotationally equivalent.
We can rewrite $\spair GF(V)\cdot \spair FG(U)$ into $\spair FG(\sigma(V)\cdot U)$ (without any recourse to the specific semantics of the quantum case, only using only the general laws
\footnote{$\spair GF(V)\cdot \spair FG(U)
  \quad\txtrel{\ref{law:pair.sigma}}=\quad
  (\chain{(\spair FG)}\sigma)(V)\cdot\spair FG(U)
  \quad\txtrel{Def.~\ref{def:chain}}=\quad
  \spair FG(\sigma(V))\cdot\spair FG(U)
  \ \txtrel{\ref{ax:reg.monhom}}=\ 
  \spair FG(\sigma(V)\cdot U) $.
}). Thus the two programs are denotationally
equal if
$\sigma(V)\cdot U=W$. The important fact here is that we
have reduced the denotational equivalence question to an expression
involving only the matrices $U,V,W$. We do not need to know how $F,G$
are defined, they might be elementary quantum references or complex
reference-expressions (as long as they are disjoint); all the
underlying complexity is handled by the reference formalism.

The laws for reasoning about references are given in \autoref{fig:laws}.

\begin{fullversion}
  We formalized the proofs of all laws in Isabelle/HOL (\autoref{sec:isabelle}), so we do not give them here.
\end{fullversion}

\begin{figure*}
  \raggedright

  \textbf{Tensor product.} ($F,G$ are assumed to be references.)
  \quad \nextlaw\label{law:tensor3}%
    If $F(a\otimes (b\otimes c))=G(a\otimes (b\otimes c))$ for all $a,b,c$,
    then $F=G$.\fullonly\footnotemark
    \quad \nextlaw\label{law:tensor.ab}%
    $F\rtensor G$ is a reference and
    $(F\rtensor G)(a\otimes b) = F(a)\otimes G(b)$

  \textbf{Swaps and associators.}
  \quad \nextlaw\label{law:sigma.alpha.refs} $\sigma,\alpha,\alpha'$ are references.
  \quad \nextlaw\label{law:sigma.ab}%
  $\sigma(a\otimes b)=b\otimes a$
  \quad \nextlaw\label{law:chain.sigma.12} $\chain\sigma\Fst=\Snd$ \quad $\chain\sigma\Snd=\Fst$
  \quad \nextlaw\label{law:alpha.abc}%
  $\alpha\pb\paren{(a\otimes b)\otimes c}=a\otimes(b\otimes c)$
  \quad \nextlaw\label{law:alpha'.abc}%
  $\alpha'\pb\paren{a\otimes(b\otimes c)}=(a\otimes b)\otimes c$

  \textbf{Disjointness.} ($F,G,H,L$ are assumed to be references.)
  \quad \nextlaw\label{law:Fst.Snd.disjoint} $\Fst,\Snd$ are disjoint.
  \quad \nextlaw\label{law:compat3}%
  If $F,G,H$ are pairwise disjoint, $\spair FG$ and $H$ are disjoint.
  \quad \nextlaw\label{law:chain.disjoint.outer} If $F,G$ are disjoint, $\chain FH,G$ are disjoint.
  \quad \nextlaw\label{law:chain.disjoint.inner} If $F,G$ are disjoint, $\chain HF,\chain HG$ are disjoint.
  \quad \nextlaw\label{law:tensor.disjoint} 
  If $F,H$ are disjoint, and $G,L$ are disjoint, then $F\rtensor G, H\rtensor L$ are disjoint.
  
  \textbf{Pairs.}
  ($F,G,H$ are assumed to be pairwise disjoint references, and $C,D$ to be references.)
  \quad \nextlaw\label{law:pair.ref} $\spair FG$ is a reference.
  \quad \nextlaw\label{law:pair.select}
  $\chain{\spair FG}\Fst=F$, $\chain{\spair FG}\Snd=G$, 
  \quad \nextlaw\label{law:fst.snd.id} $\pair\Fst\Snd = \id$
  \quad \nextlaw\label{law:snd.fst.sigma} $\pair\Snd\Fst = \sigma$
  \quad \nextlaw\label{law:pair.sigma}%
  $\chain{\spair FG}\sigma = \spair GF$
  \quad \nextlaw\label{law:pair.alpha}%
  $\chain{\pair F{\spair GH}}\alpha = \pair{\spair FG}H$
  \quad \nextlaw\label{law:pair.alpha'}%
  $\chain{\pair{\spair FG}H}{\alpha'} = {\pair F{\spair GH}}$
  \quad \nextlaw\label{law:pair.chain}%
  $\pair{\chain CF}{\chain CG} = \chain C{\spair FG}$
  \quad \nextlaw\label{law:pair.tensor}%
  $\chain{\spair FG}{(C\rtensor D)}=\pair{\chain FC}{\chain GD}$ 

  \caption{Generic (derived) laws for references. \quad
    All laws have the implicit assumption that the domains/codomains of the references match.
    (E.g., if $\chain FG$ occurs in a law, then the domain of $F$ is assumed to match the codomain of $G$.)
    \quad
    The proofs of these laws (as a consequence of the axioms in \autoref{fig:axioms}) are given in \autoref{sec:proofs:laws}.
  }
  \label{fig:laws}
\end{figure*}
\fullonly{\footnotetext{\label{footnote:separating}More generally, we say $M\subseteq \mathbf A$ is a
      \emph{separating set}\index{separating set} of $\mathbf A$ iff for all $\mathbf C$ and all pre-references $F,G:\mathbf A\to\mathbf C$
      with $\forall x\in \mathbf A.\ F(x)=G(x)$, we have $F=G$.
      The axioms for pre-references then imply that if $M,N$ are separating sets of
      $\mathbf A,\mathbf B$, then $\{a\otimes b:a\in M,b\in N\}$ is a
      separating set of $\mathbf A\otimes \mathbf B$. From this we immediately get
      \ref{law:tensor3} and analogous statements for more than three
      tensor factors.

      In many logics (e.g., simple type theory as in Isabelle/HOL), the
      above definition of a separating set cannot be formulated (because ``for all
      $\mathbf C$'' quantifies over all objects in a category). In that case, we can instead fix $\mathbf C$.
      I.e., we call $M$ a \emph{separating$_{\mathbf C}$ set} of $\mathbf A$
      iff for all pre-references $F,G:\mathbf A\to\mathbf C$
      with $\forall x\in \mathbf A.\ F(x)=G(x)$ we have $F=G$.
      If $M,N$ are separating$_{\mathbf C}$ sets of
      $\mathbf A,\mathbf B$, then $\{a\otimes b:a\in M,b\in N\}$ is a
      separating$_{\mathbf C}$ set of $\mathbf A\otimes \mathbf B$.}}

\paragraph{Iso-references and equivalence.} Some simple categorical notion will come in handy later:
We call a reference $F$ an \emph{iso-reference}\index{iso-reference} iff there exists a reference $G$ such that $F\circ G=\id$ and $G\circ F=\id$.
(I.e., an iso-reference is an isomorphism in the category of references.)
Iso-references have an important intuitive meaning: While references typically give access only to part of the program memory, an iso-reference gives access to the whole memory.\footnote{%
  A notion with a similar intuition could have been obtained if we merely require that $F$ is surjective.
  Then also any operation on the whole memory would correspond to some operation on $F$.
  But the resulting notion seems less nice to work with.}
We call two references $F:\mathbf A\to\mathbf C$, $G:\mathbf B\to \mathbf C$ \emph{equivalent}\index{equivalent references} iff there exists an iso-reference $I$ such that $F\circ I=G$.
(In category-theoretic terms: they are isomorphic as objects of the slice category of the reference category.)
Intuitively, this means that one can perform the same updates on the program memory given $F$ and given $G$.
In particular, if we care only about the effect a reference can have on the program memory, all we need to know is the reference up to equivalence.\fullonly{\footnote{%
  A notion with a similar intuition could have been obtained if we merely require that $F,G$ have the same range.
  But the resulting notion seems less nice to work with.
  Note that specifically in the quantum setting, these two notions coincide, though. \fullshort{(\autoref{lemma:same.range.equiv}.)}{(Shown in the supplement, \autoref{lemma:same.range.equiv}.)}}}

\section{Example}
\label{sec:example}

\fullshort{
  To clarify how references can be used, we will develop a small imperative programming language that uses references to access variables. We support write access to variables (and parts and tuples thereof), as well as procedures with call-by-reference semantics (where we can even pass a reference to part of a variable).
This illustrates that such language features can be obtains ``for free'' using our framework.
}{
  To clarify how references can be used, we will develop a tiny imperative programming language that uses references to access variables.
  In the full version (attached) we extend this example also include procedures with call-by-reference semantics (where we can even pass a reference to part of a variable) to illustrate that such language features can be obtains ``for free'' using our framework.
}

We first design as much of our language as we can using only the generic notion of references (as in \autoref{sec:generic}) in order to demonstrate how far one can get without even having to consider concrete computational models (such as deterministic, probabilistic, or quantum).
We then formalize the special case of a quantum language in \fullshort{\autoref{sec:ex.quantum}, and the classical case in \autoref{sec:ex.classical}.}{\autoref{sec:quantum-overall}, and the classical case in the full version~\autoref{FV:sec:ex.classical}.}
(A reader not interested in this level of abstraction may also first read section \autoref{sec:quantum-finite}
and then read this example section with $\calR:=\Lquantumfin$ in mind.)

We will just assume the following ingredients:
\begin{itemize}
\item A type system for values. We require that there is a type $T\times T'$ for any types $T,T'$.
  Besides that, we put no requirements (but it would be natural to have types for bools, ints, as well as more complex algebraic datatypes).
\item A reference category $\calR$.

  We require that for every program type $T$, there is an associated update monoid $\mathbf T$ in $\calR$.
  Intuitively, $\mathbf T$ are all operations that can be done on a variable of type $T$.

  We assume that the update monoid associated with $T\times U$ is~$\mathbf T\otimes\mathbf U$.
\end{itemize}

We will assume that the update monoids of $\calR$ are also the suitable mathematical structure we use for describing the program semantics. E.g., if the update monoids contain unitary operators, then we will also express the semantics of a program using unitary operators (this would, e.g., disallow having measurements of other non-reversible operations). This is purely for simplicity of exposition, we explain how to avoid this restriction in \fullshort{\autoref{sec:example.more.general}}{the full version (\autoref{FV:sec:example.more.general})}.

\fullonly{
  \subsection{The nano language.}\index{nano language}\index{language!nano}
  \label{sec:nano}
}

\paragraph{Syntax.}
The first (and most minimalistic) version of the language has the following syntax:
\[
  P ::= \skipprog \mid P;P \mid \qapply UX
\]
Here $\skipprog$ does nothing, and $P;P'$ is the sequential composition of programs $P$ and $P'$.
And $\qapply UX$ applies the operation $U$ to $X$.
Here $X$ is a reference, and $U$ is some operation that can be applied to the content of $X$.
E.g., in the classical case, a function $U:T\to T$ that is applied to the content of $X$.
(So a valid program might be $\qapply{\paren{\lambda x.\ x^2}}X$ for an integer variable $X$.)
And in the quantum case a unitary operation $U$.
(So a valid program might be $\qapply HX$, applying a Hadamard gate to the qubit $X$.)
Since $X$ can be an arbitrary reference, we can also use tuples for $X$, without having to introduce them explicitly in language definition. For example:
\[
  P_\mathit{ex} := \qapply{\Uswap}{\spair XY};\ \qapply\Uswap{\spair YX}
\]
There $\Uswap$ would be the swap of two values, the formal definition of such an operation depends on the context.
(Classical: $\Uswap := \lambda(x,y).\ (y.x)$. Quantum: $\Uswap\ket{x,y} := \ket{y,x}$.)
And $X$, $Y$ are disjoint references (i.e., global variables).

So this program would swap $X,Y$ and then swap $Y,X$.

\paragraph{Typing.}
We require programs to be well-typed. For this, fix some type $T_M$ of the global memory.
Then the only requirement is that in an $\qapply UX$ statements, if $U\in\mathbf T_U$, we have that $X$ is a reference $\mathbf T_U\to\mathbf T_M$.
(Note that this also implies that programs like $\qapply U{\spair XX}$ are forbidden because $\spair XX$ is not a reference by \autoref{def:pair}.)

\paragraph{Semantics.}
We can now, completely abstractly, define the denotational semantics $\denot P \in \mathbf T_M$ of a program $P$:%
\footnote{Recall that \emph{for simplicity}, we use the elements of the update monoids also to describe the semantics of programs.}
\[
  \denot\skipprog := 1
  \qquad
  \denot{P;Q} := \denot Q\cdot \denot P
  \qquad
  \denot{\qapply UX} := X(U)
\]
The first two are just the natural way of defining the empty program and sequential composition, given that $\Sem_{T_M}$ is a monoid.

The interesting part is $\denot{\qapply UX}$: Since $X:\mathbf T_U\to\mathbf T_M$ is a reference, we can apply it the operation $U$ to tell us what $U$ does to the whole memory. So $X(U)$ is an update of the whole memory, a suitable description of the semantics of $\denot{\qapply UX}$. This illustrates how the whole complexity of addressing $X$ (even if $X$ contains tuples and projections) is automatically handled by the reference framework.

\paragraph{Example computation.}
We close the discussion of the nano language with a computation of the semantics of $P_\mathit{ex}$ above.
We assume two properties of $\Uswap$: $\Uswap^2=1$ (swapping twice does nothing).
And $\sigma(\Uswap)=\Uswap$.
(Recall from \autoref{sec:generic} that $\sigma$ is the reference that exchanges two parts of the memory.
So this condition means that swapping $X$ and $Y$ is the same as swapping $Y$ and $X$.)
Then we have:
\begin{align*}
  \denot {P_\mathit{ex}}
  &\txtrel{def}= {\spair YX(\Uswap)} \cdot {\spair XY(\Uswap)}
    \starrel= {{\spair XY}(\sigma(\Uswap)) \cdot \spair XY(\Uswap)}
  \\&
  \starstarrel= {{\spair XY}(\Uswap) \cdot \spair XY(\Uswap)}
  \tristarrel= {{\spair XY}(\Uswap\cdot\Uswap)} 
  \starstarrel= {{\spair XY}(1)}
  \tristarrel = 1.
\end{align*}
Here $(*)$ uses law~\ref{law:pair.sigma}.
$(**)$ uses the properties of $\Uswap$.
And $(*{*}*)$ uses that references are monoid homomorphisms (axiom~\ref{ax:reg.monhom}).

The result $\denot{P_\mathit{ex}}$ is what we expect: swapping two variables twice does nothing. The derivation dealt with the various issues of addressing variables and composition of programs etc.

\delaytextsv{micro language}{
\subsection{The micro language.}\index{micro language}\index{language!micro}
\label{sec:micro}

We now extend the nano language to have procedures with call by reference.

\paragraph{Global and local references.}
Since we need to store the procedure parameters somewhere, we first introduce the concept of global and local variables.
The state of a program will be assumed to be bipartite, of type $T_M\times T_L$.
$T_M$ is the type of the global memory (containing global variables).
And $T_L$ is the type of the local memory (containing the procedure parameters).
This allows us to define global and local references:

A reference $X$ is \emph{global}\index{global!reference} if it equals $X=\chain\Fst{\Hat X}$ for some reference $\Hat X$.
This means, the reference points to some part of the global memory.
(This does not mean that we need to write every occurrence of a global reference in this form.
E.g., if $X,Y$ are both global, then $\spair XY$ is global because $\spair XY=\chain\Fst{\pair{\Hat X}{\Hat Y}}$.)

A reference $X$ is \emph{local}\index{local!reference} if it equals $X=\chain{\Snd}{\Hat X}$ for some $\Hat X$.
I.e., it points into the local memory.
(References can also be neither local nor global, e.g., $\spair XY$ for global $X$ and local $Y$.)

\paragraph{Syntax.}
The syntax of the language is now extended to be:
\[
  P ::= \skipprog \mid P;P \mid \qapply UX \mid \proccall pX
\]
Here $p$ is the name of a procedure, and $X$ is a \emph{local} reference.

Why do we restrict the arguments passed to procedures to local references?
If we allow to pass global references $X$ to a procedure, then the procedure can access $X$ both directly (as a global variable) and through its parameter (as a local variable).
We wish to avoid this form of aliasing; the simplest way to do so is the above restriction.
(Of course, it is still possible to pass a global reference to a procedure by swapping its content with a local reference, then passing the local reference, and afterwards swapping back.)

Procedure definitions have the following syntax:
\[
  \procdef pTP
\]
Here $p$ is the procedure name, $T$ the type of the parameters, and $P$ the procedure body (a program).
Note that there is only a single parameter: this is no restriction because $T$ can always be a product type.
Then several parameters can be passed as a tuple (nested pairs of references), and the individual references can be recovered in the procedure by suitable projections (chaining with $\Fst,\Snd$).
Of course, in a full-fledged language, we might want some nicer syntactic sugar for accessing several parameters.
Here we choose to stick to the bare-bones syntax to show what goes on under the hood.
We also note that the parameter is always called $\prm$; again, this is what happens under the hood, syntactic sugar may allow to write different names here.
Inside procedure bodies, $\prm$ will simply be a synonym for $\Snd$.
(Because $\Snd$ accesses the local memory, and the local memory contains only the procedure parameter in our case.)

An example of a program together with a procedure declaration might be:
\begin{align*}
  & \procdef p{\mathtt{bool}\times\mathtt{bool}}{\qapply U{\chain\prm\Fst}} \\
  & P_\mathit{ex2} := \proccall p{\pair{\chain X\Snd}{\chain X\Fst}};\ \qapply V{\chain X\Fst}
\end{align*}
Here $X$ is some local reference (of type $\mathtt{bool}\times\mathtt{bool}$), and $U,V\in\mathbf T_\mathtt{bool}$.

That is, the procedure $p$ takes two arguments (encoded into a single one), and then applies some operation $U$ to the first argument. ($\chain\prm\Fst = \chain\Snd\Fst$ is a local reference pointing to the first part of the local memory, because $\prm=\Snd$ points to the local memory.) With more syntactic sugar, it could be written $\mathbf{proc}\ p(Y,Z)\ \braces{\qapply UY}$. And $P_\mathit{ex2}$ invokes $p$ on $X$, but passing the second part before the first, and afterwards applies some operation $V$ to the first half of $X$.
Intuitively (and also formally as we will see), this means that $U$ will be applied to the second, and $V$ to the first half of the reference $X$.
I.e., $V\otimes U$ to the whole of $X$.

\paragraph{Typing.}
Programs and procedures need to be well-typed, of course.
Since we now have procedure definitions and local memories that have different types in different contexts, we type programs relative to an environment $E=(T_M,T_L,\frakP)$ where $\frakP$ is a set of procedure declarations (with distinct names), and $T_M,T_L$ are the type of the global/local memory. We write $T_M,T_L,\frakP\vdash S$ to mean that $P$ is well-typed in this environment. In essence, the typing rules simply guarantee that all types fit together, and that procedures are called only with local references as parameters, and that we do not have recursion. In detail:
\begin{mathpar}
  \inferrule{~}{T_M,T_L,\frakP \vdash \skipprog}
  \and
  \inferrule{T_M,T_L,\frakP \vdash P,\ Q
  }
  {T_M,T_L,\frakP \vdash P;Q}
  \and
  \inferrule{X:\mathbf T_U\to\mathbf T_M\otimes\mathbf T_L\text{ is a reference}
    \\
    U\in\mathbf T_U
  }
  {T_M,T_L,\frakP \vdash \qapply UX}
  \and
  \inferrule{
    T_M,T_p,\frakP\vdash P
    \\
    X:\mathbf T_p\to\mathbf T_M\otimes\mathbf T_L\text{ is a reference}
    \\
    X\text{ is local}
  }
  {T_M,\ T_L,\ \frakP\cupdot\pb\braces{\procdef p{T_p}P} \ \vdash\  \proccall pX}
\end{mathpar}
Notice how in the rule for procedure calls, the reference $X$ that is passed to $p$ is evaluated in the memory of type $T_M\times T_L$ of the calling program (because the memory space of $X$ is $\mathbf T_M\otimes\mathbf T_L$). But the body of the procedure is then evaluated in the memory of type $T_M\times T_p$ where $T_p$ is the type of $X$ (its content space is $\mathbf T_p$) because the local memory of $p$ contains only the parameter $X$.
(When typing the procedure body $P$, we may encounter the symbol $\prm$.
But recall that this was simply a synonym for $\Snd$, so the typing rules do not need a special provision for that.)

\paragraph{Semantics.}
The semantics of $\skipprog$, $P;Q$, and $\qapply UX$ were already defined for the nano language.
Those definitions do not change.
Defining the semantics of procedure calls in quantum languages can be quite tedious and error-prone:
One needs to worry about where to store the local variables of the caller, allocate variables of the callee, transfer the data to the callee, etc.
In this light, it comes as a pleasant surprise that our definition is very simple:
\[
  \denot{\proccall pX}
  :=
  {\pair\Fst X}\pb\paren{\denot P}
  \qquad
  (\text{given declaration}\ \ 
  \procdef p{T_p}P).
\]
Note that the typing rules imply that $X$ is local and can be written as $\Snd.\Hat X$, hence $\Fst$ and $X$ are disjoint, hence the pairing $\pair\Fst X$ is well-defined here.

How shall we understand this definition? $\denot P$ is the semantics of the program body.
It tells us how the program $P$ operates on a memory of type $T_M\times T_p$.
(That is, $\denot P\in\mathbf T_M\otimes \mathbf T_p$. For example, it is a unitary operation on such memories.)
Now, when invoking ${\proccall pX}$, we want $P$ to operate on the global memory (type $M$), and the content of $X$. Since the global memory is the first part of the memory $T_M\times T_L$, the reference referring to the totality of the global memory is $\Fst$. So $\pair{\Fst}X$ refers to the part of the global and local memory that we want $P$ to operate on.
Therefore we lift $\denot P$ to something operating on the whole memory by applying $\pair{\Fst}X$ to it. 

\paragraph{Example computation.}
We are now ready to check whether $P_\mathit{ex2}$ does what we intuitively assume:
\begin{align*}
  \denot{P_\mathit{ex2}}
  &\txtrel{def}= \chain X\Fst(V) \cdot
    \pair{\Fst}{\pair{\chain X\Snd}{\chain X\Fst}}\pb\paren{\denot{\qapply U{\chain\prm\Fst}}}
  \\
  &\txtrel{def}= \chain X\Fst(V) \cdot
    \pair{\Fst}{\pair{\chain X\Snd}{\chain X\Fst}}{(\chain\Snd\Fst(U))}
  \\
  &=
    {\chain X\Fst(V) \cdot
    \pair{\Fst}{\pair{\chain X\Snd}{\chain X\Fst}}.\Snd.\Fst(U)}
  \starrel=
    {\chain X\Fst(V) \cdot
    {\pair{\chain X\Snd}{\chain X\Fst}}.\Fst(U)}
  \\
  &\starrel=
    {\chain X\Fst(V) \cdot
    {\chain X\Snd}(U)}
  \starstarrel=
    {X\pb\paren{\Fst(V) \cdot
    \Snd(U)}}
  \tristarrel=
    {X\pb\paren{\pair\Fst\Snd(V\otimes U)}}
  \\
  &\fourstarrel=
{X(V\otimes U)}.
\end{align*}
And $(*)$ uses law \ref{law:pair.select}.
And $(**)$ uses that references are monoid homomorphisms (axiom~\ref{ax:reg.monhom}).
And $(*{*}*)$ is by definition of pairs (\autoref{def:pair}).
And $(*{*}{*}*)$ uses law \ref{law:fst.snd.id}.

As we see, $P_{ex2}$ simply applies $V\otimes U$ to $X$.
This was what we had expected from the intuitive semantics.
}

\delaytextsv{example-more-general}{
\subsection{More general semantics}
\label{sec:example.more.general}

(This subsection can be skipped at first reading; it is only required for understanding the examples involving quantum channels in \autoref{sec:ex.quantum} and probabilistic programs in \autoref{sec:ex.classical}.)

In the examples earlier in \autoref{sec:example}, we assumed that $\mathbf T_M$, the update monoid corresponding to the type of the whole memory, is also a suitable mathematical structure for describing the semantics of a program as a whole. This is not always the case. For example, in the quantum setting, we might want to use quantum subchannels (completely positive trace-reducing maps) to describe the semantics of a program (so as to allow, e.g., measurements in our programs).
But $\mathbf T_M$ only contains linear operators such as unitaries.
So we cannot use $\mathbf T_M$ as the mathematical structure for the denotational semantics.
The ideas of the above example still work in that case, if we generalize things a little:

For every program type $T$, let $\Sem_T$ denote the mathematical structure for describing the denotational semantics of imperative programs with a memory of type $T$. (In the examples above, we simply used $\Sem_T:=\mathbf T_M$.
Another example could be, e.g., $\Sem_T$ being the probabilistic functions $T\to T$ in the classical probabilistic case. Or the quantum subchannels.)
In addition to the assumptions made at the beginning of \autoref{sec:example}, we require:
\begin{itemize}
\item $\Sem_T$ is a monoid for every type $T$.
  (Multiplication is interpreted as running two programs in sequence, the unit is a program that does not do anything.)
\item For every type $T$, there is a monoid homomorphism $\Sem:\mathbf T\to\Sem_T$, that tells us how an update (in the sense of the reference category $\calR$) is interpreted as a program semantics.
  (E.g. this could map a bounded operator $U\in\mathbf T$ to a completely positive map $\rho\mapsto U\rho \adj U$.
  In the examples above, we simply had $\Sem:=\id$.)
\item 
  Finally, we require some way to apply references to $\Sem_T$ (not just to $\mathbf T$).
  That is, given a reference~$X$ (which is a function $\mathbf T\to\mathbf M$), we assume that there is a definition of $X^\bullet(s)\in\Sem_M$ for any $s\in\Sem_T$.
  The intuitive interpretation is that, if $s$ is the semantics of a program on states of type $s$, then $X^\bullet(s)$ is the semantics of that program working on the content of the subsystem $X$ of the larger system $M$.
  We require: $\forall a.\ X^\bullet(\Sem(a))=\Sem(X(a))$.
  (This can be less trivial to construct; we make this concrete in Sections~\ref{sec:ex.quantum} and~\ref{sec:ex.classical} in the quantum and classical cases.
  In the examples above, we simply had $X^\bullet:=X$.)
\end{itemize}

\noindent
With these added notions, we can generalize the examples in Sections~\ref{sec:nano} and~\ref{sec:micro}.
Syntax and typing of these languages does not change, but we now describe the semantics of a program $P$ as $\denot P\in\Sem_{T_M}$  or $\denot P\in\Sem_{T_M\times T_L}$, respectively.
The definition of this semantics is in the nano language:
\[
  \denot\skipprog := 1
  \qquad
  \denot{P;Q} := \denot Q\cdot \denot P
  \qquad
  \denot{\qapply UX} := \Sem(X(U))
\]
Note that the only difference is that instead of $X(U)$, we write $\Sem(X(U))$ when applying $U$ to $X$.
This is because $X(U)$ already describes what $U$ does to the program memory, we only need represent this as an element of $\Sem_{T_M}$ instead of $\mathbf T_M$.
$\Sem$ performs this conversion.

And for the micro language, we additionally define:
\[
  \denot{\proccall pX}
  :=
  {\pair\Fst X}^\bullet\pb\paren{\denot P}
  \qquad
  (\text{given declaration}\ \ 
  \procdef p{T_p}P).
\]
The intuition behind this stays the same.
Only now we cannot apply $\pair \Fst X$ to $\denot P$, because $\denot P\in\Sem_{T_M\times T_p}$, not in $\mathbf T_M\otimes \mathbf T_p$.
By transforming  $\pair \Fst X$ into a function $\pair \Fst X^\bullet$ on $\Sem_{T_M\times T_p}$, we solve this.

With these new definitions, the calculations given in the above examples still work, additionally using the properties of $\Sem$ and $\bullet$:
\[
  \denot{P_\mathit{ex}} = \Sem(1) = 1
  \qquad
  \text{and}
  \qquad
  \denot{P_\mathit{ex2}} = \Sem\pb\paren{X(V\otimes U)}.
\]

See Sections~\ref{sec:ex.quantum} and~\ref{sec:ex.classical} below to make this somewhat abstract generalization clearer.
}

\section{Quantum references}
\label{sec:quantum-finite}
\label{sec:quantum-overall}

Instantiating the general theory in the quantum setting turns out to
be very easy (at least when considering finite-dimensional spaces
only). The objects $\mathbf A,\mathbf B,\dots$ have to be monoids that
represent updates of quantum variables. Updates of quantum variables
are unitary operations (and/or projectors etc.), i.e., linear
operators on a finite-dimensional (but not zero-dimensional) complex Hilbert space (a.k.a.~matrices).\footnote{%
  One could argue that not all linear maps should be allowed, only, e.g., unitary ones.
  Or only contracting linear maps. However, it is more convenient to use a set that is linearly closed.
  And we do not loose anything because the references as we define them below preserve
  unitaries, projectors, contracting maps, etc.}
Thus: the objects of the category \symbolindexmark\Lquantumfin{$\Lquantumfin$} are the spaces
of linear operators on~$\calH_A$ for finite dimensional Hilbert spaces $\calH_A$.
(Those are naturally monoids with the identity matrix $1$.)
Throughout this section, we will use the notational convention that $\mathbf A$ is the space of linear operators on~$\calH_A$.
Similarly for $\mathbf B,\mathbf C,\dots$

Pre-references $\mathbf A\to\mathbf B$ are defined to be the linear
functions. It is then easy to check that all axioms from \autoref{fig:axioms} concerning
pre-references are satisfied. The tensor product is the usual tensor product of
matrices from linear algebra.

Finally, we define the references in $\Lquantumfin$ as those linear maps
$F:\mathbf A\to\mathbf B$ such that $F(1)=1$, $F(ab)=F(a)F(b)$, and
$F(\adj a)=\adj{F(a)}$. (The first two properties are required for
references by definition, and the third one makes references more
well-behaved, e.g., $F(U)$ is unitary for a unitary $U$.)
In other words, references are unital $*$-homomorphisms.

For this definition of $\Lquantumfin$, the axioms from
\autoref{fig:axioms} can be shown to hold, and the laws from
\autoref{fig:laws} follow.
\fullonly{We do not give the proofs in the finite-dimensional case here explicitly here.
They can be found in our Isabelle/HOL formalization (\autoref{sec:isabelle}) and also arise as special cases of the infinite-dimensional proofs (see\shortonly{ the supplement,} \autoref{sec:infinite}).}

In addition, we can easily define elementary references. E.g., if our
program state is of the form
$\calH_M:=\calH_A\otimes\calH_B\otimes\calH_C\otimes\dots$, then
$B(b):=1\otimes b\otimes1\otimes\dots$ is a reference
$\mathbf B\to\mathbf M$ that refers to the second subsystem.
However, in addition to those elementary
references, we can construct tuples and more using the constructions from
\autoref{sec:generic}. It is also possible to keep things axiomatic
and to perform an analysis simply stating that $A,B,C,\dots$ are disjoint quantum references without specifying concretely into which program state they embed and how.
This means that the result of the analysis holds for any choice of memory.
\fullonly{(We do this in the teleportation example in \autoref{sec:teleport}.)}

Beside the laws inherited from \autoref{sec:generic}, we get some
useful laws specific to the quantum setting.
For example, $F(a)$ is a projector/unitary if $a$ is a projector/unitary.
If we define $\symbolindexmark\sandwich{\sandwich U}(a):=Ua\adj U$ (short for sandwich), then
$\sandwich U$ is a reference for any unitary $U$. This makes it possible to
``basis transform'' quantum references (see \autopageref{page:mapping.intro.q}); what
was written $\basistrafo UX$ there would simply be $\symbolindexmark\basistrafo{\basistrafo UX}:=\chain X{\sandwich U}$.
We have the law
$F\pb\paren{\sandwich a(b)}=\sandwich{F(a)}\pb\paren{F(b)}$.

Quantum references as defined above allow us to lift unitaries or
projectors from the content space $\calH_A$ of a reference to its memory space
$\calH_M$. However, when reasoning about
quantum programs, we may also want to transport a subspace of
$\calH_A$ to a subspace of $\calH_M$. For example, predicates in Hoare
logic might be represented as subspaces,
 and we
want to express predicates on the whole program state in terms of
predicates about individual variables. This is easily achieved because
a subspace can be represented by the unique projector onto that
subspace, and those projectors can be lifted by~$F$.
(For more properties of this definition see \fullshort{\autoref{sec:lift.sub}}{the full version, \autoref{FV:sec:lift.sub}}.)

The proofs of the axioms from \autoref{fig:axioms} are quite simple, and they are anyway special cases of the infinite-dimensional ones given later, so we do not give them here.



\fullshort{
  \subsection{Example, specialized}
  \label{sec:ex.quantum}
}{
  \paragraph{Example, specialized.}
}%
In \autoref{sec:example}, we described how the semantics of a small programming language can be based on references.
That example was abstract, assuming a generic reference category.
\fullshort{

  In the quantum setting, there are several possibilities to make it concrete:

  \paragraph{Pure state semantics.}
  The simplest possibility is to describe a language where the semantics of programs are linear operators (i.e., the semantics map a pure state to a pure state).
  
}{%
  We now make this concrete:
  The simplest possibility is to describe a language where the semantics of programs are linear operators (i.e., the semantics map a pure state to a pure state).
  In the full version, we also describe this for program semantics based on mixed states, supporting, e.g., measurements etc.
}%
In this case, the types $T$ in the language could correspond to finite sets (e.g., $\mathtt{bool} \mathrel{\hat=} \bit$) with $\times$ being the Cartesian product, the corresponding Hilbert spaces are $\calH_T := \setC^T$ (e.g., $\calH_\mathtt{bool}$ is spanned by $\ket 0$, $\ket 1$), the update monoids $\mathbf T$ are the linear operators on $\calH_T$ (i.e., the matrices $\setC^{T\times T}$). The reference category is $\Lquantumfin$. 
We also restrict the $U$'s in the program syntax to unitaries.%
\footnote{More generally, operations $U$ with $\norm U\leq1$ would also work.
  $\norm U>1$ would be problematic; it leads to programs with termination probability $>1$.}
\fullonly\par
This gives us a language with commands  $\qapply U X$ for applying unitaries $U$ to references $X$.
The semantics of a program on memory of type $T$ is described by a linear operator on $\calH_T$.
\fullonly\par
The various programs and example calculations and results then carry over immediately to this concrete setting. (With $\Uswap\ket{x,y} := \ket{y,x}$.)

\delaytextsv{Mixed state semantics}{
\paragraph{Mixed state semantics.} 
If we want our language to support, e.g., measurements or random choices, pure state semantics are not sufficient.
In this case, we want the semantics to map mixed states (density operators) to mixed states.
That is, the semantics of a program are described by completely positive trace-reducing maps, a.k.a.~quantum subchannels.

Since the update monoids in the quantum reference category $\Lquantumfin$ consist of linear operators, not subchannels, we cannot use the same mathematical structure for the program semantics and the update monoids.
Instead we need to make use of the additional flexibility described in \autoref{sec:example.more.general}:

As before, the types $T$ in the language could correspond to finite sets, $\calH_T := \setC^T$, and the update monoids $\mathbf T$ are the linear operators on $\calH_T$. The reference category is $\Lquantumfin$.
But $\Sem_T$ is now defined as the space of completely positive maps (CPMs) $L(\calH_T)\to L(\calH_T)$ where $L(\calH_T)$ is the set of linear operators on $\calH_T$.
(This contains in particular the quantum subchannels.)

In particular, the semantics of a program with memory type $M$ will map a density operator over $\calH_M$ to a (sub)density operator over $\calH_M$.

To make things work, we additionally need to specify: a monoid homomorphism $\Sem:\mathbf T\to\Sem_T$ and a way to apply references to $\Sem_T$.
The monoid homomorphism is very natural: Any $U\in\mathbf T$ is mapped to the CPM $\rho\mapsto U\rho \adj U$. It is easy to check that this is a monoid homomorphism.

Applying a reference $X$ to $\Sem_T$ is less obvious.
We need to define $X^\bullet(\calE)$ for a CPM $\calE$ over $L(\calH_T)$ and a reference $X:\mathbf T\to\mathbf U$.
$X^\bullet(\calE)$ should be CPM on $L(\calH_U)$, and it should describe what happens on the overall quantum system if we apply $\calE$ to the subsystem $\calH_T$.
(E.g., if $X=\Fst$, then $X^\bullet(\calE)$ should be $\calE\otimes\id$.)
Such an $X^\bullet(\calE)$ is defined in \autoref{sec:quantum.channels}.
(Denoted simply $X(\calE)$ there.)
The required property $\forall a.\ X^\bullet(\Sem(a))=\Sem(X(a))$ is shown in \lemmaref{lemma:lift.channel:sem} there.

The various programs and example calculations and results from \autoref{sec:example} then carry over immediately to the concrete setting here.

Since we have mixed states semantics, we can now easily extend the language, e.g., with a command that measures a qubit: $\qif X1PQ$ measures whether the qubit in $X$ is $1$, and, if so, runs $P$ otherwise $Q$.%
\footnote{Typing: In the nano language, $X$ needs to be a reference $\mathbf T_\mathtt{bool}\to \mathbf T_M$.
  In the micro language, the following typing rule is added:
    \inferrule{T_M,T_L,\frakP \vdash P,\ Q
      \\
      X:\mathbf T_\texttt{bool}\to\mathbf T_M\otimes\mathbf T_L\text{ is a reference}
  }
  {T_M,T_L,\frakP \vdash \qif X1PQ}.} It would have the following semantics:
\[
  \pb\denot{\qif X1PQ}(\rho) := \denot P\pb\paren{X(\selfbutter 1)\rho X(\selfbutter 1)}
  + \denot Q\pb\paren{X(\selfbutter 0)\rho X(\selfbutter 0)}.
\]
}

\subsection{Quantum references -- infinite-dimensional case}
\label{sec:infinite}

Above, we instantiated the general theory of references in the quantum
case by considering finite-dimensional Hilbert spaces only. That is,
this approach does not allow us to consider, e.g., references of type
$\setC^{\setZ}$, i.e., an integer in superposition
(\emph{quint}\index{quint}).  Another example of something that does
not fit into the finite-dimensional framework is the position/momentum
example from the introduction.  We now describe how the instantiation
above needs to be changed to accommodate references whose types
correspond to arbitrary Hilbert spaces. In the finite-dimensional case, the
pairing axiom~\ref{ax:pairs} followed
immediately from the universal property of the tensor
product.\footnote{%
  Formulated in the setting of finite-dimensional Hilbert spaces, it says:
  For any bilinear function
  $F:\mathbf A\times\mathbf B \to\mathbf C$ (where
  $\mathbf A,\mathbf B,\mathbf C$ are the linear operators on the
  finite-dimensional Hilbert spaces $\calH_A,\calH_B,\calH_C$), there
  exists a unique linear function
  $\Hat F:\mathbf A\otimes\mathbf B\to\mathbf C$ such that
  $\Hat F(a\otimes b)=F(a,b)$ for all $a,b$.}  The problem in the
infinite-dimensional case is that this universal property does not, in
general, hold.\footnote{%
  What properties exactly hold depends very much on the spaces of
  operators we consider, as well as the specific variant of the tensor
  product. 
  \fullonly{See \autoref{sec:discussion} and \cite[Chapter 4]{takesaki} for a detailed exposition.}}
Fortunately, for the right
choices of objects and morphisms, we can still prove the pairing axiom without the usual universal property.

\paragraph{The quantum reference category \boldmath$\symbolindexmark\Lquantum\Lquantum$.}
The objects (update monoids) are the sets of bounded operators on Hilbert spaces. That is, for every Hilbert space $\calH$ (the type of a given reference), the space \symbolindexmark\bounded{$\bounded(\calH)$} of all bounded linear operators $\calH\to\calH$ is an object.
Throughout this paper, in the quantum setting, we follow the notational convention that $\mathbf A,\mathbf B,\dots$ always denotes $B(\calH_A),B(\calH_B),\dots$
($\bounded(\calH)$ is a monoid where the multiplication is the composition of operators and the unit is the identity operator $1$.)

The \emph{pre-references}\index{pre-references!quantum, infinite-dimensional}\index{quantum pre-references!infinite-dimensional}
$\mathbf A\to\mathbf B$ are the weak*-continuous bounded linear maps
from $\mathbf A$ to $\mathbf B$.
(The \emph{weak*-topology}\index{weak*-topology} on the space of bounded operators is the coarsest topology where $x\mapsto \abs{\tr ax}$ is continuous for all trace-class operators $a$ (\S20 in~\cite{conway00operator}). In other words, a net $x_i$ weak*-converges to $x$ iff for all trace class $a$, $\tr ax_i\to\tr ax$.
Weak*-continuous bounded linear maps are also known as \emph{normal}\index{normal map} maps \cite[Def.~III.2.15]{takesaki}.)

The tensor product on $\mathbf A,\mathbf B$ is the
\emph{tensor product of von Neumann algebras}%
\index{tensor product of von Neumann algebras}%
\index{Neumann algebra!tensor product of}, that is, $\mathbf A\symbolindexmark\tensor\tensor\mathbf B$ is the set $\bounded(\calH_A\otimes\calH_B)$ of
bounded operators on $\calH_A\otimes\calH_B$.  (In \cite{takesaki},
this tensor product is defined in Definition IV.1.3 as
``$\overline\otimes$''.
The fact that $\mathbf A\otimes\mathbf B=\bounded(\calH_A\otimes\calH_B)$ is given by (10) in the same section.)
For any $a\in\mathbf A$, $b\in\mathbf B$, $a\otimes b$ is defined as the
unique operator such that
$(a\otimes b)(\psi\otimes\phi)=a\psi\otimes b\phi$ for all
$\psi\in\calH_A,\phi\in\calH_B$.  (The tensor product on the rhs of
this equation is the Hilbert space tensor product \cite[Definition IV.1.2]{takesaki}.
Existence of $a\otimes b$ is shown in the discussion after that definition.)  

The \emph{references}\index{references!quantum, infinite-dimensional}\index{quantum references!infinite-dimensional} are
the weak*-continuous unital *-homomorphisms.  (Unital
means $F(1)=1$, $*$-homomorphism means linear, $F(ab)=F(a)F(b)$ and $F(\adj a)=\adj{F(a)}$.)

\medskip

\begin{fullversion}
  We give the proofs that $\Lquantum$ is a reference category in \autoref{sec:proofs.infdim}.

  \medskip

  Note that for finite-dimensional Hilbert spaces, the definitions here coincide with those from \autoref{sec:quantum-finite} since over finite-dimensional Hilbert spaces, all operators are bounded, and all linear maps are bounded and weak*-continuous, and the tensor product of von-Neumann algebras coincides with the algebraic tensor product.
\end{fullversion}

\begin{shortversion}
  We provide additional discussion of design choices, related concepts, and open questions in the full version, \autoref{FV:sec:discussion}
\end{shortversion}

\begin{fullversion}
  \paragraph{Example, specialized}
  The explanations from \autoref{sec:ex.quantum} how to instantiate the example from \autoref{sec:example} in the quantum case also apply in the infinite-dimensional settings, with very minor differences:
  The Hilbert space $\calH_T$ corresponding to type $T$ is the space $\ell_2(T)$ of square-summable sequences $T\to\setC$
  (instead of $\calH_T:=\setC^T$).
  The update monoids are the bounded operators on $\calH_T$
  (instead of the linear operators).
  Program semantics in the pure state semantics are bounded operators on $\calH_T$
  (instead of the linear operators).
  $\Sem_T$ consists of the completely positive \emph{bounded-linear} maps $\tracecl(\calH_T)\to \tracecl(\calH_T)$ where $\tracecl(\calH_T)$ are the trace-class operators over $\calH_T$
  (instead of the linear operators).
  Everything else applies unchanged.

  (Since for finite $T$, $\calH_T=\setC^T$, and the bounded and trace-class operators coincide with the linear operators, and all linear maps are bounded-linear, the instantiation from \autoref{sec:ex.quantum} is a special case of this instantiation.)
\end{fullversion}

\delaytextsv{quantum discussion}{
\fullshort{
  \subsection{Discussion}
}{
  \section{Discussion (quantum case)}
}
\label{sec:discussion}

\paragraph{Using other tensor products (in the infinite-dimensional case).}
In \autoref{sec:infinite}, we chose the tensor product of von Neumann algebras as the tensor product in our category.
As a consequence, the morphisms we considered were \emph{weak*-continuous} bounded linear maps. (This is because the von Neumann tensor product is the weak*-closure of the algebraic tensor product.)
However, this is not the only possible tensor product on spaces of bounded operators. Other tensor products exist that are the closure of the algebraic tensor product with respect to other topologies. For example, the projective C*-tensor product \cite[Definition IV.4.5]{takesaki}
satisfies the property required for Axiom~\ref{ax:pairs} without requiring \emph{weak*-continuous} maps \cite[Proposition IV.4.7]{takesaki}. So it is well conceivable that we can define a category of quantum references also with this tensor product. Or we could forgo any topological considerations and try to use the algebraic tensor product, and allow all linear maps (not just bounded ones). The reason why we chose the tensor product of von Neumann algebras is that it satisfies $\bounded(\calH_A)\otimes\bounded(\calH_B)=\bounded(\calH_A\otimes\calH_B)$.
This makes the resulting theory much more natural to use:
If we have two references $F,G$, with types $\calH_A$ and $\calH_B$, then it would be natural that the pair $\spair FG$ has type $\calH_A\otimes\calH_B$. E.g., a unitary $U:\calH_A\otimes\calH_B\to\calH_A\otimes\calH_B$ is something we expect to be able to apply to the reference $\spair FG$. It seems that the tensor product of von Neumann algebras is the only one that gives us this property.
However, we do not exclude that instantiations of the theory of references with other tensor products might work and give us other advantages that we are not currently aware of.

\paragraph{Mixed classical/quantum references.}
Another interesting question is to extend the theory of quantum references to mixed classical/quantum references.
For example, we could have a reference that contains a list of qubits, where the length of the list is classical.
Our current formalism does not allow that.
(We can only model references that contain lists of qubits where the length of the list is also in superposition.) In particular, with the current theory, classical references (see \autoref{sec:classical}) are not a special case of quantum ones.
At a first glance, it would seem that there is already a ready-made solution available for this problem: Mixed classical/quantum systems are oftentimes represented by von Neumann algebras (i.e., certain subsets of the sets of all bounded operators).
For example, a system consisting of $n$ qubits would be represented by $\bounded(\setC^{2^n})$, same as we do in the present paper.
And a system of classical lists of qubits then would be the direct sum of all possible systems of $n$-qubits for $n\geq 0$. That is, $\bigoplus_{n=0}^\infty \bounded(\setC^{2^n})$.
And a classical bit could be $\bounded(\setC)\oplus\bounded(\setC)$.
In fact, classical systems are exactly the commutative von Neumann algebras!
Unfortunately, this approach seems unsuitable in our setting.

To see this, consider the special case of classical systems where the program state is purely classical.
I.e., it is modeled by some commutative von Neumann algebra~$\mathbf C$.
But that means that for any two references $F:\mathbf A\to\mathbf C$, $G:\mathbf B\to\mathbf C$, the ranges of $F$ and $G$ commute trivially.
Hence any two references in a program with classical state would be disjoint.
This shows us that we are on the wrong track.
The intuitive meaning of disjointness is that $F,G$ corresponds to different parts of the program state, and this should clearly not be the case for any two references. (In particular, $F$ should not be disjoint from itself except in the degenerate case where $F$ has unit type.)

Furthermore, modeling classical references as commutative von Neumann algebras also fails to represent the intuitive meaning of a ``set of updates''. Updates on a classical reference should in some way correspond to classical functions on some set $X$. In particular, different updates should not always commute. But in a commutative von Neumann algebra, everything commutes.

(Note that we have only illustrated why the approach fails for classical data. But since classical data is a necessary special case of mixed quantum/classical data, this also implies that the approach does not work for mixed data.)

What went wrong? And why is the von Neumann algebra formalism (as sketched above) successful in modeling mixed classical/quantum data in other settings? The reason, as we understand it, is that this von Neumann algebra formalism is based on a very different intuition: Elements of the von Neumann algebra represent \emph{observables}\index{observable}. For example, in $\bounded(\calH)$, we find the Hermitian operators $a$ on $\calH$, and each such operator gives rise to a function from quantum states to real numbers, via $\tr a\rho$ or $\adj\psi a\psi$, depending on whether we consider a mixed state $\rho$ or a pure state $\psi$. Physically, this corresponds to the expected value of the real-valued measurement described by $a$. And a classical system with values $X$ is described by the Neumann algebra consisting of all bounded complex sequences $a_x$ ($x\in X$). For any classical ``state'', namely a discrete\footnote{We assume \emph{discrete} measure spaces to keep this example simple.} probability distribution $\mu$ on $X$, any (real-valued) $a_x$ also gives rise to a real number via $\sum_x\mu(x)a_x$, namely the expectation of $a_x$ where $x$ is the state of the system.

So an element of a von Neumann algebra (in this formalism) describes a \emph{measurement} on the system.
I.e., it tells us how to \emph{read} from it. In contrast, in our setting, we also need to know how to \emph{write} to the system (since we describe ``updates'').
This is basically the same difference as between rvalues and lvalues in classical programming languages. Knowing how to read a reference is not, in general, sufficient for knowing how to write it.\footnote{For example, for a classical reference, we can describe how to read it by giving a function $f:M\to X$ from program states $M$ to reference contents $X$. Yet, from this function $f$, it is not possible to reconstruct how updates on the reference should happen.}

So we claim that the fact that our quantum reference category has the
same objects and morphisms as existing work that uses the von Neumann algebra formalism (e.g., \cite{kornell17quantum,cho16neumann,pechoux20quantum} to mention just a few) is coincidental and only happens when we consider purely quantum systems. Namely, in our setting, the objects are the linear span of all unitary transformations (a unitary is a natural representation of an ``update''), while in the other work, the objects are the linear span of all Hermitian operators (a Hermitian is a natural representation of a real-valued measurement). Those two spans happen to coincide, in both cases we get all bounded operators. However, one should be careful to read too much into this coincidence. In particular, if we leave the purely quantum setting, we do not have any such correspondence anymore. (Consider for example the stark difference of our modeling of classical references (partial functions on sets, see \autoref{sec:classical}) and the modeling of classical systems via commutative von Neumann algebras (complex sequences).)\footnote{%
  One further thing that seems to be coincidence is the use of weak*-continuous unital *-isomorphisms. For example, in \cite[Section 5.4]{pechoux20quantum}, a denotational semantics for quantum ``values'' $v$ backed by a memory of qubits is given.
  The underlying intuition seems to be that it describes how an observable on $v$ is transformed into an observable on the program state.
  This denotational semantics $\llbracket v\rrbracket$ of $v$ is given by a weak*-continuous unital *-isomorphism from the von Neumann algebra $\mathbf V$ corresponding to $v$ to the von Neumann algebra corresponding to a collection of qubits (i.e., $\bounded(\setC^{2^n})$), \emph{exactly as in our category}.
  
  Yet we claim that this is another coincidence due to the fact that we consider only pure quantum systems. Namely, in \cite{pechoux20quantum}, $\llbracket v\rrbracket$ is not always an injective function. E.g., $\llbracket\mathbf{left}\dots\rrbracket$ (left injection into a sum type) is noninjective. On the other hand, in our setting, noninjective references would make little sense. This would mean that two updates that have different effects on a specific reference can have the same effect on the overall state. So, in fact, one could argue that references in our setting should actually be described as \textit{injective} weak*-continuous unital *-isomorphisms. It just so happens that in our case (where the only von Neumann algebras we consider are the sets of all bounded operators),
  \textit{injective} weak*-continuous unital *-isomorphisms coincide with
  weak*-continuous unital *-isomorphisms (\autoref{lemma:reg.injective}).
  This is not the case in general!
  So the fact that our set of morphisms coincides with those from \cite{pechoux20quantum} is again just an artifact of the fact that we consider only pure quantum systems.

  (As a final remark note that the modeling from \cite{pechoux20quantum} also does not give rise to a way of pairing two values in the same way as we do with references. In \cite{pechoux20quantum}, when combining two values into a pair, the two values need to be specified on distinct collections of qubits that then get concatenated. See the fifth rule in Figure 7 in \cite{pechoux20quantum}.
  This is more similar to our tensor product of references than to our pairing operation.)
}

So in light of this, to formalize mixed classical/quantum references, we will need to use a different category than that of von Neumann algebras. At this point, it is not clear yet what this category looks like.
}

\delaytextsv{classical refs}{
  \newcommand\INCLUDEONCEfagerighoutsphutvsovtrbgid{}

\section{Classical references}
\label{sec:classical}

We come to the classical case. The most obvious approach would be to say that the updates on a reference taking values in $A$ are the functions $A\to A$. However, it is not possible to satisfy the axioms of a reference category for this choice.\fullonly{\footnote{%
  If updates are the function sets $A\to A$, and the tensor product is defined in the natural way as $\mathbf A\otimes\mathbf B := (A\times B)\to(A\times B)$, $(a\otimes b)(x,y):=(a(x), b(y))$, the axioms are contradictory for any definition of pre-references/references:
  Let $A:=\bit$.
  Let $F_i:\mathbf A\otimes\mathbf A\to
  \mathbf A\otimes\mathbf A$ for $i=0,1$ be defined as $F_i(z):=f\circ z\circ g_i$ where $f(xy):=x0$ and $g_i(00):=0i$ and $g_i=00$ everywhere else.
  By Axioms~\ref{ax:preregs} and~\ref{ax:cdot-a},
  $F_i$ are pre-references. Let $z(xy):=yx$. Then $F_i(z)(00)=i0$, so $F_0\neq F_1$.
  For any $z_1,z_2\in\mathbf A$, we have
  $F_i(z_1\otimes z_2)(xy)=f\circ(z_1\otimes z_2)(00)$ for $xy\neq 00$ and $F_i(z_1\otimes z_2)(00)=z_1(0)0$. Thus $F_0(z_1\otimes z_2)=F_1(z_1\otimes z_2)$ for all $z_1,z_2$.
  By Axiom~\ref{ax:tensorext}, this implies $F_0=F_1$ in contradiction to $F_0\neq F_1$ above.}} Instead, we define the updates on $A$ as the \emph{partial} functions $A\symbolindexmark\partialto\partialto A$.
This choice has the additional advantage that it also directly provides support for partiality in the program semantics; the semantics of a deterministic partial program is often naturally described as a partial function.
\anonymous{}{(An earlier version \cite{arxiv-v1} of this work
  formalized classical updates as relations on $A$ instead. We believe that our new approach is more natural and easier to use. Both approaches are viable, however.)
}In this section, we use the notational
convention that $\mathbf A,\mathbf B,\dots$ denote the sets of partial functions on $A,B,\dots$.

To define references $\mathbf A\to\mathbf B$, let us forget for a moment that a reference should map updates to updates.
Instead, we note that a classical reference with values in $A$ inside a memory of type $B$ naturally allows us to perform two actions:
We can read the value of the reference using a function $g:B\to A$ (the \emph{getter}\index{getter}).
And we can set the value of the reference using a function $s:A\times B\to B$ (the \emph{setter}\index{setter}).
The getter and setter satisfy some natural conditions: We call $(g,s)$ \emph{valid}\index{valid!getter/setter} iff for all $a,a'\in A$ and $b\in B$, we have
$b = s (g(b), b)$ and $g (s(a, b)) = a$ and $s(a, s(a', b)) = s(a, b)$.
This is also known as a lens.
A lens is a natural candidate for formalizing classical references.
We show that this approach indeed fits in our formalism.
A getter/setter pair $(g,s)$ naturally gives rise to a function $F:\mathbf A\to\mathbf B$ via $F(a)(b) := s(a(g(b)),b)$. (I.e., to apply $a$ to the reference, we retrieve its content $g(b)$, apply $a$ to it, and set the result as the new content of the reference.) Note that $F(a)$ can be a partial function if $a$ is, but $F(a)$ is total for total $a$. Thus the \emph{references}\index{reference!classical}\index{classical reference} are the functions $F:\mathbf A\to\mathbf B$ such that there exist valid $(g,s)$ with $\forall a\in\mathbf A, b\in B. \ F(a)(b) = s(a(g(b)),b)$.

In order to define pre-references, we somewhat relax these conditions. We cannot define the pre-references $\mathbf A\to\mathbf B$ as all functions  $\mathbf A\to\mathbf B$ since then Axiom~\ref{ax:tensorext} would not hold. Instead, we define 
\emph{pre-references}\index{pre-reference!classical}\index{classical pre-reference} as the functions $F:\mathbf A\to\mathbf B$ such that there exist a total $g:B\to A$ and a partial $s:A\times B\partialto B$ with $\forall a\in\mathbf A, b\in B. \ F(a)(b) = s(a(g(b)),b)$. (Note: we removed the validity requirement, and we allow $s$ to be partial. With total $s$, Axiom~\ref{ax:cdot-a} would not hold.)

Finally, we need to define the tensor product of updates. $\mathbf A\otimes\mathbf B$ is the set of partial functions on $A\times B$. (This is the only natural choice since we want a pair of references taking values in $A,B$, respectively, to take values in $A\times B$.) Then $(a\otimes b)(x,y) := (a(x), b(y))$, defined iff both $a(x),b(y)$ are defined, makes this into a tensor product satisfying all our axioms. 

Note that the pairs $\spair FG$ in this category are very natural, too: If $F:\mathbf A\to\mathbf C$, $G:\mathbf B\to\mathbf C$ are defined by the getters/setters $(g_F,s_F)$ and $(g_G,s_G)$, respectively, then $\spair FG$ is defined
by the getter/setter $(g,s)$ with $g(c)=(g_F(c),g_G(c))$ and $s((a,b),c)
= s_F(a,s_G(b,c))$.

With these choices, we get a reference category $\symbolindexmark\Lclassical\Lclassical$ satisfying all axioms from \autoref{fig:axioms} and the laws from \autoref{fig:laws} follow.

Furthermore, for singleton sets $A$, have a \emph{unit reference}\index{unit reference!classical} $u:\mathbf A\to\mathbf B$ that is disjoint from all references, and such that setting it has no effect. 

We formalized the proofs for this section in Isabelle/HOL (\autoref{sec:isabelle}), so we do not give them here.

\subsection{Example, specialized}
\label{sec:ex.classical}

In \autoref{sec:example}, we described how the semantics of a small programming language can be based on references.
That example was abstract, assuming a generic reference category.

In the classical setting, there are several possibilities to make it concrete:

\paragraph{Deterministic semantics.}
The simplest possibility is to describe a language where the semantics of programs are deterministic.

In this case, the types $T$ in the language could correspond to finite sets (e.g., $\mathtt{bool} \mathrel{\hat=} \bit$) with~$\times$ being the Cartesian product and the update monoids $\mathbf T$ are the partial functions $T\partialto T$.
 The reference category is $\Lclassical$.

This gives us a language where $\qapply U X$ means $X := U(X)$ for a function $U$.
(Of course, in the classical case a notation such as $X := f(X)$ would be more intuitive.)
The semantics of a program on memory of type $T$ is described by a partial function $T\partialto T$.
(Here it comes in handy that $\Lclassical$ uses \emph{partial} functions in the update monoids since total functions would restrict us to a language where we cannot have non-terminating programs.)

The various programs and example calculations and results then carry over immediately to this concrete setting. (With $\Uswap(x,y) := (y,x)$ being a function.)

\paragraph{Probabilistic semantics.}
If we want our language to support, e.g., random choices, deterministic semantics are not sufficient.
In this case, we want the semantics to map a program state to a distribution over program states.
That is, it is of type $T\to\calD_T$ where $\calD_T$ is the set of functions $f:T\to\setR_{\geq0}$ with $\sum_{x\in T}f(x)\leq 1$.
For simplicity, we consider only programs where all types are finite (i.e., $\abs T<\infty$ for all types $T$).

These ``probabilistic functions'' $f,g:T\to\calD_T$ naturally form a monoid with multiplication $(f\cdot g)(x)(z):=\sum_y g(x)(y)f(y)(z)$ and unit $1\in T\to\calD_T$ with $1(x)(x):=1$, $1(x)(y):=0$ ($x\neq y$).
This multiplication is the monadic bind and $f\cdot g$ intuitively corresponds to the composition of probabilistic functions.

Since the update monoids in the reference category $\Lclassical$ consist of partial functions, not probabilistic functions, we cannot use the same mathematical structure for the program semantics and the update monoids.
Instead, we need to make use of the additional flexibility described in \autoref{sec:example.more.general}:

As before, the types $T$ in the language could correspond to finite sets, and the update monoids $\mathbf T$ are the partial functions $T\partialto T$. The reference category is $\Lclassical$.
But $\Sem_T$ is now defined as the space of probabilistic functions $T\to\calD_T$.

In particular, the semantics of a program with memory type $M$ will be a probabilistic function $M\to\calD_M$, mapping memories to distributions over memories.

As described in \autoref{sec:example.more.general}, we additionally need to specify:
a monoid homomorphism $\Sem:\mathbf T\to\Sem_T$, and a way to apply references to $\Sem_T$.
The monoid homomorphism corresponds to interpreting a partial function $f$ as a probabilistic function $\Sem(f)$:
$\Sem(f)(x)(y) := 1$ if $f(x)=y$ and $:=0$ if $f(x)\neq y$.
It is easy to check that this is a monoid homomorphism.

Applying a reference $X$ to $\Sem_T$ is possible, too:
Let $\Hat\calD_T$ denote all functions $T\to\setR$ (dropping the positivity and the $\leq 1$ requirements).
Then the functions $T\to\Hat\calD_T$ form a real vector space space.
And the probabilistic functions $T\to\calD_T$ are a subset of the $T\to\Hat\calD_T$.
Furthermore, the functions $\Sem(x\mapsto y)$ with $x,y\in T$ form a basis of $T\to\calD_T$.
(Here $x\mapsto y$ is the partial function defined only at $x$, mapping it to $y$.)
We define $X^\bullet$ to be the unique linear function on $T\mapsto \Hat\calD_T$ with $X^\bullet(\Sem(x\mapsto y)):=\Sem(X(x\mapsto y))$.
Since $T\to\calD_T$ is the convex hull of the $\Sem(x\mapsto y)$, and $\Sem(X(x\mapsto y))\in T\to\calD_T$,
$X^\bullet$ maps $T\mapsto \calD_T$ to $T\mapsto \calD_T$.
And we immediately that the required condition $X^\bullet(\Sem(f))=\Sem(X(f))$ holds for $f$ of the form $x\mapsto y$.
We encourage the reader to check that this also holds for all other $f$ (because they are a union of functions of the form $x\mapsto y$).

The various programs and example calculations and results then carry over immediately to this concrete setting.

Since we have probabilistic semantics, we can now easily extend the language with, e.g., random bits: $X\leftarrow\mathbf{coin}$ assigns a random bit to $X$.%
\footnote{{Typing: In the nano language, $X$ needs to be a reference $\mathbf T_\mathtt{bool}\to \mathbf T_M$.
  In the micro language, the following typing rule is added:
    \inferrule{X:\mathbf T_\texttt{bool}\to\mathbf T_M\otimes\mathbf T_L\text{ is a reference}}
    {T_M,T_L,\frakP \vdash X\leftarrow\mathbf{coin}}.}}
It would have the following semantics:
\[
  \pb\denot{X\leftarrow \mathbf{coin}} :=
  X^\bullet(f_\mathit{coin})
  \qquad
  \text{with}
  \qquad
  \forall x,y.\ f_\mathit{coin}(x)(y):=\frac12.
\]

}

\section{Quantum Hoare logic}
\label{sec:hoare}

To illustrate the use and flexibility of our references, we will now
develop a simple quantum Hoare logic.
Since the main focus is to show the use of quantum variables (and constructions based upon these), the language we analyze will not contain any features that are orthogonal to the question of variables, e.g., loops.
The ideas
described can be easily extended to other Hoare logics that use
subspaces as pre-/postconditions  (i.e., von-Neumann-Birkhoff quantum logic \cite{birkhoff36logic};
used, e.g., in \cite{qrhl, zhou19applied, ghosts, Li2020}).  (And Hoare logics
that use operators as pre-/postconditions \cite{dhondt06weakest, feng07proof, ying12floyd, expectation-qrhl, barthe19relationalquantum}
should be even easier to adapt because references
naturally translate operators from variables to overall program
states.)

\paragraph{Language definition.} Our simple language has two
commands: applying a unitary, and branching based on a measurement.
Fix a space $\calH_M$ for the program state.
\emph{Statements}\index{statement} are of the form:
\[
 P,Q ::= \skipprog \ |\ P;Q \ |\ \qapply UF \ |\  \qif GxPQ
\]
We assume that the following type constraints are satisfied: $U$ is a
unitary on some Hilbert space $\calH_A$; $F$ is a reference $\mathbf A\to\mathbf
M$ for the same $\calH_A$; $x\in B$ for some $B$ with $\abs B=2$; and $G$ is a reference $\mathbf B\to \mathbf M$ where $\calH_B:=\setC^B$.
(Here $\mathbf M$ is fixed throughout and corresponds to the program state.)

We use pure state semantics. That is, given a pure state $\psi\in\calH_M$, \symbolindexmark\denot{$\denot{P}$} maps $\psi$ to a list of pure states $\psi_i\in\calH_M$.
The interpretation is that the final state is $\psi_i/\norm{\psi_i}$ with probability $\norm{\psi_i}^2$.
Or equivalently that the final state is described by the density operator $\sum_i\psi_i\adj{\psi_i}$.

$\symbolindexmark\skipprog\skipprog$ denotes the empty program.
Hence $\denot\skipprog(\psi):=(\psi)$ where $(\psi)$ denotes a one-element list.

$P;Q$ the execution of $P$ followed by $Q$.
That is, if $\denot P(\psi)=(\psi_1,\dots,\psi_n)$, then $\denot{P;Q}(\psi):=\denot{Q}(\psi_1)\Vert\dots\Vert\denot{Q}(\psi_n)$.
($\Vert$ is the concatenation of lists.)

The command \symbolindexmark\qapply{$\qapply UF$} applies the unitary $U$ to the variable~$F$.
Recall that, since $F$ is a quantum reference, $F(U)$ describes the operation applied to the overall program state when we apply $U$ to the content of $F$.
Thus the denotational semantics is simply $\denot{\qapplyOLD FU}(\psi):=(F(U)\psi)$.
Note that we do not need to define a statement $\qapply U{F,G}$ that applies a unitary on two registers since we can get the same effect by just applying $U$ to the pair $\spair FG$.
That is, without having any explicit provision for it in the definition of the language, we can write $\qapply U{\spair FG}$ to apply $U$ to more than one variable.

And \symbolindexmark\qif{$\qif GxPQ$} measures $G$ in the computational basis, and if the outcome is $x$, executes $P$, and $Q$ otherwise.

The denotation is $\denot{\qif GxPQ}(\psi) := \denot P(G(\selfbutter x)\psi) \ \big\Vert\ \denot Q(G(\selfbutter{y}))$ where  $y\neq x$.
(Recall that $x\in B$ for some $\abs B=2$, so $y\neq x$ is uniquely defined.)
Note that $G(\selfbutter x)\psi$ is the non-normalized post-measurement state after measuring the content of $G$ with outcome $x$.

\paragraph{Hoare judgments.}\index{Hoare judgment} For subspaces $A,B$ of $\calH_M$ and a
program $P$, we write \symbolindexmark\hoare{$\hoare APB$} to mean: for all $\psi\in A$,
$\denot P(\psi)\subseteq B$.
(Recall that $\denot P(\psi)$ is the list of all possible final states.)

The following rules follow directly from the semantics:%
\index{Seq@\textsc{Seq}}\index{Weaken@\textsc{Weaken}}%
\index{Skip@\textsc{Skip}}\index{Apply@\textsc{Apply}}%
\index{If@\textsc{If}}
\begin{gather*}
  \inferrule[Seq]{\hoare A{P_1}B\\\hoare B{P_2}C}{\hoare A{P_1;P_2}C}
  \qquad
  \inferrule[Weaken]{A \subseteq A' \\ B' \subseteq B \\ \hoare {A'}P{B'}}{\hoare APB}
  \qquad
  \inferrule[Skip]{A \subseteq B}{\hoare A\skipprog B}
  \\[4pt]
  \inferrule[Apply]{F(U)\cdot A \subseteq B}{\hoare A{\qapply UF}B}
  \qquad
  \inferrule[If]{
    \pb\hoare{R(\selfbutter x)\cdot A}P{B} \\
    \pb\hoare{R(\selfbutter y)\cdot A}Q{B} \\
    x\neq y
  }{\hoare A{\qif GxPQ}B}
\end{gather*}
Note how in the last two rules, references are used not only in programs, but
also to describe pre-/postconditions. We do not need to introduce any
additional definitions for that since $F(U)$ and
$G\pb\paren{P_S}$ already have meaning as operators by
definition of quantum references. (And $\cdot$ means the multiplication
of an operator with a subspace.
And $S^\complement$ is the complement of $S$.)

We can also very easily express the predicate ``the content of reference $F$ has state $\psi$'' (short $F\quanteq\psi$).
Having state~$\psi$ is represented by the subspace $\Span\braces\psi$.
Or equivalently, $\im\psi\adj\psi$.
And $F$ transports this to a subspace of $\calH_M$.
Namely: for a reference $F:\mathbf A\to\mathbf M$ and a state
$\psi\in\calH_A$, we define $F\symbolindexmark\quanteq{\!\strut\quanteq} \psi$ as the subspace
$\im F(\psi\adj\psi)$ of $\calH_M$.
(Recall the discussion about subspaces in \autoref{sec:quantum-finite} about predicates and references, and see also the more detailed information in \fullshort{\autoref{sec:lift.sub}}{the full version, \autoref{FV:sec:lift.sub}}.)

Finally, we will often need the intersection of subspaces. See \fullshort{\autoref{sec:lift.sub}}{the full version (\autoref{FV:sec:lift.sub})} for facts for reasoning about this in
terms of references and projectors.

In the next section, we show what all of this looks like in a concrete
situation.

\delaytextsv{teleport}{
  \newcommand\INCLUDEONCEfsiqwowofldsaiotwer{}

\subsection{Teleportation}
\label{sec:teleport}

We illustrate the use of quantum references and of our minimal Hoare logic by analyzing quantum teleportation~\cite{bennett93teleport}.
We chose this example because, while the program is simple, it uses a number of variables that interact nontrivially through unitaries and measurements.
As a quantum circuit, it is illustrated in \autoref{fig:teleport} left.
The overall effect of
this quantum circuit is to ``move'' the qubit from wire $X$ to wire
$\Phi_2$, assuming $\Phi_1,\Phi_2$ are preinitialized with an EPR
pair $\symbolindexmark\bell\bell=\frac1{\sqrt2}\ket{00} + \frac1{\sqrt2}\ket{11}$.  Note
that there are also two wires $A,B$ that represent any additional
state that Alice and Bob may hold.

\begin{figure}
  \centering
  \begin{tabular}{cc}
    \small
    \begin{tikzpicture}[baseline=(current bounding box.center)]
      \initializeCircuit
      \newWires{X,A,Phi1,Phi2,B}
      \draw[thick, decoration={brace,mirror,amplitude=2mm}, decorate]
      ($(\getWireCoord{Phi1}) + (-1mm,1mm)$) -- ($(\getWireCoord{Phi2}) + (-1mm,-1mm)$)
      node [pos=0.5, anchor=east, xshift=-2mm] {\small$\bell$};
      \stepForward{2mm}
      \labelWire[\tiny$X$]{X}
      \labelWire[\tiny$A$]{A}
      \labelWire[\tiny$B$]{B}
      \labelWire[\tiny$\Phi_1$]{Phi1}
      \labelWire[\tiny$\Phi_2$]{Phi2}
      \stepForward{2mm}
      \drawWires{A,B}
      \node at (\getWireCoord{A}) {\tiny/};
      \node at (\getWireCoord{B}) {\tiny/};
      \stepForward{1mm}
      \node[cnot=Phi1, control=X, inner sep=0pt, minimum size=2.5mm] (cnot) {};
      \stepForward{3mm}
      \node[gate=X] (H) {$\hada$};
      \node[gate={Phi1}, inner sep=.5mm] (MPhi1) {$\centernot\frown$};
      \node[above right=0mm and 3mm of MPhi1, inner sep=0pt] (a) {\tiny$a$}; \draw[double,->] (MPhi1) -- (a);
      \stepForward{2mm}
      \node[gate={X}, inner sep=.5mm] (MX) {$\centernot\frown$};
      \node[below right=0mm and 3mm of MX, inner sep=0pt] (b) {\tiny$b$}; \draw[double,->] (MX) -- (b);
      \node[gate=Phi2, label={[inner sep=1pt, scale=.7]below:\tiny if $a=1$}] (X) {$\pauliX$};
      \stepForward{3mm}
      \node[gate=Phi2, label={[inner sep=1pt, scale=.7]below:\tiny if $b=1$}] (Z) {$\pauliZ$};
      \stepForward{2mm}
      \drawWires{X,Phi1,Phi2,A,B}
    \end{tikzpicture}
    \hspace*{.6in}
    &
    \small 
      \begin{tabular}[c]{l}
        $      \teleport := $ \\[5pt]
        \quad$\qapply \CNOT {\pair X{\chain\Phi\Fst}};$ \\
        \quad$  \qapply \hada X;$ \\
        \quad$  \qif {\chain\Phi\Fst} {1} {\braces{\qapply {\pauliX} {\chain\Phi\Snd}}} \skipprog;$ \\
        \quad$  \qif X {1} {\braces{\qapply {\pauliZ} {\chain\Phi\Snd}}} \skipprog$
      \end{tabular}
  \end{tabular}
  \caption{%
    Teleportation circuit and program. $\symbolindexmarkhighlight\hada$ is the Hadamard matrix, $\symbolindexmarkhighlight\pauliX$ is the Pauli-$X$ matrix, and $\symbolindexmarkhighlight\pauliZ$ is the Pauli-$Z$ matrix.}
  \label{fig:teleport}
  \symbolindexmarkonly\hada
  \symbolindexmarkonly\pauliX
  \symbolindexmarkonly\pauliZ
\end{figure}

In our simple language, we model this with the program
\symbolindexmark\teleport{$\teleport$}, see \autoref{fig:teleport}.
Note
that in this modeling, for the fun of it we decided that instead of
having two qubit variables $\Phi_1,\Phi_2$, we have a single two-qubit
variable $\Phi$, and we access their subparts as
$\chain\Phi\Fst,\chain\Phi\Snd$.  We assume that $X,A,B,\Phi$ are pairwise
disjoint. Besides that, we make no assumptions about how they are
embedded into the overall program memory.

The claim is that teleportation ``moves'' a state $\psi$ from $X$ to
$\chain\Phi\Snd$, if $\Phi$ is initialized with an EPR pair. Thus,
with the notation from the previous section, this would be expressed
as
$\hoare{X\quanteq\psi \cap \Phi\quanteq\bell}{\teleport}{\chain\Phi\Snd\quanteq\psi}$. However, this only states that if
$X$ contains some state $\psi$, not entangled with anything else, then
$\chain\Phi\Snd$ will contain it.
It does not cover the case where $X$ is entangled with, say, subsystems $A$ and $B$ in Alice's and Bob's labs.
To express a stronger (and harder to analyze in Hoare logic) statement, we can say: If the joint
state of $X,A,B$ is $\psi$, then the joint state of
$\chain\Phi\Snd,A,B$ is $\psi$ afterwards. In other words, we want to show:
\begin{equation}
   \pb\hoare{\XAB\quanteq\psi \cap \Phi\quanteq\bell}{~\teleport
  ~}{\PhiTwoAB\quanteq\psi}
  \label{eq:hoare.teleport}
\end{equation}

\noindent We now prove \eqref{eq:hoare.teleport}.
First, we have
\begin{equation*}
  {\XAB\quanteq\psi \,\cap\, \Phi\quanteq\bell}
  \quad=\quad
  \underbrace{\Phi\pb\paren{\bell\bell^\dagger}}_{=:\,O_1} \cdot \underbrace{\XAB\quanteq\psi}_{=:\,\pre}
\end{equation*}
This follows from the definition of $\quanteq$ and the fact that
$F(S) \cap \im G(V) = F(P_S)G(V)$ where $P_S$ is the projector onto $S$ (\lemmaref{lemma:lift.sub:inter.disjoint}).
Next we have
\begin{gather*}
  \pb\hoare
  {O_1\cdot\pre}
  {~\qapply\CNOT\XPhiOne~}
  {\singledouble{\underbrace{\XPhiOne(\CNOT)\cdot O_1}_{=:\,O_2}}{O_2}\cdot\,\pre}
  \\
  \singledouble{}{\qquad\qquad\text{with }O_2 := {\XPhiOne(\CNOT)\cdot O_1}\\[5pt]}
  \pb\hoare
  {O_2\cdot\pre}
  {~\qapply\hada X~}
  {\underbrace{X(\hada)\cdot O_2}_{=:\,O_3}\cdot\,\pre}
\end{gather*}
Both judgments follow directly from the Hoare rule \textsc{Apply}.

Furthermore, we have $O_3 = \sum_{a,b} \pb Q_{abab}$ where
$Q_{aba'b'} :=\frac12 \PhiTwo(\adj{(\pauliZ^{b'}\pauliX^{a'})}) \cdot \XPhiTwo(\Uswap)
\cdot \Phi\pb\paren{\ket{a,b}\adj\bell}$ and $\symbolindexmark\Uswap\Uswap$  is the two-qubit swap unitary ($U_\sigma\ket{a,b}=\ket{b,a}$).

This can be shown by rewriting both sides of the equation into the form $\XPhi(\dots)$ and then comparing the arguments of $\XPhi$.
That comparison is possible by explicit computation because the arguments are $8\times8$ matrices.
This is automated in Isabelle/HOL.
From  $O_3 = \sum_{a,b} \pb Q_{abab}$ we can conclude:
\[
  O_3\cdot\pre = \sum_{a,b} \pb\paren{Q_{abab}\cdot\pre}.
\]
Here the $\sum$ is the sum of closed subspaces, i.e., the least upper bound.
(Applying an operator to a subspace is distributive in the left argument.)

We stress that we could not have directly proven the latter equation by direct computation because the definition of $\pre$ involves the infinite-dimensional references $A,B$.  Nor could we have directly computed $O_3$ without rewriting it in the form $\pair X\Phi(\dots)$ because $O_3$ is an infinite-dimensional operator.

Note the interpretation of the postcondition $\sum_{a,b} \pb\paren{Q_{abab}\cdot\pre}$:
It means the state is in a superposition of states satisfying $Q_{abab}\cdot\pre$ for different $a,b$.
And each $Q_{abab}\cdot\pre$ means:
The current state is the result of having $\bell$ in reference $\Phi$ and $\psi$ in $\XAB$ (as in $\pre$),
then changing $\bell$ into $\ket{ab}$ (so that $\Phi$ contains $\ket{ab}$),
Then $X$ and $\PhiTwo$ are swapped, and the inverse of the Pauli matrices $\pauliZ^{b}\pauliX^{a}$ is applied.
So \emph{except for the application of the Paulis}, we already have the state $\psi$ in $\PhiTwoAB$ as desired in the end.
And the content of $\Phi$ tells us which Paulis need to be undone.

We next show
\begin{equation}
  \label{eq:cond.X}
  \pB\hoare{\sum_{a,b} \pb\paren{Q_{abab}\cdot\pre}}
  {\ \qif {\chain\Phi\Fst} {1} {\braces{\qapply {\pauliX} {\chain\Phi\Snd}}} \skipprog\ }
  {\sum_{a,b} \pb\paren{Q_{ab0b}\cdot\pre}}
\end{equation}
That is, we claim that in the state after this program fragment, the superfluous $\pauliX$ is not applied anymore, as desired.
(Note the index $0$ in $Q_{ab0b}$ in the postcondition.)
To show this, by rule \textsf{If}, we need to show the following two facts:
\begin{gather}
  \pB\hoare{\PhiOne(\selfbutter1) \cdot \sum_{a,b} \pb\paren{Q_{abab}\cdot\pre}}
  {\ \qapply {\pauliX} {\PhiTwo}\ }
  {\sum_{a,b} \pb\paren{Q_{ab0b}\cdot\pre}}
  \label{eq:cond.X.1}
  \\
  \pB\hoare{\chain\Phi\Fst(\selfbutter0) \cdot \sum_{a,b} \pb\paren{Q_{abab}\cdot\pre}}
  {\ \skipprog\ }
  {\sum_{a,b} \pb\paren{Q_{ab0b}\cdot\pre}}
  \label{eq:cond.X.0}
\end{gather}
To show these, we first establish the following auxiliary facts:
\begin{align*}
  \PhiOne(\selfbutter{a'}) \cdot Q_{abab} &= \delta_{aa'}\, Q_{abab} && \hskip-2cm \forall aa'b
  \\
  \PhiTwo(\pauliX) \cdot Q_{ab1b} &= Q_{ab0b} &&  \hskip-2cm  \forall ab
\end{align*}
where $\delta_{aa'}=1$ if $a=a'$, and $0$ otherwise.
The first follows by unfolding the definition of $Q_{abab}$, commuting $\PhiOne(\selfbutter{a'})$ next to the $\Phi\pb\paren{\ket{a,b}\adj\bell}$ term, and then merging those two terms into  $\Phi\pb\paren{(\selfbutter{a'}\otimes 1)\ket{a,b}\adj\bell}=\delta_{aa'}\Phi\pb\paren{\ket{a,b}\adj\bell}$.
The second follows by unfolding the definition of $Q_{abab}$ and merging $\PhiTwo(\pauliX^1)$ and $\PhiTwo(\adj{(\pauliZ^{b}\pauliX^{1})})$, which becomes $\PhiTwo(\adj{(\pauliZ^{b}\pauliX^{0})})$.

Now we can prove \eqref{eq:cond.X.1}: We have
\begin{equation*}
  \PhiTwo(\pauliX)\cdot\PhiOne(\selfbutter1) \cdot \sum_{a,b} \pb\paren{Q_{abab}\cdot\pre}
  \starrel=
  \sum_b \pb\paren{Q_{1b0b}\cdot\pre}
  \leq
  \sum_{a,b} \pb\paren{Q_{ab0b}\cdot\pre}.
\end{equation*}
Here $(*)$ follows from the two auxiliary facts.
By rule \textsf{Apply}, this implies \eqref{eq:cond.X.1}.
Similarly, by rule \textsf{Skip}, \eqref{eq:cond.X.0} is implied by
\begin{equation*}
  \PhiOne(\selfbutter0) \cdot \sum_{a,b} \pb\paren{Q_{abab}\cdot\pre}
  \starrel=
  \sum_b \pb\paren{Q_{0b0b}\cdot\pre}
  \leq
  \sum_{a,b} \pb\paren{Q_{ab0b}\cdot\pre}.
\end{equation*}
Here $(*)$ uses the first auxiliary fact.

So \eqref{eq:cond.X.1}, \eqref{eq:cond.X.0} hold, \eqref{eq:cond.X} follows.

Quite similarly, we show
\begin{equation*}
  \pB\hoare{\sum_{a,b} \pb\paren{Q_{ab0b}\cdot\pre}}
  {\ \qif X {1} {\braces{\qapply {\pauliZ} {\chain\Phi\Snd}}} \skipprog\ }
  {\sum_{a,b} \pb\paren{Q_{ab00}\cdot\pre}}.
\end{equation*}

Finally, we have $Q_{ab00}\cdot\pre \ \subseteq\ {  \PhiTwoAB \quanteq \psi  }$ and thus 
$\sum_{a,b}\pb\paren{Q_{ab00}\cdot\pre} \ \subseteq\ {  \PhiTwoAB \quanteq \psi  }$ as the following calculation shows:
\begin{align*}
  Q_{ab00}\cdot\pre\ 
  &\starrel=\ 
  \XPhiTwo(\Uswap) \cdot \Phi\pb\paren{\ket{a,b}\adj\bell}
  \cdot \XAB(\psi\adj\psi) \cdot \calH_M
  \\&
  =\
  \XPhiTwo(\Uswap)
    \cdot \XAB(\psi\adj\psi)
    \cdot \underbrace{ \Phi\pb\paren{\ket{a,b}\adj\bell}
    \cdot \calH_M}_{\subseteq\,\calH_M}
  \\&
  \subseteq\
  \XPhiTwo(\Uswap) 
  \cdot \XAB(\psi\adj\psi) \cdot \calH_M
  \\&
  \starstarrel=\
  \PhiTwoAB(\psi\adj\psi)
  \cdot
  \underbrace{\XPhiTwo\pb\paren{U_\sigma}
  \cdot\calH_M}_{\subseteq\,\calH_M}
  \\&
  \subseteq\
  \PhiTwoAB(\psi\adj\psi) \cdot \calH_M
  =\ {
  \PhiTwoAB \quanteq \psi
  }.
\end{align*}
Here $(*)$ uses the definitions of $Q_{ab00}$, $\pre$, and $\quanteq$.
The factor $\frac12$ vanishes because multiplying a subspace with a nonzero scalar has no effect.
And $(**)$ follows by bringing both sides into the form $\pair{\pair{X}{\chain\Phi\Snd}}\dots$ and expanding $\swap=\sandwich\Uswap$.

Combining all Hoare judgments, equalities, and inequalities, and using
rules \textsc{Seq} and \textsc{Weaken}, we get
\eqref{eq:hoare.teleport} which finishes the analysis of teleportation.


}

\delaytextsv{complements}{
  \newcommand\INCLUDEONCEudfasiuwerfsdrgroiswearldfjsd{}

\section{Complements}
\label{sec:complements}

When we think of a reference $F$ as region in memory, it makes sense to consider the complement~$\compl F$ of~$F$, namely everything in the memory that is not contained in $F$.
A priori, there is no reason to assume that in a general reference category, $\compl F$ is definable in a meaningful way, or that it constitutes a reference.
However, we will see that by requiring some additional axioms in a reference category, we get a well-defined notion of complements.
In particular, quantum references have complements, see \autoref{sec:qregs.complements}.
(We suspect classical references have complements, too, but we have not investigated that case.)

Before we formalize complements, let us explain why complements are useful:
\begin{compactitem}
\item
  When formulating predicates referring to states on a complex system (e.g., the predicates that are used in the pre- and postconditions in the quantum Hoare logic in \autoref{sec:hoare}), we can use predicates to refer to the state of ``everything else''.
  For example, we could state that no part of a quantum memory except for explicitly listed reference $F_1,\dots,F_n$ is modified by a program by stating that if $\compl{\pairs{F_1}{F_n}}$ is in state $\psi$ before the execution, then  $\compl{\pairs{F_1}{F_n}}$ is in state $\psi$ afterwards.
\item
  \fullshort{Complements will be used numerous times in the constructions and proofs in \autoref{sec:lifting} where we describe how various quantum mechanical concepts interact with references.
  For example, when describing the state of a complex system by giving the states of all subsystems (Sections~\ref{sec:mixed} and~\ref{sec:pure}), we have to specify the state of \emph{all} references.
  If an explicit list of all references is not known, we can use complements to refer to ``all other references''.
  And both the construction of the partial trace into a reference (\autoref{sec:partial.trace}) as well as the construction how to lift a quantum channel on a reference to a quantum channel on the whole space (\autoref{sec:quantum.channels}) use complements under the hood.}{
  Complements are used as an important technical tool numerous times when we lift various quantum objects (pure states, quantum channels, etc.) through references. (See the supplement, \autoref{sec:lifting}.)
  }
\item
  Last but not least, besides the practical use cases, it is a foundationally interesting and natural property: In a reference category with complements, there is always a meaningful and well-behaved notion of ``the rest of the system''.
  In particular, this applies to the quantum reference category.
\end{compactitem}

\paragraph{Definitions.} We say references $F,G$ are \emph{complements}\index{complement} iff they are disjoint and $\spair FG$ is an iso-reference.
We say a reference category \emph{has complements} iff
\begin{compactitem}
\item\pagelabel{page:ex.compl}
  For every reference $F:\mathbf A\to\mathbf B$, there exists an object $\mathbf C$ and a reference $G:\mathbf C\to \mathbf B$ such that $F,G$ are complements.\fullonly{\footnote{%
    In our Isabelle/HOL formalization (see \autoref{sec:isabelle}), we formalize a stronger property (different order of quantifiers):
    For every $\mathbf A,\mathbf B$ there exists a $\mathbf C$ such that for every $F:\mathbf A\to\mathbf B$, there exists a reference $G:\mathbf C\to\mathbf B$ such that $F,G$ are complements.
    The reason for this is that objects in the reference category are modeled as types in Isabelle/HOL, and that it is not possibly to express the existence of a type $\mathbf C$ that depends on a value $F$.}}
\item If $F,G$ are complements and $F,H$ are complements, then $G,H$ are equivalent.
\end{compactitem}
The second condition justifies to talk about ``the'' complement $\symbolindexmark\compl{\compl F}$ of $F$.
Strictly speaking, $\compl F$ is just one of the possible complements of $F$, determined only up to equivalence.
(Recall that intuitively, two references are equivalent when they correspond to the same part of the program memory.)

In a reference category with complements, the laws in \autoref{fig:compl.laws} hold. We proved all laws in Isabelle/HOL (see \autoref{sec:isabelle}).

\begin{figure}
  \raggedright

  \textbf{Complements.} ($F,G,H$ are assumed to be references.)
  \quad\nextlaw\label{law:compl.sym} $F,G$ are complements iff $G,F$ are complements.
  \quad\nextlaw $\compl F$ is a reference.
  \quad\nextlaw\label{law:compl.is.compl} $F,\compl F$ are complements.
  \quad\nextlaw\label{law:compl.equiv}%
  Assume $F,G$ are complements. Then $G,H$ are equivalent iff $F,H$ are complements.
  \quad\nextlaw $F,\compl{\paren{\compl F}}$ are equivalent.
  \quad\nextlaw\label{law:complement.pair}
  If $F,G$ are disjoint, $F$, $\pair G{\compl{\spair FG}}$ are complements (and also
  $G$, $\pair F{\compl{\spair FG}}$).
  \quad\nextlaw\label{law:complement.chain}
  $\chain FG$ and $\pb\pair{\compl F}{\chain F{{\compl G}}}$ are complements.
  ($\chain F{{\compl G}}$ is to be read as $\chain F{\paren{\compl G}}$.)
  \quad\nextlaw\label{law:complement.tensor} 
  $F\rtensor G$ and $\compl F\rtensor\compl G$ are complements.

  {
  \textbf{Unit references.} ($F$ is assumed to be a reference, $U,U'$ unit references, $I$ an iso-reference.)
  \quad\nextlaw $U,F$ are disjoint.
  \quad\nextlaw $F,(U;F)$ are equivalent.
  \quad\nextlaw $F\circ U$ is a unit reference.
  \quad\nextlaw $U\circ I$ is a unit reference.
  \quad\nextlaw $U,U'$ are equivalent.
  \quad\nextlaw If $U:\mathbf A\to\mathbf C, U':\mathbf B\to\mathbf D$, then $\mathbf A,\mathbf B$ are isomorphic.
  \quad\nextlaw $\compl\id$ is a unit reference (with same codomain as $\id$).
  \quad\nextlaw $\unitreg{\mathbf A}:\mathbf I\to\mathbf A$ is a unit reference (and $\mathbf I$ does not depend on~$\mathbf A$).
  \quad\nextlaw If $U:\mathbf A\to\mathbf B$, then $\mathbf C$ and $\mathbf C\otimes\mathbf A$ are isomorphic (for every $\mathbf C$).
  \quad\nextlaw\label{law:complement.iso.unit}
  $I$ and $U$ are complements.
  }

  \caption{Additional laws in a reference category with complements.
    \quad
    All laws have the implicit assumption that the types of the references match.
  }
  \label{fig:compl.laws}
\end{figure}

Note that we can easily extend the notion of complements to finite sets of references:
We say $F_1,\dots,F_n$ are a \emph{partition}\index{partition} iff they are pairwise disjoint and $\pairs{F_1}{F_n}$ is an iso-reference.
(Note that this definition does not depend on the order of the references $F_1,\dots,F_n$ by Laws~\ref{law:pair.sigma}, \ref{law:pair.alpha}, \ref{law:pair.alpha'} and the fact that $\sigma,\alpha,\alpha'$ are iso-references.\footnote{We have not formalized this fact in Isabelle/HOL because it cannot be stated in simple type theory.})
The intuition of being a partition is that the references $F_1,\dots,F_n$ cover the whole memory.

{
\paragraph{Unit references.} One interesting consequence of complements (that could, however, also be easily axiomatized on its own), are \emph{unit references}\index{unit reference}.
Intuitively, a unit reference is one that corresponds to an empty part of the program memory.
E.g., a classical variable of type \texttt{unit} would be a unit reference; writing into it has no effect.
Formally, we say that a reference $F:\mathbf A\to\mathbf B$ is a unit reference iff $F$ and $\id$ are complements.
(Since $\id$ is a reference that contains the whole memory, this intuitively means that $F$ refers to an empty part of the memory.)
Unit references are useful in languages with side-effects (even if the side-effect is just possible non-termination).
E.g., if $e$ is an expression with side-effects and unit result type, we can write $\assign{()}e$ to evaluate $e$ without storing the result in memory.
(Here $()$ denotes a unit reference.)
If unit references did not exist, the language would need a distinct concept for evaluating expressions without storing the result.

It turns out that a reference category with complements has a unit reference $\symbolindexmark\unitreg{\unitreg{\mathbf{A}}}:\mathbf I\to\mathbf A$ for every $\mathbf A$ (i.e., $\mathbf A$ represents the type of the memory).
Roughly speaking, $\unitreg{\mathbf{A}}$ is defined as the complement of $\id:\mathbf A\to\mathbf A$.
This is trivially a unit reference according to our definition.
However, this would lead to a different domain $\mathbf I$ for every $\mathbf A$.
With some extra effort, we can show that all those domains $\mathbf I$ are isomorphic and then construct a unit reference $\unitreg{\mathbf{A}}$ whose domain is independent of $\mathbf A$.

However, we also need to show that our definition of unit references ``makes sense''.
For example, we want that a unit reference is disjoint from all other references (not just with $\id$).
The properties we derive are listed in 
\autoref{fig:compl.laws}%
; when proving these properties (see our Isabelle formalization, \autoref{sec:isabelle}) we heavily use the properties of complements.

Note that for concrete reference categories (such as the quantum reference category), we may prefer not to use the unit reference $\unitreg{\mathbf{A}}$ that exists generically but instead define it concretely.
For example, in the quantum setting, we can define a unit reference of type $\setC\mapsto\mathbf{A}$, see \autoref{lemma:quantum.unit} below.
(Here $\setC$ is slight abuse of notation for the space of linear functions $\setC\to\setC$.)
Having $\setC$ instead of some unspecified $\mathbf{I}$ as the domain means that the type of expressions that return unit values is nicer.
However, the laws from \autoref{fig:compl.laws} still apply. (And the unit references according to both definitions are equivalent.)
}

{
\subsection{Quantum references}
\label{sec:qregs.complements}

We show that the quantum reference category $\Lquantum$ has complements. For the finite-dimensional case, we have shown this in Isabelle/HOL.
Here, we show the more general infinite-dimensional case (cf.~\autoref{sec:infinite}).

\paragraph{Existence.} Fix a reference $F:\mathbf A\to\mathbf B$.
By \autoref{lemma:normal.tensor1} (that was already used in the proof of Axiom~\ref{ax:pairs}, \autoref{sec:proofs.infdim}),
there is a Hilbert space $\calH_C$ and a unitary $U:\calH_A\otimes\calH_C\to\calH_B$ such that
$F(a)=U(a\otimes 1_{\mathbf C})\adj U$. Let $G(c) := U(1_{\mathbf{A}}\otimes c)\adj U$.
$G$ is a reference (using Axiom~\ref{ax:tensor-1} and~\autoref{lemma:normal}).
Define $I(x):=Ux\adj U$ which is a reference (using \autoref{lemma:normal}).
We have $F(a)G(c) = U(a\otimes c)\adj U = G(c)F(a)$ since $U$ is unitary.
Thus $F,G$ are disjoint. Thus the pair $\spair FG$ exists, and $\spair FG(a\otimes c)
= I(a\otimes c)$.
By Axiom~\ref{ax:tensorext}, this implies that $\spair FG=I$. Since $I^{-1}(y):=\adj UyU$ is also a reference and the inverse of $I$, $\spair FG=I$ is an iso-reference. Thus $F,G$ are complements by definition.

\paragraph{Uniqueness.} Assume that $F:\mathbf A\to\mathbf D$, $G:\mathbf B\to\mathbf D$ are complements, and $F$, $H:\mathbf C\to\mathbf D$ are complements.
By \autoref{lemma:complement.range} (\autoref{app:proofs:complements}), $G(\mathbf B)$ is the commutant of $F(\mathbf A)$. By the same lemma, $H(\mathbf C)$ is the commutant of $F(\mathbf A)$ as well.
Thus $G(\mathbf B)=H(\mathbf C)$. By \autoref{lemma:same.range.equiv} (\autoref{app:proofs:complements}), references with the same range are equivalent.
Thus $G,H$ are equivalent.

\medskip

By definition of ``has complements'', we now have:

\begin{theorem}\label{theo:compl}
  $\Lquantum$ has complements.
\end{theorem}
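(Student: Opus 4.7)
The plan is to verify the two defining properties of ``has complements'' for $\Lquantum$: existence of a complement for every register, and uniqueness of the complement up to equivalence.

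For existence, I would start from the structural representation of normal unital $*$-homomorphisms between full operator algebras. For any register $F:\mathbf A\to\mathbf B$, a dilation-type result (which should already be at hand, since something like it is needed to verify Axiom~\ref{ax:pairs} for $\Lquantum$) produces a Hilbert space $\calH_C$ and a unitary $U:\calH_A\otimes\calH_C\to\calH_B$ with $F(a)=U(a\otimes 1)\adj U$. Given such $U$, I would define $G:\mathbf C\to\mathbf B$ by $G(c):=U(1\otimes c)\adj U$. This is the composition of the register $c\mapsto 1\otimes c$ (Axiom~\ref{ax:tensor-1}) with the sandwich $\sandwich U$, hence a register. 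Since $a\otimes 1$ and $1\otimes c$ commute in $\mathbf A\otimes\mathbf C$, so do $F(a)$ and $G(c)$, giving compatibility. On tensor elements one computes $\pair FG(a\otimes c)=F(a)\cdot G(c)=U(a\otimes c)\adj U=\sandwich U(a\otimes c)$, and by Axiom~\ref{ax:tensorext} this promotes to $\pair FG=\sandwich U$. Since $\sandwich U$ has inverse $\sandwich{\adj U}$, it is an iso-register, so $F,G$ are complements.

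For uniqueness, suppose $F,G$ and $F,H$ are both pairs of complements. The plan is to show that $G$ and $H$ have the same range inside $\mathbf B$, and then appeal to a lemma that two registers with equal ranges are equivalent. The key observation is that the range of any complement of $F$ must coincide with the commutant of $F(\mathbf A)$: compatibility gives $G(\mathbf C)\subseteq F(\mathbf A)'$, while the fact that $\pair FG$ is an iso-register onto $\mathbf B$ forces the joint range of $F$ and $G$ to be all of $\mathbf B$, which via a double-commutant-type argument pins $G(\mathbf C)$ down to exactly $F(\mathbf A)'$. The same applies to $H$, so $G(\mathbf C)=H(\mathbf C')=F(\mathbf A)'$, and ``same range implies equivalent'' finishes the argument.

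The main obstacle will be the two supporting operator-algebraic facts in the infinite-dimensional setting: the dilation representation $F(a)=U(a\otimes 1)\adj U$ for weak*-continuous unital $*$-homomorphisms, and the lemma that registers with equal ranges are equivalent. In finite dimensions both reduce to elementary linear algebra, but in infinite dimensions one must carefully track weak*-continuity, and the equivalence lemma ultimately relies on injectivity of normal unital $*$-homomorphisms between full operator algebras (so that a register is determined by its range via the induced isomorphism onto that range). Once these ingredients are in place, the assembly into a proof of the theorem is routine categorical bookkeeping.
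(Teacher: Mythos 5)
Your proposal follows essentially the same route as the paper's own proof: existence via the dilation $F(a)=U(a\otimes 1)\adj U$, defining $G(c):=U(1\otimes c)\adj U$ and showing $\pair FG=\sandwich U$ is an iso-register; uniqueness by showing every complement's range equals the commutant of $F(\mathbf A)$ (the paper makes your ``double-commutant-type argument'' precise via the tensor commutation theorem $(\mathbf A\otimes\idmult)'=\idmult\otimes\mathbf B$ and the fact that iso-registers commute with taking commutants) and then invoking the same-range-implies-equivalent lemma, which indeed rests on injectivity of registers. Both supporting facts you flag are exactly the lemmas the paper isolates, so the assembly is correct as described.
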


Hence the laws from \autoref{fig:compl.laws} hold for $\Lquantum$.

Reference categories with complements always have unit references.
However, in the quantum reference category, we have a particularly natural unit reference (with a simple and explicitly specified domain):
\begin{lemma}\label{lemma:quantum.unit}
  $u:\setC\to\mathbf B$, $u(c):=c\cdot 1_\mathbf{B}$ is a unit reference where we identify $\bounded(\setC)$ with~$\setC$.\footnote{%
    Intuitively, an operator in $\bounded(\setC)$ is a complex $1\times1$-matrix, hence simply a complex number.}
\end{lemma}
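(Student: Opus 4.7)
The goal is to show that $u$ satisfies the definition of a unit register from the previous subsection, namely that $u$ and $\id$ are complements in $\Lquantum$. Unpacking this, I need to verify three things: that $u$ is a register, that $u$ and $\id$ are compatible, and that $\pair u\id:\setC\tensor\mathbf B\to\mathbf B$ is an iso-register.

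First I would verify that $u$ is a register, i.e., a weak*-continuous unital $*$-homomorphism. Linearity is immediate from the definition $u(c)=c\cdot 1_{\mathbf B}$. Unitality is $u(1)=1_{\mathbf B}$, the product rule $u(cd)=u(c)u(d)$ follows from $(c\cdot 1_{\mathbf B})(d\cdot 1_{\mathbf B})=cd\cdot 1_{\mathbf B}$, and $u(\conj c)=\conj c\cdot 1_{\mathbf B}=\adj{u(c)}$ since $1_{\mathbf B}$ is self-adjoint. Weak*-continuity on the one-dimensional domain $\setC$ is automatic because any linear map out of a finite-dimensional normed space is continuous, and on $\setC$ the weak*-topology coincides with the norm topology. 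Compatibility of $u$ with $\id$ is trivial since scalar multiples of $1_{\mathbf B}$ commute with every bounded operator: $u(c)\cdot\id(b)=cb=\id(b)\cdot u(c)$.

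The substantive step is showing that $\pair u\id$ is an iso-register. The idea is to exhibit an explicit iso-register $I:\setC\tensor\mathbf B\to\mathbf B$ that agrees with $\pair u\id$ on simple tensors; the pair is then forced to coincide with $I$ by Axiom~\ref{ax:tensorext}. Concretely, let $U:\setC\tensor\calH_B\to\calH_B$ be the canonical unitary $U(\lambda\tensor\psi):=\lambda\psi$ (with inverse $\psi\mapsto 1\tensor\psi$), and set $I:=\sandwich U$. Since $\setC\tensor\mathbf B = \bounded(\setC\tensor\calH_B)$ by definition of the von Neumann algebra tensor product, $I$ maps into $\mathbf B=\bounded(\calH_B)$; it is a register (a weak*-continuous unital $*$-isomorphism implementing conjugation by a unitary, as noted already in \autoref{sec:quantum}), and $\sandwich{\adj U}$ is an inverse register, so $I$ is an iso-register. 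A direct computation gives $I(c\tensor b)\psi=U(c\tensor b)(1\tensor\psi)=U(c\tensor b\psi)=cb\psi$, hence $I(c\tensor b)=cb=u(c)\cdot\id(b)$, which is exactly the defining equation of $\pair u\id$ on simple tensors. By Axiom~\ref{ax:tensorext}, $\pair u\id=I$, so $\pair u\id$ is an iso-register.

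The main obstacle is of a presentational rather than mathematical kind: keeping the identification $\bounded(\setC)=\setC$ and the identification $\setC\tensor\mathbf B=\mathbf B$ (induced by $U$) straight, and making sure that $I$ really is defined on the von Neumann tensor product rather than only on its algebraic dense subset. Both identifications are built into the tensor product of von Neumann algebras used in $\Lquantum$, so once they are in place the argument goes through. Combining all three verified properties yields that $u$ and $\id$ are compatible with $\pair u\id$ an iso-register, i.e., complements, so $u$ is a unit register by definition.
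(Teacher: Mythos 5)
Your proof is correct, but it takes a genuinely different route from the paper. The paper's proof is a one-liner: it observes that $u(\setC)=\idmult_{\mathbf B}$ and invokes \autoref{lemma:unit.iff}, which characterizes unit registers as exactly those whose range is $\idmult_{\mathbf B}$; that characterization in turn rests on the commutant description of complements (\autoref{lemma:complement.range} and its converse) and on \autoref{lemma:same.range.equiv}. You instead verify the definition of ``unit register'' directly: you check that $u$ is a register, that $u$ and $\id$ commute, and you exhibit $\pair u\id$ explicitly as the iso-register $\sandwich U$ for the canonical unitary $U:\setC\otimes\calH_B\to\calH_B$, using Axiom~\ref{ax:tensorext} to identify $\pair u\id$ with $\sandwich U$ from their agreement on simple tensors. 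Your argument is self-contained and constructive — it does not need the existence of complements or any commutant theory, and is essentially the ``Existence'' argument of \autoref{sec:qregs.complements} specialized to $F=\id$ — whereas the paper's argument is shorter because it reuses machinery that is developed anyway for the uniqueness of complements. All the individual steps you take (weak*-continuity of $u$ on the one-dimensional domain, the computation $I(c\otimes b)=cb=u(c)\cdot\id(b)$, and the appeal to Axiom~\ref{ax:tensorext}) are sound.
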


\begin{proof}
  We have $u(\setC)=\idmult_\mathbf{B}$.
  By \autoref{lemma:unit.iff} (\autoref{app:proofs:complements}), this means that $u$ is a unit reference.
\end{proof}
}

}

\delaytextsv{rel-to-optics}{
  \newcommand\INCLUDEONCEjashduirisdaskdoeretod{}

\section{Relationship to optics}
\label{sec:rel.optics}

In functional programming, a widely used concept are so-called \emph{optics}\index{optic}.
Optics encompass various different methods for accessing subparts of larger data-structures.
The probably best known example of an optic is the lens that we already encountered in \autoref{sec:classical}.
Other types of optics include \emph{prisms}\index{prism} (giving access to one of the options of a sum-type), \emph{optionals}\index{optional} a.k.a., \emph{affine traversals}\index{affine traversal}\index{traversal!affine} (access to a part of a data-structure that may or may not be there).
Quantum references as described in this paper are also optics in the sense that they are functional objects that provide access to a subpart of the memory.

All this leads to two questions:
\begin{compactitem}
\item Can we construct quantum analogoues to other optics besides lenses?
  (E.g., to optionals, for a part of a quantum system that may or may not be present.)
\item Might references as we have formalized them actually be a special case of some existing general formalism of optics?
\end{compactitem}
To the first question, we do not have an answer but we believe this to be an interesting question for future research.

Concerning the second question: Riley \cite{riley18optics} presents a general treatment of optics.
They show that many existing kinds of optics are just special cases of their category of optics, for a suitable choice of base category.
(E.g., lenses arise when we use the category \textbf{Set}.)
We ask whether we recover our quantum references as a special case if we apply Riley's definition to the category of Hilbert spaces with bounded operators as morphisms.
(Because our references operate on bounded operators.)
An \emph{optic}\index{optic!category theoretical} from $\calH_A$ to $\calH_B$ in Riley's sense is a pair $(L,R)$ of morphisms (i.e., bounded operators) $L:\calH_B\to\calH_A\otimes\calH_C$ and $R:\calH_A\otimes\calH_C\to\calH_B$ (for some Hilbert space $\calH_C$) modulo the equivalence relation generated by $\pb\paren{(1\otimes M)L,R} \cong \pb\paren{L,R(1\otimes M)}$\pagelabel{page:def:opticequiv} for arbitrary bounded operators $M$.\footnote{%
  This is not the actual \emph{definition} of optics in \cite{riley18optics}, but \cite{riley18optics} shows that this is equivalent to the definition.
  We use this formulation because we can present it without introducing a number of category-theoretical concepts.}
The intuition is that, if we have data of type $\calH_B$ (e.g., the memory), and we want to access a subpart of type $\calH_A$ (e.g., the reference content), then we apply $L$ to split the data into a $\calH_A$-part (that we want to access) and the rest (type $\calH_C$). Then we operate on $\calH_A$, and then we apply $R$ to put things together again (we get a state in $\calH_B$ again).
That is, like with our references, given an optic $(L,R)$, we can naturally lift a bounded operator on $\calH_A$ to one on $\calH_B$ by $F_{(L,R)}:a\mapsto R(a\otimes 1_C)L$ (we call $F_{(L,R)}$ the \emph{lifting function}\index{lifting function} of the optic).
Obviously, for any bounded operator $M$, the optics $\pb\paren{(1\otimes M)L,R}$ and $\pb\paren{L,R(1\otimes M)}$ lead to the same lifting function. (The converse is not obvious.)
This justifies the choice of equivalence relation when defining optics.

While the definition of an optics category looks very different from the definition of a reference category, and our definition of quantum references looks very different from this, too, it turns out that the latter are closely connected:
By \autoref{lemma:normal.tensor1}, $F$ is a quantum reference iff it can be written as $F(a)=U(a\otimes 1_C)\adj U$ for unitary $U$ and some Hilbert space $\calH_C$.
That is, we can represent $F$ as the optic $(\adj U,U)$.
But for optics (in Riley's sense) and (quantum) references to actually be the same notion, we need to check three things:
\begin{inparaenum}[(a)]
\item\label{item:optics.reg.1}
  If $(L,R)\cong(L',R')$ (i.e., they are the same optic), do they give rise to the same reference (i.e., have the same lifting function)?
\item\label{item:optics.reg.2}
  If $(L,R)$ and $(L',R')$ are the same reference, do we have $(L,R)\cong(L',R')$?
\item\label{item:optics.reg.3}
  Is every optic a reference? (I.e., is the lifting function of every optic a reference?)
\end{inparaenum}

The answer to \eqref{item:optics.reg.3} is trivially no:
$(0,0)$ leads to the function $F(a)=0$ which is not a reference.
However, the latter is not surprising because the category of optics is not intended to be limited to well-behaved objects but simply establishes the type of things we are talking about (similar to our prereferences).
Indeed, \cite{riley18optics} introduces the notion of lawful optics\index{lawful optics}\index{optics!lawful} to restrict the categories of optics; these are characterized by two certain properties.
We do not give the exact definition of these properties here as they again involve category-theoretic notions that we have not introduced here.
It suffices to say that they imply and maybe are equivalent to $F(ab)=F(a)F(b)$ and $F(1)=1$ for the lifting function $F$.
We recognize our Axiom~\ref{ax:reg.monhom} here.
But unfortunately, $\pb\paren{\tiny\begin{pmatrix}1/2\!\!\!\!\\\!\!&1\end{pmatrix},\begin{pmatrix}2\!\!\\&\!\!1\end{pmatrix}}$ is a lawful optic from $\setC^{2\times2}$ to $\setC^{2\times 2}$, yet the lifting function has $F\pb\paren{\tiny\begin{pmatrix}&\!\!i\\-i\!\!\end{pmatrix}}={\tiny\begin{pmatrix}&\!\!\!\!2i\\-i/2\!\!\end{pmatrix}}$, i.e., it maps a unitary to a non-unitary, and thus is not a reference (nor, it seems, anything sensible from a quantum mechanical point of view).

This issue can be resolved by using the category of Hilbert spaces with unitaries as morphims (instead of bounded operators) as the base category.
The resulting category of optics satisfies conditions (\ref{item:optics.reg.1}--\ref{item:optics.reg.3}),\footnote{%
  \eqref{item:optics.reg.1} is trivial even without assuming lawfulness. 
  \eqref{item:optics.reg.3} follows directly from the fact that lawfulness of $(L,R)$ implies that $RL=1$.
  \eqref{item:optics.reg.2} is shown in \autoref{lemma:optics.uni.equiv}.}
so this leads to a notion of optics that stands in 1-1 correspondence with our references.
But then we live in the category of unitaries, so strictly speaking we cannot apply the lifting functions to other physically meaningful operations (such as projectors or isometries).
To overcome this would have to define the notion of optics with respect to two related categories.
While we do not think that this would lead to problems, to the best of our knowledge this is an extension of optics that has not been studied.

Besides the above challenges, there are two more advantages that references have over optics in Riley's sense:
\begin{itemize}
\item
  One of the distinguishing features of references is that they support pairing, i.e., taking two references and see them as one.
  For optics, it is unclear how such a notion would be defined (especially in a way that lawfulness would be preserved).
\item
  To define optics in Riley's sense, we need category-theoretical constructions that might not be available in some mathematical foundations.
  For example, the relation $\cong$ above would be defined as the closure of the relation $\sim$ defined by ``$(L,R)\sim(L',R')$ iff there exists a Hilbert space $\calH_C$ and an $M$ such that \dots''.
  This seemingly harmless definition quantifies over all Hilbert space $\calH_C$ (with no upper bound on its cardinality) and would not, for example, even be well-typed in foundations like HOL (underlying Isabelle/HOL).
  Stronger foundations (ZFC, calculus of inductive constructions \cite{paulin93inductive} underlying e.g., Coq \cite{coq} and Lean \cite{demoura2015lean}, \dots) may well be able to handle it but we believe that there is an advantage in finding solutions that do not require stronger foundations (both for practical reasons and because the stronger the foundation, the more it is possible that it is unsound).
\end{itemize}
Of course, Riley's formalism has advantages, too, that our references do not have.
For example, we have no notion of partial operations (e.g., accessing the fifth element of a quantum list, which may not exist) while in the classical setting, affine traversals (which are covered by Riley's formalism, too) allow this.

We believe that it would be a valueable avenue for future research to find approaches that combine all the above advantages.

}

\delaytextsv{lifting quantum objects}{
\section{Lifting various quantum objects}
\label{sec:lifting}

By definition, a quantum reference $F:\mathbf A\to\mathbf B$ maps a bounded operator on the Hilbert space $\calH_A$ (the space corresponding to the part of the memory that the reference occupies) to a bounded operator on $\calH_B$ (the space corresponding to the whole memory).
We already mentioned in \autoref{sec:quantum-overall} that $F$ maps unitaries to unitaries.
This allows us to interpret unitary operations on a reference as unitaries on the whole memory, and thus to interpret a quantum circuit containing only unitary operations.
But what about other operations/objects that occur in quantum computation?
Can we interpret a quantum state on a reference as a quantum state on the whole memory?
How about a measurement on a quantum state? A quantum channel?
In this section, we answer those questions.

\subsection{Elementary objects}

Since unitaries and projectors $A$ are bounded operators, a reference $F$ lifts them to the whole memory as $F(A)$.
The following lemma tells us that all relevant properties are preserved:
\begin{lemma}\lemmalabel{lemma:elementary}
  Let $F:\mathbf A\to\mathbf B$ be a reference.
  Fix $A\in\mathbf{A}$.
  \begin{compactenum}[(i)]
  \item\itlabel{lemma:preserve.unitary} If $A$ is a unitary, then $F(A)$ is a unitary.
  \item\itlabel{lemma:preserve.isometry} If $A$ is an isometry, then $F(A)$ is an isometry.
  \item\itlabel{lemma:preserve.projector} If $A$ is a projector, then $F(A)$ is a projector. (By projector, we mean an \emph{orthogonal} projector throughout this paper.)
  \item\itlabel{lemma:preserve.norm} $\pb\norm{F(A)} = \norm A$.
  \item\itlabel{lemma:preserve.positive} If $A$ is positive, then $F(A)$ is positive.
  \end{compactenum}
\end{lemma}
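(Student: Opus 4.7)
The plan is to handle parts (i)–(iii) by one-line algebraic computations using that $F$ is a unital $*$-homomorphism, and to reduce part (iv) to a structural fact already available from \autoref{sec:infinite}.

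First I would dispatch the algebraic parts. For (i), assuming $A\adj A=\adj A A=1$, compute $F(A)\adj{F(A)}=F(A)F(\adj A)=F(A\adj A)=F(1)=1$, and symmetrically $\adj{F(A)}F(A)=1$. Part (ii) is the same calculation on one side only: $\adj{F(A)}F(A)=F(\adj A A)=F(1)=1$. For (iii), assuming $A^2=A$ and $\adj A=A$, compute $F(A)^2=F(A^2)=F(A)$ and $\adj{F(A)}=F(\adj A)=F(A)$. Each of these uses only axiom~\ref{ax:reg.monhom} together with the fact that registers in $\Lquantum$ (and $\Lquantumfin$) are $*$-preserving by definition.

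For (iv), my preferred route is to invoke \autoref{lemma:normal.tensor1} (used already in \autoref{sec:proofs.infdim}), which yields a Hilbert space $\calH_C$ and a unitary $U:\calH_A\otimes\calH_C\to\calH_B$ with $F(A)=U(A\otimes 1_{\mathbf C})\adj U$ for all $A\in\mathbf A$. Since conjugation by a unitary preserves the operator norm, $\norm{F(A)}=\norm{A\otimes 1_{\mathbf C}}$; and the operator norm on the Hilbert space tensor product satisfies $\norm{A\otimes 1_{\mathbf C}}=\norm A\cdot\norm{1_{\mathbf C}}=\norm A$ (using that $\calH_C$ is nonzero, which holds since objects in $\Lquantum$ have nonzero Hilbert spaces). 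This gives the desired equality.

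The main obstacle is (iv): parts (i)–(iii) are forced by the $*$-homomorphism axioms, but (iv) requires a genuinely structural fact about $F$ that is not immediate from the register axioms alone. An alternative, more abstract route would be to combine the standard C*-algebra facts that $*$-homomorphisms are contractive and injective $*$-homomorphisms are isometric with the injectivity of $F$ provided by \autoref{lemma:reg.injective}. The tensor-decomposition approach feels more self-contained given the infrastructure already developed in \autoref{sec:infinite}, and it also covers the finite-dimensional case uniformly.
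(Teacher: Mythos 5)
Your proof is correct and follows essentially the same route as the paper: parts (i)--(iii) are the same one-line $*$-homomorphism computations, and part (iv) is exactly the paper's argument (stated there as \autoref{lemma:reg.isometric}), namely decomposing $F(A)=U(A\otimes 1)\adj U$ via \autoref{lemma:normal.tensor1} and using $\norm{A\otimes 1}=\norm A$.
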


The proof is given in \autoref{app:proof:lemma:elementary}.

\subsection{Subspaces (quantum predicates)}
\label{sec:lift.sub}

Subspaces of a Hilbert space are important in quantum mechanics to describe the properties of quantum states (e.g., ``state $\psi$ is in subspace $S$'').
E.g., in von-Neumann-Birkhoff quantum logic \cite{birkhoff36logic} and in many quantum Hoare logics \cite{qrhl, zhou19applied, ghosts, Li2020}.
Most often, we consider (topologically) \emph{closed subspaces}\index{closed subspace}\index{subspace!closed} only.
(Otherwise, the limit of a sequence of quantum states satisfying a property does not necessarily satisfy that property.
In the finite-dimensional case, every subspace is closed \cite[Proposition 3.3]{conway13functional}, so that requirement is usually not stated explicitly in works considering only finite-dimensional systems.)

Closed subspaces stand in a natural 1-1 correspondence with projectors:
If $S$ is a closed subspace, there exists a unique projector $P_S$ with image $S$ (\autoref{lemma:ex.proj}),
and for any projector $P$, its image is a closed subspace \cite[Proposition II.3.2\,(b)]{conway13functional}.
Since we already know how to lift projectors through references (\lemmaref{lemma:preserve.projector}), it is now easy to lift closed subspaces:

For a reference $F:\mathbf A\to\mathbf B$, and a closed subspace $S\subseteq\calH_A$, we define $\symbolindexmark\regsubspace{\regsubspace FS}$ as the image of $F(P_S)$ where $P_S$ denotes the projector with image $S$.

For non-closed subspaces, we do not know if there is a simple way to lift them through references.%
\footnote{%
  Using the tools from \autoref{sec:pure}, we could define $\regsubspace FS := \{F(\psi)\ptensor\compl F(\phi): \psi\in S,\ \phi\in\calH_{A'}\}$ where $\mathbf A'$ denotes the domain of $\compl F$.
  But we have not explored this approach further since we believe that non-closed subspaces are not very relevant in the context of quantum programs.}

The following lemma shows that the lifting of subspaces performs as
expected:

\begin{lemma}\lemmalabel{lemma:lift.sub}
  Let $F:\mathbf A\to\mathbf B$ be a reference.
  Let $S,T$ be closed subspaces of $\calH_A$.
  \begin{compactenum}[(i)]
  \item\itlabel{lemma:lift.sub:subspace} $\regsubspace FS$ is a closed subspace of $\calH_B$.
  \item\itlabel{lemma:lift.sub:ortho} $F(\ortho S)=\ortho{F(S)}$ where $\symbolindexmark\ortho{\ortho{S}}$ denotes the orthogonal complement of $S$.
  \item\itlabel{lemma:lift.sub:zero} $F\pb\paren{\{0\}}=\{0\}$.
  \item\itlabel{lemma:lift.sub:all} $F(\calH_A)=F(\calH_B)$.
  \item\itlabel{lemma:lift.sub:mono} $F(S)\subseteq F(T)$ iff $S\subseteq T$.
  \item\itlabel{lemma:lift.sub:inter} $F(S\cap T)= F(S)\cap F(T)$.
  \item\itlabel{lemma:lift.sub:plus} $F(S+T)= F(S)+F(T)$ where $S+T$ denotes the closure of $\{\psi+\phi:\psi\in T,\phi\in S\}$, or equivalently the smallest closed subspace containing $S,T$.
  \item \itlabel{lemma:lift.sub:apply-op} For a bounded operator $A$, $F(AS)=F(A)F(S)$.
    ($AS$ means the closed image of $S$ under the operator $A$, and analogously $F(A)F(S)$.)
  \item \itlabel{lemma:lift.sub:inter.disjoint}
    Let $G:\mathbf C\to\mathbf B$ be another reference and $V$ a closed subspace of $\calH_C$. Assume $F,G$ are disjoint.
    Then $F(S)\cap G(V)=\im F(P_S)G(P_V)=F(P_S)G(V)$.
    ($P_S,P_V$ are the projectors onto $S,V$.
    $F(P_S)G(V)$ is the image of $G(V)$ under the operator $F(P_S)$.)
  \end{compactenum}
\end{lemma}

The proof is given in \autoref{app:proof:lemma:lift.sub}.

\subsection{Mixed states (density operators)}
\label{sec:mixed}

A \emph{mixed state}\index{mixed state}\index{state!mixed} (roughly speaking, a probabilistic quantum state) is represented by a density operator.
Formally, a \emph{density operator}\index{density operator} on $\calH_A$ is a positive trace-class operator on $\calH_A$ of trace $1$.
(A trace-class operator is one for which the trace is defined, see \cite[Definitions 18.3, 18.10]{conway00operator}.
In the finite-dimensional case, the requirement ``trace-class'' can be omitted.
We write $\symbolindexmark\tracecl{\tracecl(\calH)}$ for the space of all trace-class operators on $\calH$, and $\symbolindexmark\trnorm{\trnorm\cdot}$ for the trace norm.)
And a \emph{subdensity operator}\index{subdensity operator} is a positive trace-class operator of trace $\leq1$.
Every (sub)density operator $\rho$ can be written as $\rho=\sum_i p_i\psi_i\adj{\psi_i}$ with $p_i\geq0$ and $\sum_ip_i=1$ (or $\leq1$) and $\psi_i\in\calH_A$;
the intuition is that the state is $\psi_i$ with probability $p_i$.
Since $\rho$ is a bounded operator on $\calH_A$, it is tempting to think that we can lift $\rho$ to $\calH_B$ as $F(\rho)$.
Unfortunately, this does not work.
$F(\rho)$ will, in general, not be a density operator.\footnote{%
  For example, if $\calH_A=\setC^2$, $F=\Fst:\mathbf A\to\mathbf A\otimes\mathbf A$,
  $\rho={\tiny\begin{pmatrix}1&0\\0&0\end{pmatrix}}$, then $\rho$ is a density operator, but $F(\rho)=\rho\otimes 1_{\mathbf A}$ has trace $2$ and thus is not.}
If we think about the situation, this is not surprising:
If we only know what the state of the reference $F$ is, we cannot tell what the state of the overall system is since we do not know what the state of the remaining parts of the memory should be.
(Note that if $F$ is an \emph{iso-reference}, then $F(\rho)$ is a density operator if $\rho$ is.)
However, if references $F_1,F_2,\dots,F_n$ ``cover the whole memory'' and $\rho_1,\rho_2,\dots,\rho_n$ are mixed states on those references, then it makes sense to talk about the state of the whole system.
Formally, we do this as follows:
And ``$F_1,F_2,\dots,F_n$ cover the whole memory'' simply means that $F_1,F_2,\dots, F_n$ are a partition (see \autoref{sec:complements}).\footnote{%
  In cases where $F_1,\dots,F_n$ are only pairwise disjoint but not necessarily a partition, we can always add $F_{n+1}:=\compl{\pairs{F_1}{F_n}}$ to turn them into a partition.}
By definition, this implies that $\pairs{F_1}{F_n}$ is an iso-reference.
Thus we can lift the states $\rho_1,\rho_2,\rho_3,\dots$ to become a state $\rho$ on the whole memory defined as $\rho := \pairs{F_1}{F_n}\pb\paren{\rho_1\otimes\dots\otimes\rho_n}$.
We write \symbolindexmark\mtensor{$F_1(\rho_1)\mtensor F_2(\rho_2)\mtensor\dots\mtensor F_n(\rho_n)$} for this state~$\rho$.
(This is a slight abuse of notation, because $\mtensor$ is not actually a binary operation.
The whole term $F_1(\rho_1)\mtensor\dots\mtensor F_n(\rho_n)$ is defined as a whole.)
The following lemma ensures that this construction is well-defined (in particular, it produces a density operator), and that lifting density operators is compatible with the operations considered in the previous section.
\begin{lemma}\lemmalabel{lemma:mixed.states} Let  $F_1:\mathbf A_1\to\mathbf B$, \dots, $F_n:\mathbf A_n\to\mathbf B$  be a partition. Assume that $\rho_i$ ($i=1,\dots,n$) are trace-class operators on $\calH_{A_i}$.
  \begin{compactenum}[(i)]
  \item\itlabel{lemma:mixed.states:density}
    If $\rho_1,\dots,\rho_n$ are (sub)density operators,
    then $F_1(\rho_1)\mtensor\dots\mtensor F_n(\rho_n)$ is a (sub)density operator.
  \item\itlabel{lemma:mixed.states:trace} $\tr \pb\paren{F_1(\rho_1)\mtensor\dots\mtensor F_n(\rho_n)} = \tr\rho_1\cdots\tr\rho_n$.
    (And in particular, ${F_1(\rho_1)\mtensor\dots\mtensor F_n(\rho_n)}$ is trace-class.)
  \item\itlabel{lemma:mixed.states:permute} The lifting of mixed states does not depend on the order of the references and operators.
    That is, for a permutation $\pi$ on $1,\dots,n$, we have
    $F_{\pi(1)}(\rho_{\pi(1)})\mtensor F_{\pi(2)}(\rho_{\pi(2)})\mtensor\dots\mtensor F_{\pi(n)}(\rho_{\pi(n)})
    =
    F_1(\rho_1)\mtensor F_2(\rho_2)\mtensor\dots\mtensor F_n(\rho_n)$.
  \item\itlabel{lemma:mixed.states:apply.UV} Let $U,V$ be bounded operators on $\calH_{A_i}$.
    Then $F_1(U\rho_1 V)\mtensor F_2(\rho_2)\mtensor\dots\mtensor F_n(\rho_n) = F_1(U) \pb\paren{F_1(\rho_1)\mtensor\dots\mtensor F_n(\rho_n)} F_1(V)$.\footnote{\label{footnote:mixed.states:reorder}%
      We state this only for the first reference $F_1$ for simplicity, but by \eqref{lemma:mixed.states:permute}, it applies to all other references, too.}
  \item\itlabel{lemma:mixed.states:rank1} $F_1(\rho_1)\mtensor F_2(\rho_2)\mtensor\dots\mtensor F_n(\rho_n)$ has rank 1 iff all $\rho_i$ have rank 1.
  \item\itlabel{lemma:mixed.states:bounded} $\Phi:\rho_1,\dots,\rho_n \mapsto   F_1(\rho_1)\mtensor\dots\mtensor F_n(\rho_n)$ is bounded linear in each argument (with respect to the trace norm).
  \item\itlabel{lemma:mixed.states:pair} 
    $ \pair{F_1}{F_2}(\rho_1\otimes\rho_2)\mtensor F_3(\rho_3)\mtensor\dots\mtensor F_n(\rho_n)
    =
    F_1(\rho_1)\mtensor F_2(\rho_2)\mtensor\dots\mtensor F_n(\rho_n)$.\footnoterepeat{footnote:mixed.states:reorder}
  \item\itlabel{lemma:mixed.states:separating} Let $S_i$ generate $\tracecl(\calH_{A_i})$ for $i=1,\dots,n$.
    Then $\pb\braces{  F_1(\rho_1)\mtensor F_2(\rho_2)\mtensor\dots\mtensor F_n(\rho_n) :
    \rho_i\in S_i }$ generates $\tracecl(\calH_B)$.
    (``Generate'' refers to the linear span, closed with respect to the trace norm.)
    In particular,
    for quantum (sub)channels\footnote{Readers who wish to recall the precise definition of quantum (sub)channels are directed to \autoref{sec:quantum.channels}.} $\calE,\calF$, if
    $\calE \pb\paren{   F_1(\rho_1)\mtensor F_2(\rho_2)\mtensor\dots\mtensor F_n(\rho_n) }
    = \calF\pb\paren{   F_1(\rho_1)\mtensor F_2(\rho_2)\mtensor\dots\mtensor F_n(\rho_n) }$
    for all $\rho_i\in S_i$, then $\calE=\calF$.
  \item\itlabel{lemma:mixed.states:nested} Let $G_1:\mathbf C_1\to\mathbf A_1$, \dots, $G_m:\mathbf C_m\to\mathbf A_1$ be a partition, and $\sigma_i$ trace-class operators on $\calH_{C_i}$.
    Then $  F_1\pB\paren{G_1(\sigma_1)\mtensor\dots\mtensor G_m(\sigma_m)}\mtensor F_2(\rho_2)\mtensor\dots\mtensor F_n(\rho_n)
    =
    \chain{F_1}{G_1}(\sigma_1)
    \mtensor\dots\mtensor
    \chain{F_1}{G_m}(\sigma_m)
    \mtensor F_2(\rho_2)\mtensor\dots\mtensor F_n(\rho_n)
    $.\footnoterepeat{footnote:mixed.states:reorder}
  \end{compactenum}
\end{lemma}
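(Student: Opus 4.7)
The central observation is that $I := \pairs{F_1}{F_n}$ is an iso-register, i.e., a weak*-continuous unital $*$-isomorphism between $\bounded(\calH_{A_1}\tensor\dots\tensor\calH_{A_n})$ and $\bounded(\calH_B)$. By the standard classification of such isomorphisms of type I factors (which also arises from the machinery of \autoref{sec:proofs.infdim}), every such $I$ is implemented as $I(x)=UxU^*$ for some unitary $U:\calH_{A_1}\tensor\dots\tensor\calH_{A_n}\to\calH_B$. In particular, $I$ preserves positivity, trace, trace norm, and rank, and is bounded (indeed, isometric) in trace norm. Since by definition $F_1(\rho_1)\mtensor\dots\mtensor F_n(\rho_n) = I(\rho_1\tensor\dots\tensor\rho_n)$, most items reduce to properties of ordinary tensor products of trace-class operators, transported through $I$.

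Items \eqref{lemma:mixed.states:density}, \eqref{lemma:mixed.states:trace}, \eqref{lemma:mixed.states:rank1}, and \eqref{lemma:mixed.states:bounded} follow by combining the corresponding property of $I$ with the corresponding property of Hilbert-space tensor products: tensor products of (sub)density operators are (sub)density with trace equal to the product of the traces, the rank of a tensor of nonzero operators is the product of the ranks, and $\rho_i\mapsto\rho_1\tensor\dots\tensor\rho_n$ is bounded linear in each slot with trace-norm bound $\prod_{j\neq i}\trnorm{\rho_j}$. For \eqref{lemma:mixed.states:apply.UV}, first rewrite $(U\rho_1 V)\tensor\rho_2\tensor\dots\tensor\rho_n = (U\tensor 1\tensor\dots\tensor 1)\,(\rho_1\tensor\dots\tensor\rho_n)\,(V\tensor 1\tensor\dots\tensor 1)$ using Axiom~\ref{ax:tensor.mult}, then apply $I$ as a $*$-homomorphism; the factor $I(U\tensor 1\tensor\dots\tensor 1)$ equals $F_1(U)$ by Law~\ref{law:pair.select} (reading off the first slot of the pair).

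Items \eqref{lemma:mixed.states:permute}, \eqref{lemma:mixed.states:pair}, and \eqref{lemma:mixed.states:nested} all reduce to showing that two iso-registers obtained by different orderings/bracketings of the $F_i$ (and, in \eqref{lemma:mixed.states:nested}, by composing with $G_j$) agree. By Axiom~\ref{ax:tensorext} (or its multi-factor consequence, Footnote~\ref{footnote:separating}) it suffices to check agreement on pure tensors; on those, both sides evaluate to the product $F_1(a_1)\cdots F_n(a_n)$ using the monoid-homomorphism property of registers together with pairwise compatibility, while the swap/associator laws~\ref{law:pair.sigma}, \ref{law:pair.alpha}, \ref{law:pair.alpha'} handle the reordering of $\sigma$'s and $\alpha$'s needed to match the two iso-registers. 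For \eqref{lemma:mixed.states:nested}, additionally use Law~\ref{law:pair.chain} to rewrite $\chain{F_1}{\pairs{G_1}{G_m}}=\pairs{\chain{F_1}{G_1}}{\chain{F_1}{G_m}}$. Item \eqref{lemma:mixed.states:separating} follows because $\{\rho_1\tensor\dots\tensor\rho_n:\rho_i\in S_i\}$ generates $\tracecl(\calH_{A_1}\tensor\dots\tensor\calH_{A_n})$ (a standard fact about the projective tensor product of trace-class operators), and $I$ is an isometric bijection on trace-class, hence maps generating sets to generating sets; the statement about $\calE,\calF$ is immediate since quantum (sub)channels are bounded linear in trace norm.

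The main obstacle is the bookkeeping in \eqref{lemma:mixed.states:permute}, \eqref{lemma:mixed.states:pair}, \eqref{lemma:mixed.states:nested}: identifying precisely which composition of $\sigma$'s, $\alpha$'s, and chainings turns one iso-register into the other. All such identifications reduce, via Axiom~\ref{ax:tensorext}, to comparing values on pure tensors, but writing out the correct associator pattern is tedious; conceptually, however, each of these is a routine ``bracketing'' argument once the implementation $I(x)=UxU^*$ and the laws in \autoref{fig:laws} are in hand.
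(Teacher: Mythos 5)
Your proposal is correct and follows essentially the same route as the paper's proof: both hinge on writing $\pairs{F_1}{F_n}(x)=UxU^*$ for a unitary $U$ (the paper's \autoref{lemma:isoreg-decomp}), reducing items (i), (ii), (iv)--(vi), (viii) to standard facts about tensor products of trace-class operators transported through this isomorphism, and handling (iii), (vii), (ix) by the swap/associator/chaining laws checked on pure tensors via Axiom~\ref{ax:tensorext}. The only cosmetic difference is that the paper reduces the permutation argument to adjacent transpositions rather than a general bracketing argument, which does not change the substance.
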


The proof is given in \autoref{app:proof:lemma:mixed.states}.

\paragraph{Example.}
We illustrate how to use lifted mixed states in a small example, to illustrate how the theory developed in this subsection can be used.
We want to evaluate the following circuit:
\begin{center}
  \begin{tikzpicture}
    \initializeCircuit
    \newWires{F,G,H,rest}
    \node[wireInput=F] {\small $\selfbutter 0$\,\, };
    \node[wireInput=G] {\small $\frac12\,1$\,\, }; \node[wireInput=H] {\small $\selfbutter 0$\,\,};
    \node[wireInput=rest] {\small $\rho_\textit{rest}$\,\,};
    \stepForward{3mm}
    \labelWire[\tiny$F$]{F}
    \labelWire[\tiny$G$]{G}
    \labelWire[\tiny$H$]{H}
    \drawWire{rest}
    \node at (\getWireCoord{rest}) {\footnotesize/};
    \stepForward{3mm}
    \labelWire[\tiny\rlap{other references}]{rest}
    \stepForward{2mm}
    \node[gate=F] (HF) {$H$};
    \stepForward{2mm}
    \node[cnot=H,control=G] (cnot) {};
    \stepForward{2mm}
    \node[gate=G] (HG) {$H$};
    \stepForward{2mm}
    \drawWires{F,G,H,rest}
  \end{tikzpicture}
\end{center}
The bottom wire represents any other quantum references that are possibly present in the system, i.e., the complement $Z:=\compl{\paren{\spair F{\spair GH}}}$.
References $F$ and $H$ are initially in the classical state $\ket 0$ ($\selfbutter 0$ as a density operator).
And reference $G$ is in the completely mixed state $\frac12\, 1=\frac12\selfbutter0+\frac12\selfbutter1$.
The initial state in this circuit can be expressed as
$F\pb\paren{\selfbutter 0} \mtensor G\pb\paren{\frac12\,1} \mtensor H\pb\paren{\selfbutter0} \mtensor Z(\rho_\mathit{rest})$.
And the circuit consists of the unitaries $F(H)$, $\spair FG(\CNOT)$, $G(H)$.
(Which are to be applied from both sides to the current state, like $\rho\mapsto U\rho\adj U$.)
Thus we can evaluate the circuit as follows:
\begin{align*}
  & F\pb\paren{\selfbutter 0} \mtensor G\pb\paren{\tfrac12\,1} \mtensor H\pb\paren{\selfbutter0} \mtensor Z(\rho_\mathit{rest})
  \\
  \xmapsto{F(H)\textcolor{gray}{(\cdot)}\adj{F(H)}}\ 
  & F\pb\paren{H\selfbutter 0\adj H} \mtensor G\pb\paren{\tfrac12\,1} \mtensor H\pb\paren{\selfbutter0} \mtensor Z(\rho_\mathit{rest})
    \\&\quad
    =
    F\pb\paren{\selfbutter +} \mtensor G\pb\paren{\tfrac12\,1} \mtensor H\pb\paren{\selfbutter0} \mtensor Z(\rho_\mathit{rest})
  \\&
  \quad \starrel=
    \spair GH\pb\paren{\tfrac12\,1 \otimes \selfbutter0}
    \mtensor
    F\pb\paren{\selfbutter +} \mtensor Z(\rho_\mathit{rest})
  \\
  \xmapsto{\spair GH(\CNOT)\textcolor{gray}{(\cdot)}\adj{\spair GH(\CNOT)}}\ 
  &
    \spair GH\pb\paren{\CNOT\paren{\tfrac12\,1 \otimes \selfbutter0}\adj\CNOT}
    \mtensor
    F\pb\paren{\selfbutter +} \mtensor Z(\rho_\mathit{rest})
  \\&
  \quad=
  \spair GH\pb\paren{\tfrac12\selfbutter{00}+\tfrac12\selfbutter{11}}
  \mtensor
  F\pb\paren{\selfbutter +} \mtensor Z(\rho_\mathit{rest})
  \\
  \xmapsto{\spair GH(H\otimes 1)\textcolor{gray}{(\cdot)}\adj{\spair GH(H\otimes 1)}}\
  &
  \spair GH\pb\paren{(H\otimes 1)\pb\paren{\tfrac12\selfbutter{00}+\tfrac12\selfbutter{11}}\adj{(H\otimes 1)}}
  \mtensor
  F\pb\paren{\selfbutter +} \mtensor Z(\rho_\mathit{rest})
  \\&
  \qquad=
  \spair GH\pb\paren{{\tfrac12\selfbutter{+0}+\tfrac12\selfbutter{-1}}}
  \mtensor
  F\pb\paren{\selfbutter +} \mtensor Z(\rho_\mathit{rest}).
\end{align*}
Here $(*)$ is by \lemmaref{lemma:mixed.states:pair},\ \eqref{lemma:mixed.states:permute} which allow us to arbitrarily reorder and pair references in order to put the references involved in the next gate into the first position.
Each of the $\xmapsto{\dots}$ is by \lemmaref{lemma:mixed.states:apply.UV} which applies the current gate to the first tensor factor.
Note that in the third $\xmapsto{\dots}$, we apply $\spair GH(H\otimes 1)$ instead of $G(H)$.
This is because we cannot rewrite the current state into the form $G(\dots)\mtensor\dots$ because the state of $\spair GH$ cannot be written as a tensor product anymore after the CNOT.
However, $\spair GH(H\otimes 1)=G(H)$ by Axiom~\ref{ax:pairs}, so we still apply the same gate, only written differently.

So the final state can be written as $\ket +$ in the $F$ reference and $\ket{+0}$ in the $G,H$ references.
(Due to the CNOT, we cannot separate the states in those references $G,H$ anymore.)

Note that all the reorderings of references necessary for applying the gates are quite straightforward and should be easy to automate in a proof assistant.
(But we have not done so at this point.)

\subsection{Pure states}
\label{sec:pure}

Sometimes, it is simpler to work with pure states than mixed states.
For example, when reasoning about a quantum circuit that contains only unitary operations, keeping track its behavior on pure states may be simpler since the added generality of mixed states might be overkill.
To the best of our knowledge, there are two slightly different definitions of what a pure state on $\calH_A$ is:
\begin{compactitem}
  \item A unit vector $\psi\in\calH_A$.
  \item A unit vector $\psi\in\calH_A$ modulo a global phase factor $c\in\setC$.
    Equivalently: a one-dimensional subspace of $\calH_A$.
    Equivalently: a rank-1 density operator $\psi\adj\psi$.
\end{compactitem}
The second formalization is motivated by the fact that two pure states that differ only by a global phase factor (i.e., $\exists c\in\setC.\ \abs c=1\land\psi_1=c\psi_2$) are physically indistinguishable, i.e., there is no sequence of operations and measurements that can distinguish them.
Since the second formalization is equivalent to restricting ourselves to mixed states of rank 1, we can simply use the mechanisms discussed in the previous section to work with these mixed states.

Thus in this section, we will focus on the first formalization, i.e., a state being a unit vector $\psi\in\calH_A$.
First a word on why this formalization is even important (besides the fact that it is widely used).
After all, if two states differing only in the global phase are physically indistinguishable, why do we even care about the difference between two such states?
The reason (in our opinion) is that one important use case of pure states in quantum computation is to determine the behavior of a unitary by evaluating it on all pure states (or on a basis of pure states).
Namely, if $U_1\psi=U_2\psi$ on all $\psi\in\calH_A$ (or on a basis), then $U_1=U_2$.
We cannot conclude that $U_1=U_2$ if $U_1\psi=U_2\psi$ only holds up to a phase factor.\footnote{%
  And similarly, one may ask why one wants to determine a unitary $U$, and not just a unitary up to a global phase factor.
  One example would be when considering controlled unitaries: The controlled unitary $CI$ corresponding to the identity $I$ is the identity.
  The controlled unitary $C(-I)$ corresponding to the negated identity is the controlled phase gate.
  Those are very different even though $I$ and $-I$ differ only in a phase factor.}

So to cover this use case, we need to see how to lift pure states $\psi\in\calH_A$ through quantum references.
Unfortunately, while the lifting of all other quantum objects (unitaries, projectors, mixed states, superoperators, measurements) turns out to work very naturally, this is not the case with pure states.

\paragraph{Definition.}
Like in the mixed state case (and for the same reasons), we assume a partition $F_1:\mathbf A_1\to\mathbf B$, \dots, $F_n:\mathbf A_n\to\mathbf B$, and pure states $\psi_1\in\calH_{A_1}$, \dots, $\psi_n\in\calH_{A_n}$ on those references.

We define the following auxiliary constructs:
For every $\mathbf A$ (or equivalently, for every Hilbert space~$\calH_A$), we fix an arbitrary unit vector $\symbolindexmark\pureeta{\pureeta{\mathbf A}}\in\calH_A$, with the constraint that $\eta_{\mathbf A\otimes\mathbf B}=\eta_{\mathbf A}\otimes\eta_{\mathbf B}$.
For every $\mathbf{B}$ and every non-zero $b\in\mathbf{B}$, we fix an arbitrary unit vector $\symbolindexmark\purexi{\purexi b}$ in the span of $b$, with the additional constraint that if $\pureeta{\mathbf{B}}$ is in the span of $b$, then $\purexi b:=\pureeta{\mathbf{B}}$. 

Then we can lift the states $\psi_1,\dots,\psi_n$ through $F_1,\dots,F_n$ to $\calH_B$ as $\psi := \pb\paren{F_1(\psi_1\adj{\pureeta{\mathbf A_1}})\mtensor\dots\mtensor F_n(\psi_n\adj{\pureeta{\mathbf A_n}})} \purexi b$ with $b:={\paren{F_1(\pureeta{\mathbf{A}_1}\adj{\pureeta{\mathbf A_1}})\mtensor\dots\mtensor F_n(\pureeta{\mathbf{A}_n}\adj{\pureeta{\mathbf A_n}})}}$.
(Recall the notation $\mtensor$ from the previous section.)
We write \symbolindexmark\ptensor{$F_1(\psi_1)\ptensor\dots\ptensor F_n(\psi_n)$} for this state $\psi\in\calH_B$.
(As for $\mtensor$, this is a slight abuse of notation because $\ptensor$ is not a binary operation.)

We call a reference $F:\mathbf A\to\mathbf B$ \emph{$\eta$-regular}\index{regular!$\eta$-}\index{eta-regular@$\eta$-regular} iff there exists a $c$ such that $(F;\compl F)(\pureeta{\mathbf A}\adj{\pureeta{\mathbf A}}\otimes c) = \pureeta{\mathbf B}$. (Intuitively, $\eta$-regular references are references that do not involve any ``mappings'' (see the introduction), e.g., references constructed from pairs, $\Fst$, $\Snd$.)
This is a technical notion needed for stating \autoref{lemma:lift.pure}\,\eqref{lemma:lift.pure:nested} below.

The following lemma shows that this indeed results in a pure state and shows that the approach is compatible with the lifting of unitaries etc.
\begin{lemma}\lemmalabel{lemma:lift.pure}
  Assume that $F_1:\mathbf A_1\to\mathbf B$, \dots, $F_n:\mathbf A_n\to\mathbf B$ form a partition.
  Let $\psi_1,\dots,\psi_n\in\calH_{A_1},\dots,\calH_{A_n}$.
  \begin{compactenum}[(i)]
  \item\itlabel{lemma:lift.pure:norm} $\pB\norm{F_1(\psi_1)\ptensor\dots\ptensor F_n(\psi_n)} = \norm{\psi_1}\cdots\norm{\psi_n}$.
    In particular, if $\psi_1,\dots,\psi_n$ are pure states, so is $F_1(\psi_1)\ptensor\dots\ptensor F_n(\psi_n)$.
  \item\itlabel{lemma:lift.pure:permute} The construction does not depend on the order of the references and pure states.
    That is, for a permutation $\pi$ on $1,\dots,n$, we have
    $F_{\pi(1)}(\psi_{\pi(1)})\ptensor F_{\pi(2)}(\psi_{\pi(2)})\ptensor\dots\ptensor F_{\pi(n)}(\psi_{\pi(n)})
    =
    F_1(\psi_1)\ptensor F_2(\psi_2)\ptensor\dots\ptensor F_n(\psi_n)$.
  \item\itlabel{lemma:lift.pure:bounded} $\psi_1,\dots,\psi_n \mapsto   F_1(\psi_1)\ptensor F_2(\psi_2)\ptensor\dots\ptensor F_n(\psi_n)$ is bounded linear in each argument.
  \item\itlabel{lemma:lift.pure:pair}
    $ \pair{F_1}{F_2}(\psi_1\otimes\psi_2)\ptensor F_3(\psi_3)\ptensor\dots\ptensor F_n(\psi_n)
    =
    F_1(\psi_1)\ptensor F_2(\psi_2)\ptensor\dots\ptensor F_n(\psi_n)$.\footnote{\label{footnote:lift.pure:reorder}%
      We state this only for the first reference $F_1$ for simplicity, but by \eqref{lemma:lift.pure:permute}, it applies to all other references, too.}
  \item\itlabel{lemma:lift.pure:separating}
    Let $S_i$ generate $\calH_{A_i}$ for $i=1,\dots,n$.
    (``Generate'' refers to the linear span, closed with respect to the Hilbert space norm.)
    Then $\pb\braces{  F_1(\psi_1)\ptensor\dots\ptensor F_n(\psi_n) :
    \psi_i\in S_i }$ generates $\calH_B$.
    In particular, for bounded operators $U,V$, if $U \pb\paren{   F_1(\psi_1)\ptensor \dots\ptensor F_n(\psi_n) }
    = V\pb\paren{   F_1(\psi_1)\ptensor\dots\ptensor F_n(\psi_n) }$ for all $\psi_i\in S_i$, then $U=V$.
  \item\itlabel{lemma:lift.pure:nested} Let $G_1:\mathbf C_1\to\mathbf A_1$, \dots, $G_m:\mathbf C_m\to\mathbf A_1$ be a partition.
    Assume that $G_1,\dots,G_m$ are $\eta$-regular.
    Then $  F_1\pB\paren{G_1(\gamma_1)\ptensor\dots\ptensor G_m(\gamma_m)}\ptensor F_2(\psi_2)\ptensor\dots\ptensor F_n(\psi_n)
    =
    \chain{F_1}{G_1}(\gamma_1)
    \ptensor\dots\ptensor
    \chain{F_1}{G_m}(\gamma_m)
    \ptensor F_2(\psi_2)\ptensor\dots\ptensor F_n(\psi_n)
    $.\footnoterepeat{footnote:lift.pure:reorder}
  \item\itlabel{lemma:lift.pure:regulars}
    $\Fst$, $\Snd$, $\swap$, $\assoc$, $\assoc'$ are $\eta$-regular.
    If $F,G$ are $\eta$-regular and disjoint, then $\spair FG$ is $\eta$-regular.
    If $F,G$ are $\eta$-regular, then $\chain FG$ is $\eta$-regular.
    If $F,G$ are $\eta$-regular, then $F\rtensor G$ is $\eta$-regular.
  \item\itlabel{lemma:lift.pure:U} Let $U$ be a bounded operator on $\calH_{A_1}$.
    Then $F_1(U\psi_1)\ptensor F_2(\psi_i)\ptensor\dots\ptensor F_n(\psi_n) = F_1(U) \pb\paren{F_1(\psi_1)\ptensor\dots\ptensor F_n(\psi_n)}$.\footnoterepeat{footnote:lift.pure:reorder}
  \item\itlabel{lemma:lift.pure:butterfly} Let $\mathord\lozenge(\psi):=\psi\adj\psi$.
    Then $F_1(\mathord\lozenge(\psi_1))\mtensor\dots\mtensor F_n(\mathord\lozenge(\psi_n)) = \mathord\lozenge\pB\paren{F_1(\psi_1)\ptensor\dots\ptensor F_n(\psi_n)}$.
  \item\itlabel{lemma:lift.pure:subspace}
    Let $S$ be a closed subspace of $\calH_{A_1}$.
    Assume $\psi_2,\dots,\phi_n\neq0$.
    Then $F_1(\psi_1)\ptensor\dots\ptensor F_n(\psi_n) \in \regsubspace{F_1}S$ iff $\psi_1\in S$.\footnoterepeat{footnote:lift.pure:reorder}
    (See \autoref{sec:lift.sub} for the definition of $\regsubspace{F_1}S$.)
\end{compactenum}
\end{lemma}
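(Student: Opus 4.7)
The plan is to extract a closed-form expression for $F_1(\psi_1) \ptensor \cdots \ptensor F_n(\psi_n)$ using the unitary associated to the iso-register $I := \pairs{F_1}{F_n}$, and then derive each item from properties of this expression together with Lemma~\lemmaref{lemma:mixed.states}.

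Since $F_1,\dots,F_n$ form a partition, $I$ is an iso-register $\mathbf{A}_1 \otimes \cdots \otimes \mathbf{A}_n \to \mathbf{B}$. Being an injective weak*-continuous unital *-homomorphism onto $\mathbf{B}$, it has the form $I(a) = U a \adj U$ for some unitary $U: \calH_{A_1}\otimes\cdots\otimes \calH_{A_n} \to \calH_B$. Using the compatibility constraint $\pureeta{\mathbf{A}\otimes\mathbf{B}} = \pureeta{\mathbf{A}}\otimes \pureeta{\mathbf{B}}$, set $\eta := \pureeta{\mathbf{A}_1}\otimes\cdots\otimes \pureeta{\mathbf{A}_n}$. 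Then $b = I(\eta\adj\eta) = (U\eta)\adj{(U\eta)}$ has one-dimensional image spanned by $U\eta$, so $\purexi b = c\cdot U\eta$ for a scalar $c\in\setC$ of unit modulus (depending on the partition but \emph{not} on the $\psi_i$). A short calculation using $\adj\eta\eta = 1$ yields the master identity
\begin{equation*}
  F_1(\psi_1) \ptensor \cdots \ptensor F_n(\psi_n) \;=\; c\cdot U(\psi_1\otimes\cdots\otimes\psi_n).
\end{equation*}

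From this identity, items~\eqref{lemma:lift.pure:norm}, \eqref{lemma:lift.pure:bounded}, \eqref{lemma:lift.pure:separating} follow from standard properties of the Hilbert-space tensor product and the unitarity of $U$. Item~\eqref{lemma:lift.pure:butterfly} is immediate: both sides equal $U((\psi_1\adj{\psi_1})\otimes\cdots\otimes(\psi_n\adj{\psi_n}))\adj U$ via the master identity on the left and Lemma~\lemmaref{lemma:mixed.states}\,\eqref{lemma:mixed.states:pair} on the right. For~\eqref{lemma:lift.pure:permute}, use Lemma~\lemmaref{lemma:mixed.states}\,\eqref{lemma:mixed.states:permute} to see that $b$ (and hence $\purexi b$) is invariant under permutation, so both sides collapse to the same vector. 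Items~\eqref{lemma:lift.pure:pair}, \eqref{lemma:lift.pure:U}, \eqref{lemma:lift.pure:subspace} reduce to their mixed-state counterparts: for~\eqref{lemma:lift.pure:pair}, rewrite $(\psi_1\otimes\psi_2)\adj{\pureeta{\mathbf{A}_1\otimes\mathbf{A}_2}} = (\psi_1\adj{\pureeta{\mathbf{A}_1}})\otimes(\psi_2\adj{\pureeta{\mathbf{A}_2}})$ using the $\eta$-constraint, apply Lemma~\lemmaref{lemma:mixed.states}\,\eqref{lemma:mixed.states:pair}, and check that $b$ agrees on both sides; for~\eqref{lemma:lift.pure:U}, apply Lemma~\lemmaref{lemma:mixed.states}\,\eqref{lemma:mixed.states:apply.UV} with $V := 1$; for~\eqref{lemma:lift.pure:subspace}, combine~\eqref{lemma:lift.pure:U} (with $U := P_S$) with~\eqref{lemma:lift.pure:norm} (using $\psi_2,\dots,\psi_n\neq 0$) to reduce the membership question to $P_S\psi_1 = \psi_1$.

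For~\eqref{lemma:lift.pure:regulars}, I verify $\eta$-regularity of each base register by direct calculation and derive closure under pair, chain, and tensor from the laws in~\autoref{fig:compl.laws}. The main obstacle is~\eqref{lemma:lift.pure:nested}: both the inner lifting $G_1(\gamma_1)\ptensor\cdots\ptensor G_m(\gamma_m)$ (a vector in $\calH_{A_1}$) and the outer lifting introduce their own phases through the choice of $\purexi{\cdot}$, and these must combine coherently with the phase of the flat lifting through the composed registers $\chain{F_1}{G_j}$. The $\eta$-regularity hypothesis on the $G_j$ is precisely the technical condition forcing the inner $\xi$-choice to coincide with $\pureeta{\mathbf{A}_1}$, so that in the outer lifting $F_1$ sees exactly $\pureeta{\mathbf{A}_1}$ in its argument and the phase factors collapse as expected. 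Without this hypothesis, the two sides would differ by an uncontrolled global phase, which cannot be discarded at the level of pure-state vectors.
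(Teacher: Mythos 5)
Your proposal is correct and takes essentially the same route as the paper's proof: decompose the iso-register $\pairs{F_1}{F_n}$ as $a\mapsto Ua\adj U$ via \autoref{lemma:isoreg-decomp}, exploit $\pureeta{\mathbf A\otimes\mathbf B}=\pureeta{\mathbf A}\otimes\pureeta{\mathbf B}$, and reduce \eqref{lemma:lift.pure:permute}, \eqref{lemma:lift.pure:pair}, \eqref{lemma:lift.pure:U} and \eqref{lemma:lift.pure:nested} to \autoref{lemma:mixed.states}, with the same reading of the role of $\eta$-regularity in \eqref{lemma:lift.pure:nested}. Your master identity $F_1(\psi_1)\ptensor\dots\ptensor F_n(\psi_n)=c\cdot U(\psi_1\otimes\dots\otimes\psi_n)$ with $\abs c=1$ is correct and only implicit in the paper (it appears with $c=\adj{(\eta^*)}\adj U\purexi b$ in the proof of \eqref{lemma:lift.pure:separating}); it is a clean unification that streamlines \eqref{lemma:lift.pure:norm}, \eqref{lemma:lift.pure:separating} and \eqref{lemma:lift.pure:butterfly} but does not change the substance of the argument.
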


The proof is given in \autoref{app:proof:lemma:lift.pure}.

\paragraph{Limitations of the approach.}
In contrast to the lifting of mixed states, there are a number of limitations to this approach of lifting pure states.

First, the resulting definition does depend on the arbitrary choice of vectors $\pureeta{\mathbf A},\purexi{F}$.
In fact, different choices of $\pureeta{\mathbf A},\purexi{F}$ will map the same pure states on $\psi_1,\dots,\psi_n$ to different states $F_1(\psi_1)\ptensor \dots\ptensor F_n(\psi_n)$.
(Although it is easy to verify that different choices of $\pureeta{\mathbf A},\purexi{F}$ will lead to  $F_1(\psi_1)\ptensor \dots\ptensor F_n(\psi_n)$ that are equal up to a global phase factor.)
In contrast, the lifting of mixed states ($F_1(\rho_1)\mtensor\dots\mtensor F_n(\rho_n)$) does not depend on arbitrary choices.

Second, there are some constraints concerning nested invocations of this construction.
(I.e., when one of the states $\psi_i$ in $F_1(\psi_1)\ptensor \dots\ptensor F_n(\psi_n)$ is again of the form $G_1(\phi_1)\ptensor \dots\ptensor G_n(\phi_n)$.)
In this case, we can flatten the nested application of the construction only when $F_1$ is $\eta$-regular (see \lemmaref{lemma:lift.pure:nested});
$\eta$-regularity guarantees that the arbitrarily chosen $\pureeta{\mathbf A}$ are compatible with each other.
In contrast, when lifting mixed states, we can flatten nested applications with no side-conditions (see \lemmaref{lemma:mixed.states:nested}).

Third, depending on the logical foundations we use, it might not even be able to globally fix arbitrary choices $\pureeta{\mathbf A},\purexi{F}$ throughout the category of Hilbert spaces.
(While satisfying the required constraints between them, e.g., $\pureeta{\mathbf A\otimes\mathbf B}=\pureeta{\mathbf A}\otimes\pureeta{\mathbf B}$.)
We might instead use, e.g., the category of Hilbert spaces $\calH_A$ with a distinguished unit vector $\eta_{\mathbf A}$.
While this seems like a very tiny constraint, it still means that we change our formal foundation just to accommodate the lifting of pure states.\footnote{%
  In our Isabelle/HOL formalization, we solve this issue by restricting all of results about the lifting of pure states to Hilbert spaces $\setC^\alpha$ (spaces with a canonical basis indexed by values of type $\alpha$), where $\alpha$ is of type class \texttt{default}.
  The builtin type class \texttt{default} applies only to types with a specified default element.
  For $\calH_A=\setC^\alpha$, we can then define $\pureeta{\mathbf A} := \ket{\mathtt{default}_{\alpha}}$ where $\mathtt{default}_\alpha$ is the default value of type $\alpha$.
  (And we crucially use the fact that $\mathtt{default}_{\alpha\times\beta}=(\mathtt{default}_\alpha,\mathtt{default}_\beta)$.)
  }

Finding a formalization of quantum references that does not pose the above challenges when lifting pure states (while still preserving the other features described in this paper) is an open problem.

\paragraph{Example.}
We illustrate how to use lifted pure states in a small example:
We will show that three CNOTs are equivalent to a qubit swap.
(This is near trivial in terms of the matrices but the focus of the example is not to show a complex situation but to explain the basic process.
In situations involving many different gates on different references, the present method would shine more as we would avoid complicated chains of swaps and tensor products with the identity.)
We want to show that the following two circuits evaluate the same unitary:
\begin{equation*}
  \begin{tikzpicture}[baseline=(current bounding box.center)]
    \initializeCircuit
    \newWires{F,G,rest}
    \stepForward{3mm}
    \labelWire[\tiny$F$]{F}
    \labelWire[\tiny$G$]{G}
    \drawWire{rest}
    \node at (\getWireCoord{rest}) {\footnotesize/};
    \stepForward{3mm}
    \labelWire[\tiny\rlap{other references}]{rest}
    \stepForward{2mm}
    \node[cnot=G,control=F] (cnot) {};
    \stepForward{5mm}
    \node[cnot=F,control=G] (cnot) {};
    \stepForward{5mm}
    \node[cnot=G,control=F] (cnot) {};
    \stepForward{5mm}
    \drawWires{F,G,rest}
  \end{tikzpicture}
  \qquad=\qquad
  \begin{tikzpicture}[baseline=(current bounding box.center)]
    \initializeCircuit
    \newWires{F,G,rest}
    \stepForward{3mm}
    \labelWire[\tiny$F$]{F}
    \labelWire[\tiny$G$]{G}
    \drawWire{rest}
    \node at (\getWireCoord{rest}) {\footnotesize/};
    \stepForward{3mm}
    \labelWire[\tiny\rlap{other reg.s}]{rest}
    \stepForward{2mm}
    \drawWires{rest,F,G}
    \stepForward{6mm}
    \crossWire{F}{G}
    \crossWire{G}{F}
    \skipWires{F,G}
    \stepForward{6mm}
    \drawWires{F,G,rest}
  \end{tikzpicture}
\end{equation*}
In the language of references, we express this as
\begin{equation}
  \spair FG(\CNOT) \cdot \spair GF(\CNOT) \cdot \spair FG(\CNOT)
  \qquad=\qquad
  \spair FG(\Uswap)
  \label{eq:triple.swap}
\end{equation}
where $\Uswap$ is the qubit swap $\Uswap(\ket x\otimes\ket y)=\ket y\otimes\ket x$.
Note that we did not need to explicitly refer to the ``other references'' because no operation happens on them.
But when we do wish to refer to them, we can refer to them as $\compl{\spair FG}$.

There are many ways to show \eqref{eq:triple.swap}.
For this example, we choose to show it by evaluating the circuit step by step on an initial state from the computational basis.
That is, we show
\begin{multline}
   \pB\paren{\spair FG(\CNOT) \cdot \spair GF(\CNOT) \cdot \spair FG(\CNOT)}
  \pB\paren{F\pb\paren{\ket x}\ptensor G\pb\paren{\ket y}\ptensor \compl{\spair FG}\pb\paren{\ket z}}
  \\
  ={}
    {\pb\spair FG(\Uswap)}\
    \pB\paren{F\pb\paren{\ket x}\ptensor G\pb\paren{\ket y}\ptensor \compl{\spair FG}\pb\paren{\ket z}}.
  \label{eq:triple.swap.basis}
\end{multline}
The left hand side is computed as
\begin{align*}
  &F\pb\paren{\ket x}\ptensor G\pb\paren{\ket y}\ptensor \compl{\spair FG}\pb\paren{\ket z}
    \starrel={} 
  \spair FG\pb\paren{\ket x\otimes\ket y}\ptensor \compl{\spair FG}\pb\paren{\ket z}
  \\ \xmapsto{\spair FG(\CNOT)}{} &
  \spair FG\pb\paren{\ket x\otimes\ket{y\oplus x}}\ptensor \compl{\spair FG}\pb\paren{\ket z}
    \starstarrel={}
  \spair GF\pb\paren{\ket{y\oplus x}\otimes\ket x}\ptensor \compl{\spair FG}\pb\paren{\ket z}
  \\ \xmapsto{\spair GF(\CNOT)}{} &
  \spair GF\pb\paren{\ket{y\oplus x}\otimes\ket y}\ptensor \compl{\spair FG}\pb\paren{\ket z}
    \starstarrel={}
  \spair FG\pb\paren{\ket y\otimes\ket{y\oplus x}}\ptensor \compl{\spair FG}\pb\paren{\ket z}
  \\ \xmapsto{\spair FG(\CNOT)}{} &
  \spair FG\pb\paren{\ket y\otimes\ket x}\ptensor \compl{\spair FG}\pb\paren{\ket z}
    \starrel={}
  F\pb\paren{\ket y}\ptensor G\pb\paren{\ket x}\ptensor \compl{\spair FG}\pb\paren{\ket z}.
\end{align*}
Here $(*)$ follows by \lemmaref{lemma:lift.pure:pair}.
Note that in a more complex situation with more references, we would only have to pull those references into the first tensor factor that are involved in the operation we are applying \emph{in this computation step}.
All other references in the tensor product can be left untouched (like $\compl{\spair FG}\pb\paren{\ket z}$ in our example).
And $(**)$ follows by \lemmaref{lemma:lift.pure:pair},\,\eqref{lemma:lift.pure:permute}.
(By first unwrapping the pair and then rewrapping in a different order.)
Each application of the CNOTs is done using \lemmaref{lemma:lift.pure:U}.
(All these steps are very mechanical and should not be hard to automate.
Our present Isabelle/HOL implementation, for example, proves each of the equalities by a simple tactic invocation.)

And similarly,
\begin{align*}
  &F\pb\paren{\ket x}\ptensor G\pb\paren{\ket y}\ptensor \compl{\spair FG}\pb\paren{\ket z}
  =
  \spair FG\pb\paren{\ket x\otimes\ket y}\ptensor \compl{\spair FG}\pb\paren{\ket z}
  \\ \xmapsto{\spair FG(\Uswap)}{} &
  \spair FG\pb\paren{\ket y\otimes\ket x}\ptensor \compl{\spair FG}\pb\paren{\ket z}
  =
  F\pb\paren{\ket y}\ptensor G\pb\paren{\ket x}\ptensor \compl{\spair FG}\pb\paren{\ket z}.
\end{align*}

Thus the lhs and rhs of \eqref{eq:triple.swap.basis} both evaluate to the same state.
Hence we have shown \eqref{eq:triple.swap.basis}.

From \eqref{eq:triple.swap.basis}, our initial goal \eqref{eq:triple.swap} follows by \lemmaref{lemma:lift.pure:separating}.

\subsection{Quantum channels}
\label{sec:quantum.channels}

Operations on mixed states are commonly described by \emph{quantum channels}\index{quantum channel}\index{channel!quantum} (a.k.a.~quantum operation or CPTPM\index{CPTPM}).
A quantum channel is a completely positive trace-preserving map $\calE:\tracecl(\calH_A)\to\tracecl(\calH_B)$.
(In the finite-dimensional case, the set $\tracecl(\calH)$ of all trace-class operators on $\calH$ is simply the set of all linear operators on $\calH$.)
In particular, $\calE(\rho)$ is a density operator if $\rho$ is.
When we work with subdensity operators, it is more natural to consider \emph{quantum subchannels}\index{quantum subchannel}\index{subchannel!quantum}.
$\calE:\tracecl(\calH_A)\to\tracecl(\calH_B)$ is a quantum subchannel iff it is completely positive and $\tr\calE(\rho)\leq\tr\rho$ for every positive $\rho$.
Quantum (sub)channels have a tensor product compatible with the tensor product of trace-class operators by \autoref{lemma:channel.tensor} (\autoref{app:misc.facts}).

An alternative definition of quantum channels, which we call Kraus-channels to avoid ambiguity, is the following:
$\calE:\tracecl(\calH_A)\to\tracecl(\calH_B)$ is a \emph{Kraus-channel}\index{Kraus-channel}\index{channel!Kraus-} iff there exists a family of bounded operators $M_i:\calH_A\to\calH_B$ such that $\calE(\rho)=\sum_i M_i\rho\adj{M_i}$ and $\sum_i \adj M_i M_i=I$.\footnote{\label{footnote:sum.convergence}%
  Here (and everywhere else in this paper) infinite sums of operators are supposed to converge with respect to the strong operator topology, SOT \cite[Definition IX.1.2]{conway13functional}.
  
  Sums can be over arbitrary, not necessarily countable index sets; see, e.g., \cite[Definition 4.11]{conway13functional} for a precise definition of such sums.
  
  Note that for sums of \emph{positive} operators, convergence of in the SOT is equivalent to convergence in various other topologies by \autoref{lemma:incr.net.lim} (and for sums of positive trace-class operators in even more topologies by \autoref{lemma:incr.net.lim.tr}).
  In particular, the two sums in the definition of Kraus-(sub)channels are with respect to the trace norm and the SOT in \cite{holevo11entropy} and with respect to the trace norm and the WOT \cite[Definition IX.1.2]{conway13functional} in \cite{friedland19choi}; since they are sums of positive operators the definitions from \cite{holevo11entropy,friedland19choi} are nonetheless equivalent to ours.}
A Kraus-channel is always a quantum channel, and the notions coincide if $\calH_A,\calH_B$ are separable (i.e., countably dimensional) Hilbert spaces \cite[Section 3.1]{holevo11entropy}.
A Kraus-subchannel is defined in the same way, except with the condition $\sum_i \adj M_i M_i\leq I$ instead of $\sum_i \adj M_i M_i=I$.
A Kraus-subchannel is always a quantum subchannel, and the notions coincide if $\calH_A,\calH_B$ are separable Hilbert spaces \cite[Theorem 9]{friedland19choi}.

In the following, we will only consider quantum channels where the domain and codomain are the same space. (I.e., quantum channels $\tracecl(\calH_A)\to\tracecl(\calH_A)$).
This is because we will consider quantum channels that operate \emph{on} the content of a reference, not quantum channels between references.
(However, that does not preclude an operation that, e.g., takes data from one reference $F:\mathbf A\to \mathbf C$ and puts the result of the computation into another reference $G:\mathbf B\to\mathbf C$.
This would simply be described as a quantum channel on $\mathbf A\otimes\mathbf B$ applied to $\spair FG$.)

\paragraph{Lifting quantum channels.}
If $F:\mathbf A\to\mathbf B$ is an iso-reference, then there is a very natural way of lifting a quantum channel $\calE:\mathbf A\to\mathbf A$.
By \autoref{lemma:isoreg-decomp}, $F(a)=Ua\adj U$ for some unitary $U:\calH_A\to\calH_B$.
We can thus define $F(\calE):\mathbf B\to\mathbf B$ as $F(\calE)(\rho):=U\calE(\adj U\rho U)\adj U$.
Or equivalently, $F(\calE):=F\circ \calE\circ F^{-1}$.
(The latter gives us a different interpretation:
An iso-reference is a quantum channel (see \autoref{lemma:lift.channel}\,\eqref{lemma:isoreg:channel} below), so $F(\calE)$ is just the composition of three quantum channels.)

For non-iso-references $F$, this definition does not work.
($F$ is not a quantum channel, nor can it be represented as $Ua\adj U$.)
Fortunately, using complements, we can very easily extend the above idea to arbitrary references.
Let $\compl F:\mathbf A'\to\mathbf B$ be the complement of $F$.
We can lift $\calE$ to become a quantum channel on $\mathbf A\otimes\mathbf A'$;
the tensor product $\calE\otimes \id$ of quantum channels achieves this.
And $\pair F{\compl F}$ is an iso-reference $\mathbf A\otimes\mathbf A'\to\mathbf B$ by definition of the complement.
So we can lift  $\calE\otimes \id$ to a quantum channel on $\mathbf B$ through $\pair F{\compl F}$.
Putting this together, we get the following definition:
$\symbolindexmark\regchannel{F(\calE)} := \pair F{\compl F} \circ (\calE\otimes\id) \circ \pair F{\compl F}^{-1}$.
Note that this definition also applies if $\calE$ is a subchannel.

Note that this construction is only applicable when the domain and codomain of $\calE$ are the same.
(I.e., we cannot apply it to $\calE:\mathbf A\to\mathbf B$ with $\mathbf A\neq\mathbf B$.)
This is because $\regchannel F\calE$ intuitively represents ``applying $\calE$ to $F$'', not ``applying $\calE$ to $F$ and storing the result in a different reference''.

The definition uses an arbitrarily chosen complement $\compl F$ of $F$ (recall that the complement is only unique up to equivalence).
The following lemma shows that the definition does not depend on the choice of complement, and that $F(\calE)$ interacts as expected with other constructions.

\begin{lemma}\lemmalabel{lemma:lift.channel}
  Let $\calE$, $\calF$ be quantum subchannels.
  Let $F,G$ be references.
  \begin{compactenum}[(i)]
  \item\itlabel{lemma:lift.channel:channel} If $\calE$ is a quantum (sub)channel, so is $F(\calE)$.
  \item\itlabel{lemma:lift.channel:choice.compl}
    Let $F,G$ be complements.
    Then $\regchannel F\calE = \spair FG \circ (\calE\otimes\id) \circ \spair FG^{-1}$.
    (This means the choice of complement is irrelevant.)
  \item\itlabel{lemma:lift.channel:compose}
    $\regchannel F\calE\circ \regchannel F\calF = \regchannel F{\calE\circ\calF}$.
  \item\itlabel{lemma:lift.channel:pair}
    If $F,G$ are disjoint, then $\spair FG(\calE\otimes\calF) = F(\calE)\circ G(\calF) = G(\calF)\circ F(\calE)$.
  \item \itlabel{lemma:lift.channel:id} $\regchannel F\id=\id$.
  \item \itlabel{lemma:lift.channel:apply}
    For a partition $F_1,\dots,F_n$ and trace-class operators $\rho_1,\dots,\rho_n$, $\pb\regchannel{F_1}\calE\pb\paren{F_1(\rho_1)\mtensor\dots\mtensor F_n(\rho_n)} = F_1(\calE(\rho_1))\mtensor F_2(\rho_2)\mtensor\dots\mtensor F_n(\rho_n)$.%
    \footnote{\label{footnote:lift.channel:reorder}%
      We state this only for the first reference $F_1$ for simplicity, but by \lemmaref{lemma:mixed.states:permute}, it applies to all other references, too.}
  \item\itlabel{lemma:lift.channel:chain} $F\pb\paren{G(\calE)} = (\chain FG)(\calE)$.
  \item\itlabel{lemma:lift.channel:swap} $\sigma(\calE\otimes\calF) = \calF\otimes\calE$.
  \item\itlabel{lemma:lift.channel:assoc} $\assoc\pb\paren{\calE\otimes(\calF\otimes\calG)} = (\calE\otimes\calF)\otimes\calG$ and $\assocp\pb\paren{(\calE\otimes\calF)\otimes\calG} = \calE\otimes(\calF\otimes\calG)$.
  \item \itlabel{lemma:isoreg:channel} If $F:\mathbf A\to\mathbf C$, $G:\mathbf B\to\mathbf D$ are iso-references, then $F|_{\tracecl(\calH_A)}$ ($F$ restricted to trace-class operators) is a bijective quantum channel, and $(F\rtensor G)|_{\tracecl(\calH_A\otimes\calH_B)} = F|_{\tracecl(\calH_A)}\otimes G|_{\tracecl(\calH_B)}$.
  \item \itlabel{lemma:lift.channel:sem}
    For a bounded operator $U$, let $\Sem(U)$ denote the map $\rho\mapsto U\rho \adj U$.
    Then $\Sem(F(U))=F(\Sem(U))$.
  \end{compactenum}
\end{lemma}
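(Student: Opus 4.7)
I would build everything on part~\eqref{lemma:isoreg:channel}, proven first. By \autoref{lemma:isoreg-decomp} an iso-register $F:\mathbf A\to\mathbf C$ has the form $F(a)=UaU^*$ for some unitary $U:\calH_A\to\calH_C$; so $F|_{\tracecl(\calH_A)}$ is the unitary channel $\rho\mapsto U\rho U^*$, bijective with inverse $F^{-1}|_{\tracecl(\calH_C)}$. If $G(b)=VbV^*$ is another iso-register then $(F\rtensor G)(a\otimes b)=(U\otimes V)(a\otimes b)(U\otimes V)^*$, and Axiom~\ref{ax:tensorext} extends this equality to all of $\mathbf A\otimes\mathbf B$; restricting to trace-class then gives exactly $F|_{\tracecl}\otimes G|_{\tracecl}$.

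From this, part~\eqref{lemma:lift.channel:channel} is immediate, as $\pair F{\compl F}$ and its inverse restrict to quantum channels and $\calE\otimes\id$ is a (sub)channel by \autoref{lemma:channel.tensor}. Part~\eqref{lemma:lift.channel:id} is a direct computation, and part~\eqref{lemma:lift.channel:compose} follows by inserting $\pair F{\compl F}^{-1}\circ\pair F{\compl F}=\id$ between the two lifted channels. For part~\eqref{lemma:lift.channel:choice.compl}, Law~\ref{law:compl.equiv} yields an iso-register $I$ with $\compl F\circ I=G$, and Law~\ref{law:pair.tensor} then gives $\pair FG=\pair F{\compl F}\circ(\id\rtensor I)$. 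It remains to check $(\id\rtensor I)\circ(\calE\otimes\id)\circ(\id\rtensor I)^{-1}=\calE\otimes\id$ on trace-class operators; on rank-one tensors $\rho\otimes\sigma$ both sides evaluate to $\calE(\rho)\otimes\sigma$ by~\eqref{lemma:isoreg:channel}, and linear-continuous extension handles the rest.

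The remaining parts I would prove via the separating principle of \lemmaref{lemma:mixed.states:separating}: to show two channels on $\calH_B$ agree it suffices to check agreement on lifted mixed states $F_1(\rho_1)\mtensor\cdots\mtensor F_n(\rho_n)$ for a well-chosen partition. For part~\eqref{lemma:lift.channel:apply} I take $\compl{F_1}:=\pairs{F_2}{F_n}$, a legitimate complement because $\pairs{F_1}{F_n}$ is an iso-register by hypothesis; invoking~\eqref{lemma:lift.channel:choice.compl}, the lifted channel acts as $\pairs{F_1}{F_n}\circ(\calE\otimes\id\otimes\cdots\otimes\id)\circ\pairs{F_1}{F_n}^{-1}$, and applied to $\pairs{F_1}{F_n}(\rho_1\otimes\cdots\otimes\rho_n)$ this immediately yields $F_1(\calE(\rho_1))\mtensor F_2(\rho_2)\mtensor\cdots$. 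Parts \eqref{lemma:lift.channel:pair}, \eqref{lemma:lift.channel:chain}, \eqref{lemma:lift.channel:swap}, \eqref{lemma:lift.channel:assoc} then reduce both sides of each equation to lifted mixed states over suitable partitions (adding $\compl{\pair FG}$ or $\compl{\chain FG}$ as needed) and invoke~\eqref{lemma:lift.channel:apply} together with \lemmaref{lemma:mixed.states:pair}, \lemmaref{lemma:mixed.states:nested}, and \lemmaref{lemma:mixed.states:permute} to rewrite both sides into a common expression.

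The main obstacle I anticipate is part~\eqref{lemma:lift.channel:choice.compl}: the construction $\regchannel F\calE$ is nominally ill-defined because $\compl F$ is only unique up to equivalence, and essentially every other part silently relies on this independence in order to pick a convenient complement. The identity $(\id\rtensor I)\circ(\calE\otimes\id)\circ(\id\rtensor I)^{-1}=\calE\otimes\id$ is trivial on elementary tensors, but extending it to arbitrary trace-class operators needs the weak* / trace-norm continuity of registers and the tensor product of channels (cf.~\autoref{sec:infinite}), and is the step I would be most careful with in a formalization.
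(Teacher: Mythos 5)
Your proposal is correct. For parts \eqref{lemma:isoreg:channel}, \eqref{lemma:lift.channel:channel}, \eqref{lemma:lift.channel:choice.compl}, \eqref{lemma:lift.channel:compose}, \eqref{lemma:lift.channel:id}, and \eqref{lemma:lift.channel:apply} it follows essentially the same path as the paper (decompose iso-registers via \autoref{lemma:isoreg-decomp}, reduce everything to \eqref{lemma:lift.channel:choice.compl}, and for \eqref{lemma:lift.channel:apply} use $\compl{F_1}:=\pairs{F_2}{F_n}$); your ``linear-continuous extension'' step in \eqref{lemma:lift.channel:choice.compl} is exactly \autoref{lemma:channel.equal}, which the paper instead shortcuts via \lemmaref{lemma:channel.tensor:distrib}. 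Where you genuinely diverge is in \eqref{lemma:lift.channel:pair}, \eqref{lemma:lift.channel:chain}, \eqref{lemma:lift.channel:swap}, \eqref{lemma:lift.channel:assoc}: the paper proves these by explicit register algebra---conjugating by $\assoc,\assocp,\swap$ via Laws~\ref{law:pair.sigma}--\ref{law:pair.tensor} and verifying the resulting auxiliary channel identities on elementary tensors with \autoref{lemma:channel.equal}---and in particular does \emph{not} use \eqref{lemma:lift.channel:apply} for them, whereas you evaluate both sides on lifted mixed states over a suitable partition ($F,G,\compl{\pair FG}$ for \eqref{lemma:lift.channel:pair}, via Law~\ref{law:complement.pair}; $\chain FG,\chain F{\compl G},\compl F$ for \eqref{lemma:lift.channel:chain}, via Law~\ref{law:complement.chain}) and conclude with the separating property \lemmaref{lemma:mixed.states:separating}, computing each side with \eqref{lemma:lift.channel:apply}, \lemmaref{lemma:mixed.states:pair}, \lemmaref{lemma:mixed.states:nested}, and \lemmaref{lemma:mixed.states:permute}. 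This is sound---both sides of each identity are quantum subchannels, as the separating lemma requires---and it buys a more uniform, mechanical treatment at the cost of making \eqref{lemma:lift.channel:apply} a prerequisite for \eqref{lemma:lift.channel:pair} and \eqref{lemma:lift.channel:chain}, the reverse of the paper's logical order (no circularity either way). Two minor points to watch: in \eqref{lemma:lift.channel:apply} the degenerate case $n=1$ needs the unit register of \autoref{lemma:quantum.unit} as the complement, and in \eqref{lemma:isoreg:channel} identifying the restriction of $F\rtensor G$ with the channel tensor product uses the uniqueness clause \lemmaref{lemma:channel.tensor:unique}, not just Axiom~\ref{ax:tensorext}.
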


The proof is given in \autoref{app:proof:qchannels}.

\paragraph{Lifting Kraus-channels.}
A Kraus-channel $\calE$ can be written as $\calE(\rho)=\sum_i M_i\rho\adj{M_i}$ for bounded operators $M_i$ with $\sum_i \adj M_i M_i=1$.
It is natural to lift a Kraus channel through the reference $F$ using the following definition:
$\symbolindexmark\regchannel{F(\calE)}(\rho) := \sum_i F(M_i)\rho \adj{F(M_i)}$.
But since for given $\calE$, the choice of the Kraus operators $M_i$ is not unique, it could be that this definition is not well-defined.
The following lemma shows that this is not the case, and that this definition is compatible with the one above.
\begin{lemma}\lemmalabel{lemma:lift.kraus}
  Let $F$ be a reference.
  \begin{compactenum}[(i)]
  \item\itlabel{lemma:lift.kraus:channel} If $\calE$ is a Kraus-(sub)channel, then $F(\calE)$ is a Kraus-(sub)channel.
  \item\itlabel{lemma:lift.kraus:indep} The definition of $F(\calE)$ does not depend on the choice of $M_i$ in $\calE(\rho)=\sum_i M_i\rho\adj{M_i}$.
  \item\itlabel{lemma:lift.kraus:samedef} For a Kraus-(sub)channel $\calE$, $F(\calE)$ as defined for Kraus-(sub)channels, and $F(\calE)$ as defined for quantum (sub)channels, coincide.
  \end{compactenum}
\end{lemma}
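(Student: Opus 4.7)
The plan is to prove (i) and (iii) in one go by showing that the Kraus expression $\sum_i F(M_i)\rho \adj{F(M_i)}$ coincides with the quantum-channel lift $\regchannel F\calE$ defined before \lemmaref{lemma:lift.channel:channel}. Once this identification is established, (i) follows because $\regchannel F\calE$ is already known to be a quantum (sub)channel, with the family $\{F(M_i)\}$ witnessing the Kraus decomposition; and (ii) follows because $\regchannel F\calE$, on the right-hand side, depends only on $\calE$, not on the choice of $M_i$.

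\textbf{Main computation for (iii).} Fix a complement $\compl F:\mathbf A'\to\mathbf B$. Axiom~\ref{ax:tensor-1} together with the definition of pairing give $F(M_i) = \pair F{\compl F}(M_i\otimes 1)$ (using also that $\compl F(1)=1$). Apply both candidate definitions to an element of the form $\pair F{\compl F}(\rho\otimes\sigma) = F(\rho)\mtensor\compl F(\sigma)$. The quantum-channel definition gives $F(\calE(\rho))\mtensor\compl F(\sigma)$ directly by \lemmaref{lemma:lift.channel:apply}. The Kraus expression gives, using that $\pair F{\compl F}$ is a $*$-homomorphism and Axiom~\ref{ax:tensor.mult},
\[
  \sum_i \pair F{\compl F}\pb\paren{M_i\rho\adj{M_i}\otimes\sigma} = \pair F{\compl F}\pb\paren{\calE(\rho)\otimes\sigma} = F(\calE(\rho))\mtensor\compl F(\sigma),
\]
where the first equality pulls the weak*-continuous map $\pair F{\compl F}$ out of the infinite sum. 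Both maps are bounded linear, so agreement on all of $\tracecl(\calH_B)$ follows from the separating property in \lemmaref{lemma:mixed.states:separating}, applied with $S_1=\tracecl(\calH_A)$, $S_2=\tracecl(\calH_{A'})$.

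\textbf{Kraus condition and the main obstacle.} To conclude (i) in the Kraus-(sub)channel sense, the remaining check is $\sum_i \adj{F(M_i)}F(M_i) = I$ (resp.\ $\leq I$). Because $F$ is a $*$-homomorphism, $\adj{F(M_i)}F(M_i) = F(\adj{M_i}M_i)$, and the partial sums $\sum_{i\in S}F(\adj{M_i}M_i)$ form an increasing net of positive operators bounded above by $F(I)=I$. The main obstacle is the interchange of $F$ with the infinite sum. Since all summands are positive, SOT-convergence and weak*-convergence coincide for this net (as noted in the convergence footnote attached to the definition of Kraus-channels), so weak*-continuity of $F$ yields $\sum_i F(\adj{M_i}M_i) = F(\sum_i \adj{M_i}M_i) = F(I) = I$ in the channel case, and $\leq I$ in the subchannel case by positivity of $F$. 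The same style of argument, applied to $\pair F{\compl F}$, justifies the sum/map interchange in the main computation; convergence of $\sum_i F(M_i)\rho\adj{F(M_i)}$ for positive $\rho$ is controlled by the bound $\tr\sum_{i\in S}F(M_i)\rho\adj{F(M_i)} = \tr\pb\paren{\rho\sum_{i\in S}\adj{F(M_i)}F(M_i)}\leq \tr\rho$, and general $\rho$ is treated by the standard decomposition into four positive parts.
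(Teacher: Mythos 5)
Your proposal is correct and follows essentially the same route as the paper's proof: both establish (iii) by evaluating the two definitions on $\pair F{\compl F}(\rho\otimes\sigma)$, interchange the register with the infinite sum via positivity and weak*-continuity (the paper packages this as \autoref{lemma:register.sum}), and conclude by a density/separation argument (the paper uses \autoref{lemma:channel.equal} where you use \lemmaref{lemma:mixed.states:separating}), with (i) from $\sum_i\adj{F(M_i)}F(M_i)=F\pb\paren{\sum_i\adj{M_i}M_i}\leq I$ and (ii) as a corollary of (iii). The only cosmetic difference is that you absorb the commutation of $F(\adj{M_i})$ with $\compl F(\sigma)$ into the tensor-product algebra inside $\pair F{\compl F}$, whereas the paper performs it explicitly.
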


The proof is given in \autoref{app:proof:qchannels}, \autopageref{page:proof:lemma:lift.kraus}.

Note that by \eqref{lemma:lift.kraus:samedef}, the definition $F(\calE)$ for Kraus-channels inherits the laws from \autoref{lemma:lift.channel} as well.
Furthermore, \eqref{lemma:lift.kraus:samedef} gives us a convenient means of computing $\regchannel F\calE$ for (sub)channels $\calE$ that are explicitly given in Kraus-operator representation.
E.g., if $\calE(\rho)=U\rho\adj U$ for a unitary $U$, then $\regchannel F\calE(\rho) = F(U)\rho \adj{F(U)}$.

\subsection{Partial trace}
\label{sec:partial.trace}

The partial trace is an operation that maps density operators on a system $C$ to density operators on a subsystem $A$.
For example, on a bipartite system $\calH_B:=\calH_A\otimes\calH_C$, we define $\tr_C:\tracecl(\calH_B)\to\tracecl(\calH_A)$ to be the unique operation such that $\tr_C(\rho\tensor\sigma) = \rho\cdot\tr\sigma$.
(We say we ``trace out $C$''.)
Note that the commonly used notation $\tr_C$ indicates which parts of the system $B$ we \emph{remove} (namely $C$), not which part we \emph{keep} (namely $A$).
This is notationally somewhat awkward because we need to be always able to list all other parts of the (potentially quite complex) system besides $A$.
Thus, here we use a slightly different notational convention:
Instead of $\tr_C$ (tracing out $C$), we write $\trin A$ (tracing in $A$).

Since in the context of this work, a reference $F:\mathbf A\to\mathbf B$ describes a subsystem of $\calH_B$, it is natural to want to define $\trin F$, the operation that maps a density operator $\rho\in\tracecl(\calH_B)$ to $\trin F\rho\in\tracecl(\calH_A)$.
Intuitively, $\trin F\rho$ describes the state of the content of the reference $F$ (if we remove the rest of the memory).

To define this, we first define a special case: For Hilbert spaces $\calH_A,\calH_{A'}$, let $\symbolindexmark\trinFst{\trin\Fst}:\tracecl(\calH_A\otimes\calH_{A'})\to\tracecl(\calH_A)$ denote the unique bounded linear map with $\trin\Fst(\rho\tensor\sigma) = \rho\cdot\tr\sigma$.
This is the operation that traces out everything but the content of reference $\Fst$ (the first part of the system), and traces in $\Fst$.
Furthermore, for an iso-reference $I$, we can define $\trin I:=I^{-1}$. (Recall that an iso-reference represents the whole memory, so nothing needs to be traced out; we only need to map the state into the correct representation using $I^{-1}$.)
Note that for any reference $F$, we have that $F=\chain{\pair{F}{\compl F}}\Fst$ (Law~\ref{law:pair.select}).
Thus it makes sense to define $\trin F := \trin\Fst\circ \trin{\pair{F}{\compl F}}$.
Putting all this together, we get the definition:
$\symbolindexmark\trin{\trin F} := {\trinFst} \circ {\pair{F}{\compl F}}^{-1}$.\footnote{Note that this is compatible with the special case definition of $\trinFst$ by \autoref{lemma:partial.tr}\,\eqref{lemma:partial.tr:pair} below.}

\begin{lemma}\lemmalabel{lemma:partial.tr} Let $F:\mathbf A\to\mathbf B$ and $G$ be references.
  \begin{compactenum}[(i)]
  \item\itlabel{lemma:partial.tr:welldef} $\trin F$ is well-defined. (That is, the auxiliary operation $\trinFst$ exists and is unique.)
  \item\itlabel{lemma:partial.tr:channel} $\trin F$ is a quantum channel.
  \item\itlabel{lemma:partial.tr:choice.compl} The definition of $\trin F$ does not depend on the choice of complement. 
    That is, if $F,G$ are complements, $\trin F = {\trinFst} \circ {\pair{F}G}^{-1}$.
  \item\itlabel{lemma:partial.tr:iso} For iso-references $F$, $\trin F = F^{-1}|_{\tracecl(\calH_A)}$.
  \item\itlabel{lemma:partial.tr:pair} If $F,G$ are complements, for all trace-class operators $\rho,\sigma$, $\trin F \pair{F}G(\rho\otimes\sigma) = \rho\cdot\tr\sigma$.
  \item\itlabel{lemma:partial.tr:partition} For a partition $F_1,\dots,F_n$ and trace-class operators $\rho_1,\dots,\rho_n$,
    $\trin {F_i} \pb\paren{F_1(\rho_1)\mtensor\dots\mtensor F_n(\rho_n)} = \rho_i \cdot \prod_{j\neq i}\tr\rho_j$.
  \item\itlabel{lemma:partial.tr:chain} $\trin{\chain FG} = \trin G\circ\trin F$.
  \item\itlabel{lemma:partial.tr:channel.only} For a quantum subchannel $\calE$, ${\trin F} \circ F(\calE) = \calE \circ\trin F$.
  \item\itlabel{lemma:partial.tr:channel.other} For a quantum channel $\calE$, and disjoint $F,G$, ${\trin F}\circ G(\calE) = \trin F$.
  \end{compactenum}
\end{lemma}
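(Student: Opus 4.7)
My strategy is to establish the nine items in order, each building on the previous. The crux is (i): the elementary-tensor recipe $\rho\otimes\sigma\mapsto\rho\cdot\tr\sigma$ is bilinear and satisfies $\trnorm{\rho\cdot\tr\sigma}\le\trnorm\rho\cdot\trnorm\sigma$, so by the universal property of the projective Banach-space tensor product, combined with the identification $\tracecl(\calH_A\otimes\calH_{A'})\cong\tracecl(\calH_A)\mathbin{\hat\otimes}\tracecl(\calH_{A'})$ (where $\hat\otimes$ is the projective tensor product), it extends uniquely to a bounded linear map $\trinFst$; uniqueness follows from trace-norm-density of elementary tensors. This $\trinFst$ is completely positive and trace-preserving, and combined with \lemmaref{lemma:isoreg:channel} applied to $\pair F{\compl F}^{-1}$ yields (ii) as a composition of quantum channels.

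For (iii), if $F,G$ and $F,\compl F$ are both complement pairs then $G$ and $\compl F$ are equivalent by Law~\ref{law:compl.equiv}, so there is an iso-register $J$ with $G=\compl F\circ J$; by Law~\ref{law:pair.tensor}, $\pair FG=\pair F{\compl F}\circ(\id\rtensor J)$, reducing the claim to $\trinFst\circ(\id\rtensor J^{-1})=\trinFst$ (using Axiom~\ref{ax:tensorext} to verify that $\id\rtensor J^{-1}$ inverts $\id\rtensor J$), which holds on elementary tensors since $J^{-1}$ restricted to trace class is trace-preserving by \lemmaref{lemma:isoreg:channel}, and extends by continuity. Item (iv) falls out by taking the unit register $u$ from \autoref{lemma:quantum.unit} as complement of $F$ (valid by Law~\ref{law:complement.iso.unit}) and computing $\pair Fu^{-1}(F(a))=a\otimes 1$ and $\trinFst(a\otimes 1)=a$. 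Item (v) is immediate from (iii) and the defining equation of $\trinFst$.

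For (vi), reorder using \lemmaref{lemma:mixed.states:permute}, then observe that $\pairs{F_2}{F_n}$ is a complement of $F_1$ (since $\pair{F_1}{\pairs{F_2}{F_n}}=\pairs{F_1}{F_n}$ is iso by the partition hypothesis), and apply (v) together with $\tr(\rho_2\otimes\cdots\otimes\rho_n)=\prod_{j\ge2}\tr\rho_j$. For (vii), I would verify both sides on operators $\chain FG(\rho_a)\mtensor\compl F(\rho_2)\mtensor\chain F{\compl G}(\rho_3)$, which form a generating family by \lemmaref{lemma:mixed.states:separating} applied to the partition $\chain FG,\compl F,\chain F{\compl G}$ guaranteed by Law~\ref{law:complement.chain}; on such operators I compute the right side by first refolding the $F$-liftings via \lemmaref{lemma:mixed.states:nested} (getting $F(G(\rho_a)\mtensor\compl G(\rho_3))\mtensor\compl F(\rho_2)$) and then applying (vi) twice, and compare with (vi) applied directly to the left side. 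For (viii), evaluate both sides on $F(\rho_1)\mtensor\compl F(\rho_2)$ using \lemmaref{lemma:lift.channel:apply} and (vi). For (ix), Law~\ref{law:complement.pair} lets me take $\compl F=\pair GH$ with $H:=\compl{\pair FG}$ so that $F,G,H$ form a partition; on $F(\alpha)\mtensor G(\beta)\mtensor H(\gamma)$, both sides evaluate to $\alpha\cdot\tr\beta\cdot\tr\gamma$, crucially using that $\calE$ is a \emph{channel} so $\tr\calE(\beta)=\tr\beta$. In each of (vi)--(ix), bounded linearity of $\trin F$ and the lifting operations extends pointwise agreement on generators to all trace-class inputs.

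The main obstacle, I expect, is the infinite-dimensional well-definedness in (i): the identification $\tracecl(\calH_A\otimes\calH_{A'})\cong\tracecl(\calH_A)\mathbin{\hat\otimes}\tracecl(\calH_{A'})$ (equivalently, that the trace norm agrees with the projective cross norm on elementary tensors) is nontrivial and requires Banach-space tensor product machinery. Once (i) is in place, the remaining items reduce to routine unfoldings combined with density arguments, using the laws of \autoref{fig:laws}, \autoref{fig:compl.laws}, and the previously established lemmas on lifted mixed states and channels.
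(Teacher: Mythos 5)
Your items (ii)--(ix) are essentially the paper's argument (with harmless variations: e.g.\ for (vii) you verify both sides on a generating family of lifted product states and invoke \lemmaref{lemma:mixed.states:separating}, where the paper computes directly with $\trinFst$ and pair inverses; for (ix) you use the partition $F,G,\compl{\pair FG}$ exactly as the paper does). The problem is item (i), which you correctly identify as the crux but resolve with a false identification. In infinite dimensions, $\tracecl(\calH_A\otimes\calH_{A'})$ is \emph{not} isomorphic to the Banach-space projective tensor product $\tracecl(\calH_A)\mathbin{\hat\otimes}\tracecl(\calH_{A'})$: the dual of the latter is the space of \emph{all} bounded linear maps $\tracecl(\calH_A)\to\bounded(\calH_{A'})$, whereas the dual of the former is $\bounded(\calH_A\otimes\calH_{A'})=\bounded(\calH_A)\tensor\bounded(\calH_{A'})$, which corresponds only to the completely bounded normal such maps; if the two tensor products coincided, every bounded map $\tracecl(\calH_A)\to\bounded(\calH_{A'})$ would be completely bounded, which fails (transposition). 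Concretely, the canonical contraction $\tracecl(\calH_A)\mathbin{\hat\otimes}\tracecl(\calH_{A'})\to\tracecl(\calH_A\otimes\calH_{A'})$ has dense range but is not surjective: a maximally entangled pure state $\psi\adj\psi$ with $\psi=\sum_n\sqrt{\lambda_n}\,e_n\otimes e_n$ and $\sum_n\sqrt{\lambda_n}=\infty$ is trace class of norm $1$ but has infinite projective cross norm. So the universal property of $\hat\otimes$ only gives you a map on a proper dense subspace of $\tracecl(\calH_A\otimes\calH_{A'})$, and bilinear boundedness of $(\rho,\sigma)\mapsto\rho\cdot\tr\sigma$ does not by itself bound the extension in the trace norm of the ambient space.

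The paper sidesteps this entirely by not building $\trinFst$ from a bilinear extension at all: it observes that $\tr:\tracecl(\calH_{A'})\to\tracecl(\setC)$ is itself a quantum channel (a positive linear functional is completely positive and, with the identification $\tracecl(\setC)\cong\setC$, trace-preserving), so $\id\otimes\tr$ exists as a quantum channel by \lemmaref{lemma:channel.tensor:tpres} --- whose existence was already established via the Schr\"odinger--Heisenberg dual (\autoref{lemma:SH.adjoint}) and Takesaki's tensor product of normal completely positive maps --- and then sets $\trinFst:=\pair\id u\circ(\id\otimes\tr)$ with $u$ the unit register of \autoref{lemma:quantum.unit}. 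Uniqueness is then the density argument you give (via \autoref{lemma:channel.equal}). If you replace your projective-tensor-product step with this construction (or any other genuine construction of the partial trace as a normal channel), the rest of your proof goes through.
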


The proof is given in \autoref{app:proof:lemma:partial.tr}.

\paragraph{Entropy.} The definition of the partial trace is also useful to define information-theoretic properties that refer to subsystems.
For example, we mentioned in the introduction (\autopageref{page:intro.entropy}) that we can use references to define the Shannon entropy \cite{vonneumann32grundlagen} of the content of a reference $F$.
If $H_\mathit{basic}(\rho):=\rho\log\rho$ (i.e., the entropy of the whole quantum state $\rho$), then we can just define the Shannon entropy $H(F)_\rho$ of the subsystem $F$ (where $\rho$ is the state of the whole system) as follows: $H(F)_\rho := H\pb\paren{\trin F(\rho)}$.
We can also easily define the conditional Shannon entropy \cite[Definition 5.1]{tomamichel16finite} as $H(F|G)_\rho := H\pb\paren{\spair FG}_\rho - H(G)_\rho$ for disjoint $F,G$.

\subsection{Measurements}

There are different formalisms for defining measurements, in order of increasing generality: complete measurements, projective measurements, POVMs, and generalized measurements.
We discuss each of these in turn.
In the following, $F$ will always be a quantum reference $\mathbf A\to\mathbf B$.

\paragraph{Projective measurements.}
A \emph{projective measurement}\index{projective measurement}\index{measurement!projective} on $\calH_A$ is defined by a family of projectors $M=\{P_i\}_{i\in J}$ such that $P_iP_j=0$ for $i\neq j$ and $\sum_i P_i=1$.\footnote{\label{footnote:sum.conv.ref}%
  Convergence of sums is with respect to the SOT, see \autoref{footnote:sum.convergence} for details.}
Here $J$ is the set of outcomes. ($J$ is allowed to be infinite.)
The \emph{probability of outcome $j\in J$ using $M$ given pure state $\psi$ (or mixed state $\rho$)} is $\norm{P_j\psi}^2$ ($\tr P_j\rho\adj{P_j}$).
The \emph{post-measurement state for outcome $j\in J$ using $M$ given pure state $\psi$ (or mixed state $\rho$)} is $P_j\psi/\norm{P_j\psi}$ ($P_j\rho\adj{P_j}/tr P_j\rho\adj{P_j}$).
The \emph{non-normalized post-measurement state for outcome $j\in J$ using $M$ given pure state $\psi$ (or mixed state $\rho$)} is $P_j\psi$ ($P_j\rho\adj{P_j}$).
Note that from the non-normalized post-measurement state, we can compute the post-measurement state and the outcome probability.
Thus our results below will be stated only for the non-normalized post-measurement state for brevity.

We define $\symbolindexmark\regmeas{\regmeas FM}:=\{F(P_i)\}_{i\in J}$.
\begin{lemma}\lemmalabel{lemma:proj.meas}
  Let $M=\braces{P_i}_i$ be a projective measurement.
  Let $F_1,\dots,F_n$ be a partition.
  \begin{compactenum}[(i)]
  \item\itlabel{lemma:proj.meas:is.meas} $F(M)$ is a projective measurement.
  \item\itlabel{lemma:proj.meas:uni}
    For a unitary $U$, let $UM$ denote the projective measurement $\{UP_i\adj U\}$.
    Then $F(U)\,F(M)=F(UM)=(\chain F{\sandwich U})(M)$.
  \item\itlabel{lemma:proj.meas:pure}
    Let $\psi_1,\dots,\psi_n$ be vectors.
    Let $\psi := F_1(\psi_1)\ptensor\dots\ptensor F_n(\psi_n)$.
    Let $\psi_1'$ denote the non-normalized post-measurement state for outcome $j$ using $M$ given $\psi_1$.
    Let $\psi'$ denote the non-normalized post-measurement state for outcome $j$ using $F_1(M)$ given $\psi$.
    Then $\psi' = F(\psi_1')\ptensor F(\psi_2)\ptensor\dots\ptensor F(\psi_n)$.%
    \footnote{\label{footnote:proj.meas:reorder}%
      We state this only for the first reference $F_1$ for simplicity, but by \lemmaref{lemma:lift.pure:permute}/\lemmaref{lemma:mixed.states:permute}, it applies to all other references, too.}
  \item\itlabel{lemma:proj.meas:mixed} Let $\rho_1,\dots,\rho_n$ be trace-class operators. Let $\rho := F_1(\rho_1)\mtensor\dots\mtensor F_n(\rho_n)$.
    Let $\rho_1'$ denote the non-normalized post-measurement state for outcome $j$ using $M$ given $\rho_1$.
    Let $\rho'$ denote the non-normalized post-measurement state for outcome $j$ using $F_1(M)$ given $\rho$.
    Then $\rho' = F(\rho_1')\mtensor F(\rho_2)\mtensor\dots\mtensor F(\rho_n)$.%
    \footnoterepeat{footnote:proj.meas:reorder}
  \end{compactenum}
\end{lemma}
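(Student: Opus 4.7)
For part~\eqref{lemma:proj.meas:is.meas}, I would verify the three defining conditions of a projective measurement for $F(M) = \braces{F(P_i)}_{i\in J}$. First, each $F(P_i)$ is a projector by \lemmaref{lemma:preserve.projector}. Second, for $i\neq j$ the $*$-homomorphism property of $F$ yields $F(P_i)F(P_j) = F(P_iP_j) = F(0) = 0$. The third condition $\sum_i F(P_i) = 1$ is the only delicate step: the identity $\sum_i P_i = 1$ converges in the SOT, but $F$ is only known to be weak*-continuous, so we cannot simply commute $F$ with the limit. My plan is to observe that the finite partial sums $\sum_{i\in I} P_i$ form an increasing net of positive operators bounded above by $1$, so (by the fact cited in Footnote~\ref{footnote:sum.conv.ref} that SOT- and weak*-convergence coincide for such nets) the same net also converges weak*. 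Weak*-continuity of $F$ then yields $\sum_{i\in I} F(P_i) \to F(1) = 1$ weak*, and since $\sum_{i\in I} F(P_i)$ is again an increasing positive net bounded by $1$, this upgrades back to SOT-convergence. This is the only real obstacle in the proof.

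Part~\eqref{lemma:proj.meas:uni} is immediate from $F$ being a unital $*$-homomorphism: elementwise $F(U)\, F(P_i) \,\adj{F(U)} = F(U)F(P_i)F(\adj U) = F(U P_i \adj U)$, which gives $F(U)F(M) = F(UM)$; and unfolding $(\chain F{\sandwich U})(P_i) = F(\sandwich U(P_i)) = F(UP_i\adj U)$ yields the remaining equality.

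Parts~\eqref{lemma:proj.meas:pure} and~\eqref{lemma:proj.meas:mixed} reduce to one-liners using the lifting lemmas from \autoref{sec:lifting}. For~\eqref{lemma:proj.meas:pure}, the non-normalized post-measurement state is by definition $\psi' = F_1(P_j)\psi$, and \lemmaref{lemma:lift.pure:U} pushes the operator $P_j$ inside to give
\[
  \psi' = F_1(P_j)\pB\paren{F_1(\psi_1)\ptensor F_2(\psi_2)\ptensor\dots\ptensor F_n(\psi_n)} = F_1(P_j\psi_1)\ptensor F_2(\psi_2)\ptensor\dots\ptensor F_n(\psi_n),
\]
which equals $F_1(\psi_1')\ptensor F_2(\psi_2)\ptensor\dots\ptensor F_n(\psi_n)$ by the definition of $\psi_1'$. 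For~\eqref{lemma:proj.meas:mixed}, analogously $\rho' = F_1(P_j)\,\rho\,\adj{F_1(P_j)} = F_1(P_j)\,\rho\, F_1(\adj{P_j})$ by the $*$-homomorphism property, and then \lemmaref{lemma:mixed.states:apply.UV} delivers $\rho' = F_1(P_j\rho_1\adj{P_j})\mtensor F_2(\rho_2)\mtensor\dots\mtensor F_n(\rho_n) = F_1(\rho_1')\mtensor F_2(\rho_2)\mtensor\dots\mtensor F_n(\rho_n)$.
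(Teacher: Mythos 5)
Your proposal is correct and follows essentially the same route as the paper's proof: \lemmaref{lemma:preserve.projector} plus multiplicativity for the first two conditions of (i), \lemmaref{lemma:lift.pure:U} for (iii), and \lemmaref{lemma:mixed.states:apply.UV} for (iv). The SOT/weak*-interchange argument you spell out for $\sum_i F(P_i)=1$ is exactly the content of the paper's auxiliary \autoref{lemma:register.sum}, which the paper simply cites at that step.
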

Here \eqref{lemma:proj.meas:uni} describes a common transformation of projective measurements and shows that it is compatible with lifting through references, and with a mapping the reference (replacing $F$ by $\chain F{\sandwich U}$).

The proof is given in \autoref{app:proof:measurements}.

\paragraph{Complete measurements.}
A complete measurement $M$ on $\calH_A$ is described by an orthonormal basis $M=\braces{\psi_j}_{j\in J}$.
The intuitive interpretation is that we measure which of the states $\psi_j$ the reference is in.
A complete measurement is a special case of a projective measurement, namely $M:=\braces{\psi_j\adj{\psi_j}}_{j\in J}$.
Thus a complete measurement can be lifted through a reference simply by interpreting it as a projective measurement and then applying $F(M)$ as defined in the previous paragraph.
Note that this means that a complete measurement on $F$ does not lead to a complete measurement on $\calH_B$.
This is unsurprising: if $\calH_B$ has a higher dimension than $\calH_A$, then there cannot be an orthonormal basis of the same cardinality $\abs J$ on $\calH_B$; no complete measurement on $\calH_B$ with outcomes $J$ exists.

\paragraph{POVMs.}
A \emph{POVM}\index{POVM}\index{measurement!POVM} on $\calH_A$ is defined by a family of positive bounded operators $M=\{M_i\}_{i\in J}$ on $\calH_A$ such that $\sum_i M_i=1$.\footnoterepeat{footnote:sum.conv.ref}
Here $J$ is the set of outcomes.\footnote{There are more general definitions of POVMs where $J$ is a measurable space.
  Hence the name POVM which means ``positive operator-valued \emph{measure}''.
  For example, a measurement of the real-valued position of a particle (with uncountably-many outcomes) cannot be described by a POVM according to the definition we use here.
  POVMs as described here describe always lead to discrete probability distributions of outcomes which makes them easier to handle.}  
($J$ is allowed to be infinite.)
The \emph{probability of outcome $j\in J$ using $M$ given pure state $\psi$ (or mixed state $\rho$)} is $\norm{\sqrt{M_j}\psi}^2$ ($\tr M_j\rho$).
Note that a POVM on its own does not specify the post-measurement-states; the generalized measurements below can be used if post-measurement states are needed.
Note that every projective measurement is also a POVM.

We define $\symbolindexmark\regmeas{\regmeas FM}:=\{F(M_i)\}_{i\in J}$.

\begin{lemma}\lemmalabel{lemma:povm}
  Let $M=\{M_i\}_i$ be a POVM.
  Let $F$ be a reference.
  Let $F_1,\dots,F_n$ be a partition.
  \begin{compactenum}[(i)]
  \item\itlabel{lemma:povm:is.povm}
    $F(M)$ is a POVM.
  \item\itlabel{lemma:povm:uni}
    For a unitary $U$, let $UM$ denote the POVM $\{UM_i\adj U\}_i$.
    Then $F(U)\,{F(M)}=F\pb\paren{UM}=(\chain F{\sandwich U})(M)$.
  \item\itlabel{lemma:povm:pure}
    Let $\psi_k$ be a vector, and let $\psi_1,\dots,\psi_n$ except $\psi_k$ be unit vectors. 
    Let $\psi := F_1(\psi_1)\ptensor\dots\ptensor F_n(\psi_n)$.
    Then the probability of outcome $j$ using $M$ given $\psi_k$ equals the probability of outcome $j$ using $F_k(M)$ given $\psi$.
  \item\itlabel{lemma:povm:mixed}
    Let $\rho_k$ be a trace-class operator, and $\rho_1,\dots,\rho_n$ except $\rho_k$ be density operators. 
    Let $\rho := F_1(\rho_1)\mtensor\dots\mtensor F_n(\rho_n)$.
    Then the probability of outcome $j$ using $M$ given $\rho_k$ equals the probability of outcome $j$ using $F_k(M)$ given $\rho$.
  \item\itlabel{lemma:povm:projmeas}
    Let $\rho$ be a trace-class operator (and $\psi$ be a vector).
    Assume that $M$ is also a projective measurement (i.e., all $M_i$ are projectors).
    Then the probability of outcome $j$ using $F(M)$ given $\rho$ or $\psi$ (where $M$ is interpreted as a POVM)
    equals the  probability of outcome $j$ using $F(M)$ given $\rho$ or $\psi$ (where $M$ is interpreted as a projective measurement)
  \end{compactenum}
\end{lemma}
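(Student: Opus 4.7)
The plan is to exploit throughout that each register $F$ is a unital, weak*-continuous $*$-homomorphism, so that operator-theoretic structure on the operand (positivity, norm, projector-ness, sums) lifts element-wise to $F(M_i)$. Most of the work reduces to previously established lifting lemmas for elementary objects (\autoref{lemma:elementary}), mixed states (\autoref{lemma:mixed.states}), pure states (\autoref{lemma:lift.pure}), and the partial trace (\autoref{lemma:partial.tr}). I treat the five parts in turn.

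For (i), each $F(M_i)$ is positive because $F$ is a $*$-homomorphism: $F(M_i) = F(\sqrt{M_i})^2$ with $F(\sqrt{M_i})$ self-adjoint. Boundedness and the norm bound come from $\norm{F(M_i)} = \norm{M_i}$ (\lemmaref{lemma:preserve.norm}). For $\sum_i F(M_i) = 1$, the net of finite partial sums $s_{J_0} := \sum_{i\in J_0} M_i$ is increasing in the positive operators and bounded above by $1$, so by \autoref{lemma:incr.net.lim} its SOT- and weak*-limits coincide and equal $1$. Weak*-continuity of $F$ then transports the convergence: $F(s_{J_0}) = \sum_{i\in J_0} F(M_i) \to F(1) = 1$ in weak*-topology, and since this is again an increasing net of positive operators bounded by $1$, \autoref{lemma:incr.net.lim} upgrades it to SOT convergence, as required. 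Part (ii) is direct: $F$ being a $*$-homomorphism gives $F(U M_i \adj U) = F(U) F(M_i) \adj{F(U)}$ element-wise, establishing the first equality, and the identity $F(\sandwich U(b)) = \sandwich{F(U)}(F(b))$ from \autoref{sec:quantum} gives the second.

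The central identity behind (iii) and (iv) is the duality $\tr(F_k(M_j)\rho) = \tr(M_j\,\trin{F_k}(\rho))$. To prove it, fix a complement $\compl{F_k}$, set $I := \pair{F_k}{\compl{F_k}}$, and $\sigma := I^{-1}(\rho)$, so $\rho = I(\sigma)$ and $F_k(M_j) = I(M_j\otimes 1)$. Since $I$ is an iso-register it acts as conjugation by a unitary, hence is trace-preserving on trace-class operators (\lemmaref{lemma:isoreg:channel}), giving $\tr(F_k(M_j)\rho) = \tr(I((M_j\otimes 1)\sigma)) = \tr((M_j\otimes 1)\sigma) = \tr(M_j\,\trin\Fst(\sigma)) = \tr(M_j\,\trin{F_k}(\rho))$, where the penultimate equality is the characterizing property of $\trin\Fst$ (both sides are bounded linear in $\sigma$ and agree on product tensors by \lemmaref{lemma:partial.tr:pair}). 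By \lemmaref{lemma:partial.tr:partition}, since the $\rho_j$ for $j \neq k$ are density operators with trace $1$, $\trin{F_k}(\rho) = \rho_k$, so $\tr(F_k(M_j)\rho) = \tr(M_j \rho_k)$, which is (iv). For (iii), convert to mixed states via \lemmaref{lemma:lift.pure:butterfly}: $\psi\adj\psi = F_1(\psi_1\adj{\psi_1})\mtensor\dots\mtensor F_n(\psi_n\adj{\psi_n})$, and $\norm{\sqrt{F_k(M_j)}\psi}^2 = \tr(F_k(M_j)\psi\adj\psi)$; then (iv) applied with $\rho_i := \psi_i\adj{\psi_i}$ yields the claim. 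Finally, (v) is immediate from \lemmaref{lemma:preserve.projector}: each $F(M_i)$ is a projector, so $F(M_j)^2 = F(M_j) = \sqrt{F(M_j)}$, whence cyclicity of trace gives $\tr(F(M_j)\rho\adj{F(M_j)}) = \tr(F(M_j)\rho)$ and $\norm{F(M_j)\psi}^2 = \norm{\sqrt{F(M_j)}\psi}^2$.

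The main obstacle is the convergence argument in (i) when the outcome set $J$ is uncountable: neither $F$'s weak*-continuity nor its norm-continuity is sufficient on its own to interchange $F$ with an uncountable SOT-sum, and both directions of \autoref{lemma:incr.net.lim} (SOT $\leftrightarrow$ weak* for bounded increasing nets of positive operators) have to be used to shuttle the sum identity through $F$. Everything else is essentially algebraic bookkeeping or a direct application of the earlier lifting lemmas.
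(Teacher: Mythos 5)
Your proof is correct, but parts of it take a different route from the paper's. For (i), the paper simply invokes \autoref{lemma:register.sum} to get $\sum_i F(M_i)=F\pb\paren{\sum_i M_i}=F(1)=1$; your increasing-net argument via \autoref{lemma:incr.net.lim} and weak*-continuity is essentially a re-derivation of that lemma (the paper's version shuttles the boundedness of the partial sums through $\norm{s_M}=\norm{F(s_M)}$ rather than through the a priori bound $s_{J_0}\leq 1$, but both are sound). The more substantive divergence is in (iii) and (iv): the paper proves (iv) in one line from \lemmaref{lemma:mixed.states:apply.UV} and \lemmaref{lemma:mixed.states:trace}, computing $\tr F_k(M_j)\rho=\tr\pb\paren{F_k(M_j\rho_k)\mtensor\dots}=\tr(M_j\rho_k)\cdot\prod_{i\neq k}\tr\rho_i$, and proves (iii) independently by a direct norm computation using $F(\sqrt{M_j})=\sqrt{F(M_j)}$ together with \lemmaref{lemma:lift.pure:U} and \lemmaref{lemma:lift.pure:norm}. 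You instead establish the duality $\tr(F_k(M_j)\rho)=\tr(M_j\,\trin{F_k}\rho)$ via the iso-register decomposition and a density argument, apply \lemmaref{lemma:partial.tr:partition}, and then deduce (iii) from (iv) through \lemmaref{lemma:lift.pure:butterfly}. Your route is valid (the density operators $\psi_i\adj{\psi_i}$ for $i\neq k$ do satisfy the hypotheses of (iv)) and has the conceptual merit of exhibiting the outcome probability as a statement purely about the reduced state on $F_k$; the paper's route is shorter because the needed identities were already packaged into the mixed- and pure-state lifting lemmas. Parts (ii) and (v) match the paper's proof.
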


The proof is given in \autoref{app:proof:measurements}, \autopageref{page:proof:povm}.

\paragraph{General measurements.}
A \emph{general measurement}\index{general measurement}\index{measurement!general} on $\calH_A$ is defined by a family of bounded operators $M=\{M_i\}_{i\in J}$ such that $\sum_i \adj{M_i}M_i=1$.\footnoterepeat{footnote:sum.conv.ref}
Here $J$ is the set of outcomes. ($J$ is allowed to be infinite.)
The \emph{probability of outcome $j\in J$ using $M$ given pure state $\psi$ (or mixed state $\rho$)} is $\norm{M_j\psi}^2$ ($\tr M_j\rho\adj{M_j}$).
The \emph{post-measurement state for outcome $j\in J$ using $M$ given pure state $\psi$ (or mixed state $\rho$)} is $M_j\psi/\norm{M_j\psi}$ ($M_j\rho\adj{M_j}/tr M_j\rho\adj{M_j}$).
The \emph{non-normalized post-measurement state for outcome $j\in J$ using $M$ given pure state $\psi$ (or mixed state $\rho$)} is $M_j\psi$ ($M_j\rho\adj{M_j}$).
Note that from the non-normalized post-measurement state, we can compute the post-measurement state and the outcome probability.
Thus our results below will be stated only for the non-normalized post-measurement state for brevity.

Note that a projective measurement is a special case of a general measurement (with matching definitions of outcome-probabilities and post-measurement-states).
And note that any general measurement~$M$ gives rise to a POVM $M'$ via $M'_i:=\adj{M_i}M_i$ (such that the outcome-probabilities of $M'$ are the same as those of $M$ for the same state).

We define $\symbolindexmark\regchannel{\regchannel FM}:=\{F(M_i)\}_{i\in J}$.

\begin{lemma}\lemmalabel{lemma:general.meas}
  Let $M=\{M_i\}_{i\in J}$ be a general measurement.
  Let $F_1,\dots,F_n$ be a partition.
  \begin{compactenum}[(i)]
  \item\itlabel{lemma:general.meas:is.meas} $F(M)$ is a general measurement.
  \item\itlabel{lemma:general.meas:pure} Let $\psi_1,\dots,\psi_n$ be vectors.
    Let $\psi := F_1(\psi_1)\ptensor\dots\ptensor F_n(\psi_n)$.
    Let $\psi_1'$ denote the non-normalized post-measurement state for outcome $j$ using $M$ given $\psi_1$.
    Let $\psi'$ denote the non-normalized post-measurement state for outcome $j$ using $F(M)$ given $\psi$.
    Then $\psi' = F(\psi_1')\ptensor F(\psi_2)\ptensor\dots\ptensor F(\psi_n)$.%
    \footnote{\label{footnote:general.meas:reorder}%
      We state this only for the first reference $F_1$ for simplicity, but by \lemmaref{lemma:lift.pure:permute}/\lemmaref{lemma:mixed.states:permute}, it applies to all other references, too.}
  \item\itlabel{lemma:general.meas:mixed} Let $\rho_1,\dots,\rho_n$ be trace-class operators. Let $\rho := F_1(\rho_1)\mtensor\dots\mtensor F_n(\rho_n)$.
    Let $\rho_1'$ denote the non-normalized post-measurement state for outcome $j$ using $M$ given $\rho_1$.
    Let $\rho'$ denote the non-normalized post-measurement state for outcome $j$ using $F(M)$ given $\rho$.
    Then $\rho' = F(\rho_1')\ptensor F(\rho_2)\ptensor\dots\ptensor F(\rho_n)$.%
    \footnoterepeat{footnote:general.meas:reorder}
  \item\itlabel{lemma:general.meas:povm}
    Let $M'$ be the POVM corresponding to $M$.
    Then $F(M')$ is the POVM corresponding to $F(M)$.
  \end{compactenum}
\end{lemma}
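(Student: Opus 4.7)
All four items follow from the fact that registers are weak*-continuous unital $*$-homomorphisms, combined with the lifting lemmas for pure and mixed states (\lemmaref{lemma:lift.pure} and~\lemmaref{lemma:mixed.states}). I read the statement so that in (ii)--(iii), the register $F$ through which the measurement is lifted is one of the registers of the partition (as in \lemmaref{lemma:proj.meas}); without loss of generality, $F=F_1$.

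\textbf{(i).} I would show $\sum_i \adj{F(M_i)}F(M_i)=1$. Since $F$ is a $*$-homomorphism, $\adj{F(M_i)}F(M_i)=F(\adj{M_i}M_i)$. The partial sums $S_J:=\sum_{i\in J}\adj{M_i}M_i$ over finite $J$ form an increasing net of positive operators bounded by $1$, converging to $1$ in the SOT. By \autoref{lemma:incr.net.lim}, such nets also converge in the weak* topology. Since $F$ is weak*-continuous, the net $F(S_J)=\sum_{i\in J}F(\adj{M_i}M_i)$ converges weak* to $F(1)=1$. As it is again an increasing net of positive operators (bounded by $F(1)=1$), \autoref{lemma:incr.net.lim} gives SOT-convergence as well, which is the notion of convergence used for the measurement-defining sum. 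This is the only point where more than elementary algebraic manipulation is needed; it is the main obstacle, but it is essentially the same continuity argument already used for $F$ being a homomorphism of infinite sums elsewhere in the paper.

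\textbf{(ii) and (iii).} The non-normalized post-measurement state for outcome $j$ is just the result of applying $M_j$ (as an operator on the state, respectively $(M_j,\adj{M_j})$ on a density operator). Thus $\psi'=F(M_j)\psi=F_1(M_j)\pb\paren{F_1(\psi_1)\ptensor\dots\ptensor F_n(\psi_n)}$, and \lemmaref{lemma:lift.pure:U} immediately rewrites this as $F_1(M_j\psi_1)\ptensor F_2(\psi_2)\ptensor\dots\ptensor F_n(\psi_n) = F_1(\psi_1')\ptensor F_2(\psi_2)\ptensor\dots\ptensor F_n(\psi_n)$. For (iii), $\rho'=F_1(M_j)\rho\adj{F_1(M_j)}=F_1(M_j)\pb\paren{F_1(\rho_1)\mtensor\dots\mtensor F_n(\rho_n)}\adj{F_1(M_j)}$, and \lemmaref{lemma:mixed.states:apply.UV} with $U:=M_j$, $V:=\adj{M_j}$ rewrites this as $F_1(M_j\rho_1\adj{M_j})\mtensor F_2(\rho_2)\mtensor\dots\mtensor F_n(\rho_n)=F_1(\rho_1')\mtensor F_2(\rho_2)\mtensor\dots\mtensor F_n(\rho_n)$.

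\textbf{(iv).} The POVM corresponding to a general measurement $\{N_i\}$ has entries $\adj{N_i}N_i$. Applied to $F(M)=\{F(M_i)\}$, its entries are $\adj{F(M_i)}F(M_i)=F(\adj{M_i})F(M_i)=F(\adj{M_i}M_i)$ since $F$ is a $*$-homomorphism. This is precisely $F(M'_i)$ where $M'_i=\adj{M_i}M_i$ defines the POVM $M'$ of $M$. Hence the POVM of $F(M)$ is $\{F(M'_i)\}=F(M')$, which is exactly the statement.
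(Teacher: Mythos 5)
Your proof is correct and follows essentially the same route as the paper: items (ii)--(iv) are the same one-line computations via \lemmaref{lemma:lift.pure:U}, \lemmaref{lemma:mixed.states:apply.UV}, and the $*$-homomorphism property. For (i), the paper simply invokes \autoref{lemma:register.sum} to exchange $F$ with the infinite sum; your continuity argument (increasing bounded net, weak*-continuity of $F$, then back to SOT via \autoref{lemma:incr.net.lim}) is precisely the content of that lemma's proof, so you have just inlined it.
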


The proof is given in \autoref{app:proof:measurements}, \autopageref{page:proof:general.meas}.

}

\delaytextsv{isabelle}{
  \newcommand\INCLUDEONCEdfirjfkpsohportwe{}

\section{Isabelle formalization}
\label{sec:isabelle}

We have formalized a large fraction of the results from this paper in Isabelle/HOL
\cite{nipkow2002isabelle, isabelle-webpage},
namely the everything from Sections~\ref{sec:generic}--\ref{sec:complements}.
The source code of our formalization \ifanonymous
  in \cite{formalization-anonymous}.
  \else
  \fullshort{can be found at \cite{References-AFP} (infinite-dimensional update \cite{References-WIP})}{\error}.
\fi
In particular, we formalized the quantum references in the more general (and mathematically more involved) infinite-dimensional case.
\shortonly{It relies on some additional as yet unpublished Isabelle libraries for tensor products which we also attach to this submission. We also provide an overview table with lines-of-code counts, \autoref{fig:isabelle.loc}.}
Note that a large part of the extra formalization effort in the \emph{infinite-dimensional} case lied in the development of the \shortonly{(as yet unpublished) }library for tensor products \showafter{2024}{02}{14}{\fullonly{\cite{tensor-product-library-not-anon}}} by the same authors as this paper.

When formalizing, we encountered the following challenge:
In this paper, we structured our results by first stating generic axioms and deriving
generic laws from them (\autoref{sec:generic}), and then instantiating
these axioms in the quantum and the classical case
(Sections~\ref{sec:quantum-overall} and~\ref{sec:classical}). Unfortunately,
in the logic HOL, this is not possible.
While HOL (as implemented in Isabelle/HOL) allows us to state and prove facts that are type-polymorphic and thus hold for any type, we cannot state a fact that holds for all type constructors.
But the results from
\autoref{sec:generic} are basically of the form: ``for any type constructor \texttt{update} (so that $\alpha$ \texttt{update}, when $\alpha$ ranges over different types, represents all the objects of the category) satisfying axioms X, the laws Y hold''.
And then we
instantiate that type constructor \texttt{update} with the type
constructor of linear maps (quantum case) or of partial functions (classical
case). Therefore, we cannot express this generally in Isabelle/HOL.

We used the following workaround: We have a theory \texttt{Axioms}
that states the general axioms for a type constructor
\texttt{update},\footnote{Restricted to a type class \texttt{domain}.}
and then we derive the laws generically from these axioms in the
theory \texttt{Laws}. Then we instantiate these axioms in the theories
\texttt{Axioms\_Quantum} and~\texttt{Axioms\_Classical}. By
instantiating, we mean that we write an independent theory (that does not import \texttt{Axioms}) that has a proven lemma for each axiom, but
with the type constructor \texttt{update} replaced by the appropriate type constructor for the quantum/classical setting.
(That is, \texttt{cblinfun}, the type of complex bounded linear functions, and
\texttt{map}, the type of partial functions.) Then we use a Python script to
copy the theory \texttt{Laws} into \texttt{Laws\_Quantum} and
\texttt{Laws\_Classical}, importing \texttt{Axioms\_Quantum} and
\texttt{Axioms\_Classical} instead of \texttt{Axiom}. The script then
substitutes \texttt{update} by the appropriate type constructor (and does potential other replacements of constant or lemma names, as desired) based on the specification given in the theory headers.
All proofs from
\texttt{Laws} then automatically carry over, and improvements in \texttt{Laws} are
automatically adapted to \texttt{Laws\_Quantum} and
\texttt{Laws\_Classical}. This constitutes a poor man's substitute for
higher-order polymorphism.

\delaytext{figure isabelle loc}{
\begin{figure*}
  \newcounter{totalLoc}
  \newcommand\loc[1]{##1\addtocounter{totalLoc}{##1}}
  \centering
\begin{tabular}[t]{|>{\ttfamily}lrp{3.05in}|}  
  \hline
  \normalfont \textbf{Theory} & \textbf{LoC} & \textbf{Purpose}\\
  \hline
  Misc   & \loc\locMisc & Various lemmas that are missing in other libraries \\
  \small Tensor\_Product\_Matrices \hspace*{-5mm} &  \loc\locTensorProductMatrices & Theorems for translating finite-dimensional tensor products to explicit matrix representation (for explicit computations) \\
  \hline
  Axioms & \loc\locAxioms & Generic statement of the axioms \\
  Laws   & \loc\locLaws & Laws derived generically from the axioms \\
  Axioms\_Complement & \loc\locAxiomsComplement & Generic statement of the axioms for complements (existence, uniqueness) \\
  Laws\_Complement & \loc\locLawsComplement & Laws derived generically for complements \\
  \hline
  Quantum & \loc\locQuantum & Some useful facts about quantum mechanics (unrelated to references) \\
  Axioms\_Quantum & \loc\locAxiomsQuantum & Instantiation of \texttt{Axioms} for the quantum case \\
  Laws\_Quantum & --- & Autogenerated: \texttt{Laws} for the quantum case \\
  Quantum\_Extra & \loc\locQuantumExtra & Additional laws about references specific to the quantum case \\
  Quantum\_Extra2 & \loc\locQuantumExtraTwo & Additional laws specific to the quantum case (dependent on complements) \\
  QHoare & \loc\locQHoare & Definition of quantum Hoare logic and derived rules \\
  Teleport & \loc\locTeleport & Example: Analysis of quantum teleportation \\
  Axioms\_Complement\_Quantum & \loc\locAxiomsComplementQuantum & Instantiation of \texttt{Axioms\_Complement} for the quantum case \\
  Laws\_Complement\_Quantum & --- & Autogenerated: \texttt{Laws\_Complement} for the quantum case \\
  Pure\_States & \loc\locPureStates & Lifting of pure states through references \\
  \hline
  Axioms\_Classical & \loc\locAxiomsClassical & Instantiation of \texttt{Axioms} for the classical case \\
  Laws\_Classical & --- & Autogenerated: \texttt{Laws} for the classical case \\
  Classical\_Extra & \loc\locClassicalExtra & Additional laws about references specific to the classical case \\
  \hline
  \normalfont \textbf{Total:} & \arabic{totalLoc} & \\
  \hline
\end{tabular}
\caption{\label{fig:isabelle.loc}Isabelle/HOL theories\anonymous{}{ \cite{References-AFP}}. Lines of code are without blanks or comments.}
\end{figure*}
}

\usedelayedtext{figure isabelle loc}

The results from \autoref{sec:generic} are therefore proven in the
theories \texttt{Axioms} and \texttt{Laws}.

The results from \autoref{sec:quantum-overall} (quantum references) are in the
theories \texttt{Axioms\_Quantum} and~\texttt{Quantum\_Extra}.
\texttt{Axioms\_Quantum} instantiates the axioms from \texttt{Axioms}.
Specifically, we use the complex bounded operator library
\cite{bounded-operators} as the mathematical foundation. It defines
the type $(\alpha,\beta)$ \texttt{cblinfun}\index{cblinfun@\texttt{cblinfun}} (short:
$\alpha \symbolindexmark\CBL\CBL \beta$) as the type of bounded linear\footnote{In the
  finite dimensional case that we consider here, bounded linear is the
  same as linear.} functions between complex vector spaces
$\alpha,\beta$. And the type $\alpha$ \texttt{ell2}\index{ell2@\texttt{ell2}} of square-summable
functions $\alpha \to \setC$.\footnote{For finite types $\alpha$, as
  in our case, this is just $\setC^\alpha$.}  The objects of the
category (the ``updates'') are then
$\alpha\ \mathtt{update} := \paren{\alpha\ \mathtt{ell2}\CBL\alpha\
\mathtt{ell2}}$ (for finite $\alpha$), i.e., the linear functions
$\setC^\alpha\to\setC^\alpha$. Based on this formalization, the tensor
product can be defined very nicely, because the tensor product of
$\alpha\ \mathtt{update}$ and $\beta\ \mathtt{update}$ is simply
$(\alpha\times\beta)\ \mathtt{update}$. Using the existing
infrastructure from the bounded operator library, and our own
formalization of finite dimensional tensor products (theory
\texttt{Finite\_Tensor\_Product}, the bounded operator library does
not have a tensor product), defining the required constants (i.e.,
predicates that specify what the pre-references and references are) and
proving the axioms is then very easy. (Just \locAxiomsQuantum\ lines of code, see
\autoref{fig:teleport}.)

All the laws specific to the quantum setting mentioned in
\autoref{sec:quantum-overall} are proven in theory
\texttt{Quantum\_Extra}. Again, the number of lines is very small
(\locQuantumExtra) showing that our formalism is indeed well-suited for
machine-checked formalization.

Finally, the results from \autoref{sec:classical} are formalized in
the theories \texttt{Axioms\_Classical} and \texttt{Classical\_Extra},
with $\alpha\ \mathtt{update}$ instantiated as $(\alpha,\alpha)\ \mathtt{map} = (\alpha\Rightarrow\alpha\ \mathtt{option})$,
the type of partial functions on $\alpha$.

\paragraph{Examples.}
The Hoare logic from \autoref{sec:hoare} is formalized in the theory
\texttt{QHoare}. Again, using references turns out to be very easy, the
(admittedly simple) Hoare logic only needs \locQHoare\ lines of
formalization, including the derivation rules.  And the analysis of
the teleportation protocol from \autoref{sec:teleport} is presented in
\texttt{Teleport}, in \locTeleport\ lines.  The structure of this file is as
follows: First we define a ``locale'' axiomatizing the references
$X,A,B,\Phi$.\footnote{In Isabelle, a locale \cite{ballarin14module, ballarin10tutorial} is a kind
  of package for definitions and lemmas based on some constants and
  axioms such that the constant and axioms (the references and their
  mutual compatibility in our case) can be instantiated later.} In
this locale, we define teleportation and prove the Hoare judgment
\eqref{eq:hoare.teleport}.

\begin{fullversion}
Additionally, we instantiate this locale with some concrete choices for the references.
Namely $A$ and $B$ are 64 and
1000.000 qubit references, respectively. And instead of a two qubit reference $\Phi$, we have two references $\Phi_1,\Phi_2$, and the
references are explicit parts of a quantum memory that has the tensor structure $\calH_{M,\mathit{specific}} :=
\calH_A\otimes\calH_X\otimes\calH_{\Phi_1}\otimes\calH_B\otimes\calH_{\Phi_2}$.
This demonstrates how easy it is to reuse a theorem stated in terms of
references in a different context that makes specific assumptions about
the memory layout.  Had we proven Hoare judgment for some specific
layout (e.g.,
$\calH_A\otimes\calH_X\otimes\calH_{\Phi}\otimes\calH_B$) and by
explicitly tensoring operators with the identity when applying them to
specific references, we would probably have to redo the proof (although
without any relevant changes) if we wanted to reuse the result in a
different context.  (Note that we even chose not to put $\Phi_1$ and
$\Phi_2$ next to each other in $\calH_{M,\mathit{specific}}$, even
though $\Phi$ was treated as a single reference in the proof of the
Hoare judgment. This did not lead to any difficulties.)
\end{fullversion}
\begin{shortversion}
Additionally, we instantiate this locale with some concrete choices for the references (factors in a tensor product of Hilbert spaces) to demonstrate that such a concrete setting can be handled as easily. 
\end{shortversion}


{
\paragraph{Complements.}
We also develop the theory of reference categories with complements.
Analogous to the theory \texttt{Axioms}, we have a theory \texttt{Axioms\_Complements} that is instantiated in the quantum case by \texttt{Axioms\_Complements\_Quantum}.
The additional laws that hold in a reference theory with complements (\autoref{fig:compl.laws}) are proven generically in \texttt{Laws\_Complements}; our Python script then creates the quantum instantiation automatically (\texttt{Laws\_Complements\_Quantum}).

We note that showing that the quantum reference theory has complements is mathematically considerably more involved than proving the rest of the quantum reference axioms.
While the rest (\texttt{Axioms\_Quantum}; \locAxiomsQuantum\ lines of code) is mostly an application of standard facts about linear algebra, in \texttt{Axioms\_Complements\_Quantum} we need to actually do some real proofs.
Consequently, the latter theory is more complex (\locAxiomsComplementQuantum\ lines of code).

We have additionally implemented one of the sections about lifting quantum objects, namely the lifting of pure states (\autoref{sec:pure}).
We chose this one for two reasons:
First, it is a non-trivial application of the theory of complements.
Second, it is one of the most complicated constructions in \autoref{sec:lifting}.
(Recall the various helper constructs such as $\pureeta{\mathbf A}$ and $\purexi b$.)
We wanted to make sure that that construction is amenable to formalization.
The answer is positive, \texttt{Pure\_States} contains both a formalization of the lemmas in \autoref{sec:pure} as well as a formalization of the example from that section (triple CNOT).

When formalizing the existence of complements, we encountered one interesting challenge:
The axiom of the existence of complements (\autopageref{page:ex.compl}) stipulates that for any reference $F:\mathbf A\to\mathbf B$, there exists a $\mathbf C$ and a complement $G:\mathbf C\to\mathbf B$ of $F$.
However, in our formalization, the objects $\mathbf A,\mathbf B,\mathbf C$ of the category are represented by types $\alpha\ \mathtt{update},\beta\ \mathtt{update},\gamma\ \mathtt{update}$.
So the existence axiom requires that for every~$F$, there is a type $\gamma$ with certain properties.
But this cannot be expressed in HOL!
(It would require dependent types.)
Instead, we use the following workaround for \emph{finite-dimensional} references:
We require that for any $\mathbf A,\mathbf B$, there exists a $\mathbf C$ such that for any $F:\mathbf A\to\mathbf B$, we have a complement $G:\mathbf C\to\mathbf B$.
This can actually be formalized because now the type $\gamma$ depends only on other types $\alpha,\beta$.
Concretely, in Isabelle/HOL, we formalize this by requiring the definition of a type constructor \texttt{($\alpha,\beta$) complement\_domain} that we use for $\gamma$, and then require the axiom
\singledouble{
\[
  \mathtt{reference}\ F
  \implies
  \exists G :: (\alpha,\beta)\ \mathtt{complement\_domain}\ \mathtt{update} \Rightarrow \beta\ \mathtt{update}.\ \
  \texttt{complements}\ F\ G
\]
}{
  \begin{multline*}
  \!\!\!\!\!\!\mathtt{reference}\ F
  \implies
  \exists G :: (\alpha,\beta)\ \mathtt{complement\_domain}\ \mathtt{update} \\
  \Rightarrow \beta\ \mathtt{update}.\ \
  \texttt{complements}\ F\ G
\end{multline*}%
}
for $F :: \alpha\ \mathtt{update} \Rightarrow \beta\ \mathtt{update}$.
This works well with the rest of our formalization.
However, it makes the proof of existence in the quantum reference category harder:
We need to show that $\mathbf C$ only depends on $\mathbf A,\mathbf B$, not on $F$.
This, in turn, boils down to showing that the dimension of $\calH_C$ depends only on the dimensions of $\calH_A,\calH_B$.
(In fact, $\dim\calH_C=\frac{\dim\calH_B}{\dim\calH_A}$.)
This extra nontrivial proof would not be necessary if $\mathbf C$ could depend on $F$.

This also shows a problem that we get when formalizing the infinite-dimensional setting:
In that case $\dim\calH_C=\frac{\dim\calH_B}{\dim\calH_A}$ does not hold, and $\calH_C$ does not depend solely on $\calH_A,\calH_B$.
  
Thus we cannot instantiate our axioms for complements in the case of infinite-dimensional quantum references\footnote{Proof: Let $\calH_A$ be a countably infinite-dimensional space, $\calH_B:=\calH_A$. Let $U$ be a unitary from $\calH_B$ to $\calH_A\otimes\setC^n$ (exists for any $n\in\setN$). $F:=\chain{\sandwich U}\Fst$.
  Then $\calH_C$ needs to be isomorphic to $\setC^n$, but there is no way to recover $n$ just from knowing $\calH_A,\calH_B$.} (or infinite classical references)!

Fortunately, there is an alternative, somewhat more complicated to use, solution that also works in the infinite-dimensional case:
It involves a poor man's substitute for dependent types available in Isabelle/HOL:
The \texttt{Types\_To\_Set} extension \cite{kuncar16types} roughly speaking allows us to locally (within a proof) assume the existence of a type $\gamma$ that is isomorphic to some set $S_\gamma$ that is defined only within this proof context.
(Here isomorphic simply means that there is a bijection between them.)
This can be used for our purposes as follows:
Instead of stating something like ``for all types $\alpha$, $\beta$ and all references $F:\alpha\ \mathtt{update}\to\beta\ \mathtt{update}$, there exists a type $\gamma$ and a complement $G:\gamma\ \mathtt{update}\to\beta\ \mathtt{update}$ of $F$'' (which cannot be formulated in Isabelle/HOL), we can formulate a lemma that says
\begin{equation}\label{eq:with-type-complement}
  \begin{gathered}
    \text{for all types $\alpha$, $\beta$, $\gamma$, and all $F:\alpha\ \mathtt{update}\to\beta\ \mathtt{update}$, if $\gamma$ is isomorphic to the set $\mathtt{cdc}\ F$,} \\
    \text{then there is a complement $G:\gamma\ \mathtt{update}\to\beta\ \mathtt{update}$ of $F$.}
  \end{gathered}
\end{equation}
Here $\mathtt{cdc}\ F$ is a set that we construct so that it has the desired cardinality (which may depend on $F$);
after proving \eqref{eq:with-type-complement} we never need to use the definition of \texttt{cdc} again.

Without the \texttt{Types\_To\_Set} mechanism, such a lemma is relatively useless because we cannot construct a suitable type $\gamma$ to satisfy the premise of the lemma, unless $F$ is globally fixed.
But with \texttt{Types\_To\_Set}, it becomes possible to use it:
If, within a proof, we need a complement of a given register $F$, we locally define the type $\gamma$, and then, within the scope of that definition, have access to a complement of $F$ by \eqref{eq:with-type-complement}.
Once we exit that scope, the complement is not accessible (or even typable) anymore, but any consequences from its existence persist beyond the local scope.

Using \texttt{Types\_To\_Set} is quite complicated (and low-level), but the process is much eased by another as yet unpublished Isabelle-library \texttt{With\_Type}.

\paragraph{Conclusion.} Overall, we believe that this formalization shows that our approach is
well-suited for handling quantum references in a
theorem prover, and that the references are easy to handle and lead to
low overhead or boilerplate proof code. We believe the same would hold for implementations in other theorem provers than Isabelle/HOL.


}

\begin{fullversion}
\section{Conclusion}
\label{sec:conclusions}

We have shown how to formalize both classical and quantum references (as
instantiations of a generic theory of references), and how to use them on
the example of a quantum Hoare logic. We have formalized a large part of our results in
Isabelle/HOL (for the quantum case only in finite dimensions) and shown that our theory is easy to use in a formal setting.
Some open research questions include:
\begin{compactitem}
\item How do we model references that contain mixed quantum/classical data?
  (E.g., classical lists of qubits.)
  See \autoref{sec:discussion}.
\item Are there other interesting theories of references besides
  classical or quantum ones? Or are there variants of the quantum or
  classical ones with different nice properties?
\item
  Completing the formalization:  All the results about lifting quantum objects from \autoref{sec:lifting}, currently we only cover lifting of pure states. 
  Additional automation such as tactics, simplification procedures, etc.~for easier reasoning about references. (E.g., all proof steps in the teleportation example, \autoref{sec:teleport}, should ideally be solved by a single proof methods invocation.)
\end{compactitem}
\end{fullversion}

\anonymous{}{
\paragraph{Acknowledgments.} \notanonymous We thank the anonymous PlanQC reviewers for useful feedback and pointers. This work was funded by the ERC consolidator grant CerQuS (819317), by the PRG team grant ``Secure Quantum Technology'' (PRG946) from the Estonian Research Council, and by the Estonian Centre of Exellence in IT (EXCITE) funded by the ERDF.
}

\shortonly{
  \clearpage
  \bibliographystyle{plain}
  \bibliography{references}
}


\appendix

  
    

    

    


\makeatletter
\patchcmd{\hyper@makecurrent}{%
    \ifx\Hy@param\Hy@chapterstring
        \let\Hy@param\Hy@chapapp
    \fi
}{%
  \ifdefstring{\Hy@param}{section}{\def\Hy@param{appendix}}{}%
  \ifdefstring{\Hy@param}{subsection}{\def\Hy@param{appendix}}{}%
}{}{\errmessage{failed to patch}}

\newcommand\INCLUDEONCEdfasflsdjhodfuvppwdasdfsld{}

\section{Proofs of generic laws about references}

\subsection{Proof of \autoref{lemma:pairs}}\label{sec:proof:lemma:pairs}

Assume $F,G$ to be disjoint references. The existence of a reference $\spair FG$ as in \autoref{def:pair} follows immediately from Axiom~\ref{ax:pairs}.
It remains to show that it is unique, i.e., if $\forall a,b.\ X(a\otimes b)=F(a)\cdot G(b)$ for some reference $X$, then $X=\spair FG$.
Since both $\spair FG$ and $X$ are references, they are also pre-references (Axiom~\ref{ax:reg.prereg}).
Furthermore, $X(a\otimes b)=F(a)\cdot G(b)=\spair FG(a\otimes b)$ for all $a,b$, so $X=\spair FG$ (Axiom~\ref{ax:tensorext}).

\subsection{Proofs for \autoref{fig:laws}}\label{sec:proofs:laws}

We show all the laws in \autoref{fig:laws}.
The proofs are not fully in the the same order as the laws in the figure because some laws are used in the proofs of earlier ones and therefore need to be shown first.

\newcommand\lawproof[1]{\paragraph{Proof of Law~\ref{#1}.}}

\paragraph{Proof of \autoref{footnote:separating}.}
Let $M,N$ be separating sets for $\mathbf A, \mathbf B$. Let $\mathit{MN} := \braces{a\otimes b:a\in M,b\in N}$.
We want to show that $\mathit{MN}$ is a separating set for $\mathbf A\otimes \mathbf B$.
For this, fix some pre-references $F,G:\mathbf A\otimes\mathbf B\to\mathbf C$ with $\forall x\in \mathit{MN}.\ F(x)=G(x)$.

Let $F_a(b) := F(a\otimes b)$.
Then $F_a=F\circ\pb\paren{\lambda x.\ x\cdot(a\otimes1)}\circ(\lambda x.\ 1\tensor x)$ by Axiom~\ref{ax:tensor.mult}.
Thus $F_a$ is a pre-register since it is a composition of pre-registers (Axioms~\ref{ax:tensor-1},\ref{ax:reg.prereg},\ref{ax:cdot-a},\ref{ax:preregs}).

Let $F_b(a) := F(a\otimes b)$, $G_a(b):=G(a\otimes b)$, and $G_b(a):=G(a\otimes b)$.
Then $F_b,G_a,G_b$ are also pre-references.

Since $\forall x\in \mathit{MN}.\ F(x)=G(x)$, we have $F(a\otimes b)=G(a\otimes b)$ for all $a\in N,b\in M$.
Hence $F_a(b)=G_a(b)$ for all $a\in N,b\in M$.
Since $M$ is separating, it follows that $F_a=G_a$ for all $a\in N$.

Thus $F(a\otimes b)=G(a\otimes b)$ for all $a\in N, b\in\mathbf{B}$.
Hence $F_b(a)=G_b(a)$ for all $a\in N,b\in\mathbf{B}$.
Since $N$ is separating, it follows that $F_b=G_b$ for all $b\in\mathbf{B}$.

Thus $F(a\otimes b)=G(a\otimes b)$ for all $a\in \mathbf{A}, b\in\mathbf{B}$.
By Axiom~\ref{ax:tensorext}, it follows that $F=G$.

Thus $\mathit{MN}$ is a separating set.

This shows the fact from \autoref{footnote:separating}.

\lawproof{law:tensor3}
We trivially have that $\mathbf A,\mathbf B,\mathbf C$ are separating sets.
By application of \autoref{footnote:separating} (twice), we have that $X:=\pb\braces{a\otimes(b\otimes c) : a\in\mathbf{A}, b\in\mathbf{B}, c\in\mathbf{C}}$ is separating.
By definition of separating sets, this implies that $F=G$ if $F(a\otimes (b\otimes c))=G(a\otimes (b\otimes c))$ for all $a,b,c$.
This shows \ref{law:tensor3}.

\lawproof{law:tensor.ab}
By definition (\autoref{footnote:def.sigma.alpha.tensor}), $F\rtensor G = \pb\pair{\paren{\lambda a. F(a)\otimes 1}}{\paren{\lambda b. 1\otimes G(b)}}$.
First note that $F'(a):= F(a)\otimes 1$ and $G'(b) := 1\otimes G(b)$ are registers (Axioms~\ref{ax:tensor-1},\ref{ax:reg.morphisms}).
Furthermore, $F'$ and $G'$ are disjoint since $F'(a)G'(b) = F(a)\otimes G(b) = G'(b)F'(a)$ (Axiom~\ref{ax:tensor.mult}).
Thus the pair $F\rtensor G = \spair{F'}{G'}$ is well-defined and a reference (\autoref{lemma:pairs})
and $(F\rtensor G)(a\otimes b) =
F'(a)G'(b) = F(a)\otimes G(b)$.

\lawproof{law:Fst.Snd.disjoint}
$\Fst,\Snd$ are references by Axiom~\ref{ax:tensor-1}.
Since $\Fst(a)\Snd(b)=a\otimes b=\Snd(b)\Fst(a)$ (Axiom~\ref{ax:tensor.mult}),
$\Fst,\Snd$ are disjoint.

\lawproof{law:chain.disjoint.outer}
For all $a,b$, we have:
\[
\chain FH(a)G(b) = F(H(a))G(b) \starrel= G(b)F(H(a)) = G(b)\chain FH(a).
\]
Here $(*)$ follows since $F,G$ are disjoint.
Thus $\chain FH$ and $G$ are disjoint.

\lawproof{law:chain.disjoint.inner}
For all $a,b$, we have:
\begin{multline*}
  \chain HF(a)\chain HG(b)
  = H(F(a))H(G(b))
  \txtrel{\ref{ax:reg.monhom}}= H(F(a)G(b))
  \\
  \starrel=   H(G(b)F(a))
  \txtrel{\ref{ax:reg.monhom}}= H(G(b))H(F(a))
  = \chain HG(b)\chain HF(a).
\end{multline*}
Here $(*)$ follows since $F,G$ are disjoint.
Thus $\chain HF$ and $\chain HG$ are disjoint.

\lawproof{law:compat3}%
Since $F,G$ are disjoint, the pair $\spair FG$ is well-defined and a reference (\autoref{lemma:pairs}).

For all $a,b,c$, we have
\begin{equation}\label{eq:law:compat3}
  \spair FG(a\otimes b) H(c)
  =
  F(a)G(b)H(c)
  \starrel=  F(a)H(c)G(b)
  \starrel=   H(c)F(a)G(b)
  = H(c)\spair FG(a\otimes b).
\end{equation}
Here $(*)$ follows since $F,G,H$ are disjoint.

Let $X_c(x) := \spair FG(x)H(c)$.
Since $\spair FG$ is a reference and thus a pre-reference (Axiom~\ref{ax:reg.prereg}), and since $\lambda x.\ xH(c)$ is a pre-reference (Axiom~\ref{ax:cdot-a}), we have that $X_c$ is a pre-reference, too (Axiom~\ref{ax:preregs}).

Let $Y_c(x) := H(c)\spair FG(x)$. Analogously, $Y_c$ is a pre-reference.

From \eqref{eq:law:compat3}, we have that $X_c(a\otimes b)=Y_c(a\otimes b)$ for all $a,b,c$.
By Axiom~\ref{ax:tensorext}, this implies $X_c=Y_c$ for all~$c$.

Thus
\[
\spair FG(x)H(c) = X_c(x) = Y_c(x) = H(c)\spair FG(x)
\]
for all $x,c$.
Hence $\spair FG$, $H$ are disjoint.

\lawproof{law:sigma.alpha.refs}
By definition (\autoref{footnote:def.sigma.alpha.tensor}),
$\swap=\pair\Snd\Fst$,
$\assoc = \pair{\pair\Fst{\Snd\circ\Fst}}{\Snd\circ\Snd}$, and
$\assocp = \pair{\Fst\circ\Fst}{\pair{\Fst\circ\Snd}{\Snd}}$.

$\Fst,\Snd$ are references by Axiom~\ref{ax:tensor-1}.
They are disjoint by Law~\ref{law:Fst.Snd.disjoint}.
Thus the pair $\swap=\pair\Snd\Fst$ is well-defined and a reference (\autoref{lemma:pairs}).

By Law~\ref{law:chain.disjoint.outer}, $\Fst$ and $\Snd\circ\Fst$ are disjoint.
Thus the pair $\pair\Fst{\Snd\circ\Fst}$ is well-defined and a reference (\autoref{lemma:pairs}).
$\Fst$ and $\Snd\circ\Snd$ are disjoint by Law~\ref{law:chain.disjoint.outer}.
$\Snd\circ\Fst$ and $\Snd\circ\Snd$ are disjoint by Law~\ref{law:chain.disjoint.inner}.
Thus by Law~\ref{law:compat3}, $\pair\Fst{\Snd\circ\Fst}$ and $\Snd\circ\Snd$ are disjoint.
Thus the pair $\assoc = \pair{\pair\Fst{\Snd\circ\Fst}}{\Snd\circ\Snd}$ is well-defined and a reference (\autoref{lemma:pairs}).

Analogously for $\assocp$.

\lawproof{law:sigma.ab}
By definition (\autoref{footnote:def.sigma.alpha.tensor}), $\swap=\pair\Snd\Fst$.
Thus
\[
  \swap(a\otimes b) = \pair\Snd\Fst(a\otimes b) = \Snd(a)\Fst(b) = (1\otimes a)(b\otimes 1)
  \txtrel{\ref{ax:tensor.mult}}= b\otimes a.
\]

\lawproof{law:chain.sigma.12}
We have $\chain\swap\Fst(x)=\swap(x\otimes 1)\txtrel{\ref{law:sigma.ab}}= 1\otimes x = \Snd(x)$,
hence $\chain\swap\Fst = \Snd$.

We have $\chain\swap\Snd(x)=\swap(1\otimes x)\txtrel{\ref{law:sigma.ab}}= x\otimes 1 = \Fst(x)$,
hence $\chain\swap\Snd = \Fst$.

\lawproof{law:alpha.abc}
By definition (\autoref{footnote:def.sigma.alpha.tensor}),
$\assoc = \pair{\pair\Fst{\Snd\circ\Fst}}{\Snd\circ\Snd}$.
Thus
\begin{align*}
  \assoc\pb\paren{(a\otimes b)\otimes c}
  &= \pair{\Fst}{\Snd\circ\Fst}(a\otimes b)\cdot
    (\Snd\circ\Snd)(b)
    = \Fst(a) \cdot \paren{\Snd\circ\Fst}(b)\cdot
    (\Snd\circ\Snd)(c)
  \\
  &= (a\otimes 1)\pb\paren{1\otimes (b\otimes 1)}\pb\paren{1\otimes(1\otimes c)}
    \txtrel{\ref{ax:tensor.mult}}= a\otimes (b\otimes c).
\end{align*}

\lawproof{law:alpha'.abc}
By definition (\autoref{footnote:def.sigma.alpha.tensor}),
$\assocp = \pair{\Fst\circ\Fst}{\pair{\Fst\circ\Snd}{\Snd}}$.
Thus
\begin{align*}
  \assocp\pb\paren{a\otimes(b\otimes c)}
  &= \pair{\Fst\circ\Fst}{\pair{\Fst\circ\Snd}{\Snd}}\pb\paren{a\otimes(b\otimes c)}
  = \paren{\Fst\circ\Fst}(a) \cdot {\pair{\Fst\circ\Snd}{\Snd}}(b\otimes c) \\
  &= \paren{\Fst\circ\Fst}(a) \cdot \paren{\Fst\circ\Snd}(b) \cdot \Snd(c)
    = \pb\paren{(a\otimes 1)\otimes 1} \pb\paren{(1\otimes b)\otimes 1} (1\otimes c)
    \txtrel{\ref{ax:tensor.mult}}= \pb\paren{(a\otimes b)\otimes c}.
\end{align*}

\lawproof{law:tensor.disjoint}
Let $\Psi_{cd}(x) := (F\rtensor G)(x) \cdot (H\rtensor L)(c\otimes d)$.
Then $\Psi_{cd}$ is a pre-reference by Law~\ref{law:tensor.ab} and Axioms~\ref{ax:reg.prereg},\ref{ax:cdot-a}.

Let $\Psi'_{cd}(x) :=  (H\rtensor L)(c\otimes d) \cdot (F\rtensor G)(x)$.
Then $\Psi_{cd}$ is a pre-reference by Law~\ref{law:tensor.ab} and Axioms~\ref{ax:reg.prereg},\ref{ax:cdot-a}.

We have for all $a, b$:
\begin{align*}
  \Psi_{cd}(a \otimes b)
  &=    (F\rtensor G)(a\otimes b) \cdot (H\rtensor L)(c\otimes d)
  \txtrel{\ref{law:tensor.ab}}=   \pb\paren{F(a)\otimes G(b)} \cdot \pb\paren{H(c)\otimes L(d)}
  \\
  &\txtrel{\ref{ax:tensor.mult}}=  F(a)H(c) \otimes G(b)L(d)
    \starrel= H(c)F(a) \otimes L(d)G(b)
    \txtrel{\ref{ax:tensor.mult}}=  \pb\paren{H(c)\otimes L(d)} \cdot \pb\paren{F(a)\otimes G(b)}
    \\
    &\txtrel{\ref{law:tensor.ab}}=  (H\rtensor L)(c\otimes d) \cdot (F\rtensor G)(a\otimes b)
    = \Psi'_{cd}(a\otimes b).
\end{align*}
where $(*)$ is because $H,F$ are disjoint and $L,G$ are disjoint, by assumption.

By Axiom~\ref{ax:tensorext}, this implies $\Psi_{cd}=\Psi'_{cd}$.

Let $\Phi_{x}(y) := (F\rtensor G)(x) \cdot (H\rtensor L)(y)$ and $\Phi'_x(y) := (H\rtensor L)(y) \cdot (F\rtensor G)(x) $.
$\Phi_y,\Phi'_y$ are pre-references.

We have for all $c,d$:
\[
  \Phi_x(c\otimes d) = \Psi_{cd}(x) = \Psi'_{cd}(x) = \Phi'_x(c\otimes d).
\]

Since $\Phi_x,\Phi'_x$ are pre-references, this implies $\Phi_x=\Phi'_x$ (Axiom~\ref{ax:tensorext}).

Thus for all $x,y$:
\[
(F\rtensor G)(x) \cdot (H\rtensor L)(y) = \Phi_x(y) = \Phi'_x(y) = (H\rtensor L)(y) \cdot (F\rtensor G)(x).
\]
Hence $F\rtensor G$ and $H\rtensor L$ are disjoint.

\lawproof{law:pair.ref}
Shown in \autoref{lemma:pairs}.

\lawproof{law:pair.select}
We have $\chain{\spair FG}\Fst(x) = \spair{FG}(x\otimes 1) = F(x)G(1) \txtrel{\ref{ax:reg.monhom}}= F(x)$
and $\chain{\spair FG}\Snd(x) = \spair{FG}(1\otimes x) = F(1)G(x) \txtrel{\ref{ax:reg.monhom}}= G(x)$.

\lawproof{law:fst.snd.id}
Since $\Fst,\Snd$ are disjoint, $\pair\Fst\Snd$ is a reference (\autoref{lemma:pairs}) and thus a pre-reference (Axiom~\ref{ax:reg.prereg}). Also $\id$ is a pre-reference (Axiom~\ref{ax:preregs}).

Furthermore, for all $a,b$, we have $\pair\Fst\Snd(a\otimes b)=\Fst(a)\Snd(b)=(a\otimes1)(1\otimes b)\txtrel{\ref{ax:tensor.mult}}= a\otimes b = \id(a\otimes b)$.
Thus $\pair\Fst\Snd=\id$ (Axiom~\ref{ax:tensorext}).

\lawproof{law:snd.fst.sigma}
By definition (\autoref{footnote:def.sigma.alpha.tensor}),
$\swap=\pair\Snd\Fst$.

\lawproof{law:pair.sigma}
$\chain{\spair FG}\swap$ and $\spair GF$ are pre-references.
Furthermore $\chain{\spair FG}\swap(a\otimes b)\txtrel{\ref{law:sigma.ab}}=\spair FG(b\otimes a)=F(b)G(a)=G(a)F(b)=\spair GF(a\otimes b)$.
Hence by Axiom~\ref{ax:tensorext}, $\chain{\spair FG}\swap=\spair GF$.

\lawproof{law:pair.alpha}
$\chain{\pair F{\spair GH}}\assoc$ and $\pair{\spair FG}H$ are pre-references.
Furthermore
\begin{align*}
  \chain{\pair F{\spair GH}}\assoc\pb\paren{(a\otimes b)\otimes c}
  &\txtrel{\ref{law:alpha.abc}}=
  {\pair F{\spair GH}}\pb\paren{a\otimes(b\otimes c)}
    = F(a) \cdot \spair GH(b\otimes c)
    \\
  &= F(a)G(b)H(c)
    = \spair FG(a\otimes b)\cdot H(c)
    = \pair{\spair FG}H \pb\paren{(a\otimes b)\otimes c}.
\end{align*}
By Law~\ref{law:tensor3}, this implies $\chain{\pair F{\spair GH}}\assoc = \pair{\spair FG}H$.

\lawproof{law:pair.alpha'}
$\chain{\pair{\spair FG}H}\assocp$ and ${\pair F{\spair GH}}$ are pre-references.
Furthermore
\begin{align*}
  \chain{\pair{\spair FG}H}\assocp\pb\paren{a\otimes(b\otimes c)}
  &\txtrel{\ref{law:alpha'.abc}}=
    \pair{\spair FG}H\pb\paren{(a\otimes b)\otimes c}
    = \spair FG(a\otimes b) \cdot H(c)
    \\
  &= F(a)G(b)H(c)
    = F(a) \cdot \spair GH(b\otimes c)
    = \pair F{\spair GH} \pb\paren{a\otimes (b\otimes c)}.
\end{align*}
By Law~\ref{law:tensor3}, this implies $\chain{\pair{\spair FG}H}\assocp = {\pair F{\spair GH}}$.

\lawproof{law:pair.chain}
By \autoref{lemma:pairs}, $\spair FG$ is well-defined and a reference.
By Law~\ref{law:chain.disjoint.inner}, $\chain CF,\chain CG$ are disjoint, hence $\pair{\chain CF}{\chain CG}$ is well-defined and a reference (\autoref{lemma:pairs}).

For all $a,b$, we have
\[
  \pair{\chain CF}{\chain CG}(a\otimes b)
  =
  C(F(a))\cdot C(G(b))
  \txtrel{\ref{ax:reg.monhom}}=
  C(F(a)G(b))
  =
  C(\spair FG(a\otimes b))
  =
  \chain C{\spair FG}(a\otimes b).
\]
Since $\pair{\chain CF}{\chain CG}$ and $\chain C{\spair FG}$ are pre-references (Axiom~\ref{ax:reg.prereg}),
this implies   $\pair{\chain CF}{\chain CG} = \chain C{\spair FG}$ (Axiom~\ref{ax:tensorext}).

\lawproof{law:pair.tensor}
$\spair FG$ and $C\rtensor D$ are references (\autoref{lemma:pairs}, Law~\ref{law:tensor.ab}). $\chain FC,\chain GD$ are disjoint (Law~\ref{law:chain.disjoint.outer}) and thus $\pair{\chain FC}{\chain GD}$ is a reference.

For all $a,b$, we have
\[
  \chain{\spair FG}{(C\rtensor D)}(a\otimes b)
  \txtrel{\ref{law:tensor.ab}}=
  \spair{FG}\pb\paren{C(a)\otimes D(b)}
  =
  F(C(a))\cdot G(D(b))
  =
  \chain FC(a)\cdot\chain GD(b)
  =
  \pair{\chain FC}{\chain GD}(a\otimes b).
\]
Since both $\chain{\spair FG}{(C\rtensor D)}$ and $  \pair{\chain FC}{\chain GD}$ are pre-references (Axiom~\ref{ax:reg.prereg}),
this implies that   $\chain{\spair FG}{(C\rtensor D)}=\pair{\chain FC}{\chain GD}$  (Axiom~\ref{ax:tensorext}).


\fullonly{
  
\section{Miscellaneous facts about infinite-dimensional quantum mechanics}
\label{app:misc.facts}

In this section, we give a number of lemmas related to (infinite-dimensional) quantum references (i.e., to operator theory) that are used throughout the paper.
Many of them are likely known facts but we did not find citable references.
We recall our convention that $\mathbf A,\mathbf B,\dots$ always denote $\bounded(\calH_A),\bounded(\calH_B),\dots$, the spaces of bounded operators.
Whenever we say reference/pre-reference we mean quantum reference/pre-reference.

\begin{lemma}\lemmalabel{lemma:ex.proj}
  For every closed subspace $S$ of $\calH$, there exists exactly one projector $P_S$ with image $S$.
\end{lemma}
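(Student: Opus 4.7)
This is a standard textbook fact, so my plan is essentially to cite the orthogonal projection theorem and sketch the short argument.

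First, for existence, I would invoke the orthogonal decomposition $\calH = S \oplus \ortho S$ (which requires $S$ to be closed; see, e.g., Conway~\cite{conway13functional}). Every $\psi\in\calH$ then has a unique decomposition $\psi = \psi_S + \psi_{\ortho S}$ with $\psi_S\in S$, $\psi_{\ortho S}\in\ortho S$, and the map $P_S:\psi\mapsto\psi_S$ is well-defined. I would then verify the four required properties in turn: linearity follows from the uniqueness of the decomposition; boundedness with $\norm{P_S}\leq 1$ follows from the Pythagorean identity $\norm\psi^2=\norm{\psi_S}^2+\norm{\psi_{\ortho S}}^2$; idempotence is immediate because $\psi_S\in S$ decomposes as $\psi_S+0$; and self-adjointness follows from $\langle P_S\psi,\phi\rangle = \langle\psi_S,\phi_S\rangle = \langle\psi,P_S\phi\rangle$. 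Clearly $\im P_S = S$.

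For uniqueness, suppose $P$ is any projector (orthogonal projector, i.e., bounded, idempotent, self-adjoint) with $\im P = S$. I would show $P=P_S$ by checking that $P$ acts as the identity on $S$ and as zero on $\ortho S$. The first is immediate from idempotence: if $\psi\in S=\im P$, then $\psi=P\phi$ for some $\phi$, so $P\psi=P^2\phi=P\phi=\psi$. The second uses self-adjointness: for $\psi\in\ortho S$ and any $\phi\in\calH$, $\langle P\psi,\phi\rangle=\langle\psi,P\phi\rangle=0$ because $P\phi\in S$, so $P\psi=0$. Combining these with the decomposition $\calH=S\oplus\ortho S$ forces $P=P_S$.

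There is no real obstacle here; both halves are routine Hilbert-space arguments. The only thing to be careful about is that closedness of $S$ is genuinely used (the orthogonal decomposition fails for non-closed subspaces, which is why the lemma is stated for closed subspaces only). In the written proof I would likely compress this to a short paragraph with a citation to \cite{conway13functional} for the decomposition theorem and the standard characterisation of orthogonal projectors.
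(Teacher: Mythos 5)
Your proof is correct and takes essentially the same route as the paper: the paper likewise delegates existence to Conway (the orthogonal decomposition you sketch) and proves uniqueness from mutual image inclusion, deriving $PQ=Q$ and $PQ=P$, which is just a compressed form of your ``identity on $S$, zero on $\ortho S$'' argument. No gaps.
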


\begin{proof}
  For the existence see \cite[comment after Definition II.3.1]{conway13functional}.%
  \footnote{We define a projector as a bounded operator with $P^2=P$ and $P=\adj P$.
    \cite{conway13functional} gives a different definition of what they call a ``projection'' but it is equivalent by \cite[Proposition~II.3.3\,(d)]{conway13functional}.}
  Assume there are two projectors $P,Q$ with image $S$.
  Then $\im Q\subseteq\im P$, so by \cite[Exercise~II.3.6]{conway13functional}, $PQ=Q$.
  And $\im P\subseteq\im Q$, so by \cite[Exercise~II.3.6]{conway13functional}, $PQ=P$.
  Thus $P=Q$.
\end{proof}

\begin{lemma}\lemmalabel{lemma:tensor.subspace.proj}
  The tensor product $S\otimes T$ of closed subspaces $S,T$ of Hilbert spaces is defined as the closed span of $\braces{s\otimes t: s\in S, t\in T}$. For a bounded operator $A$ between Hilbert spaces, let $AS$ denote the closure of the image of $S$ under $A$.

  Then $AS\otimes BT = (A\otimes B)(T\otimes S)$.
\end{lemma}

\begin{proof}
  In this proof, let $\overline X$ denote the closure of $X$.
  For an operator $A$ and a subspace $S$, let $A\cdot S$ denotes the image of $S$ under $A$, and $AS := \overline{A\cdot S}$.
  Let $f(x,y) := x\otimes y$ and $g(x) := (A\otimes B)x$.
  
  \begin{align*}
    AS \otimes BT
    &\txtrel{def}= \overline{\Span f(AS\times BT)}
    \txtrel{def}= \overline{\Span f(\overline{A\cdot S}\times \overline{B\cdot T})}
    = \overline{\Span f(\overline{A\cdot S\times B\cdot T})}
    \\&
    \starrel= \overline{\Span\overline{f\pb\paren{A\cdot S\times B\cdot T}}}
    = \overline{\Span{f\pb\paren{A\cdot S\times B\cdot T}}}
    = \overline{\Span{g\pb\paren{f\paren{S\times T}}}}
    \\&
    \starstarrel= \overline{g\pb\paren{{\Span{f\paren{S\times T}}}}}
    \txtrel{def}= \overline{g\pb\paren{S\otimes T}}
    \txtrel{def}= \overline{(A\otimes B)\cdot\paren{S\otimes T}}
    \txtrel{def}= {(A\otimes B)\paren{S\otimes T}}.
  \end{align*}
  Here $(*)$ uses that $f$ is continuous.
  And $(**)$ uses that $g$ is linear.
\end{proof}

\begin{lemma}\lemmalabel{lemma:traceclass.abs}
  Fix a bounded operator $t$.
  \begin{compactenum}[(i)]
  \item\itlabel{lemma:traceclass.abs:tc} $t$ is trace-class iff $\abs t$ is trace-class.
  \item\itlabel{lemma:traceclass.abs:norm} If $t$ is trace-class, $\trnorm{t}=\tr\abs t$
  \item\itlabel{lemma:traceclass.abs:normpos} If $t$ is positive trace-class, $\trnorm{t}=\tr t$.
  \item \itlabel{lemma:traceclass.abs:between} If $t$ is trace-class and $s\leq t$ is a positive operator, then $s$ is trace-class.
  \item \itlabel{lemma:traceclass.abs:opnorm} If $t$ is trace-class, $\norm t\leq 2\trnorm t$.
  \end{compactenum}
\end{lemma}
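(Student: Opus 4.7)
Items (i)--(iii) should follow almost immediately from the definitions cited from Conway. The definition of trace-class operator (Conway's Definition~18.3) is that $t$ is trace-class iff $\tr\abs t<\infty$, and the trace norm (Definition~18.10) is defined by $\trnorm t:=\tr\abs t$. Since $\abs t$ is positive, we have $\abs{\abs t}=\abs t$, so (i) reduces to the trivial statement that $\tr\abs t<\infty$ iff $\tr\abs{\abs t}<\infty$. For (ii), $\trnorm t=\tr\abs t$ is literally the definition. For (iii), if $t$ is positive then $\abs t=\sqrt{t^*t}=\sqrt{t^2}=t$, so $\trnorm t=\tr\abs t=\tr t$.

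For (iv), I would use the characterization of the trace of a positive operator as a basis-independent sum. Fix any orthonormal basis $\{e_i\}$ of $\calH$. Since $0\leq s\leq t$ pointwise as quadratic forms, we have $\langle e_i,s\,e_i\rangle\leq\langle e_i,t\,e_i\rangle$ for each $i$. Summing and using that $t$ is trace-class together with~(iii), we get
\[
  \sum_i \langle e_i,s\,e_i\rangle \ \leq\ \sum_i \langle e_i,t\,e_i\rangle \ =\ \tr t \ =\ \trnorm t \ <\ \infty.
\]
This shows $\tr s<\infty$; combined with $s\geq 0$ (so $\abs s=s$) and the basis-independence of the trace of positive operators (Conway, around Definition~18.1), we conclude $\trnorm s=\tr\abs s=\tr s<\infty$, i.e., $s$ is trace-class.

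For (v), I would invoke the polar decomposition $t=u\abs t$, where $u$ is a partial isometry with $\norm u\leq 1$. Then $\norm t\leq\norm u\cdot\norm{\abs t}\leq\norm{\abs t}$. For the positive trace-class operator $\abs t$, the operator norm equals $\sup\sigma(\abs t)$, which is bounded above by $\tr\abs t$ (trace-class positive operators are compact, so their trace equals the sum of their eigenvalues counted with multiplicity, which dominates the largest eigenvalue). Hence $\norm t\leq\norm{\abs t}\leq\tr\abs t=\trnorm t\leq 2\trnorm t$. The factor of $2$ is quite loose; the proof actually yields the sharper bound $\norm t\leq\trnorm t$, but the form stated suffices for the paper's applications. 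The only mildly nontrivial ingredient here is the fact that $\norm{\abs t}\leq\tr\abs t$ for a positive trace-class operator, but this is standard once compactness of trace-class operators is invoked.
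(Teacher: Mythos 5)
Your items (i)--(iv) follow essentially the same route as the paper: (i)--(iii) are unwindings of Conway's definitions using $\abs{\abs t}=\abs t$ and $\abs t=t$ for positive $t$, and (iv) is the same basis-sum comparison $\adj e s e\leq \adj e t e$ (the paper phrases it via the observation that $t$ itself is positive, so $\adj e\abs te=\adj ete$, which is implicitly what you use when you apply (iii) to $t$). These parts are correct.

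Item (v) is where you genuinely diverge, and your argument is correct. The paper proves the positive case directly ($\norm t=\norm{\sqrt t}^2=\sup_\psi\adj\psi t\psi\leq\tr t$ by extending a unit vector to an orthonormal basis), then bootstraps to Hermitian operators via the decomposition $t=t_+-t_-$ with $\abs t=t_++t_-$, and finally to arbitrary $t$ via $t=t_h+it_a$ with $t_h,t_a$ Hermitian, picking up a factor of $2$ from $\trnorm{t_h}+\trnorm{t_a}\leq 2\trnorm t$. You instead use the polar decomposition $t=u\abs t$ to reduce immediately to the positive operator $\abs t$, then bound $\norm{\abs t}\leq\tr\abs t$ via the spectral decomposition of the compact positive operator $\abs t$. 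Your route is shorter, avoids the two-stage decomposition, and yields the sharp constant $\norm t\leq\trnorm t$ (the paper's factor $2$ is an artifact of its Hermitian-splitting step); the price is that you invoke the polar decomposition and compactness/spectral theory of trace-class operators, whereas the paper's positive-case argument needs only the basis-sum definition of the trace. (For the step $\norm{\abs t}\leq\tr\abs t$ you could also reuse the paper's elementary estimate $\sup_\psi\adj\psi\abs t\psi\leq\sum_e\adj e\abs te$ and skip compactness entirely.) Either way, the stated bound with constant $2$ certainly follows.
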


\begin{proof}
  We show \eqref{lemma:traceclass.abs:tc}.
  By definition of trace-class \cite[Definition 18.3]{conway00operator}, $t$ is trace-class iff $\sum_{e\in E}\adj e\abs t e<\infty$ for some orthonormal basis $E$.
  Since $\abs t = \pb\abs{\abs{t}}$, this holds iff  $\sum_{e\in E}\adj e\pb\abs{\abs t} e<\infty$ which by definition holds iff $\abs t$ is trace-class.

  We show \eqref{lemma:traceclass.abs:norm}.
  By definition \cite[text after Definition 18.3]{conway00operator}, $\trnorm{t}=\sum_{e\in E}\adj e\abs t e$ for some orthonormal basis $E$, and by definition \cite[Definition 18.10]{conway00operator}, $\tr t=\sum_{e\in E}\adj et e$ for all $t$, and in particular $\tr\abs t=\sum_{e\in E}\adj e\abs t e$. Thus $\trnorm t=\tr\abs t$.

  We show \eqref{lemma:traceclass.abs:normpos}. For positive $t$, we have $\abs t=t$. Thus $\trnorm t \eqrefrel{lemma:traceclass.abs:norm}= \tr\abs t = \tr t$.

  We show \eqref{lemma:traceclass.abs:between}.
  By definition \cite[Definition 18.3]{conway00operator}, $t$ is trace-class means that $\sum_{e\in E}\adj e\abs t e$ converges.
  Since $s$ is positive and $t\geq s$, $t$ is positive. Thus
  \[
    \adj e\abs t e = \adj et e\geq \adj es e = \adj e\abs s e \geq 0.
  \]
  Hence $\sum_{e\in E}\adj e\abs s e$ converges as well.
  By definition, this means that $s$ is trace-class.

  We show \eqref{lemma:traceclass.abs:opnorm}.
  Fix positive a trace-class $t$. Then there is a bounded operator $a$ with $\adj aa=t$.
  Fix an arbitrary vector $\psi$ with $\norm\psi=1$. Fix an orthonormal basis $E$ with $\psi\in E$.
  Then
  \begin{equation}
    \norm{a\psi}^2 = \adj\psi\adj aa\psi = \adj\psi t\psi \starrel\leq \sum_{e\in E}\adj ete
    \starstarrel= \tr t.
    \label{eq:norm.apsi}
  \end{equation}
  Here $(*)$ follows since $t$ is positive and hence $\adj et e\geq0$ for all $e$.
  And $(**)$ is by definition of the trace \cite[Definition 18.10]{conway00operator}.
  
  Since this holds for any unit vector $\psi$, we have (with $\psi$ ranging over unit vectors):
  \begin{equation}
    \norm t = \norm{\adj aa}  \starrel= \norm{a}^2
    =\sup_\psi\norm{a\psi}^2\eqrefrel{eq:norm.apsi}\leq \tr t.
    \label{eq:norm.pos}
  \end{equation}
  Here $(*)$ holds by \cite[Proposition II.2.7]{conway13functional}.
  So \eqref{eq:norm.pos} holds for any positive trace-class $t$.

  Fix a Hermitian trace-class $t$. By \cite[Proposition 3.2]{conway00operator}, $t=t_+-t_-$ where $t_+$ and $t_-$ are the ``positive and negative part'' of $t$, and $t_+,t_-$ are positive.
  Furthermore, $\abs t=t_++t_-$ by  \cite[Exercise 3.4]{conway00operator}.
  Since $\abs t$ is trace-class by \eqref{lemma:traceclass.abs:tc}, and $t_+,t_i \leq\abs t$, we have that $t_+,t_-$ are trace-class by \eqref{lemma:traceclass.abs:between}.
  Then
  \begin{equation}
    \norm{t} = \norm{t_+-t_-}
    \leq \norm{t_+} + \norm{t_-}
    \eqrefrel{eq:norm.pos}\leq \tr t_+ + \tr t_- = \tr \abs t \eqrefrel{lemma:traceclass.abs:norm}= \trnorm t.
    \label{eq:norm.herm}
  \end{equation}
  This holds for any Hermitian trace-class $t$.

  Fix a trace-class $t$. Then $t=t_h + it_a$ where $t_h := (t + \adj t)/2$ and $t_a := (t - \adj t)/2i$.
  Both $t_h,t_a$ are easily verified to be Hermitian.
  Thus
  \begin{align*}
    \norm{t} &= \norm{t_h+it_a} 
    \leq \norm{t_h} + \norm{t_a}
    \eqrefrel{eq:norm.herm}\leq \trnorm{t_h} + \trnorm{t_a}
    = \trnorm{t/2+\adj t/2} + \trnorm{t/2 - \adj t/2i}
    \\&
    \leq \trnorm t/2 + \trnorm{\adj t}/2 + \trnorm t/2 + \trnorm{\adj t}/2
    \starrel= 2\trnorm t.
  \end{align*}
  Here $(*)$ follows since $\trnorm{\adj t}=\trnorm t$ by \cite[Theorem 18.11\,(f)]{conway00operator}.
  Thus holds for any trace-class $t$ and thus shows \eqref{lemma:traceclass.abs:opnorm}.
\end{proof}

\begin{lemma}[Circularity of the trace]\lemmalabel{lemma:circ.trace}
  Let $a=a_1\cdots a_n:\calH_A\to\calH_B$ where $a_i$ are bounded operators (not necessarily square).
  Let $b=b_1\cdots b_m:\calH_B\to\calH_A$ where the $b_i$ are bounded operators (not necessarily square), but one of them is a (square) trace-class operator.%
  \footnote{A \emph{square}\index{square operator} is an operator with equal domain and co-domain.
    E.g., $x:\calH_A\to\calH_A$ is square and $y:\calH_A\to\calH_B$ with $\calH_A\neq\calH_B$ is not.}
  \begin{compactenum}[(i)]
  \item\itlabel{lemma:circ.trace:ubu} If $b$ (as defined above) is square and $u$ is an isometry (not necessarily square), then $b$ and $ub\adj u$ are trace-class and $\tr ub\adj u=\tr b$ and $\trnorm {ub\adj u}=\trnorm b$.
  \item\itlabel{lemma:circ.trace:circ} The square operators $ab$ and $ba$ are trace-class and $\tr ab=\tr ba$.
  \end{compactenum}
\end{lemma}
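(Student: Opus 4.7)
The plan is to reduce both parts to the two-sided ideal property of trace-class operators and to the cyclicity identity $\tr(ST)=\tr(TS)$ for compatible $S,T$ with one factor trace-class; both are routine extensions to operators between possibly different Hilbert spaces of the single-space statements in, e.g., Conway Chapter~18.

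First, I factor $b = B_L\,b_i\,B_R$, where $b_i:\calH_i\to\calH_i$ is the given square trace-class factor, and $B_L:=b_1\cdots b_{i-1}:\calH_i\to\calH_A$ and $B_R:=b_{i+1}\cdots b_m:\calH_B\to\calH_i$ are bounded compositions of the remaining factors. The ideal property yields at once that $b$ is trace-class, and similarly that $ub\adj u$ (part (i)) and $ab$, $ba$ (part (ii)) are trace-class.

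For part (i), the trace identity uses a single cyclic swap: $\tr(u\cdot b\adj u) = \tr(b\adj u\cdot u) = \tr b$, since $\adj u u=1$. For the trace-norm claim, I observe that $(u\abs{b}\adj u)^2 = u\abs{b}^2\adj u = \adj{(ub\adj u)}(ub\adj u)$ (using $\adj u u=1$ in the middle) and that $u\abs{b}\adj u$ is positive (for any $\psi$, $\langle \psi, u\abs{b}\adj u\psi\rangle = \langle \adj u\psi, \abs{b}\adj u\psi\rangle \geq 0$). Uniqueness of the positive square root then gives $\abs{ub\adj u} = u\abs{b}\adj u$, so $\trnorm{ub\adj u} = \tr(u\abs{b}\adj u) = \tr\abs{b} = \trnorm b$, where the middle equality is the trace identity just established, applied to $\abs{b}$ in place of $b$.

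For part (ii), write $ab = (aB_L)(b_iB_R)$, with $b_iB_R$ trace-class and $aB_L$ bounded, and $ba = (B_Lb_i)(B_Ra)$, with $B_Lb_i$ trace-class and $B_Ra$ bounded. One cyclic swap each yields $\tr(ab) = \tr(b_iB_RaB_L)$ and $\tr(ba) = \tr(B_RaB_Lb_i)$. A final application of the standard single-space cyclicity within $\tracecl(\calH_i)$ identifies these two scalars, giving $\tr(ab)=\tr(ba)$. The main care needed throughout is in tracking the various Hilbert spaces and in citing (or briefly supplying) the multi-space versions of the cyclic-trace identity and the ideal property, as standard references formulate them only in the single-space setting; beyond that the argument is elementary.
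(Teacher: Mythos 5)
Your reductions are sound, and your trace-norm argument in part (i) — identifying $\abs{ub\adj u}=u\abs b\adj u$ via positivity and uniqueness of positive square roots, then applying the trace identity to $\abs b$ — is exactly what the paper does (in fact you justify the square-root step more explicitly than the paper). The gap is that every trace identity in your proof is an instance of the non-square two-factor cyclicity $\tr(ST)=\tr(TS)$ for $S:\calH_1\to\calH_2$, $T:\calH_2\to\calH_1$ with one factor trace-class, and all your trace-class conclusions rest on a non-square ideal property; you defer both to ``routine extensions'' of the single-space statements in \cite{conway00operator}. But those extensions are precisely what this lemma exists to establish — the cited reference only has the square versions — so deferring them leaves the proof without a non-circular starting point. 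Concretely, your very first step in part (i), $\tr(u\cdot b\adj u)=\tr(b\adj u\cdot u)$, is already an unproved instance of the statement being shown.

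The paper closes this in two moves, and you need some version of both. First, it proves the base case directly from the definition of the trace: for an isometry $u$ and square $t$, it shows $ut\adj u$ is trace-class iff $t$ is, and $\tr(ut\adj u)=\tr t$, by fixing an orthonormal basis $A$ of the domain, extending the orthonormal set $uA$ to a basis $T$ of the codomain, observing that $\adj u$ annihilates $T\setminus uA$, and computing the basis sum — no cyclic swap is used here. Second, it derives the non-square ideal property and the non-square cyclicity from the square case by embedding both Hilbert spaces isometrically into the direct sum $\calH_A\oplus\calH_B$ via the canonical injections $U_a,U_b$, rewriting each non-square factor $x$ as a square operator $U xV^\dagger$ on the direct sum, applying the single-space ideal property and cyclicity there, and transporting the traces back using the base case. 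If you supply these two steps (or cite a reference that states the $B(\calH,\calK)$ versions you invoke), the rest of your argument goes through as written.
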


Note that, e.g., \cite[Theorem 18.11\,(e)]{conway00operator} states the same as \lemmaref{lemma:circ.trace:circ} but it only applies when the involved operators are square.

\begin{proof}
  First, we show that for any isometry $u:\calH_t\to\calH_u$, and any $t\in\bounded(\calH_t)$,
  \begin{equation}\label{eq:utu}
    ut\adj u\text{ is trace-class} \iff t\text{ is trace-class}.
  \end{equation}
  Note that $\abs{ut\adj u}=u\abs t\adj u$.
  Fix an orthonormal basis $T$ of $\calH_t$.
  Then $uT$ is orthonormal and can be extended to an orthonormal basis $V\supseteq uT$.
  Then for $\psi\in V\setminus uT$, $\adj u\psi=0$ and thus $\adj\psi u\abs{t}\adj u\psi=0$.
  
  By definition of trace-class \cite[Definition 18.3]{conway00operator}, $ut\adj u$ is trace-class iff $\sum_{\psi\in V}\adj\psi\abs{ut\adj u}\psi<\infty$ iff $\sum_{\psi\in V}\adj\psi u\abs{t}\adj u\psi<\infty$
  iff $\sum_{\psi\in uT}\adj\psi u\abs{t}\adj u\psi<\infty$ 
  iff $\sum_{\phi\in T}\adj{(u\phi)}u\abs{t}\adj u(u\phi)<\infty$  (because $u$ is a bijection between $T$ and $uT$)
  iff $\sum_{\phi\in T}\adj{\phi}\abs{t}\phi<\infty$  (because $\adj uu=1$)
  iff $t$ is trace-class (by definition of trace-class).

  \medskip

  Next, we show any square operator $e:\calH_e\to\calH_e$ that is the product of bounded operators and at least one trace-class operator, we have that
  \begin{equation}
    \text{$e$ is trace-class.}\label{eq:e.tc}
  \end{equation}
  We can write $e$ as $e=ctd$ for $t\in\tracecl(\calH_t)$ and bounded $c:\calH_t\to\calH_e$ and $d:\calH_e\to\calH_t$ (for some Hilbert spaces $\calH_t$).
  Let $U_t:\calH_t\to\calH_t\oplus\calH_e$ and $U_e:\calH_e\to\calH_t\oplus\calH_e$ be arbitrary isometries (e.g., the canonical embeddings).
  Then $U_ectd\adj{U_e} = U_ec\adj{U_t} \cdot U_t t\adj{U_t}\cdot U_td\adj{U_e}$. The right hand side is a product of square operators, and $U_t t\adj{U_t}$ is trace-class by \eqref{eq:utu}.
  Thus their product $U_ectd\adj{U_e}$ is trace-class \cite[Theorem 18.11\,(a)]{conway00operator}.
  By~\eqref{eq:utu}, this implies that $e=ctd$ is trace-class. This shows \eqref{eq:e.tc}.

  We show \eqref{lemma:circ.trace:ubu}.
  We have that $b$ and $ub\adj u$ are trace-class by \eqref{eq:e.tc} (with $e:=b$ and $e:=ub\adj u$, respectively).
  Since $b$ is square, $b:\calH_A\to\calH_A$ and $u:\calH_A\to\calH_u$ for some $\calH_u$.
  Fix an orthonormal basis $A$ of $\calH_A$.
  Then~$uA$ is orthonormal and can be extended to an orthonormal basis $T\supseteq uA$.
  By definition of the trace \cite[Definition~18.10]{conway00operator}, $\tr b = \sum_{\psi\in A}\adj\psi b\psi$ and $\tr ub\adj u=\sum_{\psi\in T}\adj\psi ub\adj u\psi$.
  Thus
  \begin{align*}
    \tr ub\adj u
    &=
    \sum_{\psi\in uA} \adj\psi ub\adj u\psi +
    \sum_{\hskip-1cm\psi\in T\setminus uA\hskip-1cm} \adj\psi ub\adj u\psi
    \starrel=
    \sum_{\psi\in uA} \adj\psi ub\adj u\psi
    \\&
    \starstarrel=
    \sum_{\phi\in A} \adj{(u\phi)} ub\adj u(u\phi)
    =
    \sum_{\phi\in A} \adj{\phi} \adj u ub\adj u u\phi
    \tristarrel=
    \sum_{\phi\in A} \adj{\phi} b\phi
    =
    \tr b.
  \end{align*}
  Here $(*)$ follows because $\psi\in T\setminus uA$ is orthogonal to the image of $u$ and thus $\adj u\psi=0$.
  And $(**)$ follows because $u$ is injective and thus a bijection between $A$ and $uA$.
  And $(*{*}*)$ follows because $u$ is an isometry and thus $\adj uu=1$.

  Finally, we show that $\trnorm{ub\adj u} = \trnorm b$.
  Note that $\abs{ub\adj u} = u\abs b\adj u$.
  Since $b$ is trace-class, $\abs b$ is trace-class (\lemmaref{lemma:traceclass.abs:tc}).
  Thus we can apply the part of \eqref{lemma:circ.trace:ubu} that we have shown so far, namely that $\tr ub\adj u=\tr b$, to $\abs b$ instead of $b$.
  That is, we have $\tr u\abs b\adj u=\tr\abs b$.
  Thus $\trnorm{ub\adj u} \starrel= \tr\abs{ub\adj u} = \tr u\abs b\adj u = \tr\abs b \starrel= \trnorm b$ where $(*)$ is by \lemmaref{lemma:traceclass.abs:norm}.
  
  This shows \eqref{lemma:circ.trace:ubu}.

  \medskip

  We show \eqref{lemma:circ.trace:circ}.
  The fact that $ab$ and $ba$ are trace-class follows from \eqref{eq:e.tc} because $ab$ and $ba$ are square.
  Let $U_a:\calH_A\to\calH_A\oplus\calH_B$ and $U_b:\calH_B\to\calH_A\oplus\calH_B$ be arbitrary isometries (e.g., the canonical embeddings).
  Then we have
  \begin{equation*}
    \tr ab
    \eqrefrel{lemma:circ.trace:ubu}=
    \tr U_bab\adj{U_b}
    \starrel=
    \tr U_ba\adj{U_a} \cdot U_ab\adj{U_b}
    \starstarrel=
    \tr U_ab\adj{U_b} \cdot U_ba\adj{U_a}
    \starrel=
    \tr U_aba\adj{U_a}
    \eqrefrel{lemma:circ.trace:ubu}=
    \tr ba.
  \end{equation*}
  Here the $\eqrefrel{lemma:circ.trace:ubu}=$ use \eqref{lemma:circ.trace:ubu} with $b:=ab$ and $b:=ba$, respectively.
  And $(*)$ is since $U_a,U_b$ are isometries.
  And $(**)$ follows since $U_ab\adj{U_b}$ and $U_ba\adj{U_a}$ are both bounded square operators, and  $U_ab\adj{U_b}$ is trace-class by \eqref{eq:e.tc}, and thus we can apply the circularity of the trace for square operators \cite[Theorem 18.10 (e)]{conway00operator}.

  This shows \eqref{lemma:circ.trace:circ}.
\end{proof}

\begin{lemma}\label{lemma:normal}
  $x\mapsto \adj x$, $x\mapsto Ux\adj U$,
  $x\mapsto xy$, and $y\mapsto xy$ are all weak*-continuous and bounded.
  (Not linear though in the case of $x\mapsto \adj x$.)
  Here $x,y\in\mathbf A$ are bounded operators and $U:\calH_A\to\calH_B$ is unitary.
\end{lemma}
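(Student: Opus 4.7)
The plan is to verify each of the four maps separately, by unfolding the definition of weak*-continuity: a (anti)linear map $F:\mathbf A\to\mathbf B$ is weak*-continuous iff for every trace-class operator $a$ on $\calH_B$ and every weak*-convergent net $x_i\to x$ in $\mathbf A$, we have $\tr(a\cdot F(x_i))\to\tr(a\cdot F(x))$. In each case, the strategy is to use the circularity of the trace (\autoref{lemma:circ.trace}) to rewrite $\tr(a\cdot F(x_i))$ in the form $\tr(a'\cdot x_i)$ for some trace-class $a'$ depending only on $a$ and on the fixed data ($y$ or $U$); this immediately yields weak*-continuity from the weak*-convergence of $x_i$. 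Boundedness in each case will be immediate from standard operator norm inequalities.

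For $x\mapsto\adj x$ (antilinear): Given trace-class $a\in\tracecl(\calH_A)$, I would compute $\tr(a\adj{x_i})=\overline{\tr(x_i\adj a)}=\overline{\tr(\adj a\cdot x_i)}$ using $\overline{\tr t}=\tr\adj t$ together with \lemmaref{lemma:circ.trace:circ}. Since $\adj a$ is trace-class (\cite[Theorem 18.11\,(f)]{conway00operator}), weak*-convergence of $x_i$ gives $\tr(\adj a\cdot x_i)\to\tr(\adj a\cdot x)$, and continuity of complex conjugation yields $\tr(a\adj{x_i})\to\tr(a\adj x)$. Boundedness follows from $\norm{\adj x}=\norm x$.

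For $x\mapsto Ux\adj U$ with $U:\calH_A\to\calH_B$ unitary: Given trace-class $a\in\tracecl(\calH_B)$, \lemmaref{lemma:circ.trace:circ} gives $\tr(aUx_i\adj U)=\tr(\adj UaU\cdot x_i)$, and $\adj UaU$ is trace-class by \lemmaref{lemma:circ.trace:ubu} applied to the isometry $\adj U$. Hence weak*-continuity is immediate. Boundedness follows from $\norm{Ux\adj U}\leq\norm U\norm x\norm{\adj U}=\norm x$. For the multiplication maps $x\mapsto xy$ and $y\mapsto xy$, given trace-class $a$, \lemmaref{lemma:circ.trace:circ} gives $\tr(axy)=\tr(ya\cdot x)$ or $\tr(axy)=\tr(ax\cdot y)$ respectively, and $ya$, $ax$ are trace-class because the trace-class operators form a two-sided ideal inside $\bounded(\calH)$ (\cite[Theorem 18.11\,(a)]{conway00operator}). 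Both maps therefore preserve weak*-convergent nets, and boundedness follows from $\norm{xy}\leq\norm x\norm y$.

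The main thing to take care of is bookkeeping of which Hilbert space each operator lives on, especially for the sandwich map where $U$ crosses between $\calH_A$ and $\calH_B$, so a trace-class test operator $a$ on the codomain has to be transported to a trace-class operator $\adj UaU$ on the domain side before applying the weak*-convergence hypothesis. There is no real difficulty beyond this: all steps reduce to routine application of the circularity of the trace (already established) together with the ideal property of trace-class operators.
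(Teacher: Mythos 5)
Your proof is correct. For three of the four maps ($x\mapsto\adj x$ and the two multiplication maps) it is essentially identical to the paper's argument: establish $\tr\adj t=\conj{\tr t}$, then use the circularity of the trace and the ideal property of trace-class operators to move the variable operator next to a fixed trace-class test operator, and conclude from the definition of weak*-convergence. The one place where you genuinely diverge is the sandwich map $x\mapsto Ux\adj U$. The paper treats it abstractly: it observes that the map is a $*$-isomorphism, hence completely positive, hence bounded, and then invokes the general fact that $*$-isomorphisms between von Neumann algebras are normal (weak*-continuous). You instead give a direct computation, $\tr\pb\paren{aUx_i\adj U}=\tr\pb\paren{(\adj UaU)x_i}$ via \lemmaref{lemma:circ.trace:circ}, with $\adj UaU$ trace-class by \lemmaref{lemma:circ.trace:ubu}. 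Your route is more elementary and uniform with the other three cases, and it avoids the external citations to the theory of normal $*$-homomorphisms; the paper's route is shorter on the page but leans on heavier machinery. Both are valid, and your careful bookkeeping of which Hilbert space each operator acts on (transporting the test operator $a$ on $\calH_B$ to $\adj UaU$ on $\calH_A$) is exactly the point that needs attention and is handled correctly.
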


\begin{proof}
  First, we show the auxiliary fact that $\tr\adj x=\conj{\tr x}$.
  ($\symbolindexmark\conj{\conj c}$ is the complex conjugate of $c$.)
  By definition of the trace \cite[Definition 18.10]{conway00operator}, $\tr\adj x=\sum_e \adj e\adj xe=\sum_e \overline{\adj exe}=\overline{\sum_e \adj ex e}=\conj{\tr x}$ where $e$ ranges over an orthonormal basis. Thus $\tr\adj x=\conj{\tr x}$.
  
  We show that   $x\mapsto \adj x$ is weak*-continuous and bounded.
  It is bounded because $\norm{\adj x}=\norm x$ \cite[Proposition 2.7]{conway13functional}. We show  weak*-continuity: Consider a net $x_i$ that weak*-converges to~$ x$. Then $\tr ax_i\to \tr ax$ for all trace-class $a$.
  Then for all trace-class $b$, $\tr b(\adj{x_i}-\adj x)
  = \conj{\tr (x_i-x)\adj b} \starrel= \conj{\tr \adj b (x_i-x)} \to 0$.
  Here $(*)$ is the circularity of the trace \cite[Theorem 18.11\,(e)]{conway00operator}. And $\to0$ follows since $\tr ax_i\to \tr ax$ holds in particular with $a:=\adj b$.
  Thus $\adj{x_i}\to \adj x$ in the weak*-topology.
  So $x_i\to x$ implies $\adj{x_i}\to\adj x$ for every net $x_i$. Hence   $x\mapsto \adj x$ is weak*-continuous.

  We show that $x\mapsto Ux\adj U$ is weak*-continuous and normal.
  It is a *-isomorphism.
  Thus is it completely positive \cite[Example 34.3\,(a)]{conway00operator}.
  Thus it is bounded \cite[Proposition 33.4]{conway00operator}.
  And by \cite[Proposition 46.6]{conway00operator}, *-isomorphisms between von Neumann algebras are weak*-continuous, hence  $x\mapsto Ux\adj U$ is weak*-continuous.
  
  Let $F(y):=xy$. To show that $F$ is weak*-continuous, we need
  to show that for any net $y_i$, if $y_i$ weak*-converges to $y$,
  then $F(y_i)$ weak*-converges to $F(y)$.  Assume $y_i$
  weak*-converges to $y$.  By definition of the weak*-topology,
  $\tr ay_i\to \tr ay$ for any trace-class operator $a$. Thus
  $\tr bF(y_i)=\tr (bx)y_i\to \tr (bx)y=\tr bF(y)$ for any trace-class
  operator $b$. (Since $bx$ is trace-class if $b$ is trace-class and $x$ is bounded \cite[Theorem 18.11\,(a)]{conway00operator}.)
  Thus $F(y_i)$ weak*-converges to $F(y)$.
  Hence $F$ is weak*-continuous.
  Weak*-continuity of $F'(x):=xy$ is proven analogously, using additionally that $\tr ax_i=\tr x_ia$ by the
  circularity of the trace.
  $F,F'$ are $\norm a$-bounded because $\norm{ax},\norm{xa}\leq \norm a\norm x$ for bounded operators $a,x$.
  Hence $F,F'$ are weak*-continuous and bounded.
\end{proof}

\begin{lemma}\label{lemma:tensor.abs}
  For bounded operators $a$, $b$, we have $\abs{a\otimes b}=\abs a \otimes \abs b$.
  In particular, if $a,b$ are positive, so is $a \otimes b$.
\end{lemma}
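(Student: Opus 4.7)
The plan is to reduce everything to two basic facts about the tensor product of operators: the adjoint and product behave componentwise, i.e.\ $(a\otimes b)^* = \adj a\otimes\adj b$ and $(a\otimes b)(c\otimes d) = ac\otimes bd$. Both follow from the defining property of the tensor product $a\otimes b$ stated in Section~\ref{sec:infinite} (namely $(a\otimes b)(\psi\otimes\phi) = a\psi\otimes b\phi$) together with the universal property on elementary tensors (or equivalently, by checking the equation on the total set $\{\psi\otimes\phi\}$ which is dense in $\calH_A\otimes\calH_B$).

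First I would prove the ``in particular'' statement: if $a,b$ are positive, then $a\otimes b$ is positive. Write $a = \adj cc$ and $b = \adj dd$ for some bounded operators $c,d$ (possible by \cite[Proposition~II.3.2 style]{conway13functional} characterization of positive operators as $\adj cc$). Then
\[
  a\otimes b \ =\ \adj cc\otimes\adj dd \ =\ (\adj c\otimes\adj d)(c\otimes d) \ =\ (c\otimes d)^*(c\otimes d),
\]
which is manifestly positive. This simultaneously handles the second assertion of the lemma.

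For the main claim, recall that $\abs x$ is by definition the unique positive square root of the positive operator $\adj xx$. Compute
\[
  (a\otimes b)^*(a\otimes b) \ =\ (\adj a\otimes\adj b)(a\otimes b) \ =\ \adj aa\otimes\adj bb \ =\ \abs a^2\otimes\abs b^2.
\]
On the other hand, since $\abs a,\abs b$ are positive, the first part shows that $\abs a\otimes\abs b$ is positive, and using componentwise multiplication one more time,
\[
  (\abs a\otimes\abs b)^2 \ =\ \abs a^2\otimes\abs b^2 \ =\ (a\otimes b)^*(a\otimes b).
\]
Thus $\abs a\otimes\abs b$ is a positive operator whose square equals $(a\otimes b)^*(a\otimes b)$. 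By uniqueness of the positive square root of a positive operator (a standard fact in operator theory), $\abs a\otimes\abs b = \abs{a\otimes b}$.

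The only step requiring any care is the componentwise behaviour of adjoint and product on $(\mathbf A\otimes\mathbf B) = \bounded(\calH_A\otimes\calH_B)$; these are standard but depend on the specific construction of the tensor product of von Neumann algebras used in Section~\ref{sec:infinite}. Everything else is a one-line application of uniqueness of positive square roots.
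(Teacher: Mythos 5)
Your proof is correct and follows essentially the same route as the paper's: establish positivity of $a\otimes b$ for positive $a,b$ via the $\adj cc$ decomposition, then check that $\abs a\otimes\abs b$ is positive and squares to $\adj{(a\otimes b)}(a\otimes b)$, concluding by uniqueness of the positive square root (the paper cites \cite[Equation IV.1\,(5), (6)]{takesaki} for the componentwise behaviour of adjoints and products under $\otimes$, exactly the facts you flag as the only delicate point).
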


\begin{proof}
  We first show the ``in particular'' part. If $a$ and $b$ are positive, by definition, there are operators $\hat a,\hat b$ such that $\adj{\hat a}\hat a=a$, $\adj{\hat b}\hat b = b$.
  Then $\adj{(\hat a\otimes\hat b)}(\hat a\otimes\hat b)
  = \adj{\hat a}\hat a\otimes \adj{\hat b}\hat b = a\otimes b$.
  (That $\adj\cdot$ and multiplication commute with $\otimes$ is shown in \cite[Equation IV.1\,(5), (6)]{takesaki}.)
  Thus $a\otimes b$ is positive.

  We now show $\abs{a\otimes b}=\abs a \otimes \abs b$. By definition, $\abs a$ is the unique positive operator with $\adj{\abs a}\abs a = \adj aa$. Analogously for $\abs b$ and $\abs{a\otimes b}$. Then $\abs a\otimes \abs b$ is positive by the ``in particular'' part. Furthermore $\adj{(\abs a\otimes \abs b)}(\abs a\otimes \abs b)
  = \adj{\abs a}\abs a \otimes \adj{\abs b}\abs b = \adj aa \otimes \adj bb
  = \adj{(a\otimes b)}(a\otimes b)$. Thus
  $\abs a \otimes \abs b$ is $\abs{a\otimes b}$.
\end{proof}

\begin{lemma}\label{lemma:alg.dense}
  The \emph{algebraic tensor product}\index{algebraic tensor product}\index{tensor product!algebraic} $\mathbf A\symbolindexmark\atensor\atensor\mathbf B$ is weak*-dense in $\mathbf A\tensor\mathbf B$.
  ($\mathbf A\atensor\mathbf B$ is defined as the subspace of $\mathbf A\otimes\mathbf B$ spanned by the operators $a\otimes b$ for $a\in\mathbf A$, $b\in\mathbf B$.)
\end{lemma}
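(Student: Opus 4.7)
The statement is essentially a restatement of the definition of the von Neumann algebra tensor product, which the paper takes from Definition~IV.1.3 of \cite{takesaki}. The plan is to reduce the claim to that definition.

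First I would recall that, by the cited definition, $\mathbf{A}\tensor\mathbf{B}$ is the von Neumann algebra on $\calH_A\otimes\calH_B$ generated by the elementary tensors $\{a\otimes b : a\in\mathbf{A},\ b\in\mathbf{B}\}$. The von Neumann algebra generated by a set equals the weak*-closure of the $*$-subalgebra it generates. So I need only identify that $*$-subalgebra with $\mathbf{A}\atensor\mathbf{B}$ as defined here (the linear span of elementary tensors).

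Next I would verify that the linear span $\mathbf{A}\atensor\mathbf{B}$ is itself a unital $*$-subalgebra of $\bounded(\calH_A\otimes\calH_B)$. Closure under multiplication follows from $(a_1\otimes b_1)(a_2\otimes b_2)=(a_1a_2)\otimes(b_1b_2)$, closure under adjoints from $\adj{(a\otimes b)}=\adj a\otimes \adj b$, and unitality from $1_{\mathbf{A}}\otimes 1_{\mathbf{B}}=1$. Both identities are Equations IV.1\,(5),(6) of \cite{takesaki}, already cited in \autoref{lemma:tensor.abs}. Hence the $*$-subalgebra generated by the elementary tensors is precisely $\mathbf{A}\atensor\mathbf{B}$.

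Therefore $\mathbf{A}\tensor\mathbf{B}$ is the weak*-closure of $\mathbf{A}\atensor\mathbf{B}$, which is exactly the statement that $\mathbf{A}\atensor\mathbf{B}$ is weak*-dense in $\mathbf{A}\tensor\mathbf{B}$. There is no real obstacle: the entire content is in unpacking Takesaki's definition and checking the three easy algebraic closure properties. The only minor care point is to invoke the fact that for a unital $*$-subalgebra of $\bounded(\calH)$, the weak*-closure coincides with the double commutant (von Neumann's bicommutant theorem, e.g.\ \cite[Theorem~II.3.9]{takesaki}), which justifies the equality between ``von Neumann algebra generated by'' and ``weak*-closure of the generated $*$-algebra''.
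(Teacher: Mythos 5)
Your proof is correct, but it takes a genuinely more direct route than the paper. You reduce everything to the von Neumann density (bicommutant) theorem: $\mathbf A\atensor\mathbf B$ is a unital $*$-subalgebra of $\bounded(\calH_A\otimes\calH_B)$, so the von Neumann algebra it generates --- which is $\mathbf A\tensor\mathbf B$ by Takesaki's Definition IV.1.3 --- is its weak*-closure, and you are done. The paper instead routes through the spatial tensor product: it notes that $\mathbf A\atensor\mathbf B$ is norm-dense (hence weak*-dense) in $\mathbf A\stensor\mathbf B$, and then shows $\mathbf A\stensor\mathbf B$ is weak*-dense in $\mathbf A\tensor\mathbf B$ by hand, combining the Kaplansky density theorem with the facts that SOT- and WOT-closures of convex sets coincide and that WOT and weak* agree on bounded sets. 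That longer argument is essentially a re-derivation, in this special case, of the $\sigma$-weak part of the density theorem you invoke; its only advantage is that it leans solely on the specific results the paper already cites from Conway. The one care point in your version is that the bicommutant theorem must be taken in a form that gives density in the $\sigma$-weak (weak*) topology, not merely the WOT/SOT version; Takesaki's Theorem II.3.9 does cover this, but if you only had the weak-operator statement you would need exactly the convexity-plus-boundedness argument the paper spells out to upgrade WOT-density to weak*-density. Your verification that $\mathbf A\atensor\mathbf B$ is a unital $*$-subalgebra (via $(a_1\otimes b_1)(a_2\otimes b_2)=(a_1a_2)\otimes(b_1b_2)$, $\adj{(a\otimes b)}=\adj a\otimes\adj b$, and $1\otimes 1=1$) is the right thing to check and closes the argument.
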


\begin{proof}
  The \emph{spatial tensor product}\index{spatial tensor product}\index{tensor product!spatial} $\mathbf A\symbolindexmark\stensor\stensor\mathbf B$ is defined as the norm-closure of $\mathbf A\atensor\mathbf B$.
  Thus $\mathbf A\atensor\mathbf B$ is norm-dense in $\mathbf A\stensor\mathbf B$. Since the weak*-topology is coarser than the norm-topology,
  $\mathbf A\atensor\mathbf B$ is also weak*-dense in $\mathbf A\stensor\mathbf B$.

  The tensor product $\mathbf A\otimes\mathbf B$ of von Neumann algebras is defined as the smallest von Neumann algebra containing   $\mathbf A\atensor\mathbf B$ \cite[Definition~IV.1.3]{takesaki}.
  By definition of von Neumann algebras \cite[Definition~13.1]{conway00operator}), this is the SOT-closure (strong operator topology) of $\mathbf A\atensor\mathbf B$.
  Since the norm-topology is finer than the SOT, the SOT-closure of the norm-closure of a set is the SOT-closure. Thus $\mathbf A\otimes\mathbf B$ is also the SOT-closure of $\mathbf A\stensor\mathbf B$.

  We will now show that $\mathbf A\stensor\mathbf B$ is weak*-dense in $\mathbf A\otimes\mathbf B$:
  Fix $x\in \mathbf A\otimes\mathbf B$. W.l.o.g.~$\norm x\leq 1$.
  Then~$x$ is in the intersection of the unit ball $S$ with SOT-closure of $\mathbf A\stensor\mathbf B$.
  By the Kaplansky Density Theorem \cite[Theorem 44.1\,(a)]{conway00operator}, and since $\mathbf A\stensor\mathbf B$ is a C*-algebra, this is the same as the SOT-closure of $S\cap(\mathbf A\stensor\mathbf B)$.  $S\cap(\mathbf A\stensor\mathbf B)$ is a convex subset of $\bounded(\calH_A\otimes\calH_B)$, so its SOT-closure equals its WOT-closure \cite[Corollary 8.2]{conway00operator}. So $x$ is in the WOT-closure
  of $S\cap(\mathbf A\stensor\mathbf B)$.
  Thus there is a net~$x_i$ with $x_i\in S\cap(\mathbf A\stensor\mathbf B)$ that WOT-converges to $x$. 
  Since $x_i\in S\cap(\mathbf A\stensor\mathbf B)$ is a bounded subset of $\bounded(\calH_A\otimes\calH_B)$,
  the WOT and the weak*-topology coincide on it \cite[Proposition 20.1\,(b)]{conway00operator}).
  So $x_i$ weak*-converges to $x$. So $x$ is in the weak*-closure of $\mathbf A\stensor\mathbf B$. Since this holds for any $x$, $\mathbf A\stensor\mathbf B$ is weak*-dense in $\mathbf A\otimes\mathbf B$.

  So $\mathbf A\atensor\mathbf B$ is weak*-dense in  $\mathbf A\stensor\mathbf B$ which is weak*-dense in $\mathbf A\otimes\mathbf B$.  So $\mathbf A\atensor\mathbf B$ is weak*-dense in $\mathbf A\otimes\mathbf B$.  
\end{proof}

\begin{lemma}\label{lemma:tensor.overlap.vec}
  Let $U_\alpha:(\calH_A\otimes\calH_B)\otimes\calH_C\to \calH_A\otimes(\calH_B\otimes\calH_C)$ be the unitary with $U_\alpha\pb\paren{(\psi\otimes\phi)\otimes\xi}=\psi\otimes(\phi\otimes\xi)$.
  (This is the canonical isomorphism between those spaces.)
  Fix $\psi\in\calH_A$, $\overlpx\in\calH_B\otimes\calH_C$, $\overlpp\in\calH_A\otimes \calH_B$, $\xi\in\calH_C$.
  Assume that $\psi,\xi\neq0$ and $\overlpp\otimes \xi = \adj{U_\alpha} (\psi\otimes \overlpx)$.
  Then there exists a $\phi\in\calH_B$ such that $\overlpp=\psi\otimes\phi$ and $\overlpx=\phi\otimes\xi$.
\end{lemma}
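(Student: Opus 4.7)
The plan is to reduce the equation $\overlpp\otimes\xi=\adj{U_\alpha}(\psi\otimes\overlpx)$ to a coefficient equation in orthonormal bases, and then read off $\phi$ from that equation using the fact that $\psi$ and $\xi$ are nonzero. Fix orthonormal bases $\{e_i\}_{i\in I}$, $\{f_j\}_{j\in J}$, $\{g_k\}_{k\in K}$ of $\calH_A,\calH_B,\calH_C$, and expand $\psi=\sum_i p_i e_i$, $\xi=\sum_k q_k g_k$, $\overlpp=\sum_{ij}\alpha_{ij}\,e_i\otimes f_j$, $\overlpx=\sum_{jk}\beta_{jk}\,f_j\otimes g_k$, with $\sum_i|p_i|^2$, $\sum_k|q_k|^2$, $\sum_{ij}|\alpha_{ij}|^2$, $\sum_{jk}|\beta_{jk}|^2$ all finite. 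Since $\{(e_i\otimes f_j)\otimes g_k\}$ is an orthonormal basis of $(\calH_A\otimes\calH_B)\otimes\calH_C$ and $U_\alpha$ sends this basis to the corresponding basis of the right-hand tripartite space, the hypothesis is equivalent to the identity of coefficients
\[
  \alpha_{ij}\,q_k \;=\; p_i\,\beta_{jk} \qquad\text{for all }i,j,k.
\]

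Next, I would use nonvanishing of $\psi$ and $\xi$ to define $\phi$. Pick $i_0$ with $p_{i_0}\neq0$ and $k_0$ with $q_{k_0}\neq0$. Setting $k=k_0$ in the coefficient equation and solving for $\alpha_{ij}$ gives $\alpha_{ij}=p_i\,(\beta_{jk_0}/q_{k_0})$; setting $i=i_0$ and solving for $\beta_{jk}$ gives $\beta_{jk}=(\alpha_{i_0 j}/p_{i_0})\,q_k$. Consistency at $(i,j,k)=(i_0,j,k_0)$ shows $\alpha_{i_0 j}/p_{i_0}=\beta_{jk_0}/q_{k_0}$, so both formulas define the same quantity, which I call $\phi_j$. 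Set $\phi:=\sum_j \phi_j f_j$.

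Then I would verify that $\phi\in\calH_B$ and that the desired factorizations hold. Square-summability: $\sum_j|\phi_j|^2=|p_{i_0}|^{-2}\sum_j|\alpha_{i_0 j}|^2\le|p_{i_0}|^{-2}\sum_{ij}|\alpha_{ij}|^2<\infty$, so $\phi\in\calH_B$ is well defined. Substituting $\alpha_{ij}=p_i\phi_j$ into the expansion of $\overlpp$ gives $\overlpp=\sum_{ij}p_i\phi_j\,e_i\otimes f_j=\psi\otimes\phi$, and substituting $\beta_{jk}=\phi_jq_k$ into the expansion of $\overlpx$ gives $\overlpx=\sum_{jk}\phi_jq_k\,f_j\otimes g_k=\phi\otimes\xi$.

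The only real obstacle is confirming that the infinite sums behave correctly, i.e., that the coefficient identity can be extracted basis-vector-wise despite the possibly infinite-dimensional ambient spaces and that $\phi$ lies in $\calH_B$; both reduce to routine invocations of the square-summability of the coefficients of $\overlpp$. The argument does not otherwise depend on dimension, and is essentially a one-line statement in the Schmidt/coordinate picture once the bases are fixed.
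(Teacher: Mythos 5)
Your proof is correct, but it takes a genuinely different route from the paper's. The paper argues coordinate-free: it inserts the rank-one projector $\psi\adj\psi\otimes(1\otimes 1)$ into the right-hand side (implicitly taking $\psi$ normalized; strictly one should use $\psi\adj\psi/\norm\psi^2$), transports it through $\adj{U_\alpha}$ to $(\psi\adj\psi\otimes 1)\otimes 1$, and concludes from $\overlpp\otimes\xi=\bigl((\psi\adj\psi\otimes1)\overlpp\bigr)\otimes\xi$ and $\xi\neq0$ that $\overlpp$ lies in the range of $\psi\adj\psi\otimes1$, hence $\overlpp=\psi\otimes\phi$; substituting back and cancelling $\psi\neq0$ then yields $\overlpx=\phi\otimes\xi$. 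You instead expand everything in orthonormal bases, reduce the hypothesis to the coefficient identity $\alpha_{ij}q_k=p_i\beta_{jk}$, and solve for $\phi_j$ by evaluating at indices $i_0,k_0$ where $p_{i_0},q_{k_0}\neq0$; the consistency check and the square-summability estimate $\sum_j\abs{\phi_j}^2\le\abs{p_{i_0}}^{-2}\sum_{ij}\abs{\alpha_{ij}}^2$ are exactly the points that need care, and you handle both. Your argument is more elementary and makes the convergence issues fully explicit (and, as a bonus, does not need $\psi$ normalized anywhere); the paper's is shorter, avoids choosing bases, and fits the operator-algebraic style used elsewhere in the appendix (e.g.\ it parallels the structure of the subsequent operator version, \autoref{lemma:tensor.overlap}, where a basis-free argument is harder to avoid).
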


\begin{proof}
  We have
  \begin{align*}
    \overlpp\otimes\xi
    &=
    \adj{U_\alpha}(\psi\otimes\overlpx)
    =
    \adj{U_\alpha}(\psi\adj\psi\otimes 1)(\psi\otimes\overlpx)
    =
    \adj{U_\alpha}(\psi\adj\psi\otimes(1\otimes1))(\psi\otimes\overlpx) \\
    &=
    \adj{U_\alpha}(\psi\adj\psi\otimes(1\otimes1))U_\alpha(\overlpp\otimes\xi) 
    =
    ((\psi\adj\psi\otimes1)\otimes1)(\overlpp\otimes\xi)
    = (\psi\adj\psi\otimes1)\overlpp \otimes \xi.
  \end{align*}
  Since $\xi\neq0$, this implies $\overlpp = (\psi\adj\psi\otimes1)\overlpp$.
  Thus $\overlpp$ is in the image of the projector $\psi\adj\psi\otimes1$.
  Hence $\overlpp$ is of the form $\overlpp=\psi\otimes\phi$ for some $\phi\in\calH_B$ as desired.

  Then
  \begin{align*}
    \psi \otimes \overlpx
    =
    U_\alpha (\overlpp \otimes \xi)
    =
    U_\alpha ((\psi\otimes\phi)\otimes\xi)
    = \psi \otimes (\phi\otimes\xi).
  \end{align*}
  Since $\psi\neq0$, this implies $\overlpx = \phi\otimes\xi$ are desired.
\end{proof}

\begin{lemma}\label{lemma:tensor.overlap}
  Let $U_\alpha:(\calH_A\otimes\calH_B)\otimes\calH_C\to \calH_A\otimes(\calH_B\otimes\calH_C)$ be the unitary with $U_\alpha((\psi\otimes\phi)\otimes\xi)=\psi\otimes(\phi\otimes\xi)$.
  (This is the canonical isomorphism between those spaces.)
  Fix $a\in\mathbf A$, $\overlbc\in\mathbf B\otimes\mathbf C$, $\overlab\in\mathbf A\otimes \mathbf B$, $c\in\mathbf C$.
  Assume that $a,c\neq0$ and $\overlab\otimes c = \adj{U_\alpha} (a\otimes \overlbc) U_\alpha$.
  Then there exists some $b\in\mathbf B$ such that $\overlab=a\otimes b$ and $\overlbc=b\otimes c$.
\end{lemma}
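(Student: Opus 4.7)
The plan is to reduce the operator statement to the vector statement (Lemma \ref{lemma:tensor.overlap.vec}) by evaluating both sides on simple tensors. Fix once and for all $\psi_0 \in \calH_A$ and $\xi_0 \in \calH_C$ with $a\psi_0 \neq 0$ and $c\xi_0 \neq 0$ (which exist because $a,c \neq 0$). Applying both sides of the hypothesis to $(\psi_0 \otimes \phi) \otimes \xi_0$ and unfolding $U_\alpha$ yields
\[
  \overlab(\psi_0 \otimes \phi) \otimes c\xi_0 \;=\; \adj{U_\alpha}\pb\paren{a\psi_0 \otimes \overlbc(\phi \otimes \xi_0)}.
\]
By Lemma \ref{lemma:tensor.overlap.vec} (with $\psi := a\psi_0$, $\xi := c\xi_0$, $\overlpp := \overlab(\psi_0\otimes\phi)$, $\overlpx := \overlbc(\phi\otimes\xi_0)$), there is a unique vector, call it $b(\phi)$, such that $\overlab(\psi_0\otimes\phi) = a\psi_0 \otimes b(\phi)$ and $\overlbc(\phi\otimes\xi_0) = b(\phi)\otimes c\xi_0$. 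This defines a map $b : \calH_B \to \calH_B$.

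Next I would verify that $b$ is linear and bounded, so $b \in \mathbf B$. Linearity follows from the first defining equation: $\overlab$ is linear in its argument, so $a\psi_0 \otimes b(\lambda\phi_1 + \phi_2) = \overlab(\psi_0 \otimes (\lambda\phi_1+\phi_2)) = \lambda\, a\psi_0 \otimes b(\phi_1) + a\psi_0 \otimes b(\phi_2)$, and since $a\psi_0 \neq 0$ we may cancel it. Boundedness follows similarly: $\norm{a\psi_0}\,\norm{b(\phi)} = \norm{\overlab(\psi_0\otimes\phi)} \leq \norm{\overlab}\,\norm{\psi_0}\,\norm{\phi}$, giving a uniform bound on $\norm{b}$.

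Finally I would verify that $\overlab = a \otimes b$ and $\overlbc = b \otimes c$ by checking them on simple tensors (which are total in $\calH_A\otimes\calH_B$ resp.\ $\calH_B\otimes\calH_C$) and invoking that bounded operators agreeing on a total set are equal. Applying the hypothesis to $(\psi\otimes\phi)\otimes\xi_0$ for an arbitrary $\psi$ and using the vector lemma again (when $a\psi \neq 0$) gives $\overlab(\psi\otimes\phi) = a\psi\otimes b'(\phi)$ and $\overlbc(\phi\otimes\xi_0) = b'(\phi)\otimes c\xi_0$ for some $b'(\phi)$; but the second equation combined with $c\xi_0 \neq 0$ forces $b'(\phi) = b(\phi)$, so $\overlab(\psi\otimes\phi) = a\psi \otimes b(\phi) = (a\otimes b)(\psi\otimes\phi)$. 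The degenerate case $a\psi = 0$ is handled directly: the right-hand side of the original equation then evaluates to $0$ on $(\psi\otimes\phi)\otimes\xi_0$, forcing $\overlab(\psi\otimes\phi)\otimes c\xi_0 = 0$ and hence $\overlab(\psi\otimes\phi) = 0 = (a\otimes b)(\psi\otimes\phi)$. The argument for $\overlbc = b\otimes c$ is symmetric, using test vectors $(\psi_0\otimes\phi)\otimes\xi$.

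The main obstacle I anticipate is the bookkeeping of showing that the $\phi'$ produced by the vector lemma depends only on $\phi$ and not on the chosen $\psi_0,\xi_0$; this is what enables defining the single operator $b$ and what makes the two equalities $\overlab = a\otimes b$ and $\overlbc = b\otimes c$ compatible. Everything else is a routine application of the vector lemma and of linearity/continuity of the tensor product.
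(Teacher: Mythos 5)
Your proposal is correct and follows essentially the same route as the paper's proof: evaluate the operator identity on simple tensors $(\psi\otimes\phi)\otimes\xi_0$, invoke Lemma~\ref{lemma:tensor.overlap.vec} to extract $b(\phi)$, verify linearity and boundedness by cancelling the nonzero factor $a\psi_0$, and use the second output of the vector lemma (the equation for $\overlbc$) as the glue that makes $b(\phi)$ independent of the test vector, with the degenerate cases $a\psi=0$ and $c\xi=0$ handled directly. The only cosmetic difference is that the paper introduces two named families $\eta_{\psi\phi}$ and $\theta_{\phi\xi}$ and proves they agree for all nondegenerate pairs, whereas you pin everything to the fixed $\psi_0,\xi_0$; this changes nothing of substance.
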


\begin{proof}
  Let $\xi_0\in\calH_C$ be some vector with $c\xi_0\neq0$.
  Let $\psi_0\in\calH_A$ be some vector with $a\psi_0\neq0$.
  
  For any $\psi\in\calH_A,\phi\in\calH_B$, we have
  \[
    \overlab(\psi\otimes\phi)\otimes c\xi_0 =
    (\overlab\otimes c)((\psi\otimes\phi)\otimes \xi_0) =
    \adj{U_\alpha} (a\otimes \overlbc) U_\alpha ((\psi\otimes\phi)\otimes \xi_0)
    = 
    \adj{U_\alpha} (a\psi \otimes \overlbc(\phi\otimes\xi_0)).
  \]
  By \autoref{lemma:tensor.overlap.vec} (with $\overlpp:=\overlab(\psi\otimes\phi)$, $\xi:=c\xi_0$, $\overlpx:= \overlbc(\phi\otimes\xi_0)$, $\psi:=a\psi$), this implies that there exists an $\eta$ with $\overlab(\psi\otimes\phi)=a\psi\otimes\eta$.
  (In the case $a\psi=0$, the lemma does not apply, but in this case $\overlab(\psi\otimes\phi)=0$ and thus $\overlab(\psi\otimes\phi)=a\psi\otimes\eta$ holds for any $\eta$.)
  
  Thus we can fix some $\eta_{\psi\phi}$ such that $\overlab(\psi\otimes\phi)=a\psi\otimes\eta_{\psi\phi}$ for all $\psi,\phi$.
  And analogously some $\theta_{\phi\xi}$ such that $\overlbc(\phi\otimes\xi)=\theta_{\phi\xi}\otimes c\xi$ for all $\phi,\xi$.

  Let $b(\phi) := \eta_{\psi_0\phi}$.
  
  Note that we have $\overlab(\psi_0\otimes\phi)=a\psi_0\otimes b(\phi)$ which determines $ b(\phi)$ uniquely since $a\psi_0\neq0$.
  Furthermore,  $\overlab(\psi_0\otimes(s\phi+\phi'))=a\psi_0\otimes(sb(\phi)+b(\phi'))$ for all $\phi,\phi'$ and $s\in\setC$.
  Thus $b(s\phi+\phi') = sb(\phi)+b(\phi')$.
  Thus $b$ is linear.
  Then for all $\phi$,
  \[
    \norm{b(\phi)}
    = \frac{\norm{a\psi_0\otimes b(\phi)}}{\norm{a\psi_0}}
    = \frac{ \norm{\overlab(\psi_0\otimes\phi)} }{ \norm{a\psi_0} }
    \leq \frac{\norm \overlab \cdot \norm {\psi_0\otimes\phi}}{\norm{a\psi_0}}
    = \frac{\norm\overlab\cdot\norm{\psi_0}}{\norm{a\psi_0}}
    \cdot \norm {\phi}.
  \]
  Thus $b$ is bounded.
  Since $b$ is a bounded operator, we will from now on write $b\phi$ instead of $b(\phi)$.

  For any $\psi,\phi,\xi$, we have
  \begin{multline*}
    (a\psi \otimes \eta_{\psi\phi}) \otimes c\xi 
    = (\overlab (\psi\otimes\phi)) \otimes c\xi
    = (\overlab\otimes c)((\psi\otimes\phi)\otimes\xi)
    \\
    = \adj{U_\alpha} (a\psi \otimes \overlbc(\phi\otimes\xi))
    = \adj{U_\alpha} (a\psi \otimes (\theta_{\phi\xi}\otimes c\xi))
    = (a\psi \otimes \theta_{\xi\phi}) \otimes c\xi.
  \end{multline*}
  Thus whenever $a\phi,c\xi\neq 0$, we have $\eta_{\psi\phi}=\theta_{\phi\xi}$.

  For $\psi,\phi$ with $a\psi\neq0$, we then have
  \[
    \overlab(\psi\otimes\phi)
    =
    a\psi \otimes\eta_{\psi\phi}
    =
    a\psi \otimes\theta_{\phi\xi_0}
    =
    a\psi \otimes\eta_{\psi_0\phi}
    =
    a\psi\otimes b\phi
    =
    (a\otimes b)(\psi\otimes\phi).
  \]
  And for $\psi,\phi$ with $a\psi = 0$, we have
  \[
    \overlab(\psi\otimes\phi) = a\psi\otimes\eta_{\psi\phi} = 0 = a\psi\otimes b\phi
    =
    (a\otimes b)(\psi\otimes\phi).
  \]
  as well. Thus $\overlab=a\otimes b$ on all $\psi\otimes\phi$. Thus $\overlab=a\otimes b$ as desired.

  For $\phi,\xi$ with $c\xi\neq0$, we then have
  \[
    \overlbc(\phi\otimes\xi)
    =
    \theta_{\phi\xi} \otimes c\xi
    =
    \eta_{\psi_0\phi} \otimes c\xi
    =
    b\phi \otimes c\xi
    =
    (b\otimes c)(\phi\otimes\xi).
  \]
  And for $\phi,\xi$ with $c\xi = 0$, we have
  \[
    \overlbc(\phi\otimes\xi) = \theta_{\phi\xi} \otimes c\xi = 0 = b\phi\otimes c\xi
    =
    (b\otimes c)(\phi\otimes\xi).
  \]
  as well. Thus $\overlbc=b\otimes c$ on all $\phi\otimes\xi$. Thus $\overlbc=b\otimes c$ as desired.
\end{proof}

\begin{lemma}\label{lemma:reg.props}
  If $F:\mathbf A\to\mathbf B$ is a reference, then $F$ is linear, bounded, continuous, weak*-continuous, unital, completely positive, a *-homomorphism, and normal.
\end{lemma}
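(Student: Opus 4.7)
The plan is to unpack the definition of a register in $\Lquantum$ (weak*-continuous unital *-homomorphism) and observe that several of the listed properties are immediate, while the remaining ones (boundedness, continuity, complete positivity, normality) follow from standard facts about *-homomorphisms of C*-/von Neumann algebras, already cited earlier in the proof of \autoref{lemma:normal}.

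First, by the definition of registers in $\Lquantum$ given in \autoref{sec:infinite}, $F$ is weak*-continuous, $F(1)=1$ (unital), and $F$ satisfies $F(ab)=F(a)F(b)$ and $F(\adj a)=\adj{F(a)}$ and is linear (the latter being part of being a *-homomorphism). Thus linearity, unitality, the *-homomorphism property and weak*-continuity all hold by definition, with nothing to prove.

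Next, since $F$ is a *-homomorphism between the C*-algebras $\mathbf A=\bounded(\calH_A)$ and $\mathbf B=\bounded(\calH_B)$, it is completely positive by \cite[Example 34.3\,(a)]{conway00operator} (as was already used for $\sandwich U$ in the proof of \autoref{lemma:normal}). Completely positive maps between C*-algebras are bounded by \cite[Proposition 33.4]{conway00operator}, giving the boundedness of $F$; and a bounded linear map between Banach spaces is continuous in the norm topology, which yields continuity. Finally, ``normal'' is by definition synonymous with ``weak*-continuous bounded linear'' (as noted in the definition of pre-registers in \autoref{sec:infinite}, citing \cite[Def.~III.2.15]{takesaki}), so normality follows by combining weak*-continuity (from the definition of register) with the boundedness and linearity just established.

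No step here is genuinely hard: the only work is to cite the right results about *-homomorphisms of operator algebras; all the information needed is already collected in the definitions and in the proof of \autoref{lemma:normal}. The main thing to double-check is that the ambient algebras really are full $\bounded(\calH)$ (so that the quoted results for maps between C*-/von Neumann algebras apply verbatim), which is the case by our convention $\mathbf A=\bounded(\calH_A)$, $\mathbf B=\bounded(\calH_B)$.
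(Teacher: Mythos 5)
Your proposal is correct and follows essentially the same route as the paper's proof: unpack the definition, get complete positivity from \cite[Example 34.3\,(a)]{conway00operator}, boundedness from \cite[Proposition 33.4]{conway00operator}, continuity from boundedness, and normality from weak*-continuity plus boundedness. The only cosmetic difference is that the paper additionally checks that $F$ is normal under Conway's alternative definition (positive weak*-continuous linear, via \cite[Theorem~46.4]{conway00operator}), whereas you verify only the Takesaki definition, which is the one the paper adopts.
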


\begin{proof}
  By definition, $F$ is weak*-continuous unital *-homomorphisms.
  A *-homomorphism is linear by definition.
  By \cite[Example 34.3\,(a)]{conway00operator} *-homomorphisms are completely positive.
  Thus $F$ is completely positive.
  Positive linear maps are bounded by \cite[Proposition 33.4]{conway00operator}.
  Thus $F$ is bounded.
  Bounded linear maps are continuous, so $F$ is continuous.
  \cite[Definition III.2.15]{takesaki} defines a normal map as a continuous weak*-continuous linear map. \cite[Definition 46.1]{conway00operator} defines a normal map differently;
  by \cite[Theorem~46.4]{conway00operator} their definition of normal is equivalent to a positive weak*-continuous linear map.
  Since we already showed that $F$ is continuous, weak*-continuous, and positive, $F$ is normal according to both definitions.
\end{proof}

\begin{lemma}\label{lemma:normal.tensor1}
  If $F:\mathbf A\to\mathbf B$ is a reference, then there exists a
  Hilbert space $\calH_C$ and a unitary
  $U:\calH_A\otimes\calH_C\to\calH_B$ such that
  $F(a)=U(a\otimes 1_{\mathbf{C}})\adj U$ for all $a\in\mathbf A$.\footnote{The converse follows from \autoref{lemma:normal} together with Axiom~\ref{ax:tensor-1} that we prove later in \autoref{sec:proofs.infdim}.}
\end{lemma}

\begin{proof}
  By \autoref{lemma:reg.props}, $F$ is a normal *-homomorphism.
  \cite[Theorem IV.5.5]{takesaki} states that a normal homomorphism $F$ between von Neumann algebras $\mathbf A$ and $\mathbf B$ (and in particular our $F$) is of the form $F(a)=U(a\otimes 1_{\mathbf{D}})_{e'}\adj U$ for some Hilbert space $\calH_D$ and some projection $e'\in \comm{\mathbf A}\otimes\mathbf D$ and some unitary $U$.
  Here $\symbolindexmark\comm{\comm{\mathbf A}}$ denotes the \emph{commutant}\index{commutant} of $\mathbf A$, i.e., the set
  of all bounded operators on $\mathcal H_A$ that commute with all
  operators in $\mathbf A$.  And $\symbolindexmark\induction{\induction{x}{e'}}$ (the
  \emph{induction map}\index{induction map}) denotes the operator $x$,
  restricted in input and output to the image $\calH_e$ of the
  projection $e'$ \cite[Definition II.3.11]{takesaki}.  (I.e.,
  $\induction{x}{e'}$ is a bounded operator on $\calH_e$. Note that
  $\calH_e$ is a Hilbert space since it is a closed subspace of
  $\calH_D$.)

  Since in our case, $\mathbf A$ is the set of \emph{all} bounded
  operators on $\calH_A$, we have $\mathbf A'=\idmult$, the set of all multiples of the identity. Thus $e'\in\idmult\otimes\mathbf
  D$. Since $x\mapsto 1\otimes x$ is an isomorphism \cite[Corollary IV.1.5]{takesaki} and thus surjective onto $\idmult\otimes\mathbf
  D$, we have that $e'\in\idmult\otimes\mathbf
  D$ implies that $e'$ is of the form $1_{\mathbf{A}}\otimes e_0'$ for some $e_0'$. Thus $\calH_e=\calH_A\otimes\calH_C$ where we define  $\calH_{C}$ as the image of $e_0$. So $(a\otimes 1_{\mathbf{D}})_{e'}$ is the restriction of $a\otimes 1_{\mathbf{D}}$ to $\calH_A\otimes\calH_C$, i.e., $a\otimes 1_{\mathbf C}$. Thus 
  $F(a)=U(a\otimes 1_{\mathbf{C}})\adj U$ as desired.
\end{proof}

\begin{lemma}\label{lemma:reg.isometric}
  If $F$ is a quantum reference, then $\norm{F(a)}=\norm a$.
\end{lemma}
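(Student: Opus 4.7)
The plan is to reduce the claim to the representation theorem for normal $*$-homomorphisms, namely \autoref{lemma:normal.tensor1}, which is already available. That lemma gives us a Hilbert space $\calH_C$ and a unitary $U:\calH_A\otimes\calH_C\to\calH_B$ such that $F(a)=U(a\otimes 1_{\mathbf{C}})\adj U$ for all $a\in\mathbf A$. So the task reduces entirely to computing the norm of $U(a\otimes 1_{\mathbf{C}})\adj U$.

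First, I would observe that conjugation by a unitary preserves the operator norm: since $U$ and $\adj U$ are isometries, for any bounded $x$ we have $\norm{UxV}\leq\norm U\cdot\norm x\cdot\norm V$ for any bounded $U,V$, and applied with $U$ unitary and $\adj U$ an isometry we get $\norm{U x\adj U}\leq\norm x$, and the reverse inequality comes from $x=\adj U(UxU^*)U$. Hence $\norm{F(a)}=\norm{a\otimes 1_{\mathbf C}}$.

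Second, I would invoke the standard fact that the tensor product of bounded operators is a cross norm: $\norm{x\otimes y}=\norm x\cdot\norm y$ (this is a basic property of the Hilbert space tensor product and is used implicitly throughout \cite{takesaki}, e.g.\ after Definition IV.1.2). Since $\norm{1_{\mathbf C}}=1$ (as $\calH_C$ is assumed nonzero; note that $\calH_C$ from \autoref{lemma:normal.tensor1} is nonzero because $F(1_{\mathbf A})=1_{\mathbf B}$ forces the identity on $\calH_A\otimes\calH_C$ to be mapped to a nonzero operator), we conclude $\norm{a\otimes 1_{\mathbf C}}=\norm a$, and hence $\norm{F(a)}=\norm a$.

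\textbf{Main obstacle.} There isn't really a serious obstacle: the heavy lifting (the representation theorem) is already encapsulated in \autoref{lemma:normal.tensor1}. The only thing that requires mild care is to argue that $\calH_C$ is nonzero so that $\norm{1_{\mathbf C}}=1$ holds; if the empty Hilbert space were allowed, $1_{\mathbf C}=0$ and the identity $\norm{a\otimes 1_{\mathbf C}}=\norm a$ would fail for $a\neq 0$. Because $F$ is unital, $U\adj U=F(1_{\mathbf A})=1_{\mathbf B}$ shows $\calH_B$ is isomorphic to $\calH_A\otimes\calH_C$, so $\calH_C$ is nonzero whenever $\calH_A,\calH_B$ are nonzero, which is our standing convention.
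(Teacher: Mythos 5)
Your proof is correct and follows essentially the same route as the paper's: both invoke \autoref{lemma:normal.tensor1} to write $F(a)=U(a\otimes 1)\adj U$, then use invariance of the operator norm under unitary conjugation and the cross-norm property $\norm{a\otimes 1}=\norm a\cdot\norm 1$ (the paper cites \cite[Equation IV.1\,(7)]{takesaki} for the latter). Your extra remark that $\calH_C$ must be nonzero is a reasonable bit of care that the paper leaves implicit.
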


\begin{proof}
  By \autoref{lemma:normal.tensor1}, $F(a)=U(a\otimes 1)\adj U$ for unitary $U$.
  Thus $\norm{F(a)}=\norm{U(a\otimes1)\adj U} \starrel= \norm{a\otimes1} \starstarrel= \norm a\cdot\norm1 = \norm a$.
  Here $(*)$ follows because $U$ is unitary and $(**)$ follows by \cite[Equation IV.1\,(7)]{takesaki}.
\end{proof}

\begin{lemma}\label{lemma:reg.injective}
  If $F$ is a reference, then $F$ is injective.
\end{lemma}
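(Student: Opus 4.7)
The proof is an immediate corollary of the preceding \autoref{lemma:reg.isometric}, which shows that $\norm{F(a)} = \norm{a}$ for every $a \in \mathbf{A}$. My plan is simply to combine this isometric property with the linearity of $F$. Linearity is available because $F$ is by definition a $*$-homomorphism, and $*$-homomorphisms are linear by definition (this is also recorded in \autoref{lemma:reg.props}).

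Concretely, suppose $F(a) = F(b)$ for some $a, b \in \mathbf{A}$. By linearity, $F(a - b) = F(a) - F(b) = 0$, hence $\norm{F(a - b)} = 0$. Applying \autoref{lemma:reg.isometric} yields $\norm{a - b} = 0$, so $a = b$. Therefore $F$ is injective. There is no real obstacle here; the content of the lemma has already been absorbed into \autoref{lemma:normal.tensor1} (which represents $F(a)$ as $U(a \otimes 1_{\mathbf{C}})\adj U$ for a unitary $U$), from which norm-preservation and hence injectivity follow.
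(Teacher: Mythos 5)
Your proof is correct and is essentially identical to the paper's: both reduce injectivity to the trivial kernel of a linear map and then invoke \autoref{lemma:reg.isometric} to conclude $\norm{a-b}=0$ implies $a=b$. Nothing further is needed.
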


\begin{proof}
  If $F(a)=0$, then $\norm a=\norm{F(a)}=0$ by \autoref{lemma:reg.isometric}.
  Thus $a=0$.
  Since $F$ is linear, this implies that $F$ is injective.

  
  
\end{proof}

\begin{lemma}\label{lemma:isoreg-decomp}
  If $F:\mathbf A\to\mathbf B$ is an iso-reference, then there exists a unitary
  $U:\calH_A\to\calH_B$ such that
  $F(a)=Ua\adj U$ for all $a\in\mathbf A$.
\end{lemma}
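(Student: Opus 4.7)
The plan is to reduce to \autoref{lemma:normal.tensor1}, which already gives $F(a)=U(a\otimes 1_{\mathbf C})\adj U$ for some Hilbert space $\calH_C$ and some unitary $U:\calH_A\otimes\calH_C\to\calH_B$; the only thing left is to argue that $\calH_C$ is one-dimensional, after which we absorb the canonical isomorphism $\calH_A\otimes\setC\cong\calH_A$ into $U$.

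First, since $F$ is an iso-register, it has a two-sided inverse $G$ as a register, so $F$ is surjective: for every $b\in\mathbf B$ we have $b=F(G(b))$. Applying $U^{-1}(\cdot)U$ to both sides of $F(a)=U(a\otimes 1_{\mathbf C})\adj U$, surjectivity of $F$ translates (via the bijection $x\mapsto \adj U xU$ on bounded operators) into the statement that every $x\in\bounded(\calH_A\otimes\calH_C)$ is of the form $a\otimes 1_{\mathbf C}$ for some $a\in\mathbf A$.

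Second, I rule out $\dim\calH_C\geq 2$. Suppose for contradiction that $\calH_C$ has two orthonormal vectors $\phi_1,\phi_2$, and pick any nonzero $\psi\in\calH_A$. Consider the rank-one operator $T:=(\psi\adj\psi)\otimes(\phi_1\adj{\phi_2})\in\bounded(\calH_A\otimes\calH_C)$. By the previous paragraph, $T=a\otimes 1_{\mathbf C}$ for some $a\in\mathbf A$. Applying both sides to $\psi\otimes\phi_2$ gives $\psi\otimes\phi_1 = a\psi\otimes\phi_2$, and since $\phi_1\perp\phi_2$ with $\phi_1\neq 0$ this forces $\psi=0$, a contradiction. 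Hence $\dim\calH_C=1$, and $\calH_C$ is isometrically isomorphic to $\setC$ via some unitary $V:\calH_C\to\setC$.

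Finally, let $W:\calH_A\to\calH_A\otimes\calH_C$ be the unitary $W\psi:=\psi\otimes V^{-1}(1)$ (the inverse of the canonical isomorphism $\calH_A\otimes\calH_C\cong\calH_A$). Under this identification $\adj W(a\otimes 1_{\mathbf C})W = a$ for every $a\in\mathbf A$. Setting $\tilde U:=UW:\calH_A\to\calH_B$, which is a composition of unitaries and thus unitary, we obtain
\[
  \tilde U a\adj{\tilde U} \;=\; UW a \adj W\adj U \;=\; U(a\otimes 1_{\mathbf C})\adj U \;=\; F(a)
\]
for all $a\in\mathbf A$, which is the desired representation. The main (small) obstacle is the dimension argument in the second step; everything else is bookkeeping on top of \autoref{lemma:normal.tensor1}.
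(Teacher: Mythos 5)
Your proof is correct and follows essentially the same route as the paper's: both invoke \autoref{lemma:normal.tensor1} to write $F(a)=U(a\otimes 1_{\mathbf C})\adj U$, use surjectivity of $F$ (hence of $a\mapsto a\otimes 1_{\mathbf C}$ onto $\bounded(\calH_A\otimes\calH_C)$) to force $\calH_C$ to be one-dimensional, and then absorb the resulting isometry $\psi\mapsto\psi\otimes\eta$ into $U$. The only differences are cosmetic — you derive $\dim\calH_C=1$ by a contradiction with a rank-one operator $(\psi\adj\psi)\otimes(\phi_1\adj{\phi_2})$ whereas the paper shows $1_{\mathbf C}=\psi\adj\psi$ directly — plus a harmless slip (your displayed identity $\psi\otimes\phi_1=a\psi\otimes\phi_2$ should carry a factor $\norm\psi^2$ unless $\psi$ is normalized, which does not affect the conclusion).
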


\begin{proof}
  By \autoref{lemma:normal.tensor1}, $F(a)=V(a\otimes 1_\mathbf{C})\adj V$ for some $\mathbf{C}$ and unitary $V$.
  Fix some arbitrary unit vector $\psi\in\calH_C$. 

  The function $x\mapsto Vx\adj V$ is bijective (with inverse $b\mapsto \adj VbV$), and $F$ is bijective, hence $a\mapsto a\otimes 1_\mathbf{C}$ is bijective as a function $\mathbf A\to\mathbf A\otimes\mathbf C$, too.
  Then exists an $a_*\in\mathbf A$ such that $a_*\otimes 1_\mathbf{C}=1_\mathbf{A}\otimes \psi\adj\psi$.
  Since the rhs is nonzero, $a_*$ is nonzero.
  Then there exists an $\alpha\in\calH_A$ such that $\adj\alpha a_*\alpha\neq 0$ \cite[Proposition II.2.15]{conway13functional}.
  Let $c := {\adj\alpha\alpha / \adj\alpha a_*\alpha}\neq0$.

  
  For all $\gamma\in\calH_C$:
  \[
    \adj\gamma 1_\mathbf{C} \gamma
    =
    \frac{\adj{(\alpha\otimes\gamma)} (a_*\otimes 1_\mathbf{C}) (\alpha\otimes\gamma)}
    {\adj\alpha a_*\alpha}
    =
    \frac{\adj{(\alpha\otimes\gamma)} (1_\mathbf{A}\otimes \psi\adj\psi) (\alpha\otimes\gamma)}
    {\adj\alpha a_*\alpha}
    =
    \frac{\adj\alpha\alpha\cdot\adj\gamma\psi\adj\psi\gamma}{\adj\alpha a_*\alpha}
    =
    \adj\gamma(c\psi\adj\psi)\gamma.
  \]
  By \cite[Proposition II.2.15]{conway13functional}, this implies $1_\mathbf{C}=c\psi\adj\psi$.
  And $c = \adj\psi(c\psi\adj\psi)\psi = \adj\psi1_\mathbf{C}\psi = 1$.
  Hence $1_\mathbf{C}=\psi\adj\psi$.

  Then $W:\calH_A\to \calH_A\otimes\calH_C$, $Wx := x\otimes\psi$ is an isometry.
  And we have
  \[
    VWa\adj W\adj V
    =
    V(a \otimes \psi\adj\psi)\adj V
    =
    V(a \otimes 1_\mathbf{C})\adj V
    =
    F(a).
  \]
  With $U:=VW$, we have $F(a)=Ua\adj U$.
  And since $F$ is surjective, so is $U$.
  And since $V$ is unitary and $W$ an isometry, $U$ is an isometry.
  A surjective isometry is unitary, so $U$ is unitary.
\end{proof}

\begin{lemma}\label{lemma:reg.tensor}
  For references $F:\mathbf A\to\mathbf C$, $G:\mathbf B\to\mathbf D$, there exists a reference $T:\mathbf A\otimes\mathbf B\to\mathbf C\tensor\mathbf D$ with $T(a\otimes b)=F(a)\otimes G(b)$ for all $a,b$.
\end{lemma}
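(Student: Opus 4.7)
The plan is to construct $T$ as the pair of two auxiliary registers, following exactly the same recipe used to define $F\rtensor G$ in the generic setting (see the footnote defining $\rtensor$ in \autoref{sec:generic}). Concretely, I would set $F'(a) := F(a)\otimes 1_\mathbf{D}$ and $G'(b) := 1_\mathbf{C}\otimes G(b)$, both viewed as maps into $\mathbf C\otimes\mathbf D$, and then take $T := \pair{F'}{G'}$.

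First I would verify that $F'$ and $G'$ are registers $\mathbf A\to\mathbf C\otimes\mathbf D$ and $\mathbf B\to\mathbf C\otimes\mathbf D$, respectively. By Axiom~\ref{ax:tensor-1} (whose validity in $\Lquantum$ has already been established when proving that $\Lquantum$ is a register category), the maps $c\mapsto c\otimes 1_\mathbf{D}$ and $d\mapsto 1_\mathbf{C}\otimes d$ are registers. Since $F,G$ are registers by assumption, Axiom~\ref{ax:reg.morphisms} (closure under composition) yields that $F'$ and $G'$ are registers.

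Next I would show that $F'$ and $G'$ are compatible. For any $a\in\mathbf A$, $b\in\mathbf B$, Axiom~\ref{ax:tensor.mult} gives
\[
  F'(a)\,G'(b) \;=\; (F(a)\otimes 1_\mathbf{D})(1_\mathbf{C}\otimes G(b)) \;=\; F(a)\otimes G(b) \;=\; (1_\mathbf{C}\otimes G(b))(F(a)\otimes 1_\mathbf{D}) \;=\; G'(b)\,F'(a),
\]
so the ranges commute. By Axiom~\ref{ax:pairs}, $T := \pair{F'}{G'}$ is a register $\mathbf A\otimes\mathbf B \to \mathbf C\otimes\mathbf D$, and it satisfies $T(a\otimes b) = F'(a)\cdot G'(b) = F(a)\otimes G(b)$, which is exactly the required property.

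There is no real obstacle here: every ingredient has already been discharged in the preceding proofs that $\Lquantum$ satisfies the axioms of a register category (notably Axioms~\ref{ax:tensor-1}, \ref{ax:tensor.mult}, \ref{ax:reg.morphisms}, and \ref{ax:pairs}). The lemma is in effect a specialization of the generic construction of $F\rtensor G$ to the infinite-dimensional quantum setting, recorded here as a lemma for convenient reference in the rest of the appendix.
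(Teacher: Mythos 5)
There is a genuine circularity here. You construct $T := \pair{F'}{G'}$ and justify the existence of that pair by Axiom~\ref{ax:pairs}; but in this paper the verification of Axiom~\ref{ax:pairs} for $\Lquantum$ (\autoref{sec:proofs.infdim}) itself invokes \autoref{lemma:reg.tensor}: the pair $\pair FG$ is built there as $x\mapsto UT(x)\adj U$, where $T$ is exactly the tensor-product register supplied by this lemma (applied to $\id_{\mathbf A}$ and $G_1$). So the lemma must be proved before, and without, the pairing axiom; your claim that Axiom~\ref{ax:pairs} has ``already been discharged'' reverses the actual order of dependence.

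The circularity is not merely organizational: it hides the real analytic content. The prescription $a\otimes b\mapsto F(a)\otimes G(b)$ is only defined a priori on the algebraic tensor product $\mathbf A\atensor\mathbf B$, and the central difficulty of the infinite-dimensional setting --- which the paper stresses when motivating \autoref{sec:infinite} --- is that the universal property needed to extend such a bimorphism to a weak*-continuous map on all of $\mathbf A\otimes\mathbf B=\bounded(\calH_A\otimes\calH_B)$ fails in general. The paper's proof obtains this extension from \cite[Proposition IV.5.13]{takesaki}, which yields a weak*-continuous completely positive $T$ with $T(a\otimes b)=F(a)\otimes G(b)$ from the fact that $F,G$ are normal completely positive maps (\autoref{lemma:reg.props}), and then upgrades $T$ to a unital *-homomorphism via the weak*-density of $\mathbf A\atensor\mathbf B$ in $\mathbf A\otimes\mathbf B$ (\autoref{lemma:alg.dense}). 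Some such existence argument is indispensable; your proposal supplies none. (Your compatibility computation for $F'$ and $G'$ is fine as far as it goes, but it only becomes a proof once the existence of pairs is available from elsewhere, which is precisely what is at stake.)
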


\begin{proof}
  By \autoref{lemma:reg.props}, $F,G$ are both completely positive weak*-continuous maps.
  By \cite[Proposition IV.5.13]{takesaki}, there is a weak*-continuous completely positive linear map $T:\mathbf A\otimes\mathbf B\to\mathbf C\otimes\mathbf D$ such that $T(a\otimes b) = F(a)\otimes G(b)$ for all $a,b$.
  (Note that \cite{takesaki} uses the word $\sigma$-weak for weak*.)

  For $x=a\otimes b$, we have $T(\adj x)=\adj{T(x)}$ because $F,G$ preserve adjoints by definition of references.
  Since $T(\adj\cdot)$, $\adj{T(\cdot)}$ are antilinear, this implies $T(\adj x)=\adj{T(x)}$ for $x\in\mathbf A\atensor\mathbf B$.
  Since $\mathbf A\atensor\mathbf B$ is weak*-dense in $\mathbf A\tensor\mathbf B$ by \autoref{lemma:alg.dense}
  and $T$ and $\adj{(\cdot)}$ are weak*-continuous (\autoref{lemma:normal}), this implies $T(\adj x)=\adj{T(x)}$ on $\mathbf A\tensor\mathbf B$.

  For $x=a\otimes b$, $y=c\otimes d$, $T(xy)=T(x)T(y)$ because $F,G$ are multiplicative.
  By linearity, $T(xy)=T(x)T(y)$ for $x,y\in \mathbf A\atensor\mathbf B$.
  For fixed $x\in \mathbf A\atensor\mathbf B$, $y\mapsto T(xy)$ and $y\mapsto T(x)T(y)$ are weak*-continuous (\autoref{lemma:normal}, together with weak*-continuity of $T$), and they coincide on $\mathbf A\atensor\mathbf B$.
  Since $\mathbf A\atensor\mathbf B$ is weak*-dense in  $\mathbf A\tensor\mathbf B$, they coincide on  $\mathbf A\tensor\mathbf B$ as well.
  Thus, for fixed $y\in \mathbf A\tensor\mathbf B$, $x\mapsto T(xy)$ and $x\mapsto T(x)T(y)$ coincide on $\mathbf A\atensor\mathbf B$.
  And they are weak*-continuous by \autoref{lemma:normal}, together with weak*-continuity of $T$.
  Thus by weak*-density of $\mathbf A\atensor\mathbf B$, they coincide on 
  $ \mathbf A\tensor\mathbf B$. Hence   $T(xy)=T(x)T(y)$ for all $x,y$. Thus $T$ is multiplicative.

  We have $T(1)=T(1\otimes 1)=F(1)\otimes G(1)=1\otimes 1=1$. Thus $T$ is unital.
  
  Altogether, $T$ is a reference.
\end{proof}

\begin{lemma}\label{lemma:norm.mono}
  The norm is monotonous on the set of positive bounded operators.
  The trace norm is monotonous on the set of positive trace-class operators.
\end{lemma}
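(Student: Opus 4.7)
Assume $0 \leq a \leq b$. The plan is to reduce both monotonicity statements to pointwise inequalities of the form $\adj\psi a\psi \leq \adj\psi b\psi$ or $\adj e a e \leq \adj e b e$, which follow directly from $a \leq b$ by definition of the operator order.

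For the operator norm, the key fact I would invoke is that for any Hermitian (in particular positive) bounded operator $x$, one has $\norm x = \sup_{\norm\psi = 1} \abs{\adj\psi x\psi}$ (this is the standard characterization, e.g.\ \cite[Proposition II.2.13]{conway13functional}). For positive $a$ we have $\adj\psi a\psi \geq 0$, so the absolute value is unnecessary and $\norm a = \sup_{\norm\psi=1} \adj\psi a\psi$, and similarly for $b$. Since $a \leq b$ gives $\adj\psi a\psi \leq \adj\psi b\psi$ for every unit vector $\psi$, taking the supremum yields $\norm a \leq \norm b$.

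For the trace norm on positive trace-class operators, I would first use \lemmaref{lemma:traceclass.abs:normpos} to rewrite $\trnorm a = \tr a$ and $\trnorm b = \tr b$. Then, fixing any orthonormal basis $E$ of $\calH$, the trace is given by $\tr a = \sum_{e\in E}\adj e a e$ and $\tr b = \sum_{e\in E}\adj e b e$ (Definition~18.10 in~\cite{conway00operator}). Again $a \leq b$ implies $\adj e a e \leq \adj e b e$ termwise, so summing these non-negative terms gives $\tr a \leq \tr b$, hence $\trnorm a \leq \trnorm b$. (Alternatively, one can note that $b - a$ is positive and, by \lemmaref{lemma:traceclass.abs:between}, trace-class, so $\tr(b-a) \geq 0$ and $\tr b = \tr a + \tr(b-a) \geq \tr a$.)

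Neither step presents a real obstacle; both are purely routine applications of the spectral characterization of the operator norm for self-adjoint operators and the defining formula of the trace on a positive operator. The only mild subtlety is making sure, in the trace-norm case, that both $\tr a$ and $\tr b$ are well-defined (covered by the assumption that $a, b$ are trace-class) and equal to the respective trace norms (covered by positivity via \lemmaref{lemma:traceclass.abs:normpos}).
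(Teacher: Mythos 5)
Your proposal is correct and follows essentially the same route as the paper: reduce both statements to the pointwise inequality $\adj\psi a\psi\leq\adj\psi b\psi$, using $\norm a=\sup_{\norm\psi=1}\adj\psi a\psi$ for positive $a$ (the paper derives this via $\norm a=\norm{\sqrt a}^2$ rather than citing the numerical-radius characterization, but this is immaterial) and $\trnorm a=\tr a$ via \lemmaref{lemma:traceclass.abs:normpos} for the trace norm. Your parenthetical alternative for the trace-norm case ($\tr b=\tr a+\tr(b-a)\geq\tr a$) is in fact exactly the paper's argument.
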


\begin{proof}
  For positive bounded $a$, we have
  \[
    \norm a \starrel= \norm{\sqrt a}^2 = \sup_\psi \norm{\sqrt a\psi}^2
    = \sup_\psi \adj\psi\adj{\sqrt a}\sqrt a\psi = \sup_\psi \adj\psi a\psi.
  \]
  Here $(*)$ uses \cite[Proposition II.2.7]{conway13functional}.
  And the suprema range over unit vectors $\psi$.
  Let $a\leq b$ be positive bounded operators. 
  Then
  \[
    \norm b
    = \sup_\psi (\psi^*a\psi + \psi^*(b-a)\psi)
    \starrel\geq \sup_\psi \psi^*a\psi
    = \norm a.
  \]
  Here $(*)$ uses that $\psi^*(b-a)\psi\geq0$ since $b-a$ is positive.
  So the norm is monotonous on the set of positive bounded operators.

  For positive trace-class $a$, $\trnorm a=\tr a$ (\lemmaref{lemma:traceclass.abs:normpos}).
  Let $a\leq b$ be positive trace-class operators. 
  Then $\trnorm b=\tr (a+(b-a)) = \tr a + \tr (b-a) = \trnorm a + \trnorm{b-a}\geq\trnorm a$.
  Here we use that $b-a$ is positive since $a\leq b$.
  So the trace norm is monotonous on the set of positive trace-class operators.
\end{proof}

\begin{lemma}\label{lemma:incr.net.lim}
  Let $a_i$ be an increasing net of positive bounded operators. Then the following are equivalent:
  \begin{compactenum}[(i)]
  \item\label{lemma:incr.net.lim:norm} ${\norm{a_i}}$ is upper-bounded.
  \item\label{lemma:incr.net.lim:sup} $a_i$ has a supremum in the set of bounded operators.
  \item\label{lemma:incr.net.lim:wot} $a_i$ converges in the weak operator topology (WOT).
  \item\label{lemma:incr.net.lim:sot} $a_i$ converges in the strong operator topology (SOT).
  \item\label{lemma:incr.net.lim:weak*} $a_i$ converges in the weak*-topology.
  \end{compactenum}
  In cases \eqref{lemma:incr.net.lim:sup}--\eqref{lemma:incr.net.lim:weak*}, the supremum and the limits coincide.
\end{lemma}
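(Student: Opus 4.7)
The plan is to prove the chain of implications $\eqref{lemma:incr.net.lim:norm} \Rightarrow \eqref{lemma:incr.net.lim:sup}$ via a Vigier-style argument, $\eqref{lemma:incr.net.lim:sup} \Rightarrow \eqref{lemma:incr.net.lim:weak*} \Rightarrow \eqref{lemma:incr.net.lim:wot} \Rightarrow \eqref{lemma:incr.net.lim:sot} \Rightarrow \eqref{lemma:incr.net.lim:wot} \Rightarrow \eqref{lemma:incr.net.lim:norm}$, and to verify that all limits agree with the supremum constructed in the first step.

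For $\eqref{lemma:incr.net.lim:norm} \Rightarrow \eqref{lemma:incr.net.lim:sup}$, suppose $\norm{a_i}\leq M$. For each $\psi\in\calH$, the real net $\adj\psi a_i\psi$ is increasing (since $a_j-a_i\geq 0$ for $i\leq j$) and bounded above by $M\norm\psi^2$, hence converges. By polarization (valid since each $a_i$ is self-adjoint), $\adj\phi a_i\psi$ converges for all $\phi,\psi$. Define $B(\phi,\psi):=\lim_i \adj\phi a_i\psi$; this is a bounded sesquilinear form with bound $M$, so by the Riesz representation theorem for sesquilinear forms there is a bounded operator $a$ with $\norm a\leq M$ and $\adj\phi a\psi = B(\phi,\psi)$. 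Then $a\geq a_i$ for every $i$ (since $\adj\psi(a-a_i)\psi=\lim_{j\geq i}\adj\psi(a_j-a_i)\psi\geq 0$), and if $b\geq a_i$ for all $i$ then $\adj\psi b\psi\geq\adj\psi a_i\psi \to \adj\psi a\psi$, giving $b\geq a$; so $a=\sup_i a_i$ exists as a bounded operator.

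For $\eqref{lemma:incr.net.lim:sup}\Rightarrow\eqref{lemma:incr.net.lim:weak*}\Rightarrow\eqref{lemma:incr.net.lim:wot}\Rightarrow\eqref{lemma:incr.net.lim:sot}$: the construction above actually yields WOT-convergence of $a_i$ to $a$. To upgrade this to SOT, use the standard inequality $\norm{b\psi}^2\leq \norm b\cdot \adj\psi b\psi$ valid for any positive bounded $b$ (apply $\norm{b^{1/2}}^2=\norm b$). Applied to $b_i:=a-a_i\geq 0$, which is norm-bounded by $2M$, this gives $\norm{(a-a_i)\psi}^2\leq 2M\cdot\adj\psi(a-a_i)\psi\to 0$, i.e., SOT-convergence. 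For weak*-convergence, note that $\mathord\{a_i\mathord\}$ is a norm-bounded net (as $a_i\leq a$ implies $\norm{a_i}\leq\norm a$ by \autoref{lemma:norm.mono}), so weak*-topology and WOT coincide on it by \cite[Proposition 20.1\,(b)]{conway00operator}; hence WOT-convergence of $a_i$ is equivalent to weak*-convergence, with the same limit. Since all three topologies are Hausdorff, the three limits necessarily agree with $a=\sup_i a_i$.

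For the returning implications $\eqref{lemma:incr.net.lim:wot}\Rightarrow\eqref{lemma:incr.net.lim:norm}$ and $\eqref{lemma:incr.net.lim:sot}\Rightarrow\eqref{lemma:incr.net.lim:wot}$: the latter is trivial as SOT is finer than WOT. For the former, if $a_i\to a$ in WOT then, as in the supremum argument, $a_i\leq a$ for every $i$, whence $\norm{a_i}\leq\norm a$ by monotonicity of the norm on positive operators (\autoref{lemma:norm.mono}). The implication $\eqref{lemma:incr.net.lim:sup}\Rightarrow\eqref{lemma:incr.net.lim:norm}$ is the same bound $\norm{a_i}\leq\norm{\sup_j a_j}$. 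I expect the only mild subtlety to be the identification of weak*-convergence with WOT-convergence on bounded sets (cited above); everything else is either elementary analysis or direct use of \autoref{lemma:norm.mono} and polarization.
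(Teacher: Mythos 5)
Your proof is correct, and it follows the same overall skeleton as the paper's: close a cycle of implications, using monotonicity of the operator norm on positive operators (\autoref{lemma:norm.mono}) to get back to the norm bound, and the fact that finer topologies give the coarser-topology limits. The one genuine difference is at the heart of the argument: the paper disposes of ``(i) implies everything else, with coinciding supremum and limits'' by citing \cite[Proposition 43.1]{conway00operator} (Vigier's theorem), whereas you prove that proposition from scratch --- polarization plus the Riesz representation of the bounded sesquilinear form $\lim_i\adj\phi a_i\psi$ to manufacture the supremum/WOT-limit $a$, and the inequality $\norm{b\psi}^2\leq\norm b\cdot\adj\psi b\psi$ for positive $b$ to upgrade WOT- to SOT-convergence. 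What this buys is a self-contained proof at the cost of a page of standard analysis; what the paper's citation buys is brevity. Your route also merges the paper's two steps (iii)$\Rightarrow$(ii) and (ii)$\Rightarrow$(i) into a single direct (iii)$\Rightarrow$(i), which is fine since the same monotonicity argument underlies both.

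One small imprecision worth fixing: your justification that weak*-convergence and WOT-convergence agree invokes \cite[Proposition 20.1\,(b)]{conway00operator}, which applies only on norm-bounded sets, and you establish norm-boundedness from the existence of the supremum. That is legitimate for the forward direction (getting weak*-convergence once (ii) holds), but for closing the cycle you also need the standalone implication (v)$\Rightarrow$(iii), where boundedness is not yet available. Fortunately no boundedness is needed there: every WOT-seminorm $x\mapsto\abs{\adj\phi x\psi}=\abs{\tr x\psi\adj\phi}$ is a weak*-seminorm (rank-one operators are trace-class), so the weak*-topology is finer than the WOT unconditionally and (v)$\Rightarrow$(iii) is immediate --- this is exactly how the paper phrases it. With that one-sentence repair your argument is complete.
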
    

Note that ``$a_i$ converges in the norm topology'' is not equivalent to statements \eqref{lemma:incr.net.lim:norm}--\eqref{lemma:incr.net.lim:weak*}.
For example, consider the operators $a_i := \sum_{j=1}^i \selfbutter j$ over the Hilbert space $\setC^{\setN}$. We have $a_i\to 1$ in the topologies mentioned in the theorem, but $a_i\to 1$ does not hold with respect to the norm-topology because $\norm{a_i-1}=1\not\to 0$.

\begin{proof}
  By \cite[Proposition 43.1]{conway00operator}, \eqref{lemma:incr.net.lim:norm} implies \eqref{lemma:incr.net.lim:sup}, \eqref{lemma:incr.net.lim:wot}, \eqref{lemma:incr.net.lim:sot}, \eqref{lemma:incr.net.lim:weak*}, and the supremum and the three limits coincide.
  
  The SOT and weak*-topology are both finer than the WOT. (See \cite[overview after Definition II.2.3]{takesaki}. The weak*-topology is called $\sigma$-weak there.) Hence \eqref{lemma:incr.net.lim:sot}, \eqref{lemma:incr.net.lim:weak*} each imply~\eqref{lemma:incr.net.lim:wot}.

  \medskip

  We now show that \eqref{lemma:incr.net.lim:wot} implies \eqref{lemma:incr.net.lim:sup}.
  Assume $a_i\to a$ in the WOT.
  By definition of the WOT, this means $\adj xa_ix\to \adj xax \in \setC$ for all vectors $x$.
  Furthermore, $\adj xa_ix$ is increasing since $a_i$ is increasing.
  Thus $\adj xa_ix\leq \adj xax$ for all $i$ and $x$.
  It follows that $\adj xax = \lim \adj xa_ix = \sup \adj xa_ix$.
  And $\adj xa_ix\leq \adj xax$ also implies $a_i\leq a$ \cite[Proposition 3.6]{conway00operator}.
  Fix some $b\leq a$ such that $b\geq a_i$ for all $i$.
  Then $\adj xbx\geq \adj xa_ix$ for all $i$.
  Since $\adj xax=\sup \adj xa_ix$, this implies $\adj xbx\geq \adj xax$.
  And from $b\leq a$, we get $\adj xbx\leq \adj xax$.
  Thus $\adj xbx=\adj xax$.
  Thus $b=a$ by \cite[Proposition II.2.15]{conway13functional}.
  This holds for any $b\leq a$ with $b\geq a_i$ for all $i$, hence $a=\sup a_i$.
  This shows that  \eqref{lemma:incr.net.lim:wot} implies \eqref{lemma:incr.net.lim:sup}.

  \medskip
  
  We now show that \eqref{lemma:incr.net.lim:sup} implies \eqref{lemma:incr.net.lim:norm}.
  Let $a$ be the supremum that exists by \eqref{lemma:incr.net.lim:sup}.
  Then $a\geq a_i$ for all $i$.
  Then by \autoref{lemma:norm.mono}, $\norm a\geq\norm{a_i}$ for all $i$.
  Thus $\norm{a_i}$ is bounded.

  \medskip
  
  Thus we have the following implications:
  \begin{center}
    \begin{tikzpicture}[x=2cm]
      \node (norm) at (1,1) {\eqref{lemma:incr.net.lim:norm}};
      \node (sup) at (1,0) {\eqref{lemma:incr.net.lim:sup}};
      \node (wot) at (3,0) {\eqref{lemma:incr.net.lim:wot}};
      \node (sot) at (2,0) {\eqref{lemma:incr.net.lim:sot}};
      \node (weak*) at (3,1) {\eqref{lemma:incr.net.lim:weak*}};
      \tikzset{>=stealth, ->, double}
      \draw[double] (norm) to[bend left] (sup);
      \draw[double] (norm) -> (wot);
      \draw[double] (norm) -> (sot);
      \draw[double] (norm) -> (weak*);
      \draw[double] (sot) -> (wot);
      \draw[double] (weak*) -> (wot);
      \draw[double] (wot) to[bend left] (sup);
      \draw[double] (sup) to[bend left] (norm);
    \end{tikzpicture}
  \end{center}
  Hence all five statements are equivalent.
  And if any of them holds, \eqref{lemma:incr.net.lim:norm} holds and the supremum and the three limits coincide as mentioned in the beginning of the proof.
\end{proof}

\begin{lemma}\label{lemma:incr.net.lim.tr}
  Let $a_i$ be an increasing net of positive trace-class operators. Then the following are equivalent:
  \begin{compactenum}[(i)]
  \item\label{lemma:incr.net.lim.tr:norm} $\tr a_i$ (or equivalently $\trnorm {a_i}$) is upper-bounded.
  \item\label{lemma:incr.net.lim.tr:sup} $a_i$ has a supremum in the set of trace-class operators.
  \item\label{lemma:incr.net.lim.tr:wot} $a_i$ converges to a trace-class operator in the WOT.
  \item\label{lemma:incr.net.lim.tr:sot} $a_i$ converges to a trace-class operator in the SOT.
  \item\label{lemma:incr.net.lim.tr:weak*} $a_i$ converges to a trace-class operator in the weak*-topology.
  \item\label{lemma:incr.net.lim.tr:tr} $a_i$ converges to a trace-class operator w.r.t.~the trace norm.
  \item\label{lemma:incr.net.lim.tr:norm2} $a_i$ converges to a trace-class operator w.r.t.~the operator norm.
  \end{compactenum}
  In cases \eqref{lemma:incr.net.lim.tr:sup}--\eqref{lemma:incr.net.lim.tr:norm2}, the supremum and the limits coincide.
\end{lemma}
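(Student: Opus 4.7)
The strategy is to reduce to \autoref{lemma:incr.net.lim} and then check that the common bounded-operator limit actually lies in the trace class exactly when $\tr a_i$ is bounded. The starting point is that for a positive trace-class operator $t$ one has $\norm t\leq\tr t$ (this is the estimate \eqref{eq:norm.apsi} inside the proof of \lemmaref{lemma:traceclass.abs:opnorm}), so (i) implies norm-boundedness and \autoref{lemma:incr.net.lim} yields a single operator $a$ that is simultaneously the bounded-operator supremum of $a_i$ and its WOT-, SOT-, and weak*-limit.

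Next I would show that $a$ is trace-class with $\tr a=\sup_i\tr a_i$. Fixing an orthonormal basis $E$, on any finite $F\subseteq E$ WOT-convergence lets us pass the limit through the finite sum, giving
\[
\sum_{e\in F}\adj e a e=\lim_i\sum_{e\in F}\adj e a_i e\leq\sup_i\tr a_i.
\]
Taking the sup over finite $F$ shows $\tr a\leq\sup_i\tr a_i<\infty$; the reverse inequality is immediate from $a\geq a_i$. Hence $a$ is trace-class, and in fact the supremum in the trace class (yielding (ii)). Since $a-a_i$ is then positive trace-class, \lemmaref{lemma:traceclass.abs:normpos} gives $\trnorm{a-a_i}=\tr(a-a_i)=\tr a-\tr a_i\to 0$, which is (vi); then (vii) follows from $\norm{a-a_i}\leq 2\trnorm{a-a_i}$ (\lemmaref{lemma:traceclass.abs:opnorm}), and (iii)--(v) are already in hand from \autoref{lemma:incr.net.lim}.

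For the converses, each of (ii)--(vii) produces a trace-class bound on the net: (ii) gives $\tr a_i\leq\tr a$ by monotonicity of the trace on positive operators; (iii)--(v) identify the trace-class limit with the WOT-limit, which by \autoref{lemma:incr.net.lim} is also the supremum, yielding the same bound; (vi) implies (iii) via \lemmaref{lemma:traceclass.abs:opnorm}, and (vii) implies (iii) trivially. All limits and suprema agree because each coincides with the WOT-limit supplied by \autoref{lemma:incr.net.lim}. The main subtlety I expect is making sure every converse step genuinely uses the hypothesis that the limit is trace-class: without it, (vii) would fail, as the example $a_n=\sum_{j\leq n}\selfbutter j\to 1$ noted after \autoref{lemma:incr.net.lim} shows (SOT-convergent but with $\tr a_n=n$ unbounded).
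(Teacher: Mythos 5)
Your proof is correct, but it takes a genuinely different route for the core implication than the paper does. The paper first settles the equivalence of (ii)--(v) by citing \autoref{lemma:incr.net.lim}, gets (ii)$\Rightarrow$(i) from monotonicity of the trace norm, and then proves (i)$\Rightarrow$(vi) abstractly: it invokes the fact that $\tracecl(\calH)$ is a Banach space (via the duality with the compact operators), shows that $a_i$ is a Cauchy net in trace norm using $\trnorm{a_i-a_j}\leq\abs{\tr a_s-\tr a_i}+\abs{\tr a_s-\tr a_j}$ for $s\geq i,j$, and appeals to completeness for the existence of the limit. You instead identify the limit concretely: you take the bounded-operator supremum/WOT-limit $a$ from \autoref{lemma:incr.net.lim}, prove it is trace-class by the monotone-convergence computation $\sum_{e\in F}\adj eae=\lim_i\sum_{e\in F}\adj ea_ie\leq\sup_i\tr a_i$ over finite subsets of an orthonormal basis, and then obtain trace-norm convergence for free from $\trnorm{a-a_i}=\tr a-\tr a_i\to0$. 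Your argument buys self-containedness (no appeal to completeness of $\tracecl(\calH)$) and additionally yields the explicit identity $\tr a=\sup_i\tr a_i=\lim_i\tr a_i$; the paper's argument is shorter once completeness is taken as known but does not identify the limit until the end. Your treatment of the converses (reducing each of (ii)--(vii) to the trace bound (i) and noting all limits coincide with the WOT-limit) is also sound; the only cosmetic slip is that the inequality $\norm t\leq\tr t$ for positive trace-class $t$ is \eqref{eq:norm.pos} rather than \eqref{eq:norm.apsi}, but the estimate you need is indeed established there.
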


\begin{proof}
  The ``or equivalently $\trnorm{a_i}$'' part in \eqref{lemma:incr.net.lim.tr:norm} follows by \lemmaref{lemma:traceclass.abs:normpos}.
  
  Statements \eqref{lemma:incr.net.lim.tr:sup},\eqref{lemma:incr.net.lim.tr:wot},%
  \eqref{lemma:incr.net.lim.tr:sot},\eqref{lemma:incr.net.lim.tr:weak*} are equivalent by \autoref{lemma:incr.net.lim} (and the supremum/limits are the same in those four statements).
  
  Since the trace norm is monotonous (\autoref{lemma:norm.mono}), \eqref{lemma:incr.net.lim.tr:sup}
  implies \eqref{lemma:incr.net.lim.tr:norm}.

  \medskip
  
  We next show that \eqref{lemma:incr.net.lim.tr:norm} implies \eqref{lemma:incr.net.lim.tr:tr}.
  By \cite[Theorem 19.1]{conway00operator}, the space of trace-class operators is isometrically isomorphic to the dual of the space of compact operators.
  The dual of a normed space is a Banach space \cite[Proposition III.5.4]{conway13functional}, hence the dual of the space of compact operators is a Banach space.
  Thus the space of trace-class operators is isometrically isomorphic to a Banach space and thus it is a Banach space.
  Since $\tr$ is a positive linear functional \cite[Theorem 18.11\,(c)]{conway00operator}, $\tr$ is monotonous.
  By assumption $a_i$ is increasing, so $\tr a_i$ is increasing, too.
  By \eqref{lemma:incr.net.lim.tr:norm}, $\tr a_i$ is upper-bounded.
  Hence $\tr a_i$ converges and thus is a Cauchy net.
  Fix some $\varepsilon>0$.
  For sufficiently large $i,j$ we have $\abs{\tr a_j-\tr a_i}\leq\varepsilon/2$ because $\tr a_i$ is a Cauchy net.
  Then, for sufficiently large $i,j$, and with some $s\geq i,j$ ($s$ exists because $a_i$ is a net), we have
  \begin{equation*}
    \trnorm{a_i-a_j} \leq \trnorm{a_s-a_i} + \trnorm{a_s-a_j}
    \starrel=
    \abs{\tr a_s-\tr a_i} + \abs{\tr a_s-\tr a_j}
    \leq \varepsilon/2 + \varepsilon/2 = \varepsilon.
  \end{equation*}
  Here $(*)$ is by \lemmaref{lemma:traceclass.abs:normpos} and using that $a_s-a_i$, $a_s-a_j$ are positive because $a_i$ is an increasing net and $s\geq i,j$.
  Since $a_i$ is increasing, $\trnorm{a_j-a_i} = \tr( a_j-a_i) = \abs{\tr a_j-\tr a_j}\leq \varepsilon$.
  Thus $a_i$ is a Cauchy net as well (with respect to the trace norm).
  Since the trace-class operators are a Banach space (and thus complete), $a_i$ then converges to some trace-class operator.
  This shows that \eqref{lemma:incr.net.lim.tr:tr} is implied.

  \medskip

  We show that \eqref{lemma:incr.net.lim.tr:tr} implies \eqref{lemma:incr.net.lim.tr:norm2}.
  For all trace-class $t$, $\norm t\leq 2\trnorm t$.
  Hence convergence with respect to the trace norm implies convergence with respect to the norm.
  Thus \eqref{lemma:incr.net.lim.tr:tr} implies \eqref{lemma:incr.net.lim.tr:norm2}, and the limit is the same in both cases.

  We show that \eqref{lemma:incr.net.lim.tr:norm2} implies \eqref{lemma:incr.net.lim.tr:sot}.
  The norm-topology is finer than the SOT.
  (See \cite[overview after Definition II.2.3]{takesaki}. The norm-topology is called uniform there.)
  Hence convergence with respect to the norm topology implies convergence with respect to the SOT.
  Thus \eqref{lemma:incr.net.lim.tr:norm2} implies \eqref{lemma:incr.net.lim.tr:sot},
  and the limit is the same in both cases.

  \medskip

  Thus we have the following implications:
  \begin{center}
    \begin{tikzpicture}[x=1.5cm]
      \node (norm) at (-1,0) {\eqref{lemma:incr.net.lim.tr:norm}};
      \node (sup) at (0,0) {\eqref{lemma:incr.net.lim.tr:sup}};
      \node (wot) at (1,1) {\eqref{lemma:incr.net.lim.tr:wot}};
      \node (sot) at (0,1) {\eqref{lemma:incr.net.lim.tr:sot}};
      \node (weak*) at (1,0) {\eqref{lemma:incr.net.lim.tr:weak*}};
      \node (tr) at (-2,.5) {\eqref{lemma:incr.net.lim.tr:tr}};
      \node (norm2) at (-1,1) {\eqref{lemma:incr.net.lim.tr:norm2}};
      \tikzset{>=stealth, ->, double}
      \draw[double,<->] (sup) -- (sot);
      \draw[double,<->] (sot) -- (wot);
      \draw[double,<->] (wot) -- (weak*);
      \draw[double,<->] (weak*) -- (sup);
      \draw[double] (sup) -- (norm);
      \draw[double] (norm) -- (tr);
      \draw[double] (tr) -- (norm2);
      \draw[double] (norm2) -- (sot);
    \end{tikzpicture}
  \end{center}
  Hence all statements are equivalent.

  And as mentioned in the beginning of the proof,
  the supremum/limit in \eqref{lemma:incr.net.lim.tr:sup}, \eqref{lemma:incr.net.lim.tr:wot},
  \eqref{lemma:incr.net.lim.tr:sot}, \eqref{lemma:incr.net.lim.tr:weak*} is the same.
  And we showed that the limits in  \eqref{lemma:incr.net.lim.tr:tr} and \eqref{lemma:incr.net.lim.tr:norm2} are the same.
  And that the limits in \eqref{lemma:incr.net.lim.tr:norm2} and \eqref{lemma:incr.net.lim.tr:sot} are the same.
  Thus the supremum and the limits are the same in cases \eqref{lemma:incr.net.lim.tr:sup}--\eqref{lemma:incr.net.lim.tr:norm2}.
\end{proof}

\begin{lemma}\label{lemma:reference.sum}
  For a quantum reference $F$ and positive $a_i$, $F\pb\paren{\sum_i a_i}=\sum_i F(a_i)$ where the left sum converges iff the right sum converges.
  (Convergence of sums is with respect to the strong operator topology, SOT.)
\end{lemma}

\begin{proof}
  Let $s_M := \sum_{i\in M}a_i$ for finite sets $M$.
  Since $a_i$ are positive, $s_M$ are positive.
  Furthermore, $s_M$ is an increasing net (with respect to the set-inclusion ordering of finite sets $M$).
  $F$ is positive by \autoref{lemma:reg.props}.
  Thus $F(s_M)$ are positive.
  Since $F$ is positive, it is monotonous.
  Thus $F(s_M)$ is an increasing net.

  By \autoref{lemma:reg.isometric}, $\norm{s_M}=\norm{F(s_M)}$.
  Thus $\norm{s_M}$ is bounded iff $\norm{F(s_M)}$ is bounded.
  By \autoref{lemma:incr.net.lim}, this means that $s_M$ converges in the SOT iff $F(s_M)$ converges in the SOT.

  If $s_M,F(s_M)$ converge in the SOT, then they converge in the weak*-topology to the same limits by \autoref{lemma:incr.net.lim}.
  Since $F$ is weak*-continuous, $F(\lim_M s_M)=\lim_M F(s_M)$ in the weak*-topology.
  By \autoref{lemma:incr.net.lim}, this also implies $F(\lim_M s_M)=\lim_M F(s_M)$ in the SOT.

  Then we have
  \begin{equation*}
    F\pB\paren{\sum_i a_i}
    \starrel=
    F\pb\paren{\lim_M s_M}
    =
    \lim_M F(s_M)
    =
    \lim_M \sum_{i\in M} F(a_i)
    \starrel=
    \sum_i F(a_i).
  \end{equation*}
  Here each equality means that the limits/sums on the lhs converge iff they ones on the rhs converge. All limits/sums in the SOT.
  And $(*)$ is the definition of infinite sums \cite[Definition 4.11]{conway13functional}.
\end{proof}

\begin{lemma}\label{lemma:traceclass.tensor}
  If $a,b$ are trace-class operators on $\calH_A,\calH_B$, respectively, then $a\otimes b$ is trace-class and $\tr a\otimes b=\tr a\cdot\tr b$
  and $\trnorm{a\otimes b}=\trnorm a\cdot\trnorm b$. (Here $\trnorm\cdot$ is the trace norm.)
\end{lemma}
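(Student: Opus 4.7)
The plan is to first handle positive trace-class operators via an orthonormal basis computation, then bootstrap to the general case via a Jordan-type decomposition, and finally derive the trace-norm identity using \autoref{lemma:tensor.abs}.

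First, assume $a,b$ are positive and trace-class. Pick orthonormal bases $\{e_i\}$ of $\calH_A$ and $\{f_j\}$ of $\calH_B$; then $\{e_i\otimes f_j\}$ is an orthonormal basis of $\calH_A\otimes\calH_B$. By \autoref{lemma:tensor.abs} together with the ``in particular'' clause, $a\otimes b$ is positive, so $\abs{a\otimes b}=a\otimes b$. Using $(a\otimes b)(e_i\otimes f_j) = ae_i\otimes bf_j$ and the definition of the inner product on the tensor product, we get
\[
  \sum_{i,j}\adj{(e_i\otimes f_j)}(a\otimes b)(e_i\otimes f_j)
  = \sum_{i,j}(\adj{e_i}a e_i)(\adj{f_j}b f_j)
  = \Bigl(\sum_i \adj{e_i}a e_i\Bigr)\Bigl(\sum_j \adj{f_j}b f_j\Bigr)
  = \tr a\cdot\tr b,
\]
where the factorization of the double sum is justified because all summands are non-negative. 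In particular the sum is finite, so by \cite[Definition 18.3]{conway00operator} $a\otimes b$ is trace-class, and by \cite[Definition 18.10]{conway00operator} its trace equals $\tr a\cdot\tr b$.

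Next, for general trace-class $a,b$, write each of them as a $\setC$-linear combination of four positive trace-class operators. Concretely, split into Hermitian and anti-Hermitian parts $a=a_h+ia_a$ with $a_h:=(a+\adj a)/2$, $a_a:=(a-\adj a)/(2i)$, which are Hermitian trace-class (use \cite[Theorem 18.11\,(f)]{conway00operator} to see $\adj a$ is trace-class), and then apply the positive/negative part decomposition as in the proof of \lemmaref{lemma:traceclass.abs:opnorm}. Doing this to both $a$ and $b$ yields $a=\sum_{k=1}^4\alpha_k a_k$ and $b=\sum_{\ell=1}^4\beta_\ell b_\ell$ with $a_k,b_\ell$ positive trace-class. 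Since $\otimes$ is bilinear on bounded operators, $a\otimes b=\sum_{k,\ell}\alpha_k\beta_\ell(a_k\otimes b_\ell)$, and each $a_k\otimes b_\ell$ is positive trace-class by the previous paragraph, so $a\otimes b$ is trace-class. Applying the positive-case identity and linearity of the trace \cite[Theorem 18.11\,(c)]{conway00operator} gives
\[
  \tr(a\otimes b)=\sum_{k,\ell}\alpha_k\beta_\ell\,\tr a_k\cdot\tr b_\ell
  =\Bigl(\sum_k\alpha_k\tr a_k\Bigr)\Bigl(\sum_\ell\beta_\ell\tr b_\ell\Bigr)
  =\tr a\cdot\tr b.
\]

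Finally, for the trace-norm identity, note that by \autoref{lemma:tensor.abs} we have $\abs{a\otimes b}=\abs a\otimes\abs b$. Since $\abs a$ and $\abs b$ are positive trace-class (by \lemmaref{lemma:traceclass.abs:tc}), the positive case yields $\tr\abs{a\otimes b}=\tr\abs a\cdot\tr\abs b$. Combining with \lemmaref{lemma:traceclass.abs:norm} gives $\trnorm{a\otimes b}=\tr\abs{a\otimes b}=\tr\abs a\cdot\tr\abs b=\trnorm a\cdot\trnorm b$. The one step requiring care is the factorization of the double sum in the positive case; this is fine because the summands are non-negative reals, so there is no conditional convergence to worry about. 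No other step is expected to be an obstacle.
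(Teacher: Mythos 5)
Your proof is correct. It uses the same two core ingredients as the paper's proof --- the identity $\abs{a\otimes b}=\abs a\otimes\abs b$ from \autoref{lemma:tensor.abs} and the computation of $\sum_{i,j}\adj{(e_i\otimes f_j)}(\,\cdot\,)(e_i\otimes f_j)$ over a product orthonormal basis --- but organizes the argument differently. The paper establishes trace-classness and the trace-norm identity directly for general $a,b$ from the convergence of $\sum_{i,j}\adj{(e_i\otimes f_j)}\abs{a\otimes b}(e_i\otimes f_j)$, and then obtains $\tr(a\otimes b)=\tr a\cdot\tr b$ by factoring the double sum $\sum_{i,j}\adj{e_i}ae_i\cdot\adj{f_j}bf_j$ of \emph{complex} numbers (justified only by the terse remark that the sums are over $\setC$; this is fine for unordered sums, which are unconditionally and hence absolutely convergent, but it is the one delicate point). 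You instead prove the trace identity for positive operators first, where the factorization involves only non-negative reals, and then extend to general $a,b$ by a Jordan-type decomposition into four positive trace-class parts together with bilinearity of $\otimes$ and linearity of the trace. Your route costs a little extra machinery (the decomposition, which you correctly source from the proof of \lemmaref{lemma:traceclass.abs:opnorm}) but makes the summability issues entirely transparent; the paper's route is shorter. Both are sound, and your handling of the trace-norm identity via the positive case and \lemmaref{lemma:traceclass.abs:norm} matches the paper's in substance.
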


\begin{proof}
  By definition \cite[Definition 18.3]{conway00operator}, $a$ is trace-class iff $\sum_{e\in E} \adj e\abs ae$ converges for some orthonormal basis $E$. And  $\sum_{f\in F} \adj f\abs bf$ converges for some orthonormal basis $F$. With $G:=\{e\otimes f:e\in E, f\in F\}$, we have that
  \begin{multline*}
    \sum_{g\in G}\adj g\abs{a\otimes b} g
    =
    \sum_{e\in E, f\in F} (\adj e\otimes\adj f)\abs{a\otimes b}(e \otimes f)
    \ \ \txtrel{Lem.\ref{lemma:tensor.abs}}=
    \sum_{e\in E, f\in F} (\adj e\otimes\adj f)(\abs{a}\otimes\abs{b})(e \otimes f)
    \\
    = \sum_{e\in E, f\in F} \adj e\abs a e \cdot \adj f\abs bf
    = \sum_{e\in E} \adj e\abs a e \cdot \sum_{f\in F} \cdot \adj f\abs bf
  \end{multline*}
  converges. Thus $a\otimes b$ is trace-class.

  Since $\sum_{e\in E} \adj e\abs a e = \trnorm a$ by definition of the trace norm \cite[after Definition 18.3]{conway00operator}, and analogously for $\trnorm b$ and $\trnorm{a\otimes b}$, this equality also implies $\trnorm{a\otimes b}=\trnorm{a}\cdot\trnorm{b}$.

  By definition of the trace \cite[Definition 18.10]{conway00operator},
  $\tr a=\sum_{e\in E}\adj eae$ and
  $\tr b=\sum_{f\in F}\adj fbf$
  and
  $\tr (a\otimes b)=\sum_{g\in G}\adj g(a\otimes b)g$.
  (The definition of the trace is independent of the chosen basis $E,F,G$.
  And the sums converge.)
  
  Thus 
  \begin{multline*}
    \tr(a\otimes b) =
    \sum_{g\in G}\adj g(a\otimes b) g
    =
    \sum_{e\in E, f\in F} (\adj e\otimes\adj f)(a\otimes{b})(e \otimes f)
    \\
    = \sum_{e\in E, f\in F} \adj ea e \cdot \adj f bf
    \starrel= \sum_{e\in E} \adj ea e \cdot \sum_{f\in F} \adj f bf
    = \tr a \cdot \tr b.
  \end{multline*}
  For $(*)$ note that the sums are over the complex numbers.
\end{proof}

\begin{lemma}\label{lemma:tensor.generate.tracecl}
  If $S_1,S_2$ generate $\tracecl(\calH_1),\tracecl(\calH_2)$, respectively (with respect to the trace norm), then $\{\rho\otimes\sigma:\rho\in S_1,\sigma\in S_2\}$ generates $\tracecl(\calH_1\otimes\calH_2)$.
\end{lemma}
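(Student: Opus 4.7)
My plan is to proceed in two stages, first enlarging the generating set from $S_1,S_2$ to all trace-class operators on the factors, and then enlarging it further to all trace-class operators on the tensor product. Write $X := \overline{\Span\{\rho\otimes\sigma : \rho\in S_1,\ \sigma\in S_2\}}^{\trnorm\cdot}\subseteq\tracecl(\calH_1\otimes\calH_2)$; the goal is $X=\tracecl(\calH_1\otimes\calH_2)$.

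\emph{Stage 1: $\rho\otimes\sigma\in X$ for all $\rho\in\tracecl(\calH_1),\sigma\in\tracecl(\calH_2)$.} By \autoref{lemma:traceclass.tensor}, $\trnorm{\rho\otimes\sigma}=\trnorm{\rho}\cdot\trnorm{\sigma}$, so the tensor product $\otimes:\tracecl(\calH_1)\times\tracecl(\calH_2)\to\tracecl(\calH_1\otimes\calH_2)$ is separately continuous (in each argument, with Lipschitz constant the trace norm of the other). Given $\rho\in\tracecl(\calH_1)$ and $\sigma\in\tracecl(\calH_2)$, pick sequences $\rho_n\in\Span S_1$ and $\sigma_m\in\Span S_2$ with $\trnorm{\rho_n-\rho}\to 0$ and $\trnorm{\sigma_m-\sigma}\to 0$ (possible since $S_1,S_2$ generate). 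By bilinearity, each $\rho_n\otimes\sigma_m$ lies in $\Span\{\rho\otimes\sigma:\rho\in S_1,\sigma\in S_2\}$. A standard triangle-inequality argument (first fix $n$ and let $m\to\infty$ to obtain $\rho_n\otimes\sigma\in X$, then let $n\to\infty$) yields $\rho\otimes\sigma\in X$.

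\emph{Stage 2: elementary tensors of trace-class operators generate $\tracecl(\calH_1\otimes\calH_2)$.} I will approximate an arbitrary $t\in\tracecl(\calH_1\otimes\calH_2)$ by rank-one operators in the trace norm (e.g., using the singular value decomposition $t=\sum_i s_i\xi_i\eta_i^*$ with $\sum_i s_i=\trnorm{t}<\infty$, so the partial sums converge in trace norm), reducing to the case $t=\xi\eta^*$ with $\xi,\eta\in\calH_1\otimes\calH_2$. By definition of the Hilbert space tensor product, $\xi$ and $\eta$ can each be approximated in Hilbert norm by finite linear combinations of simple tensors $\sum_k c_k(\psi_k\otimes\phi_k)$. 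Using the identity $\trnorm{\xi'\eta'^*-\xi\eta^*}\leq\norm{\xi'-\xi}\cdot\norm{\eta'}+\norm{\xi}\cdot\norm{\eta'-\eta}$ (a consequence of $\trnorm{\psi\phi^*}=\norm\psi\cdot\norm\phi$), we may approximate $\xi\eta^*$ in trace norm by finite sums of operators of the form $(\psi\otimes\phi)(\psi'\otimes\phi')^* = (\psi{\psi'}^*)\otimes(\phi{\phi'}^*)$, and each summand is an elementary tensor of rank-one trace-class operators on the factors.

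\emph{Conclusion.} Combining: any $t\in\tracecl(\calH_1\otimes\calH_2)$ lies in the trace-norm-closed linear span of $\{\rho\otimes\sigma:\rho\in\tracecl(\calH_1),\sigma\in\tracecl(\calH_2)\}$ by Stage 2, and this set is contained in $X$ by Stage 1; since $X$ is closed under linear combinations and trace-norm limits, $t\in X$. I expect the main obstacle to be Stage 2, specifically justifying that the SVD of a trace-class operator converges in trace norm (as opposed to the weaker operator or Hilbert--Schmidt norms) — this is standard but requires a careful citation; alternatively one could cite the density of finite-rank operators in $\tracecl(\calH)$ with respect to $\trnorm\cdot$ directly and avoid invoking SVD.
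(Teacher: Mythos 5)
Your proposal is correct and follows essentially the same route as the paper's proof: both reduce to rank-one operators via the trace-norm density of finite-rank operators, expand the vectors as limits of finite sums of simple tensors using the estimate $\trnorm{\xi'\eta'^*-\xi\eta^*}\leq\norm{\xi'-\xi}\norm{\eta'}+\norm{\xi}\norm{\eta'-\eta}$, and pass from elementary tensors of trace-class operators to elementary tensors of elements of $S_1,S_2$ via the bound $\trnorm{\rho\otimes\sigma}=\trnorm\rho\cdot\trnorm\sigma$ (your two stages appear in the paper in the opposite order). Your worry about trace-norm convergence of the SVD is moot: the paper simply cites the density of finite-rank operators in $\tracecl(\calH)$ with respect to $\trnorm\cdot$, exactly the alternative you suggest.
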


\begin{proof}
  In slight overloading of notation, for sets of trace-class operators $X,Y$, we write $X\otimes Y$ for $\{x\otimes y: x\in X, y\in Y\}$. Furthermore, let $\overline X$ denote the topological closure of $X$ with respect to the trace norm.
  With that notation, the conclusion of the lemma becomes $T(\calH_1\otimes\calH_2) \subseteq \overline{\Span(S_1\otimes S_2)}$.

  Consider first a rank-1 operator $t$ on $\calH_1\otimes\calH_2$.
  Then $t$ can be written as $t=\psi\adj\phi$ for $\psi,\phi\in\calH_1\otimes\calH_2$  \cite[Exercise II.4.8]{conway13functional}.
  We have that $\psi=\sum_{i=1}^\infty \psi_i\otimes\psi_i'$ for $\psi_i\in\calH_1,\psi_i\in\calH_2$.
  And analogously $\phi=\sum_{j=1}^\infty \phi_j\otimes\phi_j'$ for $\phi_i\in\calH_1,\phi_j\in\calH_2$.
  (Convergence of the sums is with respect to the usual topology of the Hilbert space.)
  We have
  \begin{align*}
    &\pB\trnorm{\sum\nolimits_{\substack{i\leq n\\j\leq m}}\psi_{i}\adj{\phi_j} \otimes \psi_i'\adj{{\phi_j'}} - \psi\adj\phi} \\
    &=
    \pB\trnorm{
      \pb\paren{\sum\nolimits_{i\leq n}\psi_i \otimes \psi_i' - \psi}\adj{\pb\paren{\sum\nolimits_{j\leq m}\phi_j \otimes \phi_j' - \phi}}
      + \psi\adj{\paren{\sum\nolimits_{j\leq m}\phi_j \otimes \phi_j' - \phi}}
      + \pb\paren{\sum\nolimits_{i\leq n}\psi_i \otimes \psi_i' - \psi}\adj\phi
      } \\
    &\leq
    \pB\trnorm{
      \pb\paren{\sum\nolimits_{i\leq n}\psi_i \otimes \psi_i' - \psi}\adj{\pb\paren{\sum\nolimits_{j\leq m}\phi_j \otimes \phi_j' - \phi}}}
      + \pB\trnorm{
      \psi\adj{\paren{\sum\nolimits_{j\leq m}\phi_j \otimes \phi_j' - \phi}}}
      + \pB\trnorm{
      \pb\paren{\sum\nolimits_{i\leq n}\psi_i \otimes \psi_i' - \psi}\adj\phi
      } \\
    &\leq
      \underbrace{\pB\norm{{\sum\nolimits_{i\leq n}\psi_i \otimes \psi_i' - \psi}}}_{{}\xrightarrow{n\to\infty} 0}
      \cdot \underbrace{\pB\norm{{{\sum\nolimits_{j\leq m}\phi_j \otimes \phi_j' - \phi}}}}_{{}\xrightarrow{m\to\infty} 0}
      + \pb\norm{\psi}
      \cdot\underbrace{\pB\norm{{{\sum\nolimits_{j\leq m}\phi_j \otimes \phi_j' - \phi}}}}_{{}\xrightarrow{m\to\infty} 0}
      + \underbrace{\pB\norm{{\sum\nolimits_{i\leq n}\psi_i \otimes \psi_i' - \psi}}}_{{}\xrightarrow{n\to\infty} 0}
      \cdot\norm{\phi}
    \\
    &\xrightarrow{n,m\to\infty}0.
  \end{align*}
  Then $\psi\adj\phi=\sum_{ij=1}^\infty\psi_{i}\adj{\phi_j} \otimes \psi_i'\adj{{\phi_j'}}$ (convergence with respect to the trace norm). 
  Since $\psi_i\adj{\phi_i} \otimes \psi_i'\adj{{\phi_i'}} \in \tracecl(\calH_1)\otimes\tracecl(\calH_2)$ by definition, this implies that $\psi\adj\phi \in \overline{\Span \pb\paren{\tracecl(\calH_1)\otimes\tracecl(\calH_2)}}$.
  Thus all rank-1 operators on $\calH_1\otimes\calH_2$ are in $ \overline{\Span \pb\paren{\tracecl(\calH_1)\otimes\tracecl(\calH_2)}}$.
  Since finite rank operators are linear combinations of rank-1 operators, they are in $\overline{\Span \pb\paren{\tracecl(\calH_1)\otimes\tracecl(\calH_2)}}$ as well.
  And since finite rank operators are dense in the space ${\tracecl(\calH_1\otimes\calH_2)}$ of all trace-class operators \cite[Theorem 18.11\,(d)]{conway00operator}, we have
  \begin{equation}
    \label{eq:THH.spanTT}
    {\tracecl(\calH_1\otimes\calH_2)} \subseteq
    \overline{\Span \pb\paren{\tracecl(\calH_1)\otimes\tracecl(\calH_2)}}.
  \end{equation}

  Furthermore
  \begin{align*}
    \overline{\Span \pb\paren{\tracecl(\calH_1)\otimes\tracecl(\calH_2)}}
    &\starrel=
    \overline{\Span \pb\paren{\overline{\Span S_1}\otimes \overline{\Span S_2}}}
    \starstarrel\subseteq
    \overline{\Span \overline{\pb\paren{\Span S_1\otimes \overline{\Span S_2}}}} \\
    &\starstarrel\subseteq
    \overline{\Span \overline{ \overline{\pb\paren{\Span S_1\otimes \Span S_2}}}}
    =
    \overline{\Span {{\pb\paren{\Span S_1\otimes \Span S_2}}}}
    =
    \overline{\Span {{\pb\paren{ S_1\otimes S_2}}}}.
  \end{align*}
  Here $(*)$ follows because by assumption $S_i$ generates $\tracecl(\calH_i)$, i.e., $\tracecl(\calH_i)=\overline{\Span S_i}$.
  And $(**)$ follows because $\overline X\otimes Y\subseteq \overline{X\otimes Y}$ (and $X\otimes \overline Y\subseteq \overline{X\otimes Y}$) which in turns follows from the trace norm continuity of $x\mapsto x\otimes y$ which in turn follows because $x\mapsto x\otimes y$ is bounded by \autoref{lemma:traceclass.tensor}.
  
  With \eqref{eq:THH.spanTT} this implies $T(\calH_1\otimes\calH_2) \subseteq \overline{\Span(S_1\otimes S_2)}$ and thus proves the lemma.
\end{proof}

\begin{lemma}\label{lemma:UU.qchannel}
  If $U:\calH_A\to\calH_B$ is an isometry, $\calE:\tracecl(\calH_A)\to\tracecl(\calH_B)$, $\rho\mapsto U\rho\adj U$ is a quantum channel (i.e., a completely positive trace-preserving linear map).
\end{lemma}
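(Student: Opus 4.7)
I will verify the four constituent properties of a quantum channel in turn: well-definedness (trace-class codomain), linearity, trace preservation, and complete positivity.

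First, for well-definedness and trace preservation simultaneously, I would apply \lemmaref{lemma:circ.trace:ubu} with $b:=\rho$ and $u:=U$. Since $\rho\in\tracecl(\calH_A)$ is in particular a square operator and $U$ is an isometry, the lemma directly gives that $U\rho\adj U$ is trace-class (so $\calE$ indeed maps into $\tracecl(\calH_B)$) and that $\tr U\rho\adj U=\tr\rho$, establishing trace preservation. Linearity is then immediate from the bilinearity of operator composition: $U(\alpha\rho_1+\beta\rho_2)\adj U=\alpha U\rho_1\adj U+\beta U\rho_2\adj U$.

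For complete positivity, I need to show that for every Hilbert space $\calK$ (or equivalently every $n$, so that $\calK=\setC^n$), the map $\calE\otimes\id$ on $\tracecl(\calH_A\otimes\calK)$ sends positive operators to positive operators. The key observation is that $(\calE\otimes\id)(\rho)=(U\otimes 1_\calK)\,\rho\,\adj{(U\otimes 1_\calK)}$, and that $U\otimes 1_\calK$ is again an isometry, since
\[
\adj{(U\otimes 1_\calK)}(U\otimes 1_\calK)=\adj UU\otimes 1_\calK=1_{\calH_A}\otimes 1_\calK=1.
\]
Applying \lemmaref{lemma:circ.trace:ubu} once more ensures that $(\calE\otimes\id)(\rho)$ is trace-class, so the map is well-defined on trace-class operators. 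Positivity of $V\rho\adj V$ for positive $\rho$ and any bounded $V$ is elementary: for every vector $\psi$, $\adj\psi V\rho\adj V\psi=\adj{(\adj V\psi)}\rho(\adj V\psi)\geq0$ by positivity of $\rho$. Applying this with $V:=U\otimes 1_\calK$ concludes complete positivity.

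No step is a serious obstacle; the only mild subtlety is confirming that $U\otimes 1_\calK$ qualifies as an isometry in the required sense so that \lemmaref{lemma:circ.trace:ubu} applies in the tensored setting, but this is handled by the one-line computation above. Everything else reduces to citing \lemmaref{lemma:circ.trace:ubu} and unwinding definitions.
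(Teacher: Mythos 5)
Your proof is correct, and the trace-preservation and linearity parts coincide with the paper's (circularity of the trace via \lemmaref{lemma:circ.trace:ubu} plus $\adj UU=1$). Where you diverge is complete positivity: the paper first checks that $\calE$ is a $*$-homomorphism ($\calE(\rho\sigma)=\calE(\rho)\calE(\sigma)$ using $\adj UU=1$, and $\calE(\adj\rho)=\adj{\calE(\rho)}$) and then invokes the standard fact that $*$-homomorphisms are completely positive \cite[Example 34.3\,(a)]{conway00operator}, whereas you verify positivity of each amplification directly, identifying $(\calE\otimes\id)$ with conjugation by the isometry $U\otimes 1_\calK$ and using the elementary observation that $\adj\psi V\rho\adj V\psi=\adj{(\adj V\psi)}\rho(\adj V\psi)\geq0$. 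Your route is more self-contained and avoids the external citation at the cost of a small amount of bookkeeping (checking that the amplification of a conjugation map is again a conjugation map, and that $U\otimes 1_\calK$ is an isometry — both of which you do correctly); the paper's route is shorter and reuses machinery it needs elsewhere anyway. You also explicitly note that $U\rho\adj U$ is trace-class so that $\calE$ lands in $\tracecl(\calH_B)$, a point the paper leaves implicit. Both arguments are sound.
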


\begin{proof}
  $\calE$ is obviously linear.
  Since $U$ is an isometry, $\calE(\rho\sigma)=U\rho\sigma\adj U=U\rho\adj UU\sigma\adj U=\calE(\rho)\calE(\sigma)$, i.e., $\calE$ is multiplicative.
  Furthermore $\calE(\adj\rho)=U\adj\rho \adj U = \adj{\pb\paren{U\rho\adj U}} = \adj{\calE(\rho)}$, i.e., $\calE$ preserves adjoints.
  That is, $\calE$ is a *-homomorphism.
  By \cite[Example 34.3\,(a)]{conway00operator} this implies that $\calE$ is completely positive.

  Furthermore,
  \begin{equation*}
    \tr\calE(\rho) = \tr U\rho\adj U \starrel= \tr \rho \adj UU \starstarrel= \tr\rho I = \tr\rho.
  \end{equation*}
  Here $(*)$ follows by the circularity of the trace \cite[Theorem 18.11\,(e)]{conway00operator} (\lemmaref{lemma:circ.trace:circ}).
  And $(**)$ follows since $U$ is an isometry.
  Thus $\calE$ is trace-preserving.
\end{proof}

\begin{lemma}[Schrödinger-Heisenberg dual]\lemmalabel{lemma:SH.adjoint}
  Consider the following relation between functions $f:T(\calH)\to T(\calK)$ and  $g:B(\calK)\to B(\calH)$:
  \[
    \tr f(t)a = \tr g(a)t
    \qquad\text{for all $t\in\tracecl(\calH)$, $a\in\bounded(\calK)$}
    \qquad\qquad (*)
  \]
  \begin{compactenum}[(i)]
  \item\itlabel{item:SH.adjoint.unique}
    The relation $(*)$ is biunique.
    I.e., for every $f$ there is at most one $g$, and for every $g$ there is at most one $f$ satisfying~$(*)$.
  \item\itlabel{lemma:SH.adjoint.corr} Bounded linear maps $f:T(\calH)\to T(\calK)$ stand in
    1-1 correspondence to
    weak*-continuous bounded linear maps $g:B(\calK)\to B(\calH)$ via the relation $(*)$.
    (I.e., for every such $f$ there exists exactly one such $g$ and vice versa.)
  \item\itlabel{item:SH.adjoint.pos} 
    And $f$ is positive iff $g$ is positive.
    (Here and in the following statements, assume $f,g$ stand in relation~$(*)$.)
  \item \itlabel{item:SH.adjoint.cp} 
    And $f$ is completely positive iff $g$ is completely positive.
  \item \itlabel{item:SH.adjoint.trp} 
    And $f$ is trace-preserving iff $g$ is unital.
  \item \itlabel{item:SH.adjoint.trred}
    And $f$ is trace-reducing ($\tr f(t)\leq \tr t$ for positive $t$) iff $g$ is subunital.
  \end{compactenum}
\end{lemma}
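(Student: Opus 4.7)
The plan is to recognize $(*)$ as the standard Banach-space adjoint relation under the identification $\bounded(\calH) \cong \tracecl(\calH)^*$ via the trace pairing $\langle a,t\rangle := \tr(at)$ (e.g., \cite[Theorem 19.1]{conway00operator}). Under this identification, the condition $\tr f(t)a = \tr g(a)t$ states exactly that $g$ is the Banach adjoint of $f$ (with the roles of pre-dual and dual swapped as appropriate). Everything then reduces to standard duality plus some elementary checks that positivity/trace conditions dualize correctly.

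For \eqref{item:SH.adjoint.unique} and uniqueness in \eqref{lemma:SH.adjoint.corr}, I would argue directly: if $g_1,g_2$ both satisfy $(*)$ for the same $f$, then for every $a \in \bounded(\calK)$, $\tr(g_1(a)-g_2(a))t = 0$ for all $t \in \tracecl(\calH)$; taking $t = \psi\adj\phi$ (a rank-one trace-class operator) gives $\adj\phi(g_1(a)-g_2(a))\psi = 0$ for all vectors $\psi,\phi$, hence $g_1(a)=g_2(a)$. Uniqueness of $f$ given $g$ is symmetric, using $a = \selfbutter\psi \in \bounded(\calK)$ to extract matrix entries of $f_1(t)-f_2(t)$. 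For existence in \eqref{lemma:SH.adjoint.corr}: given bounded $f:\tracecl(\calH)\to\tracecl(\calK)$, define $g$ as the Banach adjoint $f^*:\tracecl(\calK)^*\to\tracecl(\calH)^*$, reinterpreted as a map $\bounded(\calK)\to\bounded(\calH)$ via the identification above; then $g$ is bounded (with $\norm g = \norm f$) and weak*-weak* continuous by general properties of Banach adjoints. Conversely, given a weak*-continuous bounded $g:\bounded(\calK)\to\bounded(\calH)$, the classical representation theorem for weak*-weak*-continuous maps between duals of Banach spaces (cf.~\cite[Proposition V.1.6]{conway13functional} or the analogous statement in Conway's operator theory book) produces a unique bounded $f:\tracecl(\calH)\to\tracecl(\calK)$ with $g = f^*$.

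For \eqref{item:SH.adjoint.pos}: a trace-class operator $t$ is positive iff $\tr(at) \geq 0$ for all positive bounded $a$ (and similarly for bounded operators being positive, testing against positive trace-class rank-one operators $\psi\adj\psi$, via $\adj\psi t\psi = \tr(t\,\psi\adj\psi)$). Thus $f$ is positive iff $\tr f(t) a \geq 0$ for all positive $t,a$, which via $(*)$ reads $\tr g(a) t \geq 0$ for all positive $t,a$, i.e., $g$ is positive. For \eqref{item:SH.adjoint.cp}: one checks that under the identification $\bounded(\calK \otimes \setC^n) \cong \tracecl(\calH \otimes \setC^n)^*$, the adjoint of $f \otimes \id_n$ is $g \otimes \id_n$ (by linearity this reduces to checking on elementary tensors $a\otimes b$ and $t \otimes s$, where the trace factors as $\tr a f(t) \cdot \tr bs$); then apply \eqref{item:SH.adjoint.pos}. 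For \eqref{item:SH.adjoint.trp}: setting $a = 1$ in $(*)$ gives $\tr f(t) = \tr g(1) t$ for all $t$, so $f$ is trace-preserving iff $\tr g(1) t = \tr t$ for all $t$, iff $g(1)=1$ by the uniqueness argument of \eqref{item:SH.adjoint.unique}. For \eqref{item:SH.adjoint.trred}: likewise $f$ is trace-reducing iff $\tr g(1)t \leq \tr t$ for all positive $t$, iff $g(1)\leq 1$ (using $t=\psi\adj\psi$ to read off $\adj\psi g(1)\psi \leq \norm\psi^2$).

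The main obstacle is the existence half of \eqref{lemma:SH.adjoint.corr}, specifically the representation of weak*-continuous maps as Banach adjoints; everything else is either a one-line duality computation or the elementary observation that positivity/unitality/trace conditions are detected by testing against rank-one witnesses and dualize cleanly through $(*)$.
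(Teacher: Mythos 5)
Your proposal is correct and follows essentially the same route as the paper's proof: both identify $(*)$ as the adjoint relation under the duality $\bounded(\calH)\cong\tracecl(\calH)^*$ given by the trace pairing (the paper invokes Schaefer's $\sigma(X,Y)$-duality machinery where you cite the Banach-adjoint representation theorem directly, but these are the same tool), and both handle (i), (iii)--(vi) by testing against rank-one operators and reducing complete positivity to positivity of the $n$-amplified maps, which stand in the same relation $(*)$. The only cosmetic difference is that the paper amplifies via $M_n(\bounded(\calH))$ acting on $\calH\oplus\dots\oplus\calH$ while you tensor with $\setC^n$; these coincide.
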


\begin{proof}\anonymous{}{\!\!\footnote{\notanonymous Thanks to Robert Furber for helpful pointers on how to prove this \cite{furber21shcomments}.}}
  We first show \eqref{item:SH.adjoint.unique}.

  An operator $b\in B(\calH)$ is uniquely determined by the function $\phi,\psi\mapsto \phi^*b\psi$. We have
  $ \phi^*b\psi =  \tr \phi^*b\psi = \tr b\psi \phi^*$ where the last equality is the circularity of the trace (\lemmaref{lemma:circ.trace:circ}).
  Since $\psi\phi^*$ is trace-class, this means that $b$ is uniquely determined by $\tr bt$ for all $t\in T(\calH)$.
  Thus for fixed $f$, $g(a)$ is uniquely determined for all $a$ by $(*)$.
  Similarly, for fixed $g$, $f(t)$ is uniquely determined for all $t$ by $(*)$.

  This shows \eqref{item:SH.adjoint.unique}.
  
  \bigskip

  We show \eqref{lemma:SH.adjoint.corr}.

  We need some observations first:
  
  We have a duality $\langle B(\calH), T(\calH)\rangle$
  with canonical bilinear form $(a,t)\mapsto \tr at$ with $a\in\bounded(\calH)$, $t\in\tracecl(\calH)$.
  By definition of a duality \cite[beginning of section IV.1]{schaefer71topological}, this means that $(a,t)\mapsto \tr at$ is bilinear and satisfies $(\forall a. \tr at=0)\Longrightarrow t=0$ and
  $(\forall t. \tr at=0)\Longrightarrow a=0$.
  Bilinearity follows from the linearity of the trace, and the other two properties were shown in
  the proof of \eqref{item:SH.adjoint.unique}.
  Analogously we have a duality $\langle B(\calK), T(\calK)\rangle$.

  For a duality $\langle X,Y\rangle$, let $\sigma(X,Y)$ denote the coarsest topology on $X$ such that $x\mapsto \langle x,y\rangle$ is continuous for all $y\in Y$ \cite[before IV.1.2]{schaefer71topological}.
  Here $\langle x,y\rangle$ is the canonical bilinear form of $\langle X,Y\rangle$. 

  By definition of the weak*-topology \cite[beginning of Section 20]{conway00operator}, $\sigma(B(\calK),T(\calK))$ and $\sigma(B(\calH),T(\calH))$ are the weak*-topology on $B(\calK)$, $B(\calH)$, respectively.
  
  Now we proceed to the actual proof of \eqref{lemma:SH.adjoint.corr}.
  
  First, assume that $f:\tracecl(\calH)\to \tracecl(\calK)$ is a bounded linear map. We want to show that there exists a weak*-continuous bounded linear map $g:\bounded(\calK)\to\bounded(\calH)$ that stands in relation $(*)$ with~$f$.
  (Uniqueness of $g$ already follows from \eqref{item:SH.adjoint.unique}.)

  By \cite[discussion before IV.2.4]{schaefer71topological}, any continuous (i.e., bounded) $f:T(\calH)\to T(\calK)$ is continuous for $\sigma(\tracecl(\calH),\tracecl(\calH)')$ and $\sigma(\tracecl(\calK),\tracecl(\calK)')$ where $X'$ denotes the dual of $X$. (I.e., the space of continuous linear functionals on $X$.)
  Furthermore, $\bounded(\calH)$ is isometrically isomorphic to the dual of $\tracecl(\calH)$ if we identify $b\in\bounded(\calH)$ with $t\in\tracecl\mapsto \tr bt$ \cite[Theorem 19.2]{conway00operator}.
  Thus $\sigma(\tracecl(\calH),\tracecl(\calH)')$ is the same as $\sigma(\tracecl(\calH),\bounded(\calH))$.
  Analogously, $\sigma(\tracecl(\calK),\tracecl(\calK)')$ is the same as $\sigma(\tracecl(\calK),\bounded(\calK))$.
  Thus $f$ is continuous for $\sigma(T(\calH),B(\calH))$ and $\sigma(T(\calK),B(\calK))$.
  
  By \cite[IV.2.1]{schaefer71topological}, this implies that there exists a  $g:\bounded(\calK)\to\bounded(\calH)$ (the \emph{adjoint}) that is continuous for $\sigma(B(\calK),T(\calK))$ and $\sigma(B(\calH),T(\calH))$.
  This $g$ is furthermore linear and satisfies $\tr af(t)=\tr g(a)t$ for all $a,t$ because it is the restriction of the \emph{algebraic adjoint} to a smaller domain, and the algebraic adjoint has those properties \cite[beginning of section IV.2]{schaefer71topological}.
  So $g$ is weak*-continuous and linear.

  We next show that $g$ is bounded.
  Fix $a\in B(\calK)$. Since $f$ is bounded, there exists a $c$ such that
  $\trnorm{f(t)}\leq c\trnorm{t}$ for all $t\in T(\calH)$.
  We have
  \begin{align*}
    \norm{g(a)}
    & = \sup_{\psi,\phi} \pb\abs{\adj\phi g(a)\psi}
      \starrel = \sup_{\psi,\phi} \pb\abs{\tr g(a)\psi\adj\phi}
      = \sup_{\psi,\phi} \pb\abs{\tr af(\psi\adj\phi)}
    \\
    &      \starstarrel\leq
      \sup_{\psi,\phi} \norm a \cdot \pb\trnorm{f(\psi\adj\phi)}
      \leq \sup_{\psi,\phi} \norm a \cdot c \underbrace{\trnorm {\psi\adj\phi}}_{=1}
      = c \norm a.
  \end{align*}
  Here the suprema range over unit vectors $\psi,\phi\in \calH$.
  And $(*)$ uses the circularity of the trace, \lemmaref{lemma:circ.trace:circ}.
  And $(**)$ follows by \cite[Theorem 18.11\,(e)]{conway00operator}.
  Thus $g$ is bounded.

  Hence there exists a weak*-continuous bounded linear map $g$ that stands in relation $(*)$ with $f$.

  Next, assume that $g:\bounded(\calK)\to\bounded(\calH)$ is a weak*-continuous bounded linear map. We want to show that there exists a bounded linear map $f:T(\calH)\to T(\calK)$ that stands in relation $(*)$ with $g$.
  Since $g$ is weak*-continuous, $g$ is continuous for $\sigma(B(\calK),T(\calK))$ and $\sigma(B(\calH),T(\calH))$.
  By \cite[IV.2.1]{schaefer71topological}, this implies that there exists an $f$ that is continuous for $\sigma(T(\calH),B(\calH))$ and $\sigma(T(\calK),B(\calK))$ satisfying $\tr af(t)=\tr g(a)t$ for all $a,t$.
  This $f$ is furthermore linear and satisfies $\tr af(t)=\tr g(t)$ for all $a,t$ because it is the restriction of the \emph{algebraic adjoint} to a smaller domain, and the algebraic adjoint has those properties \cite[beginning of section IV.2]{schaefer71topological}.
  
  We show that $f$ is
  bounded. Fix $t\in T(\calH)$. By the polar decomposition \cite[Theorem~3.9]{conway00operator},
  there is a partial isometry $u\in B(\calK)$ such that $f(t)=u\abs{f(t)}$.
  Furthermore, $\adj uu$ is the projection on the closure of the range of $\abs{f(t)}$.
  (Shown in the proof of \cite[Theorem~3.9]{conway00operator}, noting that $U=W=u$ and $P=\abs A=\abs{f(t)}$ there.)
  Thus $\adj uf(t)=\adj uu\abs{f(t)}=\abs{f(t)}$.
  Then
  \begin{align*}
    \trnorm{f(t)} &=
    \pb\trnorm{u\abs{f(t)}}
    \starrel\leq
    \norm u\cdot \pb\trnorm{\abs{f(t)}}
    \starstarrel=
    \norm u\cdot \pb\abs{\tr{\abs{f(t)}}}
    =
    \norm u\cdot \pb\abs{\tr{\adj uf(t)}}
    \\&
    =
    \norm u\cdot \pb\abs{\tr{g(\adj u)t}}
    \tristarrel\leq
    \norm u\cdot \pb\norm{g(\adj u)} \cdot \trnorm t
    \leq
    \norm u\cdot \norm g \norm{\adj u} \cdot \trnorm t
    \fourstarrel\leq
    \norm g\cdot \trnorm t.
  \end{align*}
  Here $(*)$ follows by \cite[Theorem 18.11\,(g)]{conway00operator}.
  And $(**)$ by \lemmaref{lemma:traceclass.abs:normpos}.
  And $(*{*}*)$ by \cite[Theorem 18.11\,(e)]{conway00operator}.
  And the operator norm $\norm g$ exists because $g$ is bounded by assumption.
  And $(*{*}{*}*)$ uses that $\norm{u},\norm{\adj u}\leq 1$ since $u$ is a partial isometry.
  Thus $f$ is bounded.

  Hence there exists a bounded linear map $f$ that stands in relation $(*)$ with $g$.

  This shows \eqref{lemma:SH.adjoint.corr}

  \bigskip

  We now show \eqref{item:SH.adjoint.pos}.

  We need the auxiliary fact that $\tr at\geq0$ for positive $a\in B(\calH)$ and positive $t\in T(\calH)$.
  Since $a$ is positive, by definition of positivity, $a=\adj{b}b$ for some $b\in B(\calH)$ \cite[Theorem~3.4\,(c)]{conway00operator}. 
  Then $\tr at=\tr bt\adj b\geq 0$ where the $=$ follows from the circularity of the trace (\lemmaref{lemma:circ.trace:circ}),
  and $\geq$ follows since $bt\adj b$ is positive and thus has positive trace \cite[Theorem~18.11\,(c)]{conway00operator}.
  
    Assume that $f$ is positive. We show that $g$ is positive. Fix a positive $a\in B(\calK)$. 
  For every $\psi\in \calH$, $\psi\adj\psi$ is positive and trace-class. Thus $f(\psi\adj\psi)$ is positive.
  Then $\tr af(\psi\adj\psi)\geq0$. Then $\adj\psi g(a)\psi = \tr g(a) \psi\adj\psi
  = \tr a f(\psi\adj\psi) \geq 0$.
  Here the first equality uses the circularity of the trace (\lemmaref{lemma:circ.trace:circ}) and $\geq$ uses the auxiliary fact.
  Since $\adj\psi g(a)\psi\geq0$ for all $\psi$, $g(a)$ is positive \cite[Proposition 3.6]{conway00operator}.
  Since this holds for every positive $a$, $g$ is positive.

  Assume that $g$ is positive. We show that $f$ is positive. Fix a positive $t\in T(\calH)$.
  For every $\psi\in \calH$, $\psi\adj\psi$ is positive and bounded. Thus $g(\psi\adj\psi)$ is positive.
  Then $\tr g(\psi\adj\psi)t\geq0$. Then $\adj\psi f(t)\psi = \tr\psi\adj\psi f(t)
  = \tr g(\psi\adj\psi) t \geq 0$.
  Here the first equality uses the circularity of the trace (\lemmaref{lemma:circ.trace:circ}).
  Since $\adj\psi f(t)\psi\geq0$ for all $\psi$, $f(t)$ is positive. Since this holds for every positive $t$, $f$ is positive.

  Thus $f$ is positive iff $g$ is.
  This shows \eqref{item:SH.adjoint.pos}.
  
  \bigskip

  We show \eqref{item:SH.adjoint.cp}. Fix $f,g$ standing in relation $(*)$.

  Let $M_n(S)$ denote the set of $n\times n$ matrices with elements in
  $S$ where $S=T(\calH),T(\calK),B(\calH),B(\calK)$, together with the usual matrix
  addition and multiplication.
  For a linear map $f:S\to S'$, let $f^{(n)}:M_n(S)\to M_n(S')$
  be the component-wise application of $f$.

  By definition \cite[Definition~34.2]{conway00operator}, $f$ is completely
  positive iff $f^{(n)}$ is positive for all $n\geq 1$.

  Note that $M_n(B(\calH))$ are the bounded operators on $\calH^{(n)}:=\calH\oplus\dots\oplus \calH$, and analogously for $M_n(B(\calK))$.
  And $M_n(T(\calH)), M_n(T(\calK))$ are the trace-class operators on $\calH^{(n)},\calK^{(n)}$.

  Since $f,g$ stand in correspondence $(*)$, $f^{(n)},g^{(n)}$ stand in correspondence $(*)$ as well.
  (Note that $\tr t=\sum_i \tr t_{ii}$ for $t\in M_n(T(\calH))$.)
  Then $f$ is completely positive iff $f^{(n)}$ is positive for all $n\geq1$
  iff $g^{(n)}$ is positive for all $n\geq 1$ (by \eqref{item:SH.adjoint.pos} applied to $f^{(n)},g^{(n)}$) iff $g$ is completely positive.

  This shows \eqref{item:SH.adjoint.cp}.

  \bigskip

  We show \eqref{item:SH.adjoint.trp}. $f$ is trace-preserving iff $\tr f(t)=\tr t$ for all $t$
  iff $\tr f(t)1=\tr t$ for all $t$ iff $\tr t g(1)=\tr t$ for all $t$ (by relation $(*)$) iff $g(1)=1$ (i.e., $g$ is unital). The $\Rightarrow$ direction in the last ``iff'' follows since
  $\tr ta$  determines $a$ (see the proof of \eqref{item:SH.adjoint.unique}).
  This shows \eqref{item:SH.adjoint.trp}.

  \bigskip

  We show \eqref{item:SH.adjoint.trred}.
  
  Assume that $f$ is trace-reducing.
  Then for all $\psi\in\calH$,
  \begin{equation*}
    \adj\psi(1-g(1))\psi
    = \tr\adj\psi\psi - \tr g(1) \psi\adj\psi
    = \tr\adj\psi\psi - \tr 1 f(\psi\adj\psi)
    \starrel\geq 0.
  \end{equation*}
  Here $(*)$ holds because $f$ is trace-reducing and $\psi\adj\psi$ positive.
  Since this holds for all $\psi$, we have that ${1-g(1)}$ is positive \cite[Proposition 3.6]{conway00operator}.
  Thus $g(1)\leq 1$. Hence $g$ is subunital.

  Now assume that $g$ is subunital.
  Then for positive $\rho\in\tracecl(\calH)$,
  $1-g(1)$ is positive and thus $\sqrt\rho\pb\paren{1-g(1)}\rho$ is positive. Then
  \begin{equation*}
    \tr\rho - \tr f(\rho) = \tr\rho - \tr f(\rho) 1 =
    \tr\rho - \tr \rho g(1)
    \starrel=\tr\sqrt\rho\sqrt\rho - \tr \sqrt\rho g(1) \sqrt\rho
    =\tr \sqrt\rho\pb\paren{1-g(1)}\sqrt\rho \geq 0.
  \end{equation*}
  Here $(*)$ uses the circularity of the trace (\lemmaref{lemma:circ.trace:circ}).
  Hence $\tr f(\rho)\leq\tr\rho$.
  So $f$ is trace-reducing.

  This shows \eqref{item:SH.adjoint.trred}.
\end{proof}

\begin{lemma}\label{lemma:channel.bounded}
  Any quantum subchannel $\calE$ is a bounded linear map (with respect to the trace norm).
  In particular, continuous with respect to the trace norm.  
\end{lemma}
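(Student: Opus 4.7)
The plan is to reduce boundedness on arbitrary trace-class operators to the trace-reducing property, which only applies to positive operators. The key is a four-piece positive decomposition.

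First, I would observe that on a positive trace-class operator $\rho$, we have $\trnorm{\calE(\rho)} = \tr\calE(\rho) \leq \tr\rho = \trnorm{\rho}$: the first and last equalities hold by \lemmaref{lemma:traceclass.abs:normpos} since $\calE(\rho)$ is positive (complete positivity implies positivity), and the inequality in the middle is exactly the trace-reducing condition in the definition of a subchannel.

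Next, for an arbitrary $t\in\tracecl(\calH_A)$, I would decompose it into a combination of four positive trace-class operators as follows. Set $t_h:=(t+\adj t)/2$ and $t_a:=(t-\adj t)/(2i)$; these are Hermitian, trace-class (closure of the trace-class operators under these operations is standard, and $\trnorm{\adj t}=\trnorm t$ by \cite[Theorem 18.11\,(f)]{conway00operator}), and satisfy $t=t_h+it_a$ as well as $\trnorm{t_h},\trnorm{t_a}\leq\trnorm t$ by the triangle inequality. Using the positive/negative part decomposition \cite[Proposition~3.2]{conway00operator}, write $t_h=t_{h+}-t_{h-}$ and $t_a=t_{a+}-t_{a-}$ with all four pieces positive, and recall $\abs{t_h}=t_{h+}+t_{h-}$ (\cite[Exercise~3.4]{conway00operator}) so that $\trnorm{t_{h\pm}}\leq\tr\abs{t_h}=\trnorm{t_h}\leq\trnorm t$ by \lemmaref{lemma:traceclass.abs:norm}, and similarly for $t_{a\pm}$.

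Finally, I would combine linearity of $\calE$ with Step~1 applied to each of the four positive pieces:
\begin{equation*}
  \trnorm{\calE(t)}
  \leq \trnorm{\calE(t_{h+})}+\trnorm{\calE(t_{h-})}+\trnorm{\calE(t_{a+})}+\trnorm{\calE(t_{a-})}
  \leq 4\trnorm t.
\end{equation*}
This proves boundedness (with some constant, say $4$). Continuity with respect to the trace norm then follows immediately since bounded linear maps between normed spaces are continuous.

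There is no genuine obstacle here; the only thing to watch is that the pieces $t_{h\pm},t_{a\pm}$ are indeed trace-class, which follows from \lemmaref{lemma:traceclass.abs:between} applied to $0\leq t_{h\pm}\leq\abs{t_h}$ together with \lemmaref{lemma:traceclass.abs:tc} giving that $\abs{t_h}$ is trace-class.
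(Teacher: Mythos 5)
Your proposal is correct and follows essentially the same route as the paper's proof: split $t$ into Hermitian and anti-Hermitian parts, then into positive and negative parts via \cite[Proposition~3.2]{conway00operator}, and apply the trace-reducing property together with \lemmaref{lemma:traceclass.abs:normpos} to each positive piece. The only (immaterial) difference is bookkeeping: by bounding the four positive pieces individually you get the constant $4$, whereas the paper first establishes $\trnorm{\calE(t)}\leq\trnorm t$ for Hermitian $t$ (using $\trnorm{t_+}+\trnorm{t_-}=\tr\abs t=\trnorm t$ exactly) and then obtains the constant $2$.
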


\begin{proof}
  Fix a Hermitian trace-class $t$. By \cite[Proposition 3.2]{conway00operator}, $t=t_+-t_-$ where $t_+$ and $t_-$ are the ``positive and negative part'' of $t$, and $t_+,t_-$ are positive.
  Furthermore, $\abs t=t_++t_-$ by  \cite[Exercise 3.4]{conway00operator}.
  Since $\abs t$ is trace-class (\lemmaref{lemma:traceclass.abs:tc}), and $t_+,t_i \leq\abs t$, we have that $t_+,t_-$ are trace-class by \lemmaref{lemma:traceclass.abs:between}.
  Then
  \begin{equation*}
    \pb\trnorm{\calE(t)}
    = \pb\trnorm{\calE(t_+)-\calE(t_-)}
    \leq \pb\trnorm{\calE(t_+)} + \pb\trnorm{\calE(t_-)}
    \starrel= \tr{\calE(t_+)} + \tr{\calE(t_-)}
    \starstarrel\leq \tr t_+ + \tr t_-
    = \tr t.
  \end{equation*}
  Here $(*)$ follows with \lemmaref{lemma:traceclass.abs:normpos} since $\calE$ is positive and thus $\calE(t_+),\calE(t_-)$ are positive.
  And $(**)$ follows since $\calE$ is trace-reducing.
  Thus for Hermitian $t$, $\trnorm{\calE(t)}\leq\trnorm t$.

  Fix an arbitrary (not necessarily Hermitian) trace-class $t$. Then $t=t_h + it_a$ where $t_h := (t + \adj t)/2$ and $t_a := (t - \adj t)/2i$.
  Both $t_h,t_a$ are easily verified to be Hermitian.
  Thus
  \begin{align*}
    \pb\trnorm{\calE(t)}
    &= \pb\trnorm{\calE(t_h)+i\calE(t_a)}
    \leq \pb\trnorm{\calE(t_h)} + \pb\trnorm{\calE(t_a)}
    \starrel\leq \pb\trnorm{t_h} + \pb\trnorm{t_a} \\
    &= \pb\trnorm{t/2+\adj t/2} + \pb\trnorm{t/2i - \adj t/2i} 
    \leq \tfrac12\pb\trnorm t + \tfrac12\pb\trnorm {\adj t} + \tfrac12\pb\trnorm t + \tfrac12\pb\trnorm {\adj t}
    \starstarrel= 2\trnorm t.
  \end{align*}
  Here $(*)$ uses that we showed  $\trnorm{\calE(t)}\leq\trnorm t$ for Hermitian $t$ above.
  And $(**)$ is by \cite[Theorem~18.11\,(f)]{conway00operator}.

  Thus $\trnorm{\calE(t)}\leq\trnorm t$ for all trace-class $t$, i.e., $\calE$ is bounded with respect to the trace-class.
  A bounded linear map is continuous.
\end{proof}

\begin{lemma}\label{lemma:channel.equal}
  If $g,h:\tracecl(\calH_1\otimes\calH_2)\to\tracecl(\calK)$ are quantum subchannels with $f(\rho_1\otimes\rho_2)=g(\rho_1\otimes\rho_2)$ for all trace-class $\rho_1,\rho_2$, then $f=g$.
  
  If $g,h:\tracecl\pb\paren{(\calH_1\otimes\calH_2)\otimes\calH_3}\to\tracecl(\calK)$ are quantum subchannels with $f\pb\paren{(\rho_1\otimes\rho_2)\otimes\rho_3}=g\pb\paren{(\rho_1\otimes\rho_2)\otimes\rho_3}$ for all trace-class $\rho_1,\rho_2,\rho_3$, then $f=g$.
  
  If $g,h:\tracecl\pb\paren{\calH_1\otimes(\calH_2\otimes\calH_3)}\to\tracecl(\calK)$ are quantum subchannels with $f\pb\paren{\rho_1\otimes(\rho_2\otimes\rho_3)}=g\pb\paren{\rho_1\otimes(\rho_2\otimes\rho_3)}$ for all trace-class $\rho_1,\rho_2,\rho_3$, then $f=g$.
\end{lemma}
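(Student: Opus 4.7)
The plan is to reduce all three statements to the following general fact: two continuous linear maps that agree on a set whose closed linear span is the whole space must be equal. Both hypotheses needed for this are already available. First, by \autoref{lemma:channel.bounded}, any quantum subchannel $f$ is bounded linear with respect to the trace norm and thus continuous in that topology; hence so is $f-g$. Second, \autoref{lemma:tensor.generate.tracecl} tells us that if $S_1,S_2$ generate $\tracecl(\calH_1), \tracecl(\calH_2)$, then $\{\rho_1\otimes\rho_2 : \rho_i\in S_i\}$ generates $\tracecl(\calH_1\otimes\calH_2)$.

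For the first part, I take $S_i := \tracecl(\calH_i)$ (which trivially generates itself), and obtain that $\{\rho_1\otimes\rho_2\}$ has trace-norm-dense linear span in $\tracecl(\calH_1\otimes\calH_2)$. By hypothesis, $f$ and $g$ agree on this set, so by linearity they agree on its linear span, and by trace-norm continuity (\autoref{lemma:channel.bounded}) they agree on its closure, which is all of $\tracecl(\calH_1\otimes\calH_2)$. Hence $f=g$.

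For the second and third parts, I apply \autoref{lemma:tensor.generate.tracecl} twice. For the second, $\{\rho_1\otimes\rho_2\}$ generates $\tracecl(\calH_1\otimes\calH_2)$ by the first application, and then taking $S_{12}$ to be this generating set and $S_3 := \tracecl(\calH_3)$, a second application of \autoref{lemma:tensor.generate.tracecl} yields that $\{(\rho_1\otimes\rho_2)\otimes\rho_3\}$ generates $\tracecl\pb\paren{(\calH_1\otimes\calH_2)\otimes\calH_3}$. The same continuity/linearity argument then finishes the proof. The third part is completely analogous, with the second application of \autoref{lemma:tensor.generate.tracecl} applied to $S_1 := \tracecl(\calH_1)$ and $S_{23}$ generated by $\{\rho_2\otimes\rho_3\}$.

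There is no real obstacle here: all the work is in \autoref{lemma:channel.bounded} (continuity) and \autoref{lemma:tensor.generate.tracecl} (density of simple tensors), both of which are already established. The only minor care needed is to note that we use trace-norm continuity rather than, say, weak*-continuity, since \autoref{lemma:tensor.generate.tracecl} gives generation with respect to the trace norm.
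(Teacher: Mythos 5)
Your proposal is correct and follows essentially the same route as the paper's proof: density of the simple tensors in the trace norm via \autoref{lemma:tensor.generate.tracecl} (iterated for the three-factor cases) combined with trace-norm continuity from \autoref{lemma:channel.bounded}. The paper's version is just more terse, leaving the two-fold application of \autoref{lemma:tensor.generate.tracecl} implicit.
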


\begin{proof}
  By \autoref{lemma:tensor.generate.tracecl}, the $\rho_1\otimes\rho_2$ (the $(\rho_1\otimes\rho_2)\otimes\rho_3$, the $\rho_1\otimes(\rho_2\otimes\rho_3)$) generate the whole space $\tracecl(\calH_1\otimes\calH_2)$ ($\tracecl\pb\paren{(\calH_1\otimes\calH_2)\otimes\calH_3}$, $\tracecl\pb\paren{\calH_1\otimes(\calH_2\otimes\calH_3)}$).
  Furthermore, $g,h$ are continuous linear maps (w.r.t.~the trace norm) by \autoref{lemma:channel.bounded}.
  Thus $g=h$.
\end{proof}

\begin{lemma}\lemmalabel{lemma:channel.tensor}
  Fix completely positive $f_i,g_i:\tracecl(\calH_i)\to \tracecl(\calK_i)$ for $i=1,2$.
  \begin{compactenum}[(i)]
  \item \itlabel{lemma:channel.tensor:cp}
    There exists a
    completely positive $f_1\otimes f_2:\tracecl(\calH_1\otimes \calH_2)\to \tracecl(\calK_1\otimes \calK_2)$ such that
    $(f_1\otimes f_2)(\rho_1\otimes\rho_2) = f_1(\rho_1)\otimes f_2(\rho_2)$.
  \item \itlabel{lemma:channel.tensor:unique}
    If $f_1,f_2$ are trace-reducing, then $f_1\otimes f_2$ as defined in \eqref{lemma:channel.tensor:cp} is unique.
  \item \itlabel{lemma:channel.tensor:distrib}
    If $f_1,f_2,g_1,g_2$ are trace-reducing, then $(g_1\otimes g_2)\circ(f_1\otimes f_2) = (g_1\circ f_1) \otimes (g_2\circ f_2)$.
  \item \itlabel{lemma:channel.tensor:tpres}
    If $f_1,f_2$ are trace-preserving, then $f_1\otimes f_2$ is.
    (I.e., the tensor product of quantum channels exists and is a quantum channel.)
  \item \itlabel{lemma:channel.tensor:tred}
    If $f_1,f_2$ are  trace-reducing (i.e., for positive $\rho$, $\tr f_i(\rho)\leq\tr\rho$), then $f_1\otimes f_2$ is.
    (I.e., the tensor product of quantum subchannels exists and is a quantum subchannel.)
  \item \itlabel{lemma:channel.tensor:id}
    $\id_{\tracecl(\calH_1)}\otimes\id_{\tracecl(\calH_2)}=\id_{\tracecl(\calH_1\otimes\calH_2)}$.
  \end{compactenum}
\end{lemma}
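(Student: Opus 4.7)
The plan is to reduce everything to the existence of the tensor product of normal completely positive maps on von Neumann algebras (e.g., \cite[Proposition IV.5.13]{takesaki}, already used in \autoref{lemma:reg.tensor}) via the Schrödinger--Heisenberg duality from \lemmaref{lemma:SH.adjoint.corr}. Write $g_i:\bounded(\calK_i)\to\bounded(\calH_i)$ for the SH-duals of $f_i$; by \lemmaref{item:SH.adjoint.cp} they are completely positive and weak*-continuous. Applying \cite[Proposition~IV.5.13]{takesaki} yields a weak*-continuous completely positive map $g_1\otimes g_2: \bounded(\calK_1\otimes\calK_2)\to\bounded(\calH_1\otimes\calH_2)$ satisfying $(g_1\otimes g_2)(a\otimes b)=g_1(a)\otimes g_2(b)$. \emph{Define} $f_1\otimes f_2$ to be the SH-dual of $g_1\otimes g_2$; by \lemmaref{lemma:SH.adjoint.corr} and \lemmaref{item:SH.adjoint.cp} it is a bounded linear completely positive map.

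To verify the product formula in \eqref{lemma:channel.tensor:cp}, I would compute, for arbitrary $a\in\bounded(\calK_1\otimes\calK_2)$ and the two candidate outputs $\tau_1:=(f_1\otimes f_2)(\rho_1\otimes\rho_2)$ and $\tau_2:=f_1(\rho_1)\otimes f_2(\rho_2)$, the pairings $\tr(a\tau_j)$ and show they agree. For $a=a_1\otimes a_2$ this reduces via the SH relations for $f_i,g_i$ and \autoref{lemma:traceclass.tensor} to $\tr(g_1(a_1)\rho_1)\tr(g_2(a_2)\rho_2)=\tr(a_1f_1(\rho_1))\tr(a_2 f_2(\rho_2))$, which is immediate. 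The maps $a\mapsto \tr(a\tau_j)$ are weak*-continuous (by definition of weak*-topology), and they agree on the algebraic tensor product $\bounded(\calK_1)\atensor\bounded(\calK_2)$, which is weak*-dense in $\bounded(\calK_1\otimes\calK_2)$ by \autoref{lemma:alg.dense}; hence $\tau_1=\tau_2$ by \lemmaref{item:SH.adjoint.unique}.

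For the remaining parts, I would dispatch them as follows. Part \eqref{lemma:channel.tensor:tred}: when $g_1,g_2$ are subunital we have $(g_1\otimes g_2)(1)=g_1(1)\otimes g_2(1)\leq 1\otimes 1=1$, so $g_1\otimes g_2$ is subunital and $f_1\otimes f_2$ is trace-reducing by \lemmaref{item:SH.adjoint.trred}. Part \eqref{lemma:channel.tensor:tpres} is the same argument with equalities. Part \eqref{lemma:channel.tensor:unique}: any two subchannels that agree on simple tensors are bounded (\autoref{lemma:channel.bounded}), hence agree everywhere by \autoref{lemma:channel.equal}. Part \eqref{lemma:channel.tensor:distrib} follows from \eqref{lemma:channel.tensor:unique} because both sides are subchannels (compositions of subchannels are subchannels) and they agree on $\rho_1\otimes\rho_2$ by direct computation. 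Part \eqref{lemma:channel.tensor:id} is immediate from \eqref{lemma:channel.tensor:unique} since both the defined tensor product and the identity are channels agreeing on simple tensors.

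The main obstacle is the existence part \eqref{lemma:channel.tensor:cp}: we cannot simply define $f_1\otimes f_2$ by its action on simple tensors and extend by linearity because the algebraic tensor product is not norm-dense in $\tracecl(\calH_1\otimes\calH_2)$ in a way that respects the CP structure a priori. The SH-dual trick circumvents this by pushing the construction to $\bounded$, where the von Neumann tensor product and the machinery from Takesaki give us exactly what we need, and then pulling back. A secondary mild technicality is that \cite[Proposition IV.5.13]{takesaki} must be applied with correct attention to normality/weak*-continuity of the factors, which was already verified in the analogous setup for \autoref{lemma:reg.tensor}.
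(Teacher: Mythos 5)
Your proposal is correct and follows essentially the same route as the paper: pass to the Schrödinger--Heisenberg duals, invoke \cite[Proposition IV.5.13]{takesaki} on the bounded-operator side, pull the resulting map back through the duality, and dispatch the remaining parts via (sub)unitality of the dual together with uniqueness of subchannels agreeing on simple tensors. The only cosmetic differences are that the paper verifies the product formula by testing matrix elements against product vectors rather than by weak*-density of the algebraic tensor product, and that it justifies $g_1(1)\otimes g_2(1)\leq 1\otimes 1$ by explicitly writing the difference as a sum of positive tensor products --- a half-line you elided but which is easily supplied via \autoref{lemma:tensor.abs}.
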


\begin{proof}\anonymous{}{\!\!\footnote{\notanonymous Thanks to Robert Furber for helpful pointers on how to prove this \cite{furber21tensorcomment}.}}
  We first show \eqref{lemma:channel.tensor:cp}.
  
  By \lemmaref{lemma:SH.adjoint.corr}, there are weak*-continuous completely positive $g_1,g_2:\bounded(\calK_i)\to \bounded(\calH_i)$ such that $\tr af_i(t)=\tr g_i(a)t$ for all $a\in \bounded(\calK_i)$, $t\in \tracecl(\calH_i)$.
  By \cite[Proposition~IV.5.13]{takesaki}, there is a weak*-continuous completely positive $\theta:\bounded(\calK_1)\otimes \bounded(\calK_2)\to \bounded(\calH_1)\otimes \bounded(\calH_2)$
  such that $\theta(a_1\otimes a_2) = g_1(a_1)\otimes g_2(a_2)$ for all $a_1,a_2$.
  Note that $\bounded(\calK_1)\otimes \bounded(\calK_2) = \bounded(\calK_1\otimes\calK_2)$ and
  $\bounded(\calH_1)\otimes \bounded(\calH_2) = \bounded(\calH_1\otimes\calH_2)$ \cite[(10) after Proposition IV.1.6]{takesaki}.
  (This is an equality of spaces, not just an isomorphism.)
  Positive maps are bounded \cite[Proposition~33.4]{conway00operator}.
  Thus $\theta$ is weak*-continuous bounded completely positive.
  Hence by \lemmaref{lemma:SH.adjoint.corr}, there exists a completely positive $\theta':\tracecl(\calH_1\otimes\calH_2)\to \tracecl(\calK_1\otimes\calK_2)$ such that $\tr\theta'(t)a=\tr t\theta(a)$ for all $t,a$.
  
  Fix $\rho_1\in \tracecl(\calH_1)$, $\rho_2\in \tracecl(\calH_2)$.
  $\psi_1,\phi_1\in \calK_1$, $\psi_2,\phi_2\in\calK_2$.
  Define $a_1:=\psi_1\adj{\phi_1}\in\bounded(\calK_1)$ and $a_2:=\psi_2\adj{\phi_2}\in\bounded(\calK_1)$.
  Then
  \begin{align*}
    \hskip1cm&\hskip-1cm \adj{(\phi_1\otimes\phi_2)}\theta'(\rho_1\otimes \rho_2)(\psi_1\otimes\psi_2)
    \starrel=
    \tr \theta'(\rho_1\otimes \rho_2)(\psi_1\otimes\psi_2)\adj{(\phi_1\otimes\phi_2)}
    \\
    & =
    \tr \theta'(\rho_1\otimes \rho_2) (a_1\otimes a_2)
    =
      \tr (\rho_1\otimes \rho_2) \theta(a_1\otimes a_2)
      =
      \tr \rho_1g_1(a_1) \otimes \rho_2 g_2(a_2)
    \\ &
    \starstarrel= 
      \tr \rho_1g_1(a_1) \cdot \tr \rho_2g_2(a_2) 
     = 
      \tr f_1(\rho_1)a_1 \cdot \tr f_2(\rho_2)a_2
    \\&
    \starstarrel= 
    \tr f_1(\rho_1)a_1 \otimes f_2(\rho_2)a_2
    =
    \tr \pb\paren{f_1(\rho_1) \otimes f_2(\rho_2)}(a_1 \otimes a_2)
    \\&
    \starrel=
     \adj{(\phi_1\otimes\phi_2)} (f_1(\rho_1) \otimes f_2(\rho_2))(\psi_1\otimes\psi_2).
  \end{align*}
  Here $(*)$ uses the circularity of the trace and $(**)$ uses \autoref{lemma:traceclass.tensor}.

  Since the vectors $\phi_1\otimes \phi_2$ span $\calK_1\otimes\calK_2$, and so do the vectors $\psi_1\otimes\psi_1$,
  we have that
  $\forall \phi_1\phi_2\psi_1\psi_2.\ \adj{(\phi_1\otimes\phi_2)}u(\psi_1\otimes\psi_2)=
  \adj{(\phi_1\otimes\phi_2)}u'(\psi_1\otimes\psi_2)$
  implies $u=u'$. Thus $\theta'(\rho_1\otimes \rho_2) = f_1(\rho_1)\otimes f_2(\rho_2)$.

  Thus with $(f_1\otimes f_2) := \theta'$, this shows
  \eqref{lemma:channel.tensor:cp}.

  \bigskip
  
  Next we show \eqref{lemma:channel.tensor:distrib}.
  Since $g_1\otimes g_2$ and $f_1\otimes f_2$ are completely positive, so is $(g_1\otimes g_2)\circ(f_1\otimes f_2)$.
  For trace-class $\rho_1,\rho_2$, we have $(g_1\otimes g_2)\circ(f_1\otimes f_2)(\rho_1\otimes\rho_2) = g_1\circ f_1(\rho_1) \otimes g_2\circ f_2(\rho_2)$.
  By \eqref{lemma:channel.tensor:unique}, ${(g_1\circ f_1) \otimes (g_2\circ f_2)}$ is the only completely positive map with that property.
  Thus  $ (g_1\circ f_1) \otimes (g_2\circ f_2) =  (g_1\circ f_1) \otimes (g_2\circ f_2)$.
  This shows \eqref{lemma:channel.tensor:distrib}.

  \bigskip

  Next we show \eqref{lemma:channel.tensor:tpres}.
  Assume that $f_1,f_2$ are trace-preserving. By \lemmaref{item:SH.adjoint.trp}, $g_1,g_2$ as defined above in the proof of \eqref{lemma:channel.tensor:cp} are unital.
  Then $\theta$ as defined above is unital: $\theta(1) = \theta(1\otimes 1)
  = g_1(1)\otimes g_2(1)= 1\otimes 1=1$.
  Thus by \lemmaref{item:SH.adjoint.trp}, $f_1\otimes f_2=\theta'$ is trace-preserving.
  This shows \eqref{lemma:channel.tensor:tpres}.

  \bigskip
  
  Next we show \eqref{lemma:channel.tensor:tred}.
  Assume that $f_1,f_2$ are trace-reducing.
  Then $g_1,g_2$ as defined above in the proof of \eqref{lemma:channel.tensor:cp} are subunital by \lemmaref{item:SH.adjoint.trred}.
  %
  Then $\theta$ as defined above is subunital:
  \begin{multline*}
    1-\theta(1) = 1-\theta(1\otimes 1) 
    = 1\otimes 1 - g_1(1)\otimes g_2(1) \\
    = \pB\paren{(1-g_1(1))\otimes(1-g_1(1))} + \pB\paren{(1-g_1(1))\otimes g_2(1)} + \pB\paren{g_1(1) \otimes(1-g_1(1))}
    \starrel\geq 0
  \end{multline*}
  Here $(*)$ follows by \autoref{lemma:tensor.abs} and the fact that $g_1,g_2$ are subunital and positive, so $1-g_1(1)$, $1-g_2(1)$, $g_1(1)$, $g_2(1)$ are all positive.
  Then $(f_1\otimes f_2)=\theta'$ is trace-reducing by \lemmaref{item:SH.adjoint.trred}.
  %
  This shows \eqref{lemma:channel.tensor:tred}.

  \medskip

  We show \eqref{lemma:channel.tensor:id}.
  By \eqref{lemma:channel.tensor:unique} and~\ref{lemma:channel.tensor:tpres}, $\id_{\tracecl(\calH_1)}\otimes\id_{\tracecl(\calH_2)}$ is the unique completely positive trace-preserving map that maps $\rho\otimes\sigma$ to $\rho\otimes\sigma$.
  Since $\id_{\tracecl(\calH_1\otimes\calH_2)}$ is completely positive trace-preserving and maps $\rho\otimes\sigma$ to $\rho\otimes\sigma$, we have $\id_{\tracecl(\calH_1)}\otimes\id_{\tracecl(\calH_2)}=\id_{\tracecl(\calH_1\otimes\calH_2)}$.
\end{proof}

\begin{lemma}\label{lemma:reg.conv.iff}
  Assume that $F:\mathbf A\to\mathbf B$ is a quantum reference, and $x_i$ is a net of bounded operators, and $x$ is a bounded operator.
  Then $x_i\to x$ iff $F(x_i)\to F(x)$ (both limits are in the weak*-topology).
\end{lemma}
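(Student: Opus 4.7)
The forward implication is immediate: $F$ is weak*-continuous by definition of a (quantum) register, so $x_i \to x$ weak* entails $F(x_i) \to F(x)$ weak*. The real content is the converse, and the plan is to reduce it to the trivial observation that partial evaluation of a bilinear pairing is weak*-continuous.

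For the converse, I would invoke \autoref{lemma:normal.tensor1} to fix a Hilbert space $\calH_C$ and a unitary $U\colon \calH_A\otimes\calH_C\to\calH_B$ such that $F(a) = U(a\otimes 1_{\mathbf C})\adj U$ for all $a \in \mathbf A$. By \autoref{lemma:normal}, the sandwich $y \mapsto \adj U y U$ is weak*-continuous. Applying it to both sides of $F(x_i)\to F(x)$ yields $x_i\otimes 1_{\mathbf C} \to x\otimes 1_{\mathbf C}$ in the weak*-topology of $\mathbf A\otimes\mathbf C$.

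It then suffices to show that $x_i\otimes 1_{\mathbf C} \to x\otimes 1_{\mathbf C}$ weak* implies $x_i\to x$ weak* in $\mathbf A$. Fix any trace-class $a\in\tracecl(\calH_A)$; I need $\tr(ax_i)\to \tr(ax)$. Pick any rank-one trace-class operator $c\in\tracecl(\calH_C)$ with $\tr c = 1$ (e.g.\ $c=\psi\adj\psi$ for a unit vector $\psi\in\calH_C$). By \autoref{lemma:traceclass.tensor}, $a\otimes c$ is trace-class on $\calH_A\otimes\calH_C$, and
\[
  \tr\bigl((a\otimes c)(x_i\otimes 1_{\mathbf C})\bigr) = \tr(ax_i\otimes c) = \tr(ax_i)\cdot\tr(c) = \tr(ax_i).
\]
The same identity holds with $x$ in place of $x_i$. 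By definition of the weak*-topology applied to the convergent net $x_i\otimes 1_{\mathbf C}$, the left-hand side converges to $\tr\bigl((a\otimes c)(x\otimes 1_{\mathbf C})\bigr) = \tr(ax)$, so $\tr(ax_i)\to\tr(ax)$. Since $a$ was arbitrary, $x_i\to x$ weak*.

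There is no real obstacle: once \autoref{lemma:normal.tensor1} is available to trivialise $F$ up to a unitary sandwich, and \autoref{lemma:traceclass.tensor} is available to manufacture trace-class elements of $\tracecl(\calH_A\otimes\calH_C)$ out of trace-class elements of $\tracecl(\calH_A)$, the proof becomes a one-line computation. The only subtlety worth flagging is that $\calH_C$ might be zero-dimensional in principle; but since the register category disallows zero-dimensional update monoids, $\calH_C$ admits a unit vector, so the auxiliary $c$ exists.
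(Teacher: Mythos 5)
Your proposal is correct and follows essentially the same route as the paper: both reduce via \autoref{lemma:normal.tensor1} to $F(a)=U(a\otimes 1_{\mathbf C})\adj U$ and then test against trace-class operators of the form $t\otimes\delta$ with $\tr\delta=1$, using \autoref{lemma:traceclass.tensor}. The only (immaterial) difference is that you pull the net back through the weak*-continuous sandwich $y\mapsto \adj U yU$ before testing, whereas the paper pushes the test operators forward through $U$ and appeals to circularity of the trace.
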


\begin{proof}
  The direction ``if $x_i\to x$ then $F(x_i)\to F(x)$'' follows since $F$ is weak*-continuous.

  To show the direction ``if  $F(x_i)\to F(x)$ then $x_i\to x$'', we need to show $x_i\to x$ in the weak*-topology.
  By \autoref{lemma:normal.tensor1}, $F(a)=U(a\otimes 1_\mathbf{C})\adj U$ for some $\mathbf C$ and some unitary $U$.
  Let $\delta\in\tracecl(\calH_A\otimes\calH_C)$ be an arbitrary operator of trace 1 (e.g., $\delta:=\psi\adj\psi$ with $\norm\psi=1$).
  For any $t$, let $u_t:=U(t\otimes\delta)\adj U$.
  Then
  \begin{align*}
    \tr tx &= \tr t x \cdot \tr\delta 1_\mathbf{C}
    \starrel= \tr tx \otimes \delta1_\mathbf{C}
    =
             \tr (t\otimes\delta)(x\otimes1_\mathbf{C})
    \\&
    = \tr \adj U u_t U  (x\otimes1_\mathbf{C})
    \starstarrel= \tr  u_t U  (x\otimes1_\mathbf{C}) \adj U
    = \tr  u_t F(x).
  \end{align*}
  Here $(*)$ is by \autoref{lemma:traceclass.tensor}.
  And $(**)$ is by the circularity of the trace (\lemmaref{lemma:circ.trace:circ}).
  
  Analogously, $\tr tx_i=\tr u_tF(x_i)$. And $u_t$ is trace-class \cite[Theorem 18.11\,(a)]{conway00operator}.
  Since $F(x_i)\to F(x)$ by assumption, for every $t$, $\tr u_tF(x_i)\to \tr u_tF(x)$ by definition of the weak*-topology \cite[beginning of Section 20]{conway00operator}.
  Thus $\tr tx_i \to \tr tx$.
  Since this holds for all $t$, $x_i\to x$.
\end{proof}

\begin{lemma}\label{lemma:optics.uni.equiv}
  Let $U:\calH_A\otimes\calH_C\to\calH_B$, $V:\calH_B\to\calH_A\otimes\calH_C$,
  $\Hat U:\calH_A\otimes\calH_{\Hat C}\to\calH_B$, $\Hat V:\calH_B\to\calH_A\otimes\calH_{\Hat C}$ be unitaries.
  Assume that the dimension of $\calH_C$ is at least as large as that of $\calH_{\Hat C}$.
  (Otherwise apply the lemma the other way around.)
  
  Let $F(a):=U(a\otimes 1)V$ and $\Hat F(a):=\Hat U(a\otimes 1)\Hat V$. Assume $F=\Hat F$.
  Then there is a unitary $M$ such that $V=(1\otimes M)\Hat V$ and $\Hat U=U(1\otimes M)$.
  (That is, $(V,U)\cong(\Hat V,\Hat U)$ for the relation $\cong$ from \autopageref{page:def:opticequiv}.)
\end{lemma}

\begin{proof}
  Since $F=\Hat F$, we have for all bounded operators $a\in\mathbf A$ that $U(a\otimes 1)V=\Hat U(a\otimes 1)\Hat V$.
  With $a=1$, $UV=\Hat U\Hat V$.
  Since $\Hat V,U$ are unitaries, this implies $V\adj{\Hat V}=\adj U\Hat U =: W$. Then $W : \calH_A\otimes\calH_{\Hat C}\to \calH_A\otimes\calH_C$ is unitary as well.
  Let $Z$ be an arbitrary isometry $\calH_{\Hat C}\to\calH_C$.
  (It exists because the dimension of $\calH_C$ is at least as large as that of $\calH_{\Hat C}$.)
  So
  \begin{multline}
    W(1\otimes\adj Z)(a\otimes 1)
    =
    W(a\otimes 1)(1\otimes\adj Z)
    =
    \adj U\Hat U(a\otimes 1)\Hat V\adj{\Hat V}(1\otimes\adj Z)
    \\=
    \adj UU(a\otimes 1)V\adj{\Hat V}(1\otimes\adj Z)
    =
    (a\otimes 1)W(1\otimes\adj Z).
  \end{multline}
  So $W(1\otimes \adj Z)$ commutes with all $a\otimes 1$.
  By  \cite[Corollary IV.1.5]{takesaki}, $a\mapsto a\otimes 1$ is an isomorphism and thus surjective onto $\mathbf A\otimes\idmult$.
  (Here $\idmult$ is the set of the multiples of the identity.)
  Thus $W(1\otimes \adj Z)$ commutes with all of $\mathbf A\otimes\idmult$, i.e., $W(1\otimes\adj Z)\in \comm{(\mathbf A\otimes\idmult)}$, the commutant of  $\mathbf A\otimes\idmult$.
  By \cite[Proposition IV.1.6\,(i)]{takesaki}, $\comm{(\mathbf A\otimes\idmult)}=\idmult\otimes\mathbf C$.
  So $W(1\otimes\adj Z)\in\idmult\otimes\mathbf C$.
  By  \cite[Corollary IV.1.5]{takesaki}, $c\mapsto 1\otimes c$ is an isomorphism and thus surjective onto $\idmult\otimes\mathbf C$.
  Thus all elements of $\idmult\otimes\mathbf C$ are of the form $1\otimes c$.
  Hence we can find some $N\in\mathbf C$ such that $W(1\otimes \adj Z)=1\otimes N$.
  With $M:=NZ$, we have $W=W(1\otimes\adj Z)(1\otimes Z) = (1\otimes N)(1\otimes Z)=1\otimes M$.

  We have $1\otimes \adj MM=\adj WW=1=1\otimes 1$.
  Since $c\mapsto 1\otimes c$ is an isomorphism and thus injective, this implies $\adj MM=1$.
  Analogously $M\adj M=1$.
  So $M$ is unitary.

  Finally, we have
  \begin{equation}
    V=V\adj{\Hat V}\Hat V = W\Hat V = (1\otimes M)\Hat V
    \qquad\text{and}\qquad
    \Hat U= U\adj U\Hat U = UW = U(1\otimes M)
  \end{equation}
  as desired.
\end{proof}


}



\fullonly{
  \newcommand\INCLUDEONCEdfaityerlkhohiutdfutktrl{}

\section{Proofs of axioms for infinite-dimensional quantum references}
\label{sec:proofs.infdim}

We show that the axioms from
\autoref{fig:axioms} hold for the quantum reference category $\Lquantum$ as defined in \autoref{sec:infinite}. 

\paragraph{Axiom~\ref{ax:monoids}.} Immediate from the definition of the objects in $\Lquantum$ (i.e., the spaces of bounded operators).

\paragraph{Axiom~\ref{ax:preregs}.} Immediate from the definition of pre-references (i.e., weak*-continuous bounded linear operators).

\paragraph{Axiom~\ref{ax:cdot-a}.} Immediate by \autoref{lemma:normal} and the definition of pre-references.

\paragraph{Axioms~\ref{ax:tensor}.}
The fact that a bounded operator $a\otimes b$ exists is shown in \cite[discussion preceding Definition~IV.1.3]{takesaki}.

\paragraph{Axiom~\ref{ax:tensorext}.}
Fix pre-references $F,G$ such that $F(a\otimes b)=G(a\otimes b)$ for all $a\in\mathbf A$, $b\in\mathbf B$.
The algebraic tensor product $\mathbf A\atensor\mathbf B$ (see \autoref{lemma:alg.dense}) is spanned by the operators $a\otimes b$ for $a\in\mathbf A$, $b\in\mathbf B$. 
Since $F,G$ are linear, this implies that $F=G$ on $\mathbf A\atensor\mathbf B$. By \autoref{lemma:alg.dense},
$\mathbf A\atensor\mathbf B$ is weak*-dense in $\mathbf A\tensor\mathbf B$.
Since $F,G$ are weak*-continuous, and $F=G$ on a weak*-dense subset, we have $F=G$.

\paragraph{Axiom~\ref{ax:tensor.mult}.} Shown in \cite[(5) after Definition IV.1.2]{takesaki}.

\paragraph{Axiom~\ref{ax:reg.prereg}.}
References are weak*-continuous bounded linear maps by \autoref{lemma:reg.props}, and thus pre-references.

\paragraph{Axiom~\ref{ax:reg.morphisms}.}
Immediate from the definition of references (weak*-continuous unital *-homomorphisms).

\paragraph{Axiom~\ref{ax:reg.monhom}.}
A direct consequence of the fact that that references are unital and *-homomorphisms.

\paragraph{Axiom~\ref{ax:tensor-1}.}
Let $F(x):=x\otimes 1_{\mathbf B}$ for $x\in\mathbf A$.
By \cite[Corollary IV.1.5]{takesaki}, $F$ is an isomorphism
from $\mathbf A$ onto the von Neumann algebra $\mathbf A\otimes\idmult$ where $\symbolindexmark\idmult\idmult$ is the set of all multiples of the identity. $F$ also preserves adjoints and is multiplicative, so $F$ is a $*$-isomorphism.
By \cite[Proposition 46.6]{conway00operator}, any $*$-isomorphism is normal, i.e., weak*-continuous \cite[Theorem 46.4\,(c)]{conway00operator}. Thus $F$ is weak*-continuous as a map
$\mathbf A\to\mathbf A\otimes \idmult$, and thus also as a map $\mathbf A\to\mathbf A\otimes \mathbf B \supseteq \mathbf A\otimes \idmult$. $F(1)=1\otimes 1=1$, hence $F$ is also unital. Thus $F$ is a reference.

That $x\mapsto 1\otimes x$ is a reference is shown analogously.

\paragraph{Axiom~\ref{ax:pairs}.}
Fix references $F:\mathbf A\to\mathbf C$, $G:\mathbf B\to\mathbf C$ with commuting ranges.
By \autoref{lemma:normal.tensor1}, $F(a)=U(a\otimes 1_{\mathbf D})\adj U$ and $G(b)=V(b\otimes 1_{\mathbf E})\adj V$ for some Hilbert spaces $\calH_D,\calH_E$ and unitaries $U:\calH_A\otimes\calH_D\to\calH_C$, $V:\calH_B\otimes\calH_E\to\calH_C$.
By \cite[Corollary IV.1.5]{takesaki}, the range of $a\mapsto a\otimes 1_\mathbf{D}$ is $\mathbf A\otimes\idmult_{\mathbf{D}}$, and the range of $b\mapsto b\otimes\idmult_{\mathbf{E}}$ is $\mathbf B\otimes\idmult_\mathbf{E}$.
($\idmult_{\mathbf D},\idmult_{\mathbf E}$ is the set of multiples of the identity in $\mathbf D,\mathbf E$.)
Thus the ranges of $F,G$ are $F(\mathbf A)=U(\mathbf A\otimes \idmult_{\mathbf D})\adj U$ and $G(\mathbf B)=V(\mathbf B\otimes\idmult_{\mathbf E})\adj V$.
$F(\mathbf A)$, $G(\mathbf B)$ commute by assumption.
Hence  $U(\mathbf A\otimes \idmult_{\mathbf D})\adj U$ and $G(\mathbf B)$ commute.
Thus $\mathbf A\otimes \idmult_{\mathbf D}$ and $\adj UG(\mathbf B)U$ commute.
Thus $\adj UG(\mathbf B)U \subseteq \comm{\paren{\mathbf A\otimes \idmult}}$ where $\symbolindexmarkhighlight{\comm X}$ denotes the commutant of~$X$.
By \cite[Theorem IV.5.9]{takesaki}, $\comm{\paren{\mathbf A\otimes \idmult_{\mathbf D}}} = \comm{\mathbf A}\otimes\comm{(\idmult_{\mathbf D})} = \idmult_{\mathbf A}\otimes\mathbf D$.
So the range of $\Tilde G(b):=\adj UG(b)U$ is $\Tilde G(\mathbf B)=\adj UG(\mathbf B)U \subseteq \idmult_{\mathbf A}\otimes\mathbf D$, and $\Tilde G$ is a weak*-continuous unital *-homomorphism (because $G$ and $x\mapsto \adj UxU$ are, using \autoref{lemma:normal}).
Then let $\Hat G:\mathbf B\to\idmult_{\mathbf A} \otimes\mathbf D$ be result of restricting the codomain $\Tilde G$ to $\idmult_{\mathbf A} \otimes\mathbf D$.
Then $\Hat G$ is still a weak*-continuous unital *-homomorphism.
Let $\iota(d):=1_{\mathbf A}\otimes d$ for $d\in\mathbf D$.
By \cite[Corollary IV.1.5]{takesaki}, $\iota:\mathbf D\to\idmult_{\mathbf A}\otimes\mathbf D$ is an isomorphism of von Neumann algebras.
It also preserves adjoints, so it is a *-isomorphisms.
Thus $\iota^{-1}:\idmult_{\mathbf A}\otimes\mathbf D\to\mathbf D$ is a $*$-isomorphism, too. And thus by \cite[Proposition 46.6]{conway00operator}, $\iota^{-1}$ is normal and thus a weak*-continuous unital *-homomorphism.
Then $G_1 : \mathbf B\to\mathbf D := \iota^{-1} \circ \Hat G$ is a weak*-continuous unital *-homomorphism, hence a reference.
Since $\id_\mathbf A, G_1$ are references, by \autoref{lemma:reg.tensor}, there is a reference $T:\mathbf A\otimes\mathbf B\to\mathbf A\otimes\mathbf D$ such that $T(a\tensor b)=a\otimes G_1(b)$ for all $a\in\mathbf A, b\in\mathbf B$.
Let $\spair FG(x) := UT(x)\adj U$. Then $\spair FG$ is a reference because it is the composition of the references $T$ and $x\mapsto Ux\adj U$.

We have
\[
  \spair FG(a\otimes 1_\mathbf{B}) = U\pb\paren{a\otimes G_1(1_\mathbf{B})}\adj U
  \starrel= U(a\otimes 1_\mathbf{D})\adj U = F(a)
\]
where $(*)$ uses that $G_1$ is a reference and thus unital.
Moreover
\[
  \spair FG(1_\mathbf{A}\otimes b) =
  U\pb\paren{1_\mathbf{A}\otimes G_1(b)}\adj U
  =
  U\pb\paren{\iota(G_1(b))}\adj U
  =
  U\pb\paren{\Hat G(b)}\adj U
  =
  U\pb\paren{\adj UG(b)U}\adj U
  =
  G(b).
\]
Since $\spair FG$ is a reference,
\[
  \spair FG(a\otimes b)
  = \spair FG\pb\paren{(a\otimes 1)\cdot (1\otimes b)}
  = \spair FG(a\otimes 1)\cdot \spair FG(1\otimes b)
  = F(a)G(b).
\]
Thus there exists a reference $\spair FG$ with the property required by Axiom~\ref{ax:pairs}.


    
    
  


}

\fullonly{
  \newcommand\INCLUDEONCEhdsygiuthfvxjmekjf{}

\section{Proofs for complements in the quantum setting}
\label{app:proofs:complements}

The lemmas below assume the quantum reference category (finite- or infinite-dimensional).
reference/pre-reference means quantum reference/pre-reference.

\begin{lemma}\label{lemma:complement.range}
  If $F:\mathbf A\to\mathbf C$ and $G:\mathbf B\to\mathbf C$ are complements,
  then $G(\mathbf B)$ is the commutant of $F(\mathbf A)$.
  (The set of all operators that commute with all of $F(\mathbf A)$.)
\end{lemma}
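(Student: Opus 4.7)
The plan is to reduce the claim to the standard commutation theorem for tensor products of von Neumann algebras (Takesaki IV.5.9), by unwrapping the iso-register $\pair FG$ as conjugation by a unitary.

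First I would invoke \autoref{lemma:isoreg-decomp}: since $\pair FG:\mathbf A\otimes\mathbf B\to\mathbf C$ is an iso-register, there is a unitary $U:\calH_A\otimes\calH_B\to\calH_C$ with $\pair FG(x)=Ux\adj U$ for all $x\in\mathbf A\otimes\mathbf B$. Combining this with Axiom~\ref{ax:pairs} (and the fact that registers are unital), we obtain $F(a)=\pair FG(a\otimes 1_{\mathbf B})=U(a\otimes 1_{\mathbf B})\adj U$ and $G(b)=\pair FG(1_{\mathbf A}\otimes b)=U(1_{\mathbf A}\otimes b)\adj U$. Hence $F(\mathbf A)=U(\mathbf A\otimes\idmult_{\mathbf B})\adj U$ and $G(\mathbf B)=U(\idmult_{\mathbf A}\otimes\mathbf B)\adj U$.

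Next I would compute $F(\mathbf A)'$. Conjugation by a unitary is a $*$-isomorphism and preserves the commutant, so $F(\mathbf A)'=U\pb\paren{(\mathbf A\otimes\idmult_{\mathbf B})'}\adj U$. By Takesaki's commutation theorem (Theorem~IV.5.9 in \cite{takesaki}, as already cited in the verification of Axiom~\ref{ax:pairs} in \autoref{sec:proofs.infdim}), $(\mathbf A\otimes\idmult_{\mathbf B})'=\comm{\mathbf A}\otimes\comm{(\idmult_{\mathbf B})}$. Since $\mathbf A=\bounded(\calH_A)$ we have $\comm{\mathbf A}=\idmult_{\mathbf A}$, and since $\comm{(\idmult_{\mathbf B})}=\bounded(\calH_B)=\mathbf B$, this yields $(\mathbf A\otimes\idmult_{\mathbf B})'=\idmult_{\mathbf A}\otimes\mathbf B$. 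Therefore $F(\mathbf A)'=U(\idmult_{\mathbf A}\otimes\mathbf B)\adj U=G(\mathbf B)$, as required.

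There is no real obstacle here: once we have the unitary decomposition from \autoref{lemma:isoreg-decomp}, the result is an immediate corollary of the commutation theorem for tensor products. The only mildly delicate point is making sure the tensor-product commutation theorem applies directly to $\mathbf A\otimes\idmult_{\mathbf B}$ (and not just to $\mathbf A\otimes\mathbf B$), but this is exactly the form in which it is stated in \cite[Theorem~IV.5.9]{takesaki} and was already used in our proof of Axiom~\ref{ax:pairs}.
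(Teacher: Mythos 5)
Your proof is correct and follows essentially the same route as the paper's: both identify $F(\mathbf A)$ and $G(\mathbf B)$ with the images of $\mathbf A\otimes\idmult_{\mathbf B}$ and $\idmult_{\mathbf A}\otimes\mathbf B$ under the iso-register $\pair FG$ and then conclude via the tensor commutation theorem $\comm{(\mathbf A\otimes\idmult_{\mathbf B})}=\idmult_{\mathbf A}\otimes\mathbf B$. The only (immaterial) difference is in how the commutant is transported through $\pair FG$: you conjugate by the unitary supplied by \autoref{lemma:isoreg-decomp}, whereas the paper avoids that lemma and instead shows directly, using only bijectivity and multiplicativity of iso-registers, that $\comm{H(\mathbf D)}=H(\comm{\mathbf D})$ for any iso-register $H$, and it cites \cite[Proposition IV.1.6\,(ii)]{takesaki} rather than the more general Theorem IV.5.9 for the commutation step.
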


\begin{proof}
  Throughout this proof, let $\symbolindexmark\comm{\comm X}$ denote the commutant of $X$.
  For any iso-reference $H:\mathbf D\to\mathbf E$,
  \begin{align}
    x\in \comm{H(\mathbf D)}
    &\iff
    \forall e\in H(\mathbf D).\ xe=ex
    \starrel\iff
    \forall e\in H(\mathbf D).\ H^{-1}(x)H^{-1}(e)=H^{-1}(e)H^{-1}(x)
      \notag\\&
    \iff
    \forall d\in \mathbf D.\ H^{-1}(x)d=dH^{-1}(x)
    \iff
      H^{-1}(x) \in \mathbf D'
    \iff
    x \in H (\mathbf D')
    \label{eq:commutant.exchange}
  \end{align}
  Here we repeatedly use that $H$ is bijective as an iso-reference, and in $(*)$ we also use the multiplicativity of $H^{-1}$.

  Let $\symbolindexmark\idmult{\idmult_\mathbf{B}}$ denote the multiples of the identity, i.e., $\{\alpha1_\mathbf{B}:\alpha\in\setC\}$.
  The map $\pi:x\mapsto x\otimes 1$ is an isomorphism of von Neumann algebras from $\mathbf{A}$ to $\mathbf{A}\otimes \idmult$ \cite[Corollary IV.1.5]{takesaki}.
  Thus $ \mathbf{A}\otimes \idmult_\mathbf{B} = \pi(\mathbf A) = \{a\otimes 1_\mathbf{B}:a\in \mathbf A\}$.
  Analogously, $  \idmult_\mathbf{A}\otimes\mathbf{B} = \{ 1_\mathbf{A}\otimes b:b\in \mathbf B\}$.

  Then
  \begin{align*}
    F(\mathbf B)'
    &\starrel = \pB\paren{\spair FG\pb\paren{\braces{a\otimes 1_\mathbf{B}: a\in\mathbf A}}}'
    \eqrefrel{eq:commutant.exchange}=
    \spair FG\pb\paren{\braces{a\otimes 1_\mathbf{B}: a\in\mathbf A}'}
    =
    \spair FG\pb\paren{\paren{\mathbf A\otimes\idmult_\mathbf{B}}'}
      \notag\\&
    \starstarrel=
    \spair FG{\paren{\idmult_\mathbf{A}\otimes \idmult_\mathbf B'}}
    =
    \spair FG{\paren{\idmult_\mathbf{A}\otimes \mathbf B}}
    \eqrefrel{eq:commutant.exchange}=
    \spair FG\pb\paren{\braces{1_\mathbf{A}\otimes b:b\in \mathbf B}}
    \starrel=
    G(\mathbf B).
  \end{align*}
  Here $(*)$ follows from Axioms~\ref{ax:pairs} and~\ref{ax:reg.monhom}.
  And $(**)$ is by \cite[Proposition IV.1.6\,(ii)]{takesaki} (with the left/right side of the tensor product exchanged).
\end{proof}

\begin{lemma}\label{lemma:same.range.equiv}
  If $F:\mathbf A\to\mathbf C$ and $G:\mathbf B\to\mathbf C$ are references and $F(\mathbf A)=G(\mathbf B)$, then $F,G$ are equivalent.
\end{lemma}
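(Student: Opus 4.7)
The plan is to define $I := F^{-1}\circ G$ as the candidate iso-register and then check that it is well-defined, a register, has an inverse that is also a register, and satisfies $F\circ I = G$.

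First, the definition makes sense pointwise: for any $b\in\mathbf B$, $G(b)\in G(\mathbf B)=F(\mathbf A)$, so there is some $a\in\mathbf A$ with $F(a)=G(b)$, and this $a$ is unique because $F$ is injective by \autoref{lemma:reg.injective}. Set $I(b):=a$. Analogously, $J(a):=G^{-1}(F(a))$ is well-defined using injectivity of $G$. From the definitions, $F\circ I=G$ and $G\circ J=F$, and $I\circ J=\id$, $J\circ I=\id$ follow by unfolding and using injectivity again. So once we know $I,J$ are registers, $I$ will be an iso-register and $F,G$ will be equivalent.

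It remains to show $I$ (and, symmetrically, $J$) is a register, i.e., a weak*-continuous unital $*$-homomorphism. The algebraic part is straightforward: since both $F$ and $G$ are unital $*$-homomorphisms and are injective, the set-theoretic bijection $F:\mathbf A\to F(\mathbf A)$ is a $*$-isomorphism onto its image, and likewise for $G$; composing $F^{-1}|_{F(\mathbf A)}$ with $G$ preserves units, products, and adjoints. So $I$ is a unital $*$-homomorphism, and similarly for $J$.

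The main obstacle is weak*-continuity of $I$, which reduces to weak*-continuity of $F^{-1}$ restricted to $F(\mathbf A)$. Here the key is that $F(\mathbf A)$ is itself a von Neumann algebra: by \autoref{lemma:normal.tensor1}, $F(a)=U(a\otimes 1_{\mathbf C'})\adj U$ for some Hilbert space $\calH_{C'}$ and unitary $U$, so $F(\mathbf A)=U(\mathbf A\otimes\idmult_{\mathbf C'})\adj U$, which is a von Neumann algebra. Viewed as a map between von Neumann algebras, $F:\mathbf A\to F(\mathbf A)$ is thus a bijective $*$-homomorphism, i.e., a $*$-isomorphism; by \cite[Proposition 46.6]{conway00operator} its inverse $F^{-1}|_{F(\mathbf A)}$ is a $*$-isomorphism between von Neumann algebras and hence normal, i.e., weak*-continuous. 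Composing with the weak*-continuous $G$ yields that $I$ is weak*-continuous. The same argument applied with the roles of $F,G$ swapped shows $J$ is weak*-continuous. Hence $I,J$ are mutually inverse registers, $F\circ I=G$, and $F,G$ are equivalent.
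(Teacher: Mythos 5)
Your proof is correct and follows the same skeleton as the paper's: define $I:=F^{-1}\circ G$, use injectivity (\autoref{lemma:reg.injective}) for well-definedness, check the algebraic properties directly, and then fight the one real battle, weak*-continuity of $I$. The only genuine difference is how that battle is fought. The paper applies \autoref{lemma:reg.conv.iff} twice (registers both preserve and reflect weak*-convergence of nets), pushing a convergent net through $G$ and then pulling it back through $F$. You instead observe via \autoref{lemma:normal.tensor1} that $F(\mathbf A)=U(\mathbf A\otimes\idmult)\adj U$ is itself a von Neumann algebra, so $F:\mathbf A\to F(\mathbf A)$ is a bijective $*$-homomorphism between von Neumann algebras, whence $F^{-1}|_{F(\mathbf A)}$ is automatically normal by \cite[Proposition 46.6]{conway00operator} --- a fact the paper itself invokes elsewhere (e.g., in the proof of Axiom~\ref{ax:tensor-1}). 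Both arguments ultimately rest on the same structure theorem; yours leans on a standard automatic-continuity result and is slightly shorter, while the paper's is more self-contained given that it has already established \autoref{lemma:reg.conv.iff}. The one point worth making explicit in your version is that the weak*-topology on the von Neumann subalgebra $F(\mathbf A)$ is the subspace topology inherited from $\bounded(\calH_C)$, so that normality of $F^{-1}|_{F(\mathbf A)}$ really does compose with weak*-continuity of $G$; this is standard and causes no difficulty.
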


\begin{proof}
  Since $F,G$ are quantum references, they are injective (\autoref{lemma:reg.injective}).
  Let $I := F^{-1}\circ G$.
  This is well-defined because $F(\mathbf A)=G(\mathbf B)$, so $F^{-1}$ is defined on the range of $G$.
  It is elementary to check that $I$ is linear, multiplicative, unital and preserves adjoints since $F$ and $G$ are and do.
  We show that $I$ is weak*-continuous:
  Fix a net $x_i$ of bounded operators and a bounded operator $x$ such that $x_i\to x$ (w.r.t.~the weak*-topology).
  Then $G(x_i)\to G(x)$ by \autoref{lemma:reg.conv.iff}.
  So $F(F^{-1}(G(x_i))) = G(x_i) \to G(x) = F(F^{-1}(G(x)))$.
  Since $F$ is a reference, by \autoref{lemma:reg.conv.iff}, $F^{-1}(G(x_i)) \to F^{-1}(G(x))$.
  Thus $I(x_i)\to I(x)$.
  Since this holds for any net $x_i$, $I$ is weak*-continuous.
  Thus $I$ is a reference.

  Analogously, we have that $J:=G^{-1}\circ F$ is a reference. Then $I\circ J=F^{-1}\circ G\circ G^{-1}\circ F=\id$ and analogously $J\circ I=\id$. Thus $I$ is an iso-reference.

  Finally, $F\circ I = F \circ F^{-1}\circ G = G$. Thus $F$ and $G$ are equivalent.
\end{proof}

Note that only the lemmas above are used in the proof of \autoref{theo:compl} in \autoref{sec:qregs.complements}.
Thus in the proofs of the following lemmas we can already assume that \autoref{theo:compl} holds, i.e., that $\Lquantum$ has complements.

\begin{lemma}\label{lemma:complement.range.converse}
  The converse of \autoref{lemma:complement.range} holds.
  That is, for references $F:\mathbf A\to\mathbf C$ and $G:\mathbf B\to\mathbf C$, if $G(\mathbf B)$ is the commutant of $F(\mathbf A)$, then $F$ and $G$ are complements.
\end{lemma}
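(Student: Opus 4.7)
The plan is to reduce this to the existence and uniqueness of complements that we have already established for $\Lquantum$ in \autoref{theo:compl}, combined with Law~\ref{law:compl.equiv} from \autoref{fig:compl.laws}. The key insight is that \autoref{lemma:complement.range} characterizes the range of a complement as the commutant of the original register's range; so if $G(\mathbf B)$ already equals that commutant, then $G$ must coincide (up to equivalence) with any complement of $F$ that we can cook up abstractly, and equivalence can then be upgraded to being a complement.

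Concretely, I would proceed as follows. First, by \autoref{theo:compl}, pick any complement $H:\mathbf{C}'\to\mathbf C$ of $F$; such an $H$ exists because $\Lquantum$ has complements. By \autoref{lemma:complement.range} applied to the pair $F,H$, the range $H(\mathbf{C}')$ equals the commutant of $F(\mathbf A)$. By the hypothesis of the lemma, $G(\mathbf B)$ also equals the commutant of $F(\mathbf A)$. Hence $H(\mathbf{C}') = G(\mathbf B)$.

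Second, invoke \autoref{lemma:same.range.equiv}: two registers with equal ranges are equivalent in $\Lquantum$. Thus $H$ and $G$ are equivalent. Finally, apply Law~\ref{law:compl.equiv} (``if $F,H$ are complements, then $H,G$ are equivalent iff $F,G$ are complements'') with $F,H$ complements and $H,G$ equivalent to conclude that $F,G$ are complements.

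I do not expect a serious obstacle here; the statement is essentially the converse direction making the ``commutant = range of the complement'' slogan into a two-way characterization, and once the right tools (existence of complements, \autoref{lemma:complement.range}, \autoref{lemma:same.range.equiv}, Law~\ref{law:compl.equiv}) are lined up the argument is essentially immediate. The only thing to double-check is the direction in which Law~\ref{law:compl.equiv} is applied, i.e.\ that we are using it to promote equivalence to being a complement (rather than the other way around), but this is exactly what the ``iff'' permits.
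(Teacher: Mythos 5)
Your proof is correct and follows essentially the same route as the paper: the paper takes $H:=\compl F$ (justified via Laws~\ref{law:compl.is.compl} and~\ref{law:compl.sym}) where you invoke \autoref{theo:compl} directly, and then both arguments apply \autoref{lemma:complement.range}, \autoref{lemma:same.range.equiv}, and Law~\ref{law:compl.equiv} in exactly the same way. Your application of Law~\ref{law:compl.equiv} is in the right direction, and using \autoref{theo:compl} here is non-circular since the paper notes that only the preceding lemmas are used in its proof.
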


\begin{proof}
  By Laws~\ref{law:compl.is.compl}, \ref{law:compl.sym}, $\compl F, F$ are complements.
  Let $\mathbf D$ denote the domain of $\compl F$.
  Then by \autoref{lemma:complement.range}, $\compl F(\mathbf D)$ is the commutant of $F(\mathbf A)$.
  Since $G(\mathbf B)$ is also the commutant of $F(\mathbf A)$ by assumption, $\compl F(\mathbf D)=G(\mathbf B)$.
  By \autoref{lemma:same.range.equiv}, this implies that $\compl F$ and $G$ are equivalent.
  Since $F,\compl F$ are complements, with Law~\ref{law:compl.equiv} this implies that $F,G$ are complements.
\end{proof}

\begin{lemma}\label{lemma:unit.iff}
  $U:\mathbf A\to\mathbf B$ is a unit reference iff $U(\mathbf A)=\idmult_{\mathbf B}$ (the set of scalar multiples of $1_\mathbf{B}$).
\end{lemma}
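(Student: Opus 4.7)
The plan is to reduce the equivalence to a characterization of complements via commutants that has already been established. By definition, $U:\mathbf A\to\mathbf B$ is a unit register iff $U$ and $\id_{\mathbf B}$ are complements, and by Law~\ref{law:compl.sym} this is iff $\id_{\mathbf B}$ and $U$ are complements. So the task becomes: characterize when $\id_{\mathbf B}, U$ are complements in terms of the range $U(\mathbf A)$.

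For this, I would apply \autoref{lemma:complement.range} and its converse \autoref{lemma:complement.range.converse} with $F:=\id_{\mathbf B}$ and $G:=U$. Together, they state that $\id_{\mathbf B},U$ are complements iff $U(\mathbf A)$ equals the commutant of $\id_{\mathbf B}(\mathbf B)=\mathbf B=\bounded(\calH_B)$. Hence $U$ is a unit register iff $U(\mathbf A)=\comm{\bounded(\calH_B)}$.

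The only remaining step is to observe the standard fact that the commutant of $\bounded(\calH_B)$ is $\idmult_{\mathbf B}$, i.e., the center of $\bounded(\calH)$ consists exactly of scalar multiples of the identity. This is a classical result in operator theory (e.g., \cite[Proposition IV.1.6]{takesaki} applied with one factor being $\idmult$, or shown directly by testing commutation with all rank-one operators $\psi\adj\phi$ to force any central operator to be a scalar). Substituting this into the equivalence above yields: $U$ is a unit register iff $U(\mathbf A)=\idmult_{\mathbf B}$, as desired.

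There is no real obstacle here; the heavy lifting is already done in \autoref{lemma:complement.range} and \autoref{lemma:complement.range.converse}, and the only external ingredient is the triviality of the center of $\bounded(\calH_B)$, which is standard and citable.
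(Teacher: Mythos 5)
Your proof is correct and follows essentially the same route as the paper: unfold the definition of unit register, apply \autoref{lemma:complement.range} and \autoref{lemma:complement.range.converse} to translate ``complement of $\id$'' into ``range equals the commutant of $\mathbf B$'', and conclude using that the commutant of $\bounded(\calH_B)$ is $\idmult_{\mathbf B}$. Your explicit appeal to Law~\ref{law:compl.sym} to reorder the pair is a harmless extra detail the paper leaves implicit.
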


\begin{proof}
  By definition, $U:\mathbf A\to\mathbf B$ is a unit reference iff $U$ and $\id:\mathbf B\to\mathbf B$ are complements.
  By Lemmas~\ref{lemma:complement.range} and~\ref{lemma:complement.range.converse}, this holds iff $U(\mathbf A)$ is the commutant of $\id(\mathbf B)=\mathbf B$.
  Since the commutant of $\mathbf B$ is $\idmult_\mathbf{B}$, this holds iff $U(\mathbf A)=\idmult_\mathbf{B}$.
\end{proof}

}


\fullonly{
  \section{Proofs for lifting of quantum objects}

\subsection{Elementary objects, \autoref{lemma:elementary}}
\label{app:proof:lemma:elementary}

\begin{proof}
  We show \eqref{lemma:preserve.isometry}: If $A$ is an isometry, then $\adj AA=1$ by definition. Since $F$ is a reference, this implies that $\adj{F(A)}F(A)=F(\adj AA)=F(1)=1$. Thus $F(A)$ is an isometry.

  We show \eqref{lemma:preserve.unitary}: If $A$ is unitary, then $\adj AA=1$ and $A\adj A=1$ by definition. Analogous to \eqref{lemma:preserve.isometry}, this implies that $\adj{F(A)}F(A)=1$ and $F(A)\adj{F(A)}=1$. Thus $F(A)$ is unitary.

  We show \eqref{lemma:preserve.projector}: If $A$ is a projector, then $\adj A=A$ and $A^2=A$.
  Thus $\adj{F(A)}=F(\adj A)=F(A)$ and $F(A)^2=F(A^2)=F(A)$. Thus $F(A)$ is a projector.

  We show \eqref{lemma:preserve.norm}: By \autoref{lemma:reg.isometric}.

  We show \eqref{lemma:preserve.positive}: If $A$ is positive, there exists a $B$ with $A=\adj BB$. Since $F$ is a register, $F(A)=F(\adj BB)=F(\adj B)F(B)=\adj{F(B)}F(B)\geq 0$. So $F(A)$ is positive.
\end{proof}

\subsection{Subspaces, \autoref{lemma:lift.sub}}
\label{app:proof:lemma:lift.sub}

\begin{proof}
  We show \eqref{lemma:lift.sub:subspace}.
  By \cite[comment after Definition II.3.1]{conway13functional}, the projector $P_S$ onto $S$ exists.
  By \lemmaref{lemma:preserve.projector}, $F(P_S)$ is a projector.
  By \cite[Proposition II.3.2\,(b)]{conway13functional} the image of a project is a closed subspace.
  Thus $\regsubspace FS$, the image of $F(P_S)$, is a closed subspace.

  This shows \eqref{lemma:lift.sub:subspace}.

  \medskip

  We show \eqref{lemma:lift.sub:ortho}.
  For any closed subspace $X$, if $P_X$ denotes the projector onto $X$, then $P_{\ortho X}=1-P_X$.
  (Proof: $1-P_X$ is a projector. Thus $\im(1-P_X) \starrel= \ker P_X \starstarrel= \ortho{\paren{\im P_X}} = \ortho X$
  where $(*)$ follows by \cite[Proposition II.3.2\,(b)]{conway13functional} and $(**)$ by \cite[Definition II.3.1]{conway13functional}.
  Thus $1-P_X$ is the unique projector $P_{\ortho X}$ onto $\ortho X$.)
  Thus $F(\ortho S)=\im F(P_{\ortho S}) = \im F(1-P_S) = \im 1-F(P_S) = \ortho{(\im F(P_S))} = \ortho{F(P_S)}$.

  This shows \eqref{lemma:lift.sub:ortho}.

  \medskip

  We show \eqref{lemma:lift.sub:zero}.
  The projector $P_{\braces0}$ onto $\{0\}$ is $0$. Thus $F\pb\paren{\braces0}=\im F(P_{\braces0}) = \im F(0)=\im 0=\braces0$.
  
  This shows \eqref{lemma:lift.sub:zero}.

  \medskip

  We show \eqref{lemma:lift.sub:all}.
  The projector $P_{\calH_A}$ onto $\calH_A$ is $1$. Thus $F(\calH_A)=\im F(P_{\calH_A}) = \im F(1)=\im 1=\calH_B$.

  This shows \eqref{lemma:lift.sub:all}.

  \medskip

  We show \eqref{lemma:lift.sub:mono}.

  We have $F(S)\subseteq F(T)$ iff $\im F(P_S)\subseteq\im F(P_T)$ iff $F(P_S)F(P_T)=F(P_S)$ \cite[Exercise~II.3.6]{conway13functional} iff $F(P_SP_T)=F(P_S)$ iff $P_SP_T=P_S$ (\autoref{lemma:reg.injective}) iff $S\subseteq T$ \cite[Exercise~II.3.6]{conway13functional}.

  This shows \eqref{lemma:lift.sub:mono}.

  \medskip

  We show \eqref{lemma:lift.sub:inter}.
  Since $S\cap T$ is the largest closed subspace contained in both $S,T$, \eqref{lemma:lift.sub:inter} can be equivalently stated as: $F$ preserves greatest lower bounds (in the lattice of subspaces).
  And this follows by \eqref{lemma:lift.sub:mono}.

  This shows \eqref{lemma:lift.sub:inter}.

  \medskip

  We show \eqref{lemma:lift.sub:plus}.
  The fact that $S+T$ is the smallest closed subspace containing $S,T$ follows because it is a closed subspace (since $+$ by our definition includes taking the closure of a subspace), and it is the smallest closed subspace because it is the closure of the smallest subspace containing $S,T$.

  Then $F(S+T)=F(S)+F(T)$ can be equivalently stated as: $F$ preserves least upper bounds (in the lattice of subspaces).
  And this follows by \eqref{lemma:lift.sub:mono}.
  
  This shows \eqref{lemma:lift.sub:plus}.

  \medskip
  
  We show \eqref{lemma:lift.sub:apply-op}.
  By \autoref{lemma:normal.tensor1}, there exists a Hilbert space $\calH_C$ and a unitary $U:\calH_A\otimes\calH_C\to\calH_B$ such that $F(a)=U(a\otimes 1)\adj U$ for all $a$.

  Let $P_S$ denote the projector onto $S$, and $P_{AS}$ the projector onto $AS$. (Exists by \autoref{lemma:ex.proj}.)

  When writing $XYS$ for operators $X,Y$ and a subspace $A$, we mean $X(YS)$, i.e., the closure of the image of $X$ of the closure of the image of $Y$ of $S$
  (and not the closure of the image of $XY$ of $S$).
  
  Then
  \begin{equation*}
    F(A)F(S) \starrel= F(A)F(P_S)\calH_B
    = U(A\otimes 1)\adj UU(P_S\otimes 1)\adj U\calH_B
    \starstarrel= U(A\otimes 1)(P_S\otimes 1)(\calH_A\otimes\calH_C)
  \end{equation*}
  Here $(*)$ is by definition of $F(S)$ (application of a reference to a closed subspace).
  And $(**)$ since $U$ is unitary, and thus $\adj UU=1$ and $\adj U\calH_B=\calH_A\otimes\calH_C$.
  And further
  \begin{equation*}
    \dots \starrel= U(AP_S\calH_A \otimes \calH_C)
    = U(AS \otimes \calH_C)
    = U(P_{AS}\calH_A \otimes \calH_C)
    \starrel= U(P_{AS}\otimes 1)(\calH_A \otimes \calH_C)
  \end{equation*}
  Here $(*)$ follows from \autoref{lemma:tensor.subspace.proj}. ($S\otimes T$ for subspaces $S,T$ is defined as the closed span of $\{s\otimes t:s\in S,t\in T\}$).
  And further
  \begin{equation*}
    \dots
    \starrel= U(P_{AS}\otimes 1)\adj U\calH_B
    = F(P_{AS})\calH_B
    \starstarrel= F(AS).
  \end{equation*}
  Here $(*)$ since $U$ is unitary, and thus $\adj U\calH_B=\calH_A\otimes\calH_C$.
  And $(**)$ is by definition of $F(AS)$ (application of a reference to a closed subspace).

  Altogether $F(A)F(S)=F(AS)$, as claimed.
  This shows \eqref{lemma:lift.sub:apply-op}.
  
  \medskip
  
  We show \eqref{lemma:lift.sub:inter.disjoint}.
  Note that for commuting projectors, $\im PQ\subseteq \im P$, $\im PQ=\im QP\subseteq \im Q$.
  And for $h\in\im P\cap\im Q$, $PQh=Ph=h$, hence $h\in\im PQ$.
  It follows that
  \begin{equation}\label{eq:intersect.comm.proj}
    \im P\cap\im Q=\im PQ\qquad\text{for commuting projectors }P,Q
  \end{equation}
  
  We have
  \begin{equation*}
    F(S)\cap G(V)
    =
    \im F(P_S)\cap\im G(P_V)
    \starrel=
    \im F(P_S)G(P_V)
    \starstarrel=
    F(P_S) \im G(P_V)
    =
    F(P_S)G(V).
  \end{equation*}
  Here $(*)$ follows from the \eqref{eq:intersect.comm.proj} and the fact that $F(P_S),G(P_V)$ commute because $F,G$ are disjoint.
  And $(**)$ is simply a special case of the fact that $\im f\circ g=f(\im g)$ for arbitrary functions $f,g$.
\end{proof}

\subsection{Mixed states, \autoref{lemma:mixed.states}}
\label{app:proof:lemma:mixed.states}

\begin{proof}
  Throughout this proof, we abbreviate $\rho^* := \rho_1\otimes\dots\otimes\rho_n$ and $F^* := \pairs{F_1}{F_n}$. Note that $F^*$ is an iso-reference since $F_1,\dots,F_n$ is a partition by assumption.
  And ${F_1(\rho_1)\mtensor\dots\mtensor F_n(\rho_n)} = F^*(\rho^*)$ by definition.
  Since $F^*$ is an iso-reference, by \autoref{lemma:isoreg-decomp}, $F^*(\rho^*)=U\rho^*\adj U$ for some unitary $U$ that we fix throughout the proof as well.

  \medskip
  
  We show \eqref{lemma:mixed.states:trace}.
  By \autoref{lemma:traceclass.tensor}, $\rho^*$ is a trace-class operator and $\tr\rho^* = \tr\rho_1\cdots\tr\rho_n$.
  By \cite[Theorem 18.11\,(a)]{conway00operator}, this implies that $F^*(\rho^*)=U\rho^*\adj U$ is trace-class, and $\tr F^*(\rho^*)=\tr U\rho^*\adj U\starrel=\tr\rho^*=\tr\rho_1\cdots\tr\rho_n$.
  (Here $(*)$ is by \lemmaref{lemma:circ.trace:ubu}.)
  This shows \eqref{lemma:mixed.states:trace}.

  \medskip

  We show \eqref{lemma:mixed.states:density}.
  By \autoref{lemma:tensor.abs}, $\rho^*$ is positive. Then $F^*(\rho^*)=U\rho^*\adj U$ is positive.
  By  \eqref{lemma:mixed.states:trace}, $F^*(\rho^*)$ is trace-class and $\tr F^*(\rho^*)=\tr\rho_1\cdots\tr\rho_n$.
  Thus $\tr F^*(\rho^*)=1$ (or $\leq 1$) if $\tr\rho_i=1$ (or $\leq1$) for $i=1,\dots,n$.
  By definition of (sub)density operators, this shows \eqref{lemma:mixed.states:density}.
  
  \medskip

  We show \eqref{lemma:mixed.states:permute}.
  It is sufficient to consider the case $\pi$ is a transposition of neighboring values since all permutations are generated from these.
  That is, $\pi(k)=k+1$ and $\pi(k+1)=k$ for some fixed $k\in\braces{1,\dots,n-1}$ and $\pi=\id$ everywhere else.
  Let $F^\pi := \pairs{F_{\pi(1)}}{F_{\pi(n)}}$ and $\rho^\pi := \rho_{\pi(1)}\otimes\dots\otimes\rho_{\pi(n)}$
  That is, we need to show $F^*(\rho^*)=F^\pi(\rho^\pi)$.
  By Laws~\ref{law:pair.sigma}--\ref{law:pair.alpha'}, \ref{law:pair.tensor}, $F^\pi = \chain{F^*}{\Pi}$ for $\Pi := \id \rtensor \dots \rtensor \id \rtensor (\alpha \circ (\sigma\rtensor\id) \circ \alpha')$.
  (Recall that $\otimes$ is right-associative.)
  And by Laws~\ref{law:tensor.ab}, \ref{law:sigma.ab}, \ref{law:alpha.abc}, \ref{law:alpha'.abc}, $\rho^\pi = \Pi'(\rho^*)$ for  $\Pi' := \id \rtensor \dots \rtensor \id \rtensor (\alpha \circ (\sigma\rtensor\id) \circ \alpha')$.
  (We use different variable names for $\Pi$ and $\Pi'$ even though their definitions seem identical because they have different types.)
  Furthermore, since $\sigma$ is self-inverse, and $\alpha,\alpha'$ are inverses, $\Pi\circ\Pi'=\id$.
  Thus $F^\pi(\rho^\pi) = F^*(\Pi\circ\Pi'(\rho^*)) = F^*(\rho^*)$.
  This shows \eqref{lemma:mixed.states:permute}.

  \medskip

  We show \eqref{lemma:mixed.states:apply.UV}.
  We have
  \begin{align*}
    \hskip1cm & \hskip-1cm
       F_1(U\rho_1 V)\mtensor F_2(\rho_2)\mtensor\dots\mtensor F_n(\rho_n) \\
    &= F^*((U\rho_1V)\otimes\rho_2\otimes\dots\otimes\rho_n) \\
    &= F^*\pb\paren{\paren{U\otimes\id\otimes\dots\otimes\id}\rho^*\paren{V\otimes\id\otimes\dots\otimes\id}} \\
    &= F^*\paren{U\otimes\id\otimes\dots\otimes\id}F^*(\rho^*)F^*\paren{V\otimes\id\otimes\dots\otimes\id} \\
    &= F_1\paren{U}F^*(\rho^*)F_1\paren{V}.
  \end{align*}
  This shows \eqref{lemma:mixed.states:apply.UV}.

  \medskip
  
  We show \eqref{lemma:mixed.states:rank1}.

  Assume all $\rho_i$ have rank 1.
  An operator has rank 1 iff it is of the form $\psi\adj\eta$ for nonzero vectors $\psi,\eta$ \cite[Exercise II.4.8]{conway13functional}.
  Then $\rho_i=\psi_i\adj{\eta_i}$ for nonzero $\psi_i,\eta_i$
  and then $\rho^*=\psi\adj\eta$ where $\psi:=\psi_1\otimes\dots\otimes\psi_n$ and $\eta := \eta_1\otimes\dots\otimes\eta_n$.
  Note that $\psi,\eta$ are non-zero.
  Then $F^*(\rho^*) = U\rho^*\adj U = (U\psi)\adj{(U\eta)}$.
  Since $U$ is unitary, $U\psi,U\eta$ are non-zero.
  Thus $F^*(\rho^*)$ is rank 1.

  Now assume that at least one $\rho_i$ does not have rank 1, say $\rho_1$.
  And operator has rank 1 iff its image is one-dimensional \cite[Exercise II.4.8]{conway13functional}.
  If any of the $\rho_i$ is $0$, then $F^*(\rho^*)=0$ and thus does not have rank 1.
  Since $\rho_1$ does not have rank 1 and is non-zero, there exist $\eta_1,\bar\eta_1$ such that $\rho_1\eta_1,\rho_1\bar\eta_1$ are not collinear.
  And since all $\rho_i$ are non-zero, there exist $\eta_i$ such that $\rho_i\eta_i$ is non-zero.
  Let $\eta := \eta_1\otimes\dots\otimes\eta_n$ and $\bar\eta := \bar\eta_1\otimes\eta_2\otimes\dots\otimes\eta_n$.
  Then $\rho^*\eta,\rho^*\bar\eta$ are not collinear.
  We have $F^*(\rho^*)U\eta = U\rho^*\adj UU\eta = U\rho^*\eta$ and $F^*(\rho^*)U\bar\eta = U\rho^*\bar\eta$.
  Thus $U\rho^*\eta, U\rho^*\bar\eta$ are in the image of $F^*(\rho^*)$ and not collinear.
  Thus $F^*(\rho^*)$ does not have rank 1.

  This shows \eqref{lemma:mixed.states:rank1}.

  \medskip

  We show \eqref{lemma:mixed.states:bounded}.
  We only show bounded linearity in the first argument.
  The general case follows with~\eqref{lemma:mixed.states:permute}.
  Since the tensor-product of operators is bilinear, $\Phi$ is linear in each argument.
  Furthermore
  \begin{multline}
    \pb\trnorm{F^*(\rho_1\otimes\dots\otimes\rho_n)}
    = 
    \pb\trnorm{U(\rho_1\otimes\dots\otimes\rho_n)\adj U} \\
    \starrel= \trnorm {\rho_1\otimes\dots\otimes\rho_n} 
    \starstarrel= \trnorm {\rho_1} \cdots \trnorm{\rho_n}
    = \trnorm{\rho_1} \cdot \prod_{i=2}^n \trnorm{\rho_n}.
  \end{multline}
  Here $(*)$ is by \lemmaref{lemma:circ.trace:ubu}, and $(**)$ by \autoref{lemma:traceclass.tensor}.
  Thus $\Phi$ is bounded in its first argument (with respect to the trace-norm).
  This shows \eqref{lemma:mixed.states:bounded}.
  
  \medskip

  We show \eqref{lemma:mixed.states:pair}.
  Note that $\pair{F_1}{F_2}$, $F_3$, \dots, $F_n$ is a partition.
  \begin{align*}
    \hskip1cm & \hskip-1cm
      \pair{F_1}{F_2}(\rho_1\otimes\rho_2)\mtensor F_3(\rho_3)\mtensor\dots\mtensor F_n(\rho_n) \\
    &\eqrefrel{lemma:mixed.states:permute}=
    F_3(\rho_3)\mtensor\dots\mtensor F_n(\rho_n)\mtensor \pair{F_1}{F_2}(\rho_1\otimes\rho_2) \\
    &\starrel=
    F_3\dots F_n{F_1}{F_2}
    \pb\paren{\rho_3\otimes\dots\rho_n\otimes\rho_1\otimes\rho_2} \\
    &\starrel=
    F_3(\rho_1)\mtensor\dots\mtensor F_n(\rho_n)\mtensor F_1(\rho_1)\mtensor F_2(\rho_2) \\
    &\eqrefrel{lemma:mixed.states:permute}=
    F_1(\rho_1)\mtensor F_2(\rho_2)\mtensor\dots\mtensor F_n(\rho_n).
  \end{align*}
  Both occurrences of $(*)$ are by definition of $\mtensor$ (recall that $\otimes$ and $\pair\cdot\cdot$ are right-associative.
  This shows \eqref{lemma:mixed.states:pair}.
  
  \medskip

  We show \eqref{lemma:mixed.states:separating}.
  By \autoref{lemma:tensor.generate.tracecl}, $S:=\{\rho_1\otimes\dots\otimes\rho_n:\rho_i\in S_i\}$ generates $\tracecl(\calH)$ with $\calH:=\calH_{A_1}\otimes\dots\otimes\calH_{A_n}$.
  Let $\Span S$ denote the linear span of $S$. That $S$ generates $T(\calH)$ means that $\Span S$ is dense in $T(\calH)$.
  For $\rho\in T(\calH)$,
  \begin{align*}
    \pb\trnorm{F^*(\rho)}
    = 
    \pb\trnorm{U\rho\adj U}
    \starrel\leq \norm U\cdot \trnorm \rho \cdot \norm{\adj U}
    = \trnorm\rho.
  \end{align*}
  Here $(*)$ is by \lemmaref{lemma:circ.trace:ubu}.
  So $F^*$ is bounded linear (w.r.t.~the trace-norm). Hence $F^*$ is continuous.
  Hence $F^*(\Span S)$ is dense in $F^*(\tracecl(\calH))$.
  Since $F^*$ is an iso-reference it is surjective.
  Thus $F^*(\tracecl(\calH))=\tracecl(\calH_B)$.
  Since $F^*$ is linear, $F^*(\Span S)=\Span{F^*(S)}$.
  Thus $F^*(S)$ generates $T(\calH_B)$.
  This shows the first sentence of \eqref{lemma:mixed.states:separating}.
  
  The ``in particular'' part:
  Since $\calE,\calF$ are linear and equal on $F^*(S)$, they are equal on $\Span{F^*(S)}$.
  Since $\calE,\calF$ are quantum subchannels, they are continuous with respect to the trace-norm (\autoref{lemma:channel.bounded}).
  So since $\Span{F^*(S)}$ is dense in $T(\calH_B)$, we have that $\calE,\calF$ are equal on their whole domain $T(\calH_B)$.
  This shows \eqref{lemma:mixed.states:separating}.

  \medskip

  We show \eqref{lemma:mixed.states:nested}.
  Let $G^* := \pairs{G_1}{G_m}$. Then
  \begin{align*}
    \hskip1cm & \hskip-1cm
      F_1\pB\paren{G_1(\sigma_1)\mtensor\dots\mtensor G_m(\sigma_m)}\mtensor F_2(\rho_2)\mtensor\dots\mtensor F_n(\rho_n) \\
    &\eqrefrel{lemma:mixed.states:permute}=
      F_2(\rho_2)\mtensor\dots\mtensor F_n(\rho_n)
      \mtensor F_1\pB\paren{G_1(\sigma_1)\mtensor\dots\mtensor G_m(\sigma_m)} \\
    &\starrel=
      {F_2}\dots{F_n}{F_1}
      \pB\paren{\rho_2\otimes\dots\otimes\rho_n \otimes G^*\pb\paren{\sigma_1\otimes\dots\otimes \sigma_m} }
    \\
    &=
      {F_2}\dots{F_n}{F_1}
      \circ
      \pB\paren{\id\rtensor\dots\rtensor\id\rtensor G^*}
      \pB\paren{ \rho_2\otimes\dots\otimes\rho_n\otimes {\sigma_1\otimes\dots\otimes \sigma_m}}\\
    &\starstarrel=
      \pairFdFF{F_2}{F_n}{\chain{F_1}{G^*}}
      \pB\paren{ \rho_2\otimes\dots\otimes\rho_n\otimes {\sigma_1\otimes\dots\otimes \sigma_m}}\\
    &\tristarrel=
      \pairFdFFdF{F_2}{F_n}{\chain{F_1}{G_1}}{\chain{F_1}{G_m}}
      \pB\paren{ \rho_2\otimes\dots\otimes\rho_n\otimes {\sigma_1\otimes\dots\otimes \sigma_m}}\\
    &\starrel=
      F_2(\rho_2)\mtensor\dots\mtensor F_n(\rho_n)
      \mtensor \chain{F_1}{G_1}(\sigma_1) \mtensor\dots\mtensor
      \chain{F_1}{G_m}(\sigma_m)
    \\
    &\eqrefrel{lemma:mixed.states:permute}=
    \chain{F_1}{G_1}(\sigma_1)
    \mtensor\dots\mtensor
    \chain{F_1}{G_m}(\sigma_m)
    \mtensor F_2(\rho_2)\mtensor\dots\mtensor F_n(\rho_n).
  \end{align*}
  Here $(*)$ is by definition of $\mtensor$.
  And $(**)$ is by Law~\ref{law:pair.tensor}.
  And $(*{*}*)$ uses $\chain{F_1}{G^*}={\pairs{\chain{F_1}{G_1}}{\chain{F_1}{G_m}}}$ which follows from Law~\ref{law:pair.chain}.
  Recall in this calculation that $\otimes$, $\pair\cdot\cdot$ are right-associative.
  
  This shows \eqref{lemma:mixed.states:nested}.
\end{proof}

\subsection{Pure states, \autoref{lemma:lift.pure}}
\label{app:proof:lemma:lift.pure}

\begin{proof}
  Throughout this proof, we abbreviate $\psi^* := \psi_1\otimes\dots\otimes\psi_n$ and $\eta^*:=\pureeta{\mathbf A_1}\otimes\dots\otimes\pureeta{\mathbf A_n}$ and $F^* := \pairs{F_1}{F_n}$ and $b:={F_1(\pureeta{\mathbf{A}_1}\adj{\pureeta{\mathbf{A}_1}})\mtensor\dots\mtensor F_n(\pureeta{\mathbf{A}_n}\adj{\pureeta{\mathbf{A}_n}})}=F^*(\eta^*\adj{{\eta^*}})$.
  Note that $\eta^*=\pureeta{\mathbf{A_1}\otimes\dots\otimes\mathbf{A_n}}$ by definition of $\pureeta{\dots}$.
  Note that $F^*$ is an iso-reference since $F_1,\dots,F_n$ is a partition by assumption.
  And ${F_1(\rho_1)\ptensor\dots\ptensor F_n(\rho_n)} = F^*(\psi^*\adj{{\eta^*}})\purexi b$ by definition.
  Since $F^*$ is an iso-reference, by \autoref{lemma:isoreg-decomp}, $F^*(\rho^*)=U\rho^*\adj U$ for some unitary $U$ that we fix throughout the proof as well.

  \medskip
  
  We show \eqref{lemma:lift.pure:norm}.
  \begin{multline*}
    \pb\norm {F_1(\rho_1)\ptensor\dots\ptensor F_n(\rho_n)}^2 
    =
      \pb\norm {F^*(\psi^*\adj{{\eta^*}})\purexi b}^2
    =
      \adj{{\purexi b}}
      \adj{F^*(\psi^*\adj{{\eta^*}})}
      F^*(\psi^*\adj{{\eta^*}})
      \purexi b \\
    \starrel=
      \adj{{\purexi b}}
      F^*(\eta^*\adj{{\psi^*}}\psi^*\adj{{\eta^*}})
      \purexi b
    =
      \norm{\psi^*}^2  \cdot
      \adj{{\purexi b}}
      F^*(\eta^*\adj{{\eta^*}})
      \purexi b
    \starstarrel=
      \norm{\psi^*}^2  \cdot
      \norm{\purexi b}^2 
    \tristarrel=
      \norm{\psi^*}^2 \\
    =
      \pb\paren{\norm{\psi_1}\cdots\norm{\psi_n}}^2
  \end{multline*}
  Here $(*)$ follows from the fact that references are multiplicative and preserve adjoints.
  To see $(**)$, note that $\eta^*$ is a unit vector, hence $\eta^*\adj{{\eta^*}}$ is a projector, hence $b=F^*(\eta^*\adj{{\eta^*}})$ is a non-zero projector (\lemmaref{lemma:preserve.projector}, \autoref{lemma:reg.injective}).
  And $\purexi b$ is in the image of that projector by definition.
  And $(*{*}*)$ holds because $\purexi b$ is a unit vector by definition.
  This shows the first part of \eqref{lemma:lift.pure:norm}.
  The ``in particular'' part follows because pure states are states of norm $1$.

  \medskip
    
  We show \eqref{lemma:lift.pure:permute}.
  \begin{align*}
    \hskip1em&\hskip-1em
      F_{\pi(1)}(\psi_{\pi(1)})\ptensor\dots\ptensor F_{\pi(n)}(\psi_{\pi(n)}) \\
    &\starrel=
    \pB\paren{F_{\pi(1)}(\psi_{\pi(1)}\pureeta{\mathbf A_{\pi(1)}})\mtensor\dots\mtensor F_{\pi(n)}(\psi_{\pi(n)}\pureeta{\mathbf A_{\pi(n)}})}
    \purexi{{F_{\pi(1)}(\pureeta{\mathbf{A}_{\pi(1)}}\adj{\pureeta{\mathbf A_{\pi(1)}}})\mtensor\dots\mtensor F_{\pi(n)}(\pureeta{\mathbf{A}_{\pi(n)}}\adj{\pureeta{\mathbf A_{\pi(n)}}})}} \\
    &\starstarrel=
    \pB\paren{F_{1}(\psi_{1}\pureeta{\mathbf A_{1}})\mtensor\dots\mtensor F_{n}(\psi_{n}\pureeta{\mathbf A_{n}})}
    \purexi{{F_1(\pureeta{\mathbf{A}_1}\adj{\pureeta{\mathbf A_1}})\mtensor\dots\mtensor F_n(\pureeta{\mathbf{A}_n}\adj{\pureeta{\mathbf A_n}})}} \\
    &\starrel=
    F_1(\psi_1)\ptensor F_2(\psi_2)\ptensor\dots\ptensor F_n(\psi_n).
  \end{align*}
  Here $(*)$ is by definition of $\ptensor$ and $(**)$ by two applications of \lemmaref{lemma:mixed.states:permute}.
  This shows \eqref{lemma:lift.pure:permute}.

  \medskip
  
  We show \eqref{lemma:lift.pure:bounded}. Linearity is by construction, and boundedness follows from \eqref{lemma:lift.pure:norm}.

  \medskip

  We show \eqref{lemma:lift.pure:pair}.
  \begin{align*}
    \hskip1cm & \hskip-1cm
      \pair{F_1}{F_2}(\psi_1\otimes\psi_2)\ptensor F_3(\psi_3)\ptensor\dots\ptensor F_n(\psi_n) \\
    &\starrel=
      \pB\paren{ \pair{F_1}{F_2}\pb\paren{\paren{\psi_1\otimes\psi_2}\adj{\pureeta{\mathbf A_1\otimes\mathbf A_2}}}
      \mtensor F_3(\psi_3\adj{\pureeta{\mathbf A_3}})
      \mtensor \dots
      \mtensor F_n(\psi_n\adj{\pureeta{\mathbf A_n}})      } \cdot {}\\
    &\qquad\qquad\qquad
      \purexi{{ \pair{F_1}{F_2}\paren{\pureeta{\mathbf A_1\otimes\mathbf A_2}\adj{\pureeta{\mathbf A_1\otimes\mathbf A_2}}}
      \mtensor F_3(\pureeta{\mathbf A_3}\adj{\pureeta{\mathbf A_3}})
      \mtensor \dots
      \mtensor F_n(\pureeta{\mathbf A_n}\adj{\pureeta{\mathbf A_n}})      }}
      \\
    &\starstarrel=
      \pB\paren{ \pair{F_1}{F_2}\pb\paren{\psi_1\adj{\pureeta{\mathbf A_1}}\otimes \psi_2\adj{\pureeta{\mathbf A_2}}}
      \mtensor F_3(\psi_n\adj{\pureeta{\mathbf A_3}})
      \mtensor \dots
      \mtensor F_n(\psi_n\adj{\pureeta{\mathbf A_n}})      } \cdot{}
    \\
    &\qquad\qquad\qquad
      \purexi{{ \pair{F_1}{F_2}\paren{\pureeta{\mathbf A_1}\adj{\pureeta{\mathbf A_1}}\otimes\pureeta{\mathbf A_2}\adj{\pureeta{\mathbf A_2}}}
      \mtensor F_3(\pureeta{\mathbf A_3}\adj{\pureeta{\mathbf A_3}})
      \mtensor \dots
      \mtensor F_n(\pureeta{\mathbf A_n}\adj{\pureeta{\mathbf A_n}})      }}
      \\
    &\tristarrel=
      \pB\paren{ {F_1}\pb\paren{\psi_1\adj{\pureeta{\mathbf A_1}}}
      \mtensor {F_2}\pb\paren{\psi_2\adj{\pureeta{\mathbf A_2}}}
      \mtensor F_3(\psi_n\adj{\pureeta{\mathbf A_3}})
      \mtensor \dots
      \mtensor F_n(\psi_n\adj{\pureeta{\mathbf A_n}})      } \cdot{}
      \\
    &\qquad\qquad\qquad
      \purexi{{
      {F_1}\paren{\pureeta{\mathbf A_1}\adj{\pureeta{\mathbf A_1}}}
      \mtensor {F_2}\paren{\pureeta{\mathbf A_2}\adj{\pureeta{\mathbf A_2}}}
      \mtensor F_3(\pureeta{\mathbf A_3}\adj{\pureeta{\mathbf A_3}})
      \mtensor \dots
      \mtensor F_n(\pureeta{\mathbf A_n}\adj{\pureeta{\mathbf A_n}})      }}
      \\
    &\starrel=
    F_1(\psi_1)\ptensor F_2(\psi_2)\ptensor\dots\ptensor F_n(\psi_n).
  \end{align*}
  Here $(*)$ is by definition of $\ptensor$.
  And $(**)$ follows since $\pureeta{\mathbf A_1\otimes\mathbf A_2} = \pureeta{\mathbf A_1}\otimes\pureeta{\mathbf A_2}$ by definition.
  And $(*{*}*)$ is by \lemmaref{lemma:mixed.states:pair}.
  This shows \eqref{lemma:lift.pure:pair}.

  \medskip
  
  We show \eqref{lemma:lift.pure:separating}.
  Let $S^* := \pb\braces{\psi_1\otimes\dots\otimes\psi_n: \forall i.\ \psi_i\in S_i}$.
  By assumption, $S_i$ generates $\calH_{A_i}$, i.e., $\overline{\Span S_i}=\calH_{A_i}$ where $\overline{{}\cdot{}}$ denotes the topological closure.
  The algebraic tensor product of dense subspaces (such as $\Span S_i$) of some Hilbert spaces is dense in the tensor product of those Hilbert spaces \cite[Exercise II.6.9]{prugovecki81hilbert}.
  Thus $S^*$ generates $\calH^* := \calH_{A_1}\otimes\dots\otimes\calH_{A_n}$, i.e., $\overline{\Span S^*}=\calH^*$.
  Let $S_F :=  \pb\braces{  F_1(\psi_1)\ptensor F_2(\psi_2)\ptensor\dots\ptensor F_n(\psi_n) : \psi_i\in S_i }$.
  We want to show that $S_F$ generates $\calH_B$.
  We have
  \begin{align*}
    S_F
    &= \pb\braces{ F^*\pb\paren{\psi_1\adj{\pureeta{\mathbf{A}_1}}\otimes\dots\otimes\psi_n\adj{\pureeta{\mathbf{A}_n}}} \purexi b : \psi_i\in S_i } 
    \starrel= \pb\braces{ F^*\pb\paren{\psi\adj{{\eta^*}}} \purexi b : \psi\in S^* } \\
    &= \pb\braces{ U\psi\adj{{\eta^*}}\adj U \purexi b : \psi\in S^* }
      \starstarrel= \pb\braces{ \adj{{\eta^*}}\adj U \purexi b \cdot U\psi : \psi\in S^* }
  \end{align*}
  Here $(*)$ follows by definition of $S^*$ and the fact that $\pureeta{\mathbf{A}_1\otimes\dots\otimes\mathbf A_n} = \pureeta{\mathbf A_1}\otimes\dots\otimes\pureeta{\mathbf A_n}$.
  And for $(**)$ notice that $\adj{{\eta^*}}\adj U \purexi b$ is a scalar.
  By definition, $\purexi b$ is in the image of $b^*=F(\eta^*\adj{{\eta^*}})=U\eta^*\adj{{\eta^*}}\adj U$.
  In particular, it is not orthogonal to $\im U\eta^*$.
  Thus  $\adj{{\eta^*}}\adj U \purexi b\neq 0$.
  Hence
  \begin{multline*}
    \overline{\Span S_F}
    = \overline{\Span \pb\braces{ \adj{{\eta^*}}\adj U \purexi b \cdot U\psi : \psi\in S^* }} 
    \starrel= \overline{\Span \pb\braces{ U\psi : \psi\in S^* }} = \overline{\Span US^*} \\
    \starstarrel= U\,\overline{\Span S^*} = U\calH^* \starstarrel= \calH_B.
  \end{multline*}
  Here $(*)$ uses that $\adj{{\eta^*}}\adj U \purexi b$ is a non-zero scalar factor and thus does not affect the span.
  And $(**)$ uses that $U:\calH^*\to\calH_B$ is unitary.
  Hence $S_F$ generates $\calH_B$.
  This shows the first sentence of \eqref{lemma:lift.pure:separating}.
  The ``in particular'' part follows because $U,V$ are bounded operators and thus linear and continuous.

  \medskip

  We show \eqref{lemma:lift.pure:regulars}.
  By definition, a reference $F:\mathbf A\to\mathbf B$ is $\eta$-regular if $\pair F{\compl F}(\pureetabutt{\mathbf A}\otimes c)=\pureetabutt{\mathbf B}$ for some~$c$.
  But since all complements of $F$ are equivalent, it is sufficient to show that there is \emph{some} complement $F'$ of $F$ and some $c$ such that $\pair F{F'}(\pureetabutt{\mathbf A}\otimes c)=\pureetabutt{\mathbf B}$.
  With that in mind, the fact that $\Fst:\mathbf A\to\mathbf A\otimes\mathbf B$, $\Snd:\mathbf B\to\mathbf A\otimes\mathbf B$, $\swap:\mathbf A\otimes\mathbf B\to\mathbf B\otimes\mathbf A$, $\assoc:(\mathbf A\otimes\mathbf B)\otimes\mathbf C \to \mathbf
  A\otimes(\mathbf B\otimes\mathbf C)$, $\assoc'$ are $\eta$-regular is shown by the following calculations (always keeping in mind that $\pureeta{\mathbf A\otimes\mathbf B}=\pureeta{\mathbf A}\otimes\pureeta{\mathbf B}$):
  \begin{gather*}
    \pair\Fst\Snd(\pureetabutt{\mathbf A}\otimes\pureetabutt{\mathbf B})
    =
    (\pureetabutt{\mathbf A} \otimes 1)(1\otimes \pureetabutt{\mathbf B})
    =
    \pureetabutt{\mathbf A\otimes\mathbf B}
    \\[3pt]
    \pair\Snd\Fst(\pureetabutt{\mathbf B}\otimes\pureetabutt{\mathbf A})
    =
    (1\otimes \pureetabutt{\mathbf B})(\pureetabutt{\mathbf A} \otimes 1)
    =
    \pureetabutt{\mathbf A\otimes\mathbf B}
    \\[3pt]
    \pair\sigma{\compl\swap}(\pureetabutt{\mathbf A\otimes\mathbf B} \otimes 1)
    =
    \pair\sigma{\compl\swap}\pb\paren{(\pureetabutt{\mathbf A}\otimes\pureetabutt{\mathbf B})\otimes 1}
    =
    \pureetabutt{\mathbf B}\otimes\pureetabutt{\mathbf A}
    =
    \pureetabutt{\mathbf B\otimes\mathbf A}
    \\[3pt]
    \begin{aligned}
      \pair\assoc{\compl\assoc}(\pureetabutt{(\mathbf A\otimes\mathbf B)\otimes\mathbf C} \otimes 1)
      &=
      \pair\assoc{\compl\assoc}\pb\paren{((\pureetabutt{\mathbf A}\otimes\pureetabutt{\mathbf B})\otimes\pureetabutt{\mathbf C})\otimes 1}
      \\[-2pt]&=
      \pureetabutt{\mathbf A}\otimes(\pureetabutt{\mathbf B}\otimes\pureetabutt{\mathbf C})
      =
      \pureetabutt{\mathbf A\otimes(\mathbf B\otimes\mathbf C)}
    \end{aligned}
    \\[3pt]
    \alpha'\text{ analogous}
  \end{gather*}
  This shows that $\Fst,\Snd,\swap,\assoc,\assocp$ are $\eta$-regular.

  We next show that $\spair FG$ is $\eta$-regular for $\eta$-regular disjoint $F:\mathbf A\to\mathbf C$, $G:\mathbf B\to\mathbf C$.
  By Law~\ref{law:complement.pair}, $\pair G{\compl{\spair FG}}$ is a complement of $F$.
  Then since $F$ is $\eta$-regular, there exists a $c$ such that $\pb\pair{F}{\pair G{\compl{\spair FG}}}(\pureetabutt{\mathbf A}\otimes c) = \pureetabutt{\mathbf C}$.
  Similarly, there exists a $d$ such that $\pb\pair{G}{\pair F{\compl{\spair FG}}}(\pureetabutt{\mathbf B}\otimes d) = \pureetabutt{\mathbf C}$.
  Note that for all $a,b,z$,
  $\pb\pair{G}{\pair F{\compl{\spair FG}}} (b\otimes (a\otimes z)) = \pb\pair{F}{\pair G{\compl{\spair FG}}} \paren{a\otimes(b\otimes z)}$
  and thus, since $\pb\pair{F}{\pair G{\compl{\spair FG}}}$ is an iso-reference,
  $\pb\pair{F}{\pair G{\compl{\spair FG}}}^{-1} \circ \pb\pair{G}{\pair F{\compl{\spair FG}}} (b\otimes (a\otimes z)) = a\otimes(b\otimes z)$.
  Thus
  \[
    \underbrace{\swap \circ \pb\pair{F}{\pair G{\compl{\spair FG}}}^{-1} \circ \pb\pair{G}{\pair F{\compl{\spair FG}}} \circ (1\otimes \swap)}_{{}=:H} (b\otimes (z\otimes a)) = (b\otimes z)\otimes a.
  \]
  We also have
  $\assocp (b\otimes (z\otimes a)) = (b\otimes z)\otimes a$.
  By Law~\ref{law:tensor3}, this implies that $H = \assocp$.
  Thus
  \begin{multline*}
    c\otimes\pureetabutt{\mathbf A}
    =
    \sigma(\pureetabutt{\mathbf A}\otimes c)
    =
    \sigma\circ\pb\pair{F}{\pair G{\compl{\spair FG}}}^{-1}(\pureetabutt{\mathbf C}) \\
    =
    \sigma \circ \pb\pair{F}{\pair G{\compl{\spair FG}}}^{-1} \circ \pb\pair{G}{\pair F{\compl{\spair FG}}}
    (\pureetabutt{\mathbf B}\otimes d)
    = H     (\pureetabutt{\mathbf B}\otimes \sigma(d))
    =  \assocp   (\pureetabutt{\mathbf B}\otimes \sigma(d)).
  \end{multline*}
  By \autoref{lemma:tensor.overlap}, this implies that $c=\pureetabutt{\mathbf B}\otimes c'$ for some $c'$.
  (Using that $\assocp(x)=\adj{U_\alpha}xU_\alpha$ for $U_\alpha$ as in \autoref{lemma:tensor.overlap}.)
  Then we have
  \begin{multline*}
    \pb\pair{\spair FG}{\compl{\spair FG}} (\pureetabutt{\mathbf A\otimes\mathbf B}\otimes c')
    =
    \pb\pair{\spair FG}{\compl{\spair FG}} \pb\paren{\paren{\pureetabutt{\mathbf A}\otimes\pureetabutt{\mathbf B}}\otimes c'} \\
    =
    \pb\pair F{\pair G{\compl{\spair FG}}} \pb\paren{\pureetabutt{\mathbf A}\otimes(\pureetabutt{\mathbf B}\otimes c')}
    =
    \pb\pair F{\pair G{\compl{\spair FG}}} (\pureetabutt{\mathbf A}\otimes c)
    =
    \pureetabutt{\mathbf C}.
  \end{multline*}
  This shows that $\spair FG$ is $\eta$-regular.

  We next show that $\chain FG$ is $\eta$-regular for $\eta$-regular $F:\mathbf B\to\mathbf C$, $G:\mathbf A\to\mathbf B$.
  Since $G$ is $\eta$-regular, there exists some $c$ such that $\pair{G}{\compl G}(\pureetabutt{\mathbf A}\otimes c)=\pureetabutt{\mathbf B}$.
  Since $F$ is $\eta$-regular, there exists some $d$ such that $(F,\compl F)\paren{\pureetabutt{\mathbf B}\otimes d}=\pureetabutt{\mathbf C}$.
  Thus
  \begin{equation*}
    \pb\pair{\chain FG}{\pair{\compl F}{\chain F{\compl G}}} \pb\paren{\pureetabutt{\mathbf A} \otimes (d\otimes c)}
    =
    \pb\pair{F}{\compl F} \pb\paren{\pair{G}{\compl G}(\pureetabutt{\mathbf A} \otimes c) \otimes d}
    =
    \pb\pair{F}{\compl F} \pb\paren{\pureetabutt{\mathbf B} \otimes d}
    =
    \pureetabutt{\mathbf C}.
  \end{equation*}
  Since ${\pair{\compl F}{\chain F{\compl G}}}$ is a complement of $\chain FG$ by Law~\ref{law:complement.chain}, this means that $F.G$ is $\eta$-regular.

  We next show that $F\rtensor G$ is $\eta$-regular for $\eta$-regular $F:\mathbf A\to\mathbf C$, $G:\mathbf B\to\mathbf D$.
  By Law~\ref{law:complement.tensor}, $\compl F\rtensor\compl G$ and $F\rtensor G$ are complements.
  Thus by Law~\ref{law:compl.equiv}, $\compl F\rtensor\compl G$ and $\compl{\paren{F\rtensor G}}$ are equivalent.
  Thus by definition of equivalence, there exists an iso-reference $I$ such that $\compl F\rtensor\compl G = \compl{\paren{F\rtensor G}} \circ I$.
  Since $F,G$ are $\eta$-regular, there exist $c_F,c_G$ such that $\pair F{\compl F}(\pureetabutt{\mathbf A}\otimes c_F)=\pureetabutt{\mathbf C}$ and $\pair G{\compl G}(\pureetabutt{\mathbf B}\otimes c_G)=\pureetabutt{\mathbf D}$.
  Let $c:=I(c_F\otimes c_G)$.
  Then
  \begin{align*}
    &\pB\pair{F\rtensor G}{\compl{\paren{F\rtensor G}}} \pb\paren{\pureetabutt{\mathbf A\otimes\mathbf B} \otimes c}
    =
      (F\rtensor G)(\pureetabutt{\mathbf A\otimes\mathbf B}) \cdot {\compl{\paren{F\rtensor G}}}(c)
    \\&
    =
    \pb\paren{F\rtensor G}\pb\paren{\pureetabutt{\mathbf A}\otimes\pureetabutt{\mathbf B}} \cdot {\compl{\paren{F\rtensor G}}}\circ I (c_F\otimes c_G)
    \\&
    =
    \pb\paren{F\rtensor G}\pb\paren{\pureetabutt{\mathbf A}\otimes\pureetabutt{\mathbf B}} \cdot \pb\paren{\compl F\rtensor\compl G} (c_F\otimes c_G)
    =
    \pB\paren{ F(\pureetabutt{\mathbf A}) \otimes G(\pureetabutt{\mathbf B}) }
    \cdot
    \pB\paren{\compl F(c_F)\rtensor\compl G(c_G)}
    \\&
    =
    { F(\pureetabutt{\mathbf A})\compl F(c_F) \otimes G(\pureetabutt{\mathbf B})\compl G(c_G) }
    =
    \pair F{\compl F}\paren{\pureetabutt{\mathbf A} \otimes c_F}
    \otimes
    \pair G{\compl G}\paren{\pureetabutt{\mathbf B} \otimes c_G}
    \\&
    =
    \pureetabutt{\mathbf C} \otimes \pureetabutt{\mathbf D}
    =
    \pureetabutt{\mathbf C\otimes\mathbf D}.
  \end{align*}
  Thus $F\rtensor G$ is $\eta$-regular.
  
  This shows \eqref{lemma:lift.pure:regulars}.
  
  \medskip
  
  We show \eqref{lemma:lift.pure:nested}.
  Let $G^* := \pairs{G_1}{G_m}$ and  $\xi_G^*:=\purexi{{G_1(\pureeta{\mathbf{C}_1}\adj{\pureeta{\mathbf{C}_1}})\mtensor\dots\mtensor G_m(\pureeta{\mathbf{C}_m}\adj{\pureeta{\mathbf{C}_m}})}}$
  and $\mathbf C^* := \mathbf C_1\otimes\dots\otimes \mathbf C_m$ and $\mathbf A^* := \mathbf A_1\otimes\dots\otimes \mathbf A_n$ 
  and
  \[
    \xi_{FG}^*:=\purexi{{\chain{F_1}{G_1}(\pureeta{\mathbf{C}_1}\adj{\pureeta{\mathbf{C}_1}})\mtensor\dots\mtensor \chain{F_1}{G_m}(\pureeta{\mathbf{C}_m}\adj{\pureeta{\mathbf{C}_m}})
      \mtensor F_2(\pureeta{\mathbf{A}_2}\adj{\pureeta{\mathbf{A}_2}})
      \mtensor \dots       \mtensor F_n(\pureeta{\mathbf{A}_n}\adj{\pureeta{\mathbf{A}_n}})
    }}.
  \]
  Since all $G_i$ are $\eta$-regular (and disjoint), by \eqref{lemma:lift.pure:regulars}, $G^*:\mathbf C^*\to\mathbf A_1$ is $\eta$-regular.
  Thus there is a $c$ such that $\pair{G^*}{\compl{{G^*}}}\paren{\pureeta{\mathbf C^*}\adj{\pureeta{\mathbf C^*}}\otimes c} = \pureeta{\mathbf A_1}\adj{\pureeta{\mathbf A_1}}$.
  Since $G^*$ is an iso-reference, the range of $G^*$ is $\mathbf A_1$.
  Since $\compl{{G^*}}$ is disjoint from $G^*$ (by definition of complements), the ranges of $\compl{{G^*}}$ and $G^*$ commute.
  Thus $\compl{{G^*}}(c)$ commutes with all of $\mathbf A_1$, hence  $\compl{{G^*}}(c)$ is a multiple of the identity, say $\lambda 1$.
  Thus
  \begin{equation*}
    \pureeta{\mathbf A_1}\adj{\pureeta{\mathbf A_1}} =
    \pair{G^*}{\compl{{G^*}}}\paren{\pureeta{\mathbf C^*}\adj{\pureeta{\mathbf C^*}}\otimes c} =
    G^*\paren{\pureeta{\mathbf C^*}\adj{\pureeta{\mathbf C^*}}}\cdot \compl{{G^*}}(c) =
    \lambda G^*\paren{\pureeta{\mathbf C^*}\adj{\pureeta{\mathbf C^*}}}.
  \end{equation*}
  Since $\pureeta{\mathbf A_1}$ and $\pureeta{\mathbf C^*}$ are unit vectors by definition, we have $\norm{\pureeta{\mathbf A_1}\adj{\pureeta{\mathbf A_1}}}=1$
  and $\norm{G^*\paren{\pureeta{\mathbf C^*}\adj{\pureeta{\mathbf C^*}}}} \starrel= \norm{\pureeta{\mathbf C^*}\adj{\pureeta{\mathbf C^*}}} = 1$
  (with $(*)$ by \lemmaref{lemma:preserve.norm}), we have $\abs\lambda=1$.
  Since $\pureetabutt{\mathbf A_1}$ and $G^*(\pureetabutt{\mathbf C^*})$ are positive (using the positivity of $G^*$, \autoref{lemma:reg.props}), $\pureetabutt{\mathbf A_1}=G^*(\pureetabutt{\mathbf C^*})$ implies that $\lambda\geq 0$.
  Thus $\lambda=1$.
  Thus $G^*\paren{\pureeta{\mathbf C^*}\adj{\pureeta{\mathbf C^*}}} =     \pureeta{\mathbf A_1}\adj{\pureeta{\mathbf A_1}}$.
  Then $\xi^*_G = \purexi{\im   \pureeta{\mathbf A_1}\adj{\pureeta{\mathbf A_1}}} = \pureeta{\mathbf A_1}$ where the last equality is by definition of $\pureeta{\dots}$.

  By definition, $\purexi b$ is in the image of $b=F^*(\eta^*\adj{{\eta^*}})=F^*(\pureetabutt{\mathbf A^*})$.
  Since $\pureeta{\mathbf A^*}$ and $\pureeta{\mathbf A_1}$ are unit vectors, $\pureetabutt{\mathbf A^*},\pureetabutt{\mathbf A_1}$ are projectors.
  Then by \lemmaref{lemma:preserve.projector}, $F^*(\pureetabutt{\mathbf A^*}), F_1(\pureetabutt{\mathbf A_1})$ are projectors.
  Thus
  \begin{multline*}
    F^*(\pureetabutt{\mathbf A^*}) \cdot F_1(\pureetabutt{\mathbf A_1})
    \starrel=
    F^*(\pureetabutt{\mathbf A^*}) \cdot F^*(\pureetabutt{\mathbf A_1}\otimes 1\otimes\dots\otimes 1) \\
    =
    F^*(\pureetabutt{\mathbf A^*} \cdot (\pureetabutt{\mathbf A_1}\otimes 1\otimes\dots\otimes 1))
    \starstarrel=
    F^*(\pureetabutt{\mathbf A^*}).
  \end{multline*}
  Here $(*)$ is by definition of $F^*$ and \autoref{def:pair}.
  And $(**)$ uses $\pureeta{\mathbf A^*}=\pureeta{\mathbf A_1}\otimes\dots\otimes\pureeta{\mathbf A_n}$.
  
  Thus the image of the projector $F^*(\pureetabutt{\mathbf A^*})$ is a subset of the image of the projector $F_1(\pureetabutt{\mathbf A_1})$ \cite[Exercise II.3.6]{conway13functional}.
  Hence $\purexi b$ is in the image of $F_1(\pureetabutt{\mathbf A_1})$. So  $F_1(\pureetabutt{\mathbf A_1})\purexi b=\purexi b$.

  Furthermore, we have
  \begin{align}
    \hskip1cm&\hskip-1cm
     \chain{F_1}{G_1}(\pureeta{\mathbf{C}_1}\adj{\pureeta{\mathbf{C}_1}})\mtensor\dots\mtensor \chain{F_1}{G_m}(\pureeta{\mathbf{C}_m}\adj{\pureeta{\mathbf{C}_m}})
      \mtensor F_2(\pureeta{\mathbf{A}_2}\adj{\pureeta{\mathbf{A}_2}})
      \mtensor \dots       \mtensor F_n(\pureeta{\mathbf{A}_n}\adj{\pureeta{\mathbf{A}_n}}) \notag\\
    &\starrel=
      F_1\pB\paren{{G_1}(\pureeta{\mathbf{C}_1}\adj{\pureeta{\mathbf{C}_1}})\mtensor\dots\mtensor {G_m}(\pureeta{\mathbf{C}_m}\adj{\pureeta{\mathbf{C}_m}})}
      \mtensor F_2(\pureeta{\mathbf{A}_2}\adj{\pureeta{\mathbf{A}_2}})
      \mtensor \dots       \mtensor F_n(\pureeta{\mathbf{A}_n}\adj{\pureeta{\mathbf{A}_n}})  \notag\\
    &\starstarrel=
      F_1\pB\paren{{G^*}(\pureetabutt{\mathbf{C}^*})}
      \mtensor F_2(\pureeta{\mathbf{A}_2}\adj{\pureeta{\mathbf{A}_2}})
      \mtensor \dots       \mtensor F_n(\pureeta{\mathbf{A}_n}\adj{\pureeta{\mathbf{A}_n}})  \notag\\
    &\tristarrel=
      F_1(\pureeta{\mathbf{A}_1}\adj{\pureeta{\mathbf{A}_1}})\mtensor\dots\mtensor F_n(\pureeta{\mathbf{A}_n}\adj{\pureeta{\mathbf{A}_n}}).
      \label{eq:xiFG.xi.prep}
  \end{align}
  Here $(*)$ follows by \lemmaref{lemma:mixed.states:nested}.
  And $(**)$ follows by definition of $G^*$ and $\mathbf C^*$ and the fact that $\pureeta{\mathbf C^*}=\pureeta{\mathbf C_1\otimes\dots\otimes\mathbf C_m}=\pureeta{\mathbf C_1}\otimes\dots\otimes\pureeta{\mathbf C_m}$.
  And $(*{*}*)$ follows because we established that ${G^*}(\pureetabutt{\mathbf{C}^*}) = \pureetabutt{\mathbf A_1}$.
  By definition of $\xi_{FG}^*$ and $\purexi b$, \eqref{eq:xiFG.xi.prep} implies $\xi_{FG}^*=\purexi b$.

  Combining what we have so far, we get:
  \begin{equation}
    F_1(\xi_G^*\adj{\pureeta{\mathbf A_1}})\purexi b =
    F_1(\pureetabutt{\mathbf A_1})\purexi b =
    \purexi b =
    \xi_{FG}^*.
    \label{eq:xi.xiFG}
  \end{equation}
  Then
  \begin{align*}
    \hskip1em & \hskip-1em
      F_1\pB\paren{G_1(\gamma_1)\ptensor\dots\ptensor G_m(\gamma_m)}\ptensor F_2(\psi_2)\ptensor\dots\ptensor F_n(\psi_n) \\
    &\starrel=\pB\paren{ F_1\pB\paren{\pb\paren{G_1(\gamma_1\adj{\pureeta{\mathbf C_1}})\mtensor\dots\mtensor G_m(\gamma_m\adj{\pureeta{\mathbf C_n}})}\xi_G^*\adj{\pureeta{\mathbf A_1}}}\mtensor F_2(\psi_2\adj{\pureeta{\mathbf A_2}})\mtensor\dots\mtensor F_n(\psi_n\adj{\pureeta{\mathbf A_n}}) }
      \cdot \purexi b
    \\
    &\starstarrel=\pB\paren{ F_1\pB\paren{\pb\paren{G_1(\gamma_1\adj{\pureeta{\mathbf C_1}})\mtensor\dots\mtensor G_m(\gamma_m\adj{\pureeta{\mathbf C_n}})}}\mtensor F_2(\psi_2\adj{\pureeta{\mathbf A_2}})\mtensor\dots\mtensor F_n(\psi_n\adj{\pureeta{\mathbf A_n}}) }
      \cdot F_1(\xi_G^*\adj{\pureeta{\mathbf A_1}})\purexi b
    \\
    &\tristarrel=\pB\paren{ \chain{F_1}{G_1}(\gamma_1\adj{\pureeta{\mathbf C_1}})\mtensor \dots \mtensor \chain{F_1}{G_m}(\gamma_m\adj{\pureeta{\mathbf C_m}}) \mtensor F_2(\psi_2\adj{\pureeta{\mathbf A_2}})\mtensor\dots\mtensor F_n(\psi_n\adj{\pureeta{\mathbf A_n}}) }
      \cdot F_1(\xi_G^*\adj{\pureeta{\mathbf A_1}})\purexi b
    \\
    &\eqrefrel{eq:xi.xiFG}=\pB\paren{ \chain{F_1}{G_1}(\gamma_1\adj{\pureeta{\mathbf C_1}})\mtensor \dots \mtensor \chain{F_1}{G_m}(\gamma_m\adj{\pureeta{\mathbf C_m}}) \mtensor F_2(\psi_2\adj{\pureeta{\mathbf A_2}})\mtensor\dots\mtensor F_n(\psi_n\adj{\pureeta{\mathbf A_n}}) }
      \cdot \xi_{FG}^* \\
    &\starrel= \chain{F_1}{G_1}(\gamma_1) \ptensor\dots\ptensor \chain{F_1}{G_m}(\gamma_m) \ptensor F_2(\psi_2)\ptensor\dots\ptensor F_n(\psi_n).
  \end{align*}
  Here $(*)$ is by definition of $\ptensor$.
  Here $(**)$ uses \lemmaref{lemma:mixed.states:apply.UV} with $U:=1$ and $V:=\xi_G^*\adj{\pureeta{\mathbf A_1}}$.
  And $(*{*}*)$ uses \lemmaref{lemma:mixed.states:nested}.
  This shows \eqref{lemma:lift.pure:nested}.

  \medskip

  We show \eqref{lemma:lift.pure:U}:
  \begin{multline*}
    F_1(U\psi_1)\ptensor\dots\ptensor F_n(\psi_n)
    \starrel=
      \pB\paren{F_1(U\psi_1\pureeta{\mathbf A_1})\mtensor\dots\mtensor F_n(\psi_n\pureeta{\mathbf A_n})}\purexi b \\
    \starstarrel=
      F_1(U) \pB\paren{F_1(\psi_1\pureeta{\mathbf A_1})\mtensor\dots\mtensor F_n(\psi_n\pureeta{\mathbf A_n})}\purexi b 
    \starrel= F_1(U) \pb\paren{F_1(\psi_1)\ptensor\dots\ptensor F_n(\psi_n)}.
  \end{multline*}
  Here $(*)$ is by definition of $\ptensor$.
  And $(**)$ by \lemmaref{lemma:mixed.states:apply.UV}.
  This shows \eqref{lemma:lift.pure:U}.

  \medskip
  
  We show \eqref{lemma:lift.pure:butterfly}.
  By definition $\purexi b$ is in the image of $b = F^*(\eta^*\adj{{\eta^*}})$.
  And since $\purexi b$ is a unit vector, $\purexi b\adj{{\purexi b}}$ is a rank-1 projector.
  Since $\eta^*$ is a unit vector, $\eta^*\adj{{\eta^*}}$ is a rank-1 projector and so is $F^*(\eta^*\adj{{\eta^*}})$ by \lemmaref{lemma:preserve.projector} and \lemmaref{lemma:mixed.states:rank1} (the latter in the degenerate case where $n=1$ and $F_1=F^*$).
  The fact that $\purexi b$ is in the range of $F^*(\eta^*\adj{{\eta^*}})$ implies that  $\purexi b\adj{{\purexi b}}$ and $F^*(\eta^*\adj{{\eta^*}})$ project onto the same space, hence $\purexi b\adj{{\purexi b}} = F^*(\eta^*\adj{{\eta^*}})$.
  Thus
  \begin{align*}
    \hskip1cm & \hskip-1cm
     \mathord\lozenge\pB\paren{F_1(\psi_1)\ptensor\dots\ptensor F_n(\psi_n)} \\
    &\starrel=
      \mathord\lozenge\pB\paren{\pB\paren{F_1(\psi_1\adj{\pureeta{\mathbf A_1}})\mtensor\dots\mtensor F_1(\psi_n\adj{\pureeta{\mathbf A_n}})}\purexi b} \\
    &\starstarrel=
      \mathord\lozenge\pB\paren{{F^*(\psi^*\adj{{\eta^*}})}\purexi b} \\
    &=
      {F^*(\psi^*\adj{{\eta^*}})}\purexi b
      \adj{{\purexi b}}\adj{F^*(\psi^*\adj{{\eta^*}})}\\
    &=
      {F^*(\psi^*\adj{{\eta^*}})}F^*(\eta^*\adj{{\eta^*}})\adj{F^*(\psi^*\adj{{\eta^*}})} \\
    &=
      F^*(\psi^*\adj{{\eta^*}}\eta^*\adj{{\eta^*}}\eta^*\adj{{\psi^*}}) \\
    &\tristarrel=
      F^*(\psi^*\adj{{\psi^*}}) \\
    &\starstarrel= F_1(\mathord\lozenge(\psi_1))\mtensor\dots\mtensor F_n(\mathord\lozenge(\psi_n)).
  \end{align*}
  Here $(*)$ is by definition of $\ptensor$.
  And $(**)$ is by definition of $\mtensor$.
  And $(*{*}*)$ follows because $\purexi b$ is a unit vector.
  This shows \eqref{lemma:lift.pure:butterfly}.

  \medskip

  We show \eqref{lemma:lift.pure:subspace}.
  Let $P_S$ denote the projector onto $S$.
  We have
  \begin{align*}
    \hskip1cm & \hskip-1cm
     F_1(\psi_1)\ptensor\dots\ptensor F_n(\psi_n) \in \regsubspace{F_1}S
    \\&
    \starrel\iff
      F_1(\psi_1)\ptensor\dots\ptensor F_n(\psi_n) \in \im F_1(P_S)
    \\&
    \iff
    F_1(P_S) \pB\paren{ F_1(\psi_1)\ptensor\dots\ptensor F_n(\psi_n) } = F_1(\psi_1)\ptensor\dots\ptensor F_n(\psi_n)
    \\&
    \eqrefrel{lemma:lift.pure:U}\iff
    F_1(P_S\psi_1)\ptensor F_2(\psi_2)\ptensor\dots\ptensor F_n(\psi_n)
    = F_1(\psi_1)\ptensor F_2(\psi_2)\ptensor\dots\ptensor F_n(\psi_n)
    \\&
    \starstarrel\iff
    P_S\psi_1 = \psi_1
    \iff
    \psi_1\in S.
  \end{align*}
  Here $(*)$ is by definition of $\regsubspace{F_1}S$.
  And $(**)$ follows since  $F_1(\cdot)\ptensor F_2(\psi_2)\ptensor\dots\ptensor F_n(\psi_n)$ is linear by \eqref{lemma:lift.pure:bounded} and thus injective by \eqref{lemma:lift.pure:norm} (using that $\psi_2,\dots,\psi_n\neq0$).
  
  This shows \eqref{lemma:lift.pure:subspace}.
\end{proof}

\subsection{Quantum channels, Lemmas~\ref{lemma:lift.channel} and~\ref{lemma:lift.kraus}}
\label{app:proof:qchannels}

\paragraph{Quantum channels, \autoref{lemma:lift.channel}}

\begin{proof}
  We first show \eqref{lemma:isoreg:channel}.
  By \autoref{lemma:isoreg-decomp}, $F(\rho)=U\rho \adj U$ for unitary $U$ and all $\rho$.
  By \autoref{lemma:UU.qchannel}, this implies that $F|_{\tracecl(\calH_A)}$ is a quantum channel for any iso-reference $F$.
  Since it has inverse $\rho\mapsto \adj U\rho U$, $F|_{\tracecl(\calH_A)}$ is bijective.
  Since $F,G$ and thus also $F\rtensor G$ are iso-references, we have that $F|_{\tracecl(\calH_A)}, G|_{\tracecl(\calH_B)}, (F\rtensor G)|_{\tracecl(\calH_A\otimes\calH_B)}$ are quantum channels.
  Thus $F|_{\tracecl(\calH_A)}\otimes G|_{\tracecl(\calH_B)}$ is also a quantum channel (\lemmaref{lemma:channel.tensor:tpres}).
  For all trace-class $\rho,\sigma$,
  \begin{multline*}
    (F\rtensor G)|_{\tracecl(\calH_A\otimes\calH_B)}(\rho\otimes\sigma)
    \starrel =
    (F\rtensor G)(\rho\otimes\sigma)
    = F(\rho)\otimes G(\sigma)
    \\= F|_{\tracecl(\calH_A)}(\rho)\otimes G|_{\tracecl(\calH_B)}(\sigma)
    \starstarrel= (F|_{\tracecl(\calH_A)}\otimes G|_{\tracecl(\calH_B)})(\rho\otimes\sigma)
  \end{multline*}
  Here $(*)$ uses that $\rho\otimes\sigma$ is trace-class by \autoref{lemma:traceclass.tensor}.
  And $(**)$ uses \lemmaref{lemma:channel.tensor:cp}.
  Since they coincide on all $\rho\otimes\sigma$, we have $(F\rtensor G)|_{\tracecl(\calH_A\otimes\calH_B)} = F|_{\tracecl(\calH_A)}\otimes G|_{\tracecl(\calH_B)}$ by \autoref{lemma:channel.equal}.
  This shows \eqref{lemma:isoreg:channel}.

  \medskip

  We show \eqref{lemma:lift.channel:channel}.
  $\pair F{\compl F}$ is an iso-reference by definition of $\compl F$.
  Thus $\pair F{\compl F}^{-1}$ is also an iso-reference.
  Then by \eqref{lemma:isoreg:channel}, $\pair F{\compl F}$ and $\pair F{\compl F}^{-1}$ are quantum channels.
  Thus if $\calE$ is a quantum (sub)channel, then the composition $\regchannel F\calE=
  \pair F{\compl F} \circ (\calE\otimes\id) \circ \pair F{\compl F}^{-1}$ is as well.
  This shows \eqref{lemma:lift.channel:channel}.

  \medskip
  
  We show \eqref{lemma:lift.channel:choice.compl}.
  If $F,G$ are complements, then $G$ and $\compl F$ are equivalent (Law~\ref{law:compl.equiv}).
  That means there is an iso-reference $I$ such that $G=\compl F\circ I$.
  Thus
  \begin{align*}
    \spair FG \circ (\calE\otimes\id) \circ \spair FG^{-1}
    &\starrel=
      \pair F{\compl F} \circ (\id\rtensor I) \circ (\calE\otimes\id) \circ (\id\rtensor I^{-1}) \circ \pair F{\compl F}^{-1}
    \\
    &\eqrefrel{lemma:isoreg:channel}=
      \pair F{\compl F} \circ (\id\otimes I) \circ (\calE\otimes\id) \circ (\id\otimes I^{-1}) \circ \pair F{\compl F}^{-1}
      \\
    &\starstarrel=
      \pair F{\compl F} \circ \pb\paren{\calE\otimes(I\circ I^{-1})} \circ \pair F{\compl F}^{-1}
      \\
    &=
      \pair F{\compl F} \circ \paren{\calE\otimes \id} \circ \pair F{\compl F}^{-1}
    = \regchannel F\calE.
  \end{align*}
  Here $(*)$ uses Law~\ref{law:pair.tensor} and $(**)$ uses \lemmaref{lemma:channel.tensor:distrib}.
  This shows \eqref{lemma:lift.channel:choice.compl}.

  \medskip

  We show \eqref{lemma:lift.channel:compose}.
  \begin{multline*}
    \regchannel F\calE\circ \regchannel F\calF
    \starrel=
    \pair F{\compl F}\circ(\calE\otimes\id)\circ \pair F{\compl F}^{-1}
    \circ
    \pair F{\compl F}\circ(\calF\otimes\id)\circ \pair F{\compl F}^{-1}
    \\
    \starstarrel=
    \pair F{\compl F}\circ(\calE\otimes\id)\circ(\calF\otimes\id)\circ \pair F{\compl F}^{-1}
    \tristarrel=
    \pair F{\compl F}\circ\pb\paren{(\calE\circ\calF)\otimes\id}\circ \pair F{\compl F}^{-1}
    \starrel= \regchannel F{\calE\circ\calF}.
  \end{multline*}
  Here $(*)$ is by definition of $\regchannel F{\calE},\regchannel F\calF$.
  And $(**)$ uses that $\pair F{\compl F}$ is an iso-reference and thus bijective.
  And $(*{*}*)$ uses \lemmaref{lemma:channel.tensor:distrib}.
  This shows \eqref{lemma:lift.channel:compose}.

    \medskip
    
  We show \eqref{lemma:lift.channel:pair}.
  By Law~\ref{law:complement.pair}, we have that $F$ and $\pair{G}{\compl{\spair FG}}$ are complements.
  And that $G$ and $\pair F{\compl{\spair FG}}$ are complements.
  \begin{align*}
    &F(\calE)\circ G(\calF)\\
    &\eqrefrel{lemma:lift.channel:choice.compl}=
      \pair{F}{\pair{G}{\compl{\spair FG}}} \circ (\calE \otimes \id) \circ
      \pair{F}{\pair{G}{\compl{\spair FG}}}^{-1} \circ
      \pair G{\pair F{\compl{\spair FG}}} \circ (\calF \otimes \id) \circ
      \pair G{\pair F{\compl{\spair FG}}}^{-1} \\
    &\starrel=
      \pair{\spair FG}{\compl{\spair FG}} \circ \alpha' \circ (\calE \otimes \id) \circ
      {\assoc} \circ \pair{\spair FG}{\compl{\spair FG}}^{-1}
      \\ & \qquad\qquad \circ
           \pair{\spair FG}{\compl{\spair FG}} \circ (\swap\rtensor\id) \circ \assocp \circ (\calF \otimes \id)
           \circ {\assoc} \circ  (\swap\rtensor\id) \circ \pair{\spair FG}{\compl{\spair FG}}^{-1} \\
    &=
      \pb\pair{\spair FG}{\compl{\spair FG}} \circ { \alpha' \circ (\calE \otimes \id) \circ \assoc \circ
       (\swap\rtensor\id)\circ \assocp\circ (\calF \otimes \id) \circ \assoc\circ   (\swap\rtensor\id)}
      {}\circ \pb\pair{\spair FG}{\compl{\spair FG}}^{-1} \\
    &\eqrefrel{lemma:isoreg:channel}=
      \pb\pair{\spair FG}{\compl{\spair FG}} \circ \underbrace{ \alpha' \circ (\calE \otimes \id) \circ \assoc \circ
       (\swap\otimes\id)\circ \assocp\circ (\calF \otimes \id) \circ \assoc\circ   (\swap\otimes\id)}_{\starstarrel= \ (\calE\otimes\calF)\otimes\id}
      {}\circ \pb\pair{\spair FG}{\compl{\spair FG}}^{-1} \\
    &=
      \spair FG(\calE\otimes\calF).
  \end{align*}
  Here $(*)$ uses Laws~\ref{law:pair.alpha'}, \ref{law:pair.sigma}, \ref{law:pair.tensor} and ${\assocp}^{-1}=\assoc$.
  And $(**)$ is verified by evaluating the left and right hand side of the equality on all $(\rho\otimes\tau)\otimes \pi$ for trace-class $\rho,\tau,\pi$.
  Both sides evaluate to $(\calE(\rho)\otimes\calF(\tau))\otimes\pi$, i.e., they are equal on all  $(\rho\otimes\tau)\otimes \pi$.
  Furthermore, both sides are quantum channels by \eqref{lemma:isoreg:channel}.
  Thus by \autoref{lemma:channel.equal}, both sides are equal.
  So $(**)$ holds.
  This shows \eqref{lemma:lift.channel:pair}.

  \medskip

  We show \eqref{lemma:lift.channel:id}. We have
  \begin{equation*}
    \regchannel F\id = \pair F{\compl F} \circ (\id\otimes\id) \circ \pair F{\compl F}^{-1}
    \starrel= \pair F{\compl F} \circ \pair F{\compl F}^{-1}
    \starstarrel = \id.
  \end{equation*}
  Here $(*)$ follows because $\id\otimes\id=\id$ (\lemmaref{lemma:channel.tensor:id}).
  And $(**)$ since $\pair F{\compl F}$ is an iso-reference.
  This shows \eqref{lemma:lift.channel:id}.

  We show \eqref{lemma:lift.channel:apply}.
  We abbreviate $F^* := \pairs{F_2}{F_n}$ and $\rho^* := \rho_2\otimes\dots\otimes\rho_n$. (If $n=1$, we let $F^*$ be the unit reference from \autoref{lemma:quantum.unit} and $\rho^*:=1\in\setC$.)
  Note that $F_1$ and $F^*$ are complements because $F_1,\dots,F_n$ is a partition.
  We have
  \begin{align*}
    \hskip2cm & \hskip-2cm \regchannel{F_1}\calE\pb\paren{F_1(\rho_1)\mtensor\dots\mtensor F_n(\rho_n)}
      \starrel=
      \regchannel{F_1}\calE \circ \pair{F_1}{F^*} (\rho_1\otimes\rho^*)
    \\&
    \eqrefrel{lemma:lift.channel:choice.compl}=
      \pair{F_1}{F^*} \circ (\calE\otimes\id) \circ \pair{F_1}{F^*}^{-1} \circ \pair{F_1}{F^*} (\rho_1\otimes\rho^*)
    =
    \pair{F_1}{F^*} \circ (\calE\otimes\id) \circ (\rho_1\otimes\rho^*)
    \\&
    =
    \pair{F_1}{F^*} \pb\paren{\calE(\rho_1)\otimes\rho^*} 
    \starrel=
    F_1\pb\paren{\calE(\rho_1)}\mtensor F_2(\rho_2)\mtensor\dots\mtensor F_n(\rho_n)
  \end{align*}
  Here $(*)$ is by definition of $\mtensor$.
  This shows \eqref{lemma:lift.channel:apply}.
  
  \medskip
  
  We show \eqref{lemma:lift.channel:chain}.
  \begin{align*}
    F(G(\calE))
    &=
      \pair{F}{\compl F}
      \circ
      \pB\paren{\pB\paren{\pair{G}{\compl G} \circ (\calE\otimes\id) \circ \pair{G}{\compl G}^{-1}}
      \otimes \id}
      \circ
      \pair{F}{\compl F}^{-1}
      \\
    &\starrel=
      \pair{F}{\compl F}
      \circ
      \pb\paren{\pair{G}{\compl G} \otimes \id}
      \circ
      \pb\paren{(\calE\otimes\id)\otimes\id}
      \circ
      \pb\paren{ \pair{G}{\compl G}^{-1} \otimes \id}
      \circ
      \pair{F}{\compl F}^{-1}
    \\
    &\eqrefrel{lemma:isoreg:channel}=
      \pair{F}{\compl F}
      \circ
      \pb\paren{\pair{G}{\compl G} \rtensor \id}
      \circ
      \pb\paren{(\calE\otimes\id)\otimes\id}
      \circ
      \pb\paren{ \pair{G}{\compl G} \rtensor \id}^{-1}
      \circ
      \pair{F}{\compl F}^{-1}
    \\
    &\starstarrel=
      \pair{\pair{\chain FG}{\chain F{\compl G}}}{\compl F}
      \circ
      \pb\paren{(\calE\otimes\id)\otimes\id}
      \circ
      \pair{\pair{\chain FG}{\chain F{\compl G}}}{\compl F} ^{-1}
    \\
    &\tristarrel=
      \pair{\chain FG}{\pair{\chain F{\compl G}}{\compl F}} \circ \underbrace{\assoc
      \circ
      \pb\paren{(\calE\otimes\id)\otimes\id}
      \circ \assocp}_{\fourstarrel= \ \calE\otimes\id}{} \circ
      \pair{\chain FG}{\pair{\chain F{\compl G}}{\compl F}}^{-1}
    \\[-15pt]
    &\txtrel{\ensuremath{\bigl(\overset{**}{*{*}*}\bigr)}}= (\chain FG)(\calE).
  \end{align*}
  Here $(*)$ uses \lemmaref{lemma:channel.tensor:distrib},\,\eqref{lemma:channel.tensor:id}.
  And $(**)$ uses Laws~\ref{law:pair.tensor} and~\ref{law:pair.chain}.
  And $(*{*}*)$ uses Law~\ref{law:pair.alpha}.
  And $(*{*}{*}*)$ is verified by evaluating the left and right hand side of the equality on all $\rho\otimes(\tau\otimes \pi)$ for trace-class $\rho,\tau,\pi$.
  Both sides evaluate to $\calE(\rho)\otimes(\tau\otimes\pi)$.
  Thus both sides are equal as quantum channels by \autoref{lemma:channel.equal}.
  So $(*{*}{*}*)$ holds.
  And $(\overset{\scriptstyle **}{\scriptstyle *{*}*})$ follows with \eqref{lemma:lift.channel:choice.compl} using that $\chain FG$ and $\pair{\chain{F}{\compl G}}{\compl F}$ are complements by Law~\eqref{law:complement.chain}.
  This shows~\eqref{lemma:lift.channel:chain}.

  \medskip
  
  We show \eqref{lemma:lift.channel:swap}.
  Let $u$ be the unit reference with domain $\setC$ from \autoref{lemma:quantum.unit}.
  Hence by Law~\ref{law:complement.iso.unit}, $\sigma,u$ are complements.
  We have $\pair\sigma u(x\otimes 1)=\sigma(x)$.
  Since $\pair\sigma u$ and $\sigma$ are both iso-references and thus invertible, this implies $\pair\sigma u^{-1}(y)=\sigma^{-1}(y)\otimes 1_{\setC}$.
  Furthermore, $\swap(\calE\otimes\calF)=\pair\swap u\circ\pb\paren{(\calE\otimes\calF)\otimes\id}\circ\pair\swap u^{-1}$ by \eqref{lemma:lift.channel:choice.compl}.
  Hence for all trace-class $\rho$, $\tau$, we have
  \begin{equation*}
    \rho\otimes\tau
    \xmapsto{\pair\swap u^{-1}}
    (\tau\otimes\rho)\otimes 1_{\setC}
    \xmapsto{{(\calE\otimes\calF)\otimes\id}}
    \pb\paren{\calE(\tau)\otimes\calF(\rho)}\otimes 1_{\setC}
    \xmapsto{\pair\swap u}
    \calF(\rho)\otimes\calE(\tau)
    =
    (\calF\otimes\calE)(\rho\otimes\tau).
  \end{equation*}
  (Note that in this calculation, we used that $1_{\setC}$ is trace-class.
  We used this when evaluating ${(\calE\otimes\calF)\otimes\id}$ which, as a tensor product of quantum channels, is only defined on trace-class operators.)
  Thus $\sigma(\calE\otimes\calF)(\rho\otimes\tau) = (\calF\otimes\calE)(\rho\otimes\tau)$ for all $\rho,\tau$.
  By \autoref{lemma:channel.equal}, this implies $\sigma(\calE\otimes\calF)=\calF\otimes\calE$.
  This shows~\eqref{lemma:lift.channel:swap}.

  \medskip
  
  We show \eqref{lemma:lift.channel:assoc}.
  Let $u$ be the unit reference with domain $\setC$ from \autoref{lemma:quantum.unit}.
  Hence by Law~\ref{law:complement.iso.unit}, $\assoc,u$ are complements.
  We have $\pair\assoc u(x\otimes 1)=\assoc(x)$.
  Since $\pair\assoc u$ and $\assoc$ are both iso-references and thus invertible, this implies $\pair\assoc u^{-1}(y)=\assoc^{-1}(y)\otimes 1_{\setC}$.
  Furthermore, $\assoc(\calE\otimes(\calF\otimes\calG))=\pair\assoc u\circ\pb\paren{(\calE\otimes(\calF\otimes\calG))\otimes\id}\circ\pair\assoc u^{-1}$ by \eqref{lemma:lift.channel:choice.compl}.
  Hence for all trace-class $\rho$, $\tau$, $\gamma$ we have
  \begin{multline*}
    (\rho\otimes\tau)\otimes\gamma
    \xmapsto{\pair\assoc u^{-1}}
    \pb\paren{\rho\otimes(\tau\otimes\gamma)}\otimes 1_{\setC}
    \xmapsto{{\pb\paren{\calE\otimes(\calF\otimes\calG)}\otimes\id}}
    \pb\paren{\calE(\rho)\otimes(\calF(\tau)\otimes\calG(\gamma))}\otimes 1_{\setC} \\
    \xmapsto{\pair\assoc u}
    {(\calE(\rho)\otimes\calF(\tau))\otimes\calG(\gamma)}
    =
    \pb\paren{(\calE\otimes\calF)\otimes\calG}
    \pb\paren{(\rho\otimes\tau)\otimes\gamma}
  \end{multline*}
  (Like in the proof of \eqref{lemma:lift.channel:swap}, we used that $1_{\setC}$ is trace-class.)
  Thus $\assoc(\calE\otimes(\calF\otimes\calG))\pb\paren{(\rho\otimes\tau)\otimes\gamma}=\pb\paren{(\calE\otimes\calF)\otimes\calG}\pb\paren{(\rho\otimes\tau)\otimes\gamma}$ for all $\rho,\tau,\gamma$.
  By \autoref{lemma:channel.equal}, this implies $\assoc\pb\paren{\calE\otimes(\calF\otimes\calG)} = {(\calE\otimes\calF)\otimes\calG}$.
  The $\assocp$ case of \eqref{lemma:lift.channel:assoc} is shown analogously.
  This shows \eqref{lemma:lift.channel:assoc}.

  \medskip

  We show \eqref{lemma:lift.channel:sem}.
  We first show
  \begin{equation}
    \label{eq:lift.channel:sem:subgoal}
    \Sem\pb\paren{F(U)}\circ\pair F{\compl F}
    = F\pb\paren{\Sem(U)}\circ\pair F{\compl F}.
  \end{equation}
  By \autoref{lemma:channel.equal}, it is sufficient to show this equality on all $\tau\otimes\sigma$.
  \begin{align*}
    \hskip1cm&\hskip-1cm\Sem\pb\paren{F(U)}\circ\pair F{\compl F}(\tau\otimes\sigma)
    \starrel=
    \Sem\pb\paren{F(U)}(F(\tau) \compl F(\sigma))
    =
    F(U)F(\tau) \compl F(\sigma)\adj{F(U)}
    \\&=
    F(U)F(\tau) \compl F(\sigma)F(\adj U)
    \starstarrel=
    F(U)F(\tau)F(\adj U) \compl F(\sigma)
    =
    F(U\tau\adj U) \compl F(\sigma)
    \starrel=
    \pair F{\compl F}  \pb\paren{U\tau\adj U\otimes\sigma}    
    \\&=
    \pair F{\compl F} \circ \pb\paren{\Sem(U)\otimes\id}(\tau\otimes\sigma)
    =
    \pair F{\compl F} \circ \pb\paren{\Sem(U)\otimes\id}\circ \pair F{\compl F}^{-1} \circ\pair F{\compl F}(\tau\otimes\sigma)
    \\&=
    F(\Sem(U))\circ\pair F{\compl F}(\tau\otimes\sigma).
  \end{align*}
  Here $(*)$ is by the definition of pairs of references (\autoref{def:pair}).
  And $(**)$ follows because $F,\compl F$ are disjoint.

  This this holds for all $\tau,\sigma$, we conclude \eqref{eq:lift.channel:sem:subgoal}.
  Since $\pair F{\compl F}$ is an iso-register and thus surjective,
  $\Sem(F(U)) = F(\Sem(U))$ and thus \eqref{lemma:lift.channel:sem} follows.
\end{proof}

\paragraph{Kraus-channels, \autoref{lemma:lift.kraus}}
\pagelabel{page:proof:lemma:lift.kraus}

\begin{proof}
  We show \eqref{lemma:lift.kraus:channel}.
  If $\calE$ is a Kraus-(sub)channel, then $\calE(\rho)$ is of the form $\calE(\rho)=\sum_i M_i\rho\adj{M_i}$ with $\sum_i \adj{M_i}M_i \leqq 1$.
  (Where $\leqq$ means $=$ or $\leq$ depending on whether we are talking about Kraus-channels or -subchannels.)
  Then $\regchannel F\calE(\rho) = \sum_i F(M_i)\rho \adj{F(M_i)}$.
  And
  \[
    \sum_i \adj{F(M_i)}F(M_i)
    =
    \sum_i F(M_i\adj{M_i})
    \starrel=
    F\pB\paren{\sum_i M_i\adj{M_i}}
    \starstarrel\leqq
    F(1)
    =
    1
  \]
  Here $(*)$ follows by \autoref{lemma:reference.sum}.
  And $(**)$ follows because $\sum_i M_i\adj{M_i}\leqq1$ and $F$ is positive (\autoref{lemma:reg.props}) and thus monotonous.
  This shows that $\regchannel F\calE$ is a Kraus-(sub)channel.

  \medskip

  We show \eqref{lemma:lift.kraus:samedef}.
  Let $\calE(\rho)=\sum_i M_i\rho\adj{M_i}$ be a Kraus-(sub)channel.
  We write $\regchannel F\calE_1$ for $F(\calE)$ as defined for quantum (sub)channels,
  and $\regchannel F\calE_2$ for $F(\calE)$ as defined for Kraus-(sub)channels.
  Fix trace-class $\sigma,\tau$.
  \begin{align*}
    &\regchannel F\calE_1\circ \pair F{\compl F}(\sigma\otimes\tau)
    \starrel=
      \pair F{\compl F} \circ (\calE\otimes\id) (\sigma\otimes\tau)
      =
      \pair F{\compl F}(\calE(\sigma) \otimes \tau)
      =
      F(\calE(\sigma)) \cdot \compl F(\tau)
    \\&
    =
    F\pB\paren{\sum\nolimits_i M_i\sigma \adj{M_i}} \cdot \compl F(\tau)
    \starstarrel=
    \sum\nolimits_i F\paren{M_i\sigma \adj{M_i}} \cdot \compl F(\tau)
    =
    \sum\nolimits_i F(M_1) F(\sigma) \adj{F(M_i)} \cdot \compl F(\tau)
    \\&
    \tristarrel=
    \sum\nolimits_i F(M_1) F(\sigma) \adj{F(M_i)}\compl F(\tau)
    \fourstarrel=
    \sum\nolimits_i F(M_1) F(\sigma)\compl F(\tau) \adj{F(M_i)}
    =
    \sum\nolimits_i F(M_1) \pair F{\compl F}(\sigma\otimes\tau) \adj{F(M_i)}
    \\&
    =\regchannel F\calE_2\circ \pair F{\compl F}(\sigma\otimes\tau).
  \end{align*}
  Here $(*)$ is by definition of $\regchannel F\calE_1$.
  And $(**)$ follows by \autoref{lemma:reference.sum}.
  And $(*{*}*)$ follows because multiplication from the right is easily seen to be SOT-continuous (and the sum converges in the SOT), so we can pull $\compl F(\tau)$ into the sum.
  And $(*{*}{*}*)$ follows since $F$, $\compl F$ are disjoint and thus $F(\adj{M_i})$ and $\compl F(\tau)$ commute.
  
  Thus the quantum subchannels $\regchannel F\calE_1\circ \pair F{\compl F}$
  and $\regchannel F\calE_2\circ \pair F{\compl F}$ are equal on all $\sigma\otimes\tau$.
  By \autoref{lemma:channel.equal}, this implies $\regchannel F\calE_1\circ \pair F{\compl F} =
  \regchannel F\calE_2\circ \pair F{\compl F}$.
  Since $\pair F{\compl F}$ is an iso-reference and thus invertible, this implies $\regchannel F\calE_1 = \regchannel F\calE_2$.
  This shows \eqref{lemma:lift.kraus:samedef}.
  
  \medskip

  Finally, \eqref{lemma:lift.kraus:indep} follows from \eqref{lemma:lift.kraus:samedef} since the definition of $\regchannel F\calE$ for quantum subchannels does not depend on the choice of~$M_i$.
\end{proof}

\subsection{Partial trace, \autoref{lemma:partial.tr}}
\label{app:proof:lemma:partial.tr}

\begin{proof}
  We show \eqref{lemma:partial.tr:welldef}.
  By \cite[Theorem 18.11\,(c)]{conway00operator}, $\tr:T(\calH_{A'})\to\setC$ is a positive linear functional.
  By \cite[Lemma 34.5]{conway00operator}, positive linear functionals are completely positive.
  Thus $\tr$ is completely positive.
  Note that $\bounded(\setC)=\tracecl(\setC)$ and we can identify both with $\setC$ via the trivial isomorphism $c\in\setC\mapsto c\cdot 1$.
  That is, $\tr:\tracecl(\calH_{A'})\to\tracecl(\setC)$.
  Then $\tr$ is trace-preserving.
  Hence $\tr$ is a quantum channel.
  Then $\id\otimes\tr:\tracecl(\calH_A\otimes\calH_{A'})\to\tracecl(\calH_A\otimes\setC)$ exists and is a quantum channel by \lemmaref{lemma:channel.tensor:tpres}.
  Let $u:\bounded(\setC)\to\mathbf B$, $u(c):=c\cdot 1_\mathbf{B}$.
  Then $u$ is a unit reference by \autoref{lemma:quantum.unit}.
  Thus $\pair \id u$ is an iso-reference and thus a quantum channel (\lemmaref{lemma:isoreg:channel}).
  Hence $\trinFst := \pair \id u\circ(\id\otimes\tr)$ is a quantum channel.
  Furthermore
  \[
    \trinFst(\rho\otimes\sigma)
    =
    \pair \id u\circ(\id\otimes\tr)(\rho\otimes\sigma)
    =
    \pair\id u(\rho\otimes\tr\sigma)
    = \rho \cdot (\tr\sigma \cdot 1_\mathbf{B})
    =
    \rho \cdot \tr\sigma.
  \]
  Hence $\trinFst$ defined this way satisfies the definition of $\trinFst$ given in \autoref{sec:partial.trace}.
  So $\trinFst$ exists (and is a quantum channel).
  Uniqueness of $\trinFst$ follows by \autoref{lemma:channel.equal}.
  This shows that $\trinFst$ is well-defined and thus also $\trin F$.
  This shows \eqref{lemma:partial.tr:welldef}.
  
  \medskip

  We show \eqref{lemma:partial.tr:channel}.
  In the proof of \eqref{lemma:partial.tr:welldef}, we showed that $\trinFst$ is a quantum channel.
  $\pair F{\compl F}^{-1}$ is an iso-reference and thus a quantum channel (\lemmaref{lemma:isoreg:channel}).
  Thus $\trin F = {\trinFst} \circ \pair F{\compl F}^{-1}$ is a quantum channel.
  This shows \eqref{lemma:partial.tr:channel}.
  

  \medskip

  We show \eqref{lemma:partial.tr:choice.compl}.
  Since $F,G$ are complements, by Law~\ref{law:compl.equiv}, $\compl F$ and $G$ are equivalent.
  Thus $G=\compl F\circ I$ for some iso-reference $I$.
  By \lemmaref{lemma:isoreg:channel}, $I$ is a quantum channel.
  Then $ \trinFst\circ(\id\otimes I)$ is a quantum channel as well and for all trace-class $\rho,\sigma$,
  \[
    {\trinFst}\circ(\id\otimes I)(\rho\otimes\sigma)
    =
    \trinFst(\rho\otimes I(\sigma))
    =
    \rho\cdot\tr I(\sigma)
    \starrel=
    \rho\cdot\tr\sigma
    =
    \trinFst(\rho\otimes\sigma).
  \]
  Here $(*)$ uses that $I$ is a quantum channel and thus trace-preserving by definition.
  By \autoref{lemma:channel.equal}, this implies ${\trinFst}\circ(\id\otimes I)=\trinFst$ as quantum channels.
  Then
  \begin{align*}
    \trin F
    &=
    {\trinFst} \circ \pair F{\compl F}^{-1}
    =
    {\trinFst} \circ \pair F{G\circ I^{-1}}^{-1}
    =
      {\trinFst} \circ \pb\paren{\pair F{G} \circ (\id\rtensor I^{-1})}^{-1}
      \\&
    =
    {\trinFst} \circ (\id\rtensor I)\circ {\pair F{G}}^{-1}
    \starrel=
    {\trinFst} \circ (\id\otimes I)\circ {\pair F{G}}^{-1}
    \starstarrel=
    {\trinFst} \circ {\pair F{G}}^{-1}.
  \end{align*}
  (As an equality of functions on the space of trace-class operators.)
  Here $(*)$ uses \lemmaref{lemma:isoreg:channel}.
  And $(**)$ uses that ${\trinFst}\circ(\id\otimes I)=\trinFst$ as quantum channels (shown above) and the fact that ${\pair F{G}}^{-1}$ is a quantum channel (\lemmaref{lemma:isoreg:channel}).
  This shows \eqref{lemma:partial.tr:choice.compl}.
  
  \medskip

  We show \eqref{lemma:partial.tr:iso}.
  Let $u:\bounded(\setC)\to\mathbf B$, $u(c):=c\cdot 1_\mathbf{B}$.
  Then $u$ is a unit reference (\autoref{lemma:quantum.unit}).
  Since $F$ is an iso-reference, $F,u$ are complements (Law~\ref{law:complement.iso.unit}).
  For trace-class $\rho$,
  \[
    {\trin F}\circ F|_{\tracecl(\calH_A)}(\rho)
    = {\trin F}\circ\pair Fu(\rho\otimes 1_{\setC})
    \eqrefrel{lemma:partial.tr:choice.compl}=
    {\trinFst}\circ\pair Fu^{-1}\circ\pair Fu(\rho\otimes 1_{\setC})
    =
    {\trinFst}(\rho\otimes 1_{\setC})
    =\rho.
  \]
  (Note that $1_{\setC}$ is trace-class, and iso-references are quantum channels by \lemmaref{lemma:isoreg:channel}, so we do not apply $\trin F,\trinFst$ to anything outside their domain.)
  Thus ${\trin F}\circ F|_{\tracecl(\calH_A)}=\id$.
  Since $F$ is an iso-reference, $F|_{\tracecl(\calH_A)}$ is a bijection $\tracecl(\calH_A)\to\tracecl(\calH_A)$ by \lemmaref{lemma:isoreg:channel}.
  Thus ${\trin F}\circ F|_{\tracecl(\calH_A)}=\id$ implies $\trin F=F^{-1}|_{\tracecl(\calH_A)}$.
  This shows \eqref{lemma:partial.tr:iso}.

  \medskip
    
  We show \eqref{lemma:partial.tr:pair}.
  We have
  \begin{equation*}
    \trin F \pair{F}{G}(\rho\otimes\sigma)
    \eqrefrel{lemma:partial.tr:choice.compl}=
    {\trinFst} \circ \pair{F}G^{-1} \circ \pair{F}G(\rho\otimes\sigma)
    \starrel=
    \trinFst(\rho\otimes\sigma)
    \starstarrel=
    \rho\cdot\tr\sigma.
  \end{equation*}
  Here $(*)$ uses that $\pair{F}G$ is an iso-reference and thus bijective.
  And $(**)$ is by definition of $\trinFst$.
  This shows \eqref{lemma:partial.tr:pair}.

  \medskip

  We show \eqref{lemma:partial.tr:partition}.
  By \lemmaref{lemma:mixed.states:permute}, the order of the ${F_1(\rho_1),\dots, F_n(\rho_n)}$ does not matter.
  Thus without loss of generality, we can assume $i=1$.
  Let $F^*:=\pairs{F_2}{F_n}$ and $\rho^*:=\rho_2\otimes\dots\otimes\rho_n$.
  (If $n=1$, we let $F^*$ be the unit reference from \autoref{lemma:quantum.unit} and $\rho^*:=1_{\setC}$.)
  Since $F_1,\dots,F_n$ is a partition, $F_1,F^*$ are complements.
  Then
  \begin{multline*}
    \trin {F_1} \pb\paren{F_1(\rho_1)\mtensor\dots\mtensor F_n(\rho_n)}
    \starrel=
    \trin {F_1} \pb\paren{\pairs{F_1}{F_n}(\rho_1\otimes\dots\otimes\rho_n)}
    \\
    =
    \trin {F_1}\pb\paren{ \pair{F_1}{F^*}(\rho_1\otimes\rho^*)}
    \eqrefrel{lemma:partial.tr:pair}=
    \rho_1 \otimes \tr\rho^*
    \starstarrel=
    \rho_i \cdot \prod\nolimits_{j=2}^n\tr\rho_i.
  \end{multline*}
  Here $(*)$ is by definition of $\mtensor$.
  And $(**)$ is by \autoref{lemma:traceclass.tensor}.
  (Recall that $\otimes$ and $\pair\cdot\cdot$ are right-associative.)

  This shows \eqref{lemma:partial.tr:partition}.

  \medskip
  
  We show \eqref{lemma:partial.tr:chain}.
  For any trace-class $\rho,\sigma,\tau$, we have
  \begin{multline}
    \trin{\chain FG}\pb\pair{\chain FG}{\pair{\compl F}{\chain F{\compl G}}} \pb\paren{\rho\otimes(\tau\otimes\sigma)}
    \starrel=
    {\trinFst}\circ \pb\pair{\chain FG}{\pair{\compl F}{\chain F{\compl G}}}^{-1} \circ \pb\pair{\chain FG}{\pair{\compl F}{\chain F{\compl G}}} \pb\paren{\rho\otimes(\tau\otimes\sigma)}
    \\=
    \trinFst \pb\paren{\rho\otimes(\tau\otimes\sigma)}
    = \rho \cdot \tr(\tau\otimes\sigma)
    \starstarrel= \rho \cdot \tr\sigma \cdot \tr\tau.
    \label{eq:FG.tr.tr}
  \end{multline}
  Here $(*)$ is by \eqref{lemma:partial.tr:choice.compl} using the fact that ${\pair{\compl F}{\chain F{\compl G}}}$ is a complement of $\chain FG$ by Law~\ref{law:complement.chain}.
  And $(**)$ is by \autoref{lemma:traceclass.tensor}.
  And also
  \begin{align*}
    \hskip1cm & \hskip-1cm
      \pb\pair{\chain FG}{\pair{\compl F}{\chain F{\compl G}}} \pb\paren{\rho\otimes(\tau\otimes\sigma)}
    \starrel=
    \pair{F}{\compl F} \circ
    \pb\paren{\pair G{\compl G} \rtensor \id}
    \pb\paren{(\rho\otimes\sigma)\otimes\tau}
    \\&
    \xmapsto{    \pair{F}{\compl F}^{-1}  }
    \pb\paren{\pair G{\compl G} \rtensor \id}
    \pb\paren{(\rho\otimes\sigma)\otimes\tau}
    = \pair G{\compl G} (\rho\otimes\sigma)   \otimes \tau
    \xmapsto{  \trinFst  }
    \pair G{\compl G} (\rho\otimes\sigma)   \cdot \tr \tau
    \\
    &\xmapsto{  \pair G{\compl G}^{-1}  }
    \rho\otimes\sigma   \cdot \tr \tau
    \xmapsto{  \trinFst  }
    \rho \cdot \tr \sigma   \cdot \tr \tau.
  \end{align*}
  Here $(*)$ is easily seen by evaluating both sides.
  Since $\trin F = {\trinFst} \circ \pair{F}{\compl F}^{-1} $ and $\trin G = {\trinFst} \circ \pair G{\compl G}^{-1} $, the $\xmapsto{\dots}$ correspond to applying $\trin G\circ\trin F$.
  Thus, with \eqref{eq:FG.tr.tr},
  \[
    \trin G\circ \trin F \circ \pb\pair{\chain FG}{\pair{\compl F}{\chain F{\compl G}}} \pb\paren{\rho\otimes(\tau\otimes\sigma)}
    = 
    \trin{\chain FG}\circ \pb\pair{\chain FG}{\pair{\compl F}{\chain F{\compl G}}} \pb\paren{\rho\otimes(\tau\otimes\sigma)}.
  \]
  Since $\trin F,\trin G,\trin {\chain FG}$ are quantum channels by \eqref{lemma:partial.tr:channel},
  and $\pb\pair{\chain FG}{\pair{\compl F}{\chain F{\compl G}}}$ is an iso-reference and thus a quantum channel by \lemmaref{lemma:isoreg:channel},
  the function compositions on the left and right hand side are quantum channels that are equal on all $\rho\otimes(\tau\otimes\sigma)$.
  Thus by \autoref{lemma:channel.equal}, $\trin G\circ \trin F \circ \pb\pair{\chain FG}{\pair{\compl F}{\chain F{\compl G}}} = \trin{\chain FG}\circ \pb\pair{\chain FG}{\pair{\compl F}{\chain F{\compl G}}}$.
  Since $\pb\pair{\chain FG}{\pair{\compl F}{\chain F{\compl G}}}$ is an iso-reference, this implies $\trin G\circ \trin F = \trin{\chain FG}$.
  This shows \eqref{lemma:partial.tr:chain}.

  \medskip
  
  We show \eqref{lemma:partial.tr:channel.only}.
  For trace-class $\rho,\sigma$, we have
  \begin{multline*}
    {\trin F} \circ \regchannel F\calE \circ \pair F{\compl F}(\rho\otimes\sigma)
    \starrel=
    {\trinFst} \circ \pair F{\compl F}^{-1} \circ  \pair F{\compl F}
    \circ (\calE \otimes \id)
    \circ  \pair F{\compl F}^{-1}
    \circ \pair F{\compl F}
    (\rho\otimes\sigma)
    \\
    =
    {\trinFst}
    \circ (\calE \otimes \id)
    (\rho\otimes\sigma)
    =
    \trinFst (\calE(\rho)\otimes \sigma)
    =
    \calE(\rho) \cdot \tr\sigma
    = \calE(\rho \cdot \tr\sigma)
    \eqrefrel{lemma:partial.tr:pair}=
    \calE \circ {\trin F} \circ \pair F{\compl F} (\rho\otimes\sigma).
  \end{multline*}
  Here $(*)$ is by definition of $\trin F$ and of $\regchannel F\calE$.
  Since $\trin F$, $\calE$, $\regchannel F\calE$, and $\pair F{\compl F}$ are quantum subchannels, this implies that ${\trin F} \circ \regchannel F\calE \circ \pair F{\compl F} = \calE \circ \trin F \circ \pair F{\compl F}$ by \autoref{lemma:channel.equal}.
  Since $\pair F{\compl F}$ is an iso-reference, this implies ${\trin F} \circ \regchannel F\calE = \calE \circ \trin F$.
  This shows \eqref{lemma:partial.tr:channel.only}.

  \medskip
  
  We show \eqref{lemma:partial.tr:channel.other}.
  By Law~\ref{law:complement.pair}, $F$ and $\pair{G}{\compl{\spair FG}}$ are complements.
  Let $Z := \pair F{\pair{G}{\compl{\spair FG}}}$.
  $Z$ is an iso-reference because $F$ and $\pair{G}{\compl{\spair FG}}$ are complements.
  By \lemmaref{lemma:lift.channel:pair}, \eqref{lemma:lift.channel:id}, $\regchannel Z{\id\otimes(\calE\otimes\id)} = \regchannel G\calE$.
  Thus for all trace-class $\rho,\sigma$, we have
  \begin{multline*}
    {\trin F} \circ \regchannel G\calE \circ Z (\rho\otimes\sigma)
    =
    {\trin F} \circ \regchannel Z{\id\otimes(\calE\otimes\id)} \circ Z (\rho\otimes\sigma)
    \starrel= 
    {\trin F} \circ Z\pb\paren{\pb\paren{\id\otimes(\calE\otimes\id)}\pb\paren{\rho\otimes\sigma}}
    \\
    =
    {\trin F} \circ Z\pb\paren{\rho\otimes (\calE\otimes\id)(\sigma)}
    \eqrefrel{lemma:partial.tr:pair}=
    \rho\cdot \tr (\calE\otimes\id)(\sigma)
    \starstarrel=
    \rho\cdot \tr \sigma
    \eqrefrel{lemma:partial.tr:pair}=
    {\trin F} \circ Z (\rho\otimes\sigma).
  \end{multline*}
  Here $(*)$ is by \lemmaref{lemma:lift.channel:apply} (in the special case where $Z$ is a one-element partition).
  And $(**)$ follows since $\calE\otimes\id$ is a quantum channel by \lemmaref{lemma:channel.tensor:tpres} and thus trace-preserving.
  Since $Z$ is an iso-reference, it is a quantum channel (\lemmaref{lemma:isoreg:channel}).
  Thus the function compositions on the left and right hand side are quantum channels.
  And equal on all $\rho\otimes\sigma$.
  By \autoref{lemma:channel.equal}, this implies ${\trin F} \circ \regchannel G\calE \circ Z = {\trin F} \circ Z$ as quantum channels.
  Since $Z$ is and iso-reference and thus bijective, this implies ${\trin F} \circ \regchannel G\calE = \trin F$.
  This shows \eqref{lemma:partial.tr:channel.other}.
\end{proof}

\subsection{Measurements, Lemmas~\ref{lemma:proj.meas}, \ref{lemma:povm}, and \ref{lemma:general.meas}} 
\label{app:proof:measurements}

\paragraph{Projective measurements, \autoref{lemma:proj.meas}}

\begin{proof}
  We show \eqref{lemma:proj.meas:is.meas}.
  Since $M=\braces{P_i}_i$ is a projective measurement, $P_i$ are projectors, $P_iP_j=0$ ($i\neq j$), and $\sum_i P_i=1$.
  By definition, $\regmeas FM=\braces{F(P_i)}_i$.
  By \lemmaref{lemma:preserve.projector}, $F(P_i)$ is a projector.
  For $i\neq j$, we have $F(P_i)F(P_j)=F(P_iP_j)=F(0)=0$.
  And
  \begin{equation*}
    \sum_i F(P_i) \starrel= F\pB\paren{\sum_i P_i} = F(1) = 1.
  \end{equation*}
  Here $(*)$ is by \autoref{lemma:reference.sum}.
  This shows \eqref{lemma:proj.meas:is.meas}.

  \medskip

  We show \eqref{lemma:proj.meas:uni}.
  We have
  \begin{align*}
    F(U)\,\regmeas FM
    &=
      F(U)\pb\braces{F(P_i)}_i
    =
      \pb\braces{F(U)F(P_i)\adj{F(U)}}_i
    =
      \pb\braces{F(UP_i\adj U)}_i
    =
      \pb\regmeas F{\braces{UP_i\adj U}_i}
    =
      \regmeas F{UM}
  \end{align*}
  and
  \begin{equation*}
    \regmeas F{UM}
    =
    \pb\braces{F(UP_i\adj U)}_i
    =
    \pb\braces{F(\sandwich U(P_i))}_i
    =
    \pb\braces{\chain F{\sandwich U}(P_i)}_i
    =
    \regmeas{\paren{\chain F{\sandwich U}}}M.
  \end{equation*}
  This shows \eqref{lemma:proj.meas:uni}.

  \medskip

  We show \eqref{lemma:proj.meas:pure}.
  By definition $\regmeas FM=\{F(P_i)\}_i$.
  By definition of non-normalized post-measurement states, we have $\psi_1'=P_j\psi_1$ and $\psi'=F(P_1)\psi$.
  Then
  \begin{equation*}
    \psi'
    =
    F(P_1)\psi
    \starrel=
    F(P_1\psi_1)\ptensor F(\psi_2)\ptensor\dots\ptensor F(\psi_n)
    =
    F(\psi_1')\ptensor F(\psi_2)\ptensor\dots\ptensor F(\psi_n).
  \end{equation*}
  Here $(*)$ is by \lemmaref{lemma:lift.pure:U} and the definition of $\psi$.
  This shows \eqref{lemma:proj.meas:pure}.

  \medskip

  We show \eqref{lemma:proj.meas:mixed}.
  By definition $\regmeas FM=\{F(P_i)\}_i$.
  By definition of non-normalized post-measurement states, we have $\rho_1'=P_j\rho_1\adj{P_j}$ and $\rho'=F(P_1)\rho\adj{F(P_j)}$.
  Then
  \begin{align*}
    \rho'
    &=
    F(P_1)\rho\adj{F(P_1)}
    =
    F(P_1)\rho F(\adj{P_1})
    \starrel=
      F(P_1\rho_1\adj{P_1})\mtensor F(\rho_2)\mtensor\dots\mtensor F(\rho_n)
      \\&
    =
    F(\rho_1')\mtensor F(\rho_2)\mtensor\dots\mtensor F(\rho_n).
  \end{align*}
  Here $(*)$ is by \lemmaref{lemma:mixed.states:apply.UV} and the definition of $\rho$.
  This shows \eqref{lemma:proj.meas:mixed}.
\end{proof}

\paragraph{POVMs, \autoref{lemma:povm}}
\pagelabel{page:proof:povm}

\begin{proof}
  We show \eqref{lemma:povm:is.povm}.
  Since $M$ is a POVM, $M_i$ are positive operators, and $\sum_i M_i=1$.
  By definition, $\regmeas FM=\braces{F(M_i)}_i$.
  Since $F$ is positive (\autoref{lemma:reg.props}), $F(P_i)$ is a positive operator.
  And
  \begin{equation*}
    \sum_i F(M_i) \starrel= F\pB\paren{\sum_i M_i} = F(1) = 1.
  \end{equation*}
  Here $(*)$ is by \autoref{lemma:reference.sum}.
  This shows \eqref{lemma:povm:is.povm}.

  \medskip

  We show \eqref{lemma:povm:uni}.
  We have
  \begin{align*}
    F(U)\,\regmeas FM
    &=
      F(U)\pb\braces{F(M_i)}_i
    =
      \pb\braces{F(U)F(P_i)\adj{F(U)}}_i
    =
      \pb\braces{F(UM_i\adj U)}_i
    =
      \pb\regmeas F{\braces{UM_i\adj U}_i}
    =
      \regmeas F{UM}
  \end{align*}
  and
  \begin{equation*}
    \regmeas F{UM}
    =
    \pb\braces{F(UM_i\adj U)}_i
    =
    \pb\braces{F(\sandwich U(M_i))}_i
    =
    \pb\braces{\chain F{\sandwich U}(M_i)}_i
    =
    \regmeas{\paren{\chain F{\sandwich U}}}M.
  \end{equation*}
  This shows \eqref{lemma:povm:uni}.

  \medskip

  We show \eqref{lemma:povm:pure}.
  By \lemmaref{lemma:lift.pure:permute}, without loss of generality, we can assume $k=1$.
  By definition, the probability of outcome $j$ using $M$ given $\psi_1$ is $\norm{\sqrt{M_j}\psi_1}^2$.
  And the probability of outcome $j$ using $\regmeas FM$ given $\psi$ is $\norm{\sqrt{F(M_j)}\psi}^2$.
  So we have to show $\pb\norm{\sqrt{M_j}\psi_1} = \pb\norm{\sqrt{F(M_j)}\psi}$.
  Note that $F(\sqrt{M_j})$ is positive since $F$ is positive (\autoref{lemma:reg.props}), and $\adj{F\pb\paren{\sqrt{M_j}}}F\pb\paren{\sqrt{M_j}} = F\pb\paren{\adj{\sqrt{M_j}}\sqrt{M_j}} = F(M_j)$.
  Thus $F(\sqrt{M_j})=\sqrt{F(M_j)}$.
  We have
  \begin{multline*}
    \pb\norm{\textstyle\sqrt{F(M_j)}\psi}
    =
    \pb\norm{{F(\sqrt{M_j})}\psi}
    \starrel=
    \pb\norm{F(\sqrt{M_j}\psi_1)\ptensor F(\psi_2)\ptensor\dots\ptensor F(\psi_n)}
    \\\starstarrel=
    \norm{\sqrt{M_j}\psi_1}
    \cdot \underbrace{\norm{\psi_2}\cdots\norm{\psi_n}}_{{}=1}
    =
    \norm{\sqrt{M_j}\psi_1}.
  \end{multline*}
  Here $(*)$ is by \lemmaref{lemma:lift.pure:U} and the definition of $\psi$.
  And $(**)$ is by \lemmaref{lemma:lift.pure:norm}.
  This shows \eqref{lemma:povm:pure}.
  
  \medskip
  
  We show \eqref{lemma:povm:mixed}.
  By \lemmaref{lemma:mixed.states:permute}, without loss of generality, we can assume $k=1$.
  By definition, the probability of outcome $j$ using $M$ given $\rho_1$ is $\tr{M_j\rho_1}$.
  And the probability of outcome $j$ using $\regmeas FM$ given $\rho$ is $\tr F(M_j)\rho$.
  So we have to show $\tr{M_j\rho_1} = \tr F(M_j)\rho$.
  We have
  \begin{equation*}
    \tr F(M_j)\rho
    \starrel=
    \tr\pB\paren{F(M_j\rho_1)\mtensor F(\rho_2)\mtensor\dots\mtensor F(\rho_n)}
    \starstarrel=
    \tr{M_j\rho_1}
    \cdot \underbrace{\tr\rho_2\cdots\tr\rho_n}_{{}=1}
    =
    \tr{M_j\rho_1}.
  \end{equation*}
  Here $(*)$ is by \lemmaref{lemma:mixed.states:apply.UV} and the definition of $\rho$.
  And $(**)$ is by \lemmaref{lemma:mixed.states:trace}.
  This shows \eqref{lemma:povm:mixed}.

  \medskip

  We show \eqref{lemma:povm:projmeas}.
  By definition, the outcome probability in the POVM case is $\norm{\sqrt{F(M_j)}\psi}^2$ or $\tr F(M_j)\rho$, respectively.
  And the probability in the projective measurement case is $\norm{F(M_j)\psi}^2$ or $\tr F(M_j)\rho\adj{F(M_j)}$, respectively.
  Since $M_j$ is a projector, so is $F(M_j)$ by \lemmaref{lemma:preserve.projector}.
  Thus $F(M_j)=\adj{F(M_j)}=\sqrt{F(M_j)} = \adj{F(M_j)}F(M_j)$.
  Thus $\norm{\sqrt{F(M_j)}\psi}^2 = \norm{{F(M_j)}\psi}^2$ (i.e., the outcome probabilities for pure states are equal).
  And $\tr F(M_j)\rho = \tr \adj{F(M_j)}F(M_j)\rho \starrel= \tr F(M_j)\rho\adj{F(M_j)}$ (i.e., the outcome probabilities for mixed states are equal).
  Here $(*)$ uses the circularity of the trace (\lemmaref{lemma:circ.trace:circ}).
  %
\end{proof}

\paragraph{General measurements, \autoref{lemma:general.meas}}
\pagelabel{page:proof:general.meas}

\begin{proof}
  We show \eqref{lemma:general.meas:is.meas}.
  Since $M$ is a general measurement, $\sum_i \adj{M_i}M_i=1$.
  By definition, $\regmeas FM=\braces{F(M_i)}_i$.
  We have
  \begin{equation*}
    \sum_i \adj{F(M_i)}F(M_i)
    =
    \sum_i F(\adj{M_i}M_i)
    \starrel=
    F\pB\paren{\sum_i \adj{M_i}M_i} = F(1) = 1.
  \end{equation*}
  Here $(*)$ is by \autoref{lemma:reference.sum}.
  This shows \eqref{lemma:general.meas:is.meas}.

  \medskip

  We show \eqref{lemma:general.meas:pure}.
  By definition $\regmeas FM=\{F(M_i)\}_i$.
  By definition of non-normalized post-measurement states, we have $\psi_1'=M_j\psi_1$ and $\psi'=F(M_1)\psi$.
  Then
  \begin{equation*}
    \psi'
    =
    F(M_1)\psi
    \starrel=
    F(M_1\psi_1)\ptensor F(\psi_2)\ptensor\dots\ptensor F(\psi_n)
    =
    F(\psi_1')\ptensor F(\psi_2)\ptensor\dots\ptensor F(\psi_n).
  \end{equation*}
  Here $(*)$ is by \lemmaref{lemma:lift.pure:U} and the definition of $\psi$.
  This shows \eqref{lemma:general.meas:pure}.

  \medskip

  We show \eqref{lemma:general.meas:mixed}.
  By definition $\regmeas FM=\{F(M_i)\}_i$.
  By definition of non-normalized post-measurement states, we have $\rho_1'=M_j\rho_1\adj{M_j}$ and $\rho'=F(M_1)\rho\adj{F(M_j)}$.
  Then
  \begin{align*}
    \rho'
    &=
    F(M_1)\rho\adj{F(M_1)}
    =
    F(M_1)\rho F(\adj{M_1})
    \starrel=
      F(M_1\rho_1\adj{M_1})\mtensor F(\rho_2)\mtensor\dots\mtensor F(\rho_n)
      \\&
    =
    F(\rho_1')\mtensor F(\rho_2)\mtensor\dots\mtensor F(\rho_n).
  \end{align*}
  Here $(*)$ is by \lemmaref{lemma:mixed.states:apply.UV} and the definition of $\rho$.
  This shows \eqref{lemma:general.meas:mixed}.

  \medskip

  We show \eqref{lemma:general.meas:povm}.
  By definition of the ``corresponding POVM'', $M'=\braces{\adj{M_i}M_i}_i$.
  And $F(M)=\braces{F(M_i)}_i$ and $F(M')=\braces{F(\adj{M_i}M_i)}$.
  Thus $F(M')=\braces{F(\adj{M_i}M_i)}=\braces{\adj{F(M_i)}F(M_i)}$ which is the POVM corresponding to $F(M)$.
  This shows \eqref{lemma:general.meas:povm}.
\end{proof}


}

  
  


\fullonly{
\renewcommand\symbolindexentry[4]{
  \noindent\hbox{\hbox to 1.3in{$#2$\hfill}\parbox[t]{4.2in}{\raggedright#3}\hbox to 1cm{\hfill #4}}\\[2pt]}

  {\sectionToToc
   \printsymbolindex}

  \renewenvironment{theindex}{%
    \section*{Keyword Index}%
    \parindent\z@
    \parskip\z@ \@plus .3\p@\relax
    \let\item\@idxitem
    \begin{multicols}{2}}{\end{multicols}}
  {\sectionToToc
    \printindex}

}




\fullonly{
  {\sectionToToc  \printbibliography}
}

\shortonly{
  \delaytextwarnonly
}

\end{document}